\newcolumntype{L}[1]{>{\raggedright\let\newline\\\arraybackslash\hspace{0pt}}m{#1}}
\newcolumntype{C}[1]{>{\centering\let\newline\\\arraybackslash\hspace{0pt}}m{#1}}
\newcolumntype{R}[1]{>{\raggedleft\let\newline\\\arraybackslash\hspace{0pt}}m{#1}}
\newcommand{\mapsup}{\mbox{$\;\,$\begin{rotate}{90}$\!\!\!\mapsto$\end{rotate}}}
\newcommand{\raisemath}[1]{\mathpalette{\raisem@th{#1}}}
\newcommand{\raisem@th}[3]{\raisebox{#1}{$#2#3$}}
\newif\if@sup
\newtoks\@sups
\def\append@sup#1{\edef\act{\noexpand\@sups={\the\@sups #1}}\act}%
\def\reset@sup{\@supfalse\@sups={}}%
\def\mk@scripts#1#2{\if #2/ \if@sup ^{\the\@sups}\fi \else%
  \ifx #1_ \if@sup ^{\the\@sups}\reset@sup \fi {}_{#2}%
  \else \append@sup#2 \@suptrue \fi%
  \expandafter\mk@scripts\fi}
\def\tensor#1#2{\reset@sup#1\mk@scripts#2_/}
\def\multiscripts#1#2#3{\reset@sup{}\mk@scripts#1_/#2%
  \reset@sup\mk@scripts#3_/}
\newbox\slashbox \setbox\slashbox=\hbox{$/$}
\def\itex@pslash#1{\setbox\@tempboxa=\hbox{$#1$}
  \@tempdima=0.5\wd\slashbox \advance\@tempdima 0.5\wd\@tempboxa
  \copy\slashbox \kern-\@tempdima \box\@tempboxa}
\def\slash{\protect\itex@pslash}
\def\clap#1{\hbox to 0pt{\hss#1\hss}}
\def\mathllap{\mathpalette\mathllapinternal}
\def\mathrlap{\mathpalette\mathrlapinternal}
\def\mathclap{\mathpalette\mathclapinternal}
\def\mathllapinternal#1#2{\llap{$\mathsurround=0pt#1{#2}$}}
\def\mathrlapinternal#1#2{\rlap{$\mathsurround=0pt#1{#2}$}}
\def\mathclapinternal#1#2{\clap{$\mathsurround=0pt#1{#2}$}}
\let\oldroot\root
\def\root#1#2{\oldroot #1 \of{#2}}
\renewcommand{\sqrt}[2][]{\oldroot #1 \of{#2}}
\DeclareSymbolFont{symbolsC}{U}{txsyc}{m}{n}
\DeclareSymbolFont{stmry}{U}{stmry}{m}{n}
\DeclareFontFamily{OMX}{MnSymbolE}{}
\DeclareSymbolFont{mnomx}{OMX}{MnSymbolE}{m}{n}
\DeclareFontShape{OMX}{MnSymbolE}{m}{n}{
    <-6>  MnSymbolE5
   <6-7>  MnSymbolE6
   <7-8>  MnSymbolE7
   <8-9>  MnSymbolE8
   <9-10> MnSymbolE9
  <10-12> MnSymbolE10
  <12->   MnSymbolE12}{}
\def\Decl@Mn@Delim#1#2#3#4{%
  \if\relax\noexpand#1%
    \let#1\undefined
  \fi
  \DeclareMathDelimiter{#1}{#2}{#3}{#4}{#3}{#4}}
\def\Decl@Mn@Open#1#2#3{\Decl@Mn@Delim{#1}{\mathopen}{#2}{#3}}
\def\Decl@Mn@Close#1#2#3{\Decl@Mn@Delim{#1}{\mathclose}{#2}{#3}}
\Decl@Mn@Open{\llangle}{mnomx}{'164}
\Decl@Mn@Close{\rrangle}{mnomx}{'171}
\Decl@Mn@Open{\lmoustache}{mnomx}{'245}
\Decl@Mn@Close{\rmoustache}{mnomx}{'244}
\DeclareRobustCommand\widecheck[1]{{\mathpalette\@widecheck{#1}}}
\def\@widecheck#1#2{%
    \setbox\z@\hbox{\m@th$#1#2$}%
    \setbox\tw@\hbox{\m@th$#1%
       \widehat{%
          \vrule\@width\z@\@height\ht\z@
          \vrule\@height\z@\@width\wd\z@}$}%
    \dp\tw@-\ht\z@
    \@tempdima\ht\z@ \advance\@tempdima2\ht\tw@ \divide\@tempdima\thr@@
    \setbox\tw@\hbox{%
       \raise\@tempdima\hbox{\scalebox{1}[-1]{\lower\@tempdima\box
\tw@}}}%
    {\ooalign{\box\tw@ \cr \box\z@}}}
\def\udots{\mathinner{\mkern2mu\raise\p@\hbox{.}
\mkern2mu\raise4\p@\hbox{.}\mkern1mu
\raise7\p@\vbox{\kern7\p@\hbox{.}}\mkern1mu}}
\newcommand{\underoverset}[3]{\underset{#1}{\overset{#2}{#3}}}
\newcommand{\gt}{>}
\newcommand{\lt}{<}
\renewcommand{\(}{\begin{equation}}
\renewcommand{\)}{\end{equation}}
\newcommand{\bea}{\begin{eqnarray*}}
\newcommand{\eea}{\end{eqnarray*}}
\theoremstyle{italics}
\newtheorem{theorem}{Theorem}[section]
\newtheorem{lemma}[theorem]{Lemma}
\newtheorem{prop}[theorem]{Proposition}
\newtheorem{cor}[theorem]{Corollary}
\theoremstyle{definition}
\newtheorem{defn}[theorem]{Definition}
\newtheorem{example}[theorem]{Example}
\newtheorem{remark}[theorem]{Remark}
\newtheorem{note[theorem]}{Note}
\begin{document}

\title{
Super-exceptional geometry: origin of heterotic M-theory
\\
and super-exceptional embedding construction of M5
}

\author{Domenico Fiorenza, \; Hisham Sati, \; Urs Schreiber}
%

\maketitle

\begin{abstract}
In the quest for the mathematical formulation of M-theory,
we consider three major open problems:
a first-principles construction of the  single (abelian) M5-brane Lagrangian density,
the origin of the gauge field in heterotic M-theory,
and
the supersymmetric
enhancement of exceptional M-geometry.
By combining techniques from homotopy theory and from supergeometry
to what we call super-exceptional geometry within super-homotopy theory, we
present an elegant joint solution to all three problems.
This leads to a unified description of the Nambu-Goto,
Perry-Schwarz, and topological Yang-Mills Lagrangians in the topologically
nontrivial setting. After explaining how
charge quantization of the C-field in Cohomotopy reveals D'Auria-Fr{\'e}'s
``hidden supergroup'' of 11d supergravity as the super-exceptional target space, in the sense of Bandos,
for M5-brane sigma-models,
we prove, in exceptional generalization
of the doubly-supersymmetric super-embedding formalism, that a Perry-Schwarz-type Lagrangian
for single (abelian) $\mathcal{N} = (1,0)$ M5-branes emerges as the
super-exceptional trivialization of the M5-brane cocycle along the
super-exceptional embedding of the ``half'' M5-brane locus,
super-exceptionally compactified on the Ho{\v r}ava-Witten
circle fiber.
From inspection of the resulting 5d super Yang-Mills Lagrangian
we find that the extra fermion field
appearing in super-exceptional M-geometry,
whose physical interpretation had remained open,
is the M-theoretic avatar of the gaugino field.
\end{abstract}


\medskip

\vspace{-.6cm}

\tableofcontents

\vspace{-0cm}

\vfill

\section{Introduction}

An actual formulation of \emph{M-theory} remains a fundamental
 open problem from physical and mathematical points of view
 (see \cite[Sec. 12]{Moore14}\cite[Sec. 2]{HSS18}).
We had initiated in \cite{FSS13} a program
of attacking this problem,
based on universal constructions in
\emph{super-homotopy theory} (see \cite{FSS19a} for review),
and used this to find first-priniciples derivations of various
aspects expected
of M-theory (see \cite{FSS16b}\cite{FSS19b}\cite{FSS19c}).
In this paper we look to carry this further and
consider the following three major sub-problems:
\begin{enumerate}[{\bf 1.}]
  \vspace{-2mm}
  \item \hyperlink{ThirdOpenProblem}{Provide a systematic construction of M5-brane Lagrangians}.
  \vspace{-2mm}
  \item \hyperlink{FirstOpenProb}{Identify M-theory avatar
   degrees of freedom of the gauge field and gaugino field appearing on MO9-planes}.
  \vspace{-2mm}
  \item \hyperlink{SecondOpenProb}{Extend exceptional M-geometry to
    the supergeometric setting in a natural and constructive manner}.
\end{enumerate}
 \vspace{-2mm}
 We present a unified approach which leads to an elegant \hyperlink{JointSolution}{joint solution} to all three at once,
 using the two principles, \hyperlink{FirstPrin}{super-geometry} and
\hyperlink{SecondPrin}{super-homotopy theory}, explained below.


\medskip

\hypertarget{ThirdOpenProblem}{}
\noindent {\bf First open problem: M5-brane Lagrangians.}
A widely recognized open sub-problem
is the identification of the 6d superconformal field theory
(see \cite{Moore12})
on coincident M5-branes (see \cite[Sec. 3]{Lambert19}),
whose dimensional reduction to four dimensions is expected to elucidate
deep aspects of non-perturbative 4-dimensional Yang-Mills theory;
and not only those of theoretical interest such as
$\mathcal{N}= 1$ Montonen-Olive duality (see \cite{Witten07}),
but also of profound interest in phenomenology,
such as for the prediction of hadron spectra in
confined quantum chromodynamics
(\cite[Sec. 4]{Witten98b}\cite{SakaiSugimoto04}\cite{SakaiSugimoto05},
see
\cite{Rebhan14}\cite{Guijosa16}\cite{Sugimoto16}).

\begin{itemize}
\vspace{-3mm}
\item Typically, it is asserted that this is an open problem only for $N \geq 2$
coincident M5-branes, while the special case of a single M5-brane is well-known.
Indeed, there is a non-covariant Lagrangian formulation
\cite{PerrySchwarz97} \cite{Schwarz97}\cite{APPS97}
adapted to M5-s wrapped on the M-theory circle fiber,
as well as a covariant version at the cost of introducing an auxiliary field
\cite{PastiSorokinTonin97}\cite{BLNPST97}.
Both of these involved some ingenuity in their
construction which makes them look somewhat baroque.
Indeed, their double dimensional reduction
reproduces the D4-brane Lagrangian,
and hence the 5d super Yang-Mills + topological Yang-Mills Lagrangian,
only up to an intricate field redefinition
\cite[Sec. 6]{APPS97}\cite[Sec. 6 \& App. A]{APPS97b}.
\vspace{-2mm}
\item Such complications, already in the formulation of the
base case of a theory whose expected generalization
remains elusive, may indicate that the natural
perspective on the problem has not been identified yet.
What has been missing is a derivation of the
M5-brane Lagrangian systematically from first principles
of M-theory, with manifest dimensional reduction
to the D4-brane.
\end{itemize}

\vspace{-2mm}
\hypertarget{FirstOpenProb}{}
\noindent {\bf Second open problem: Heterotic gauge enhancement.}
The non-perturbative completion of heterotic string
theory has famously been argued \cite{HoravaWitten95}\cite{HoravaWitten96}
to be given by the M-theoretic completion of 11-dimensional supergravity
KK-compactified on a $\mathbb{Z}_2$-orbifolded circle fiber,
where the $\mathbb{Z}_2$-action on the circle has two fixed points,
hence two fixed planes as an action on spacetime: the \emph{MO9-planes}.
\begin{itemize}
\vspace{-2mm}
\item With an actual formulation of M-theory lacking,
the argument for this is necessarily indirect, and it goes as follows.
Plain $11d$ supergravity turns out to have a
gravitational anomaly when considered on such MO9 boundaries,
hence to be inconsistent in itself. Thus, if the putative M-theory
completion indeed exists and hence is consistent, it must
\emph{somehow}
introduce a further contribution to the total anomaly such as to cancel
it. The form of that further anomaly contribution inferred this way
is the same as that of a would-be field theory of charged chiral fermions
on the MO9-planes, just as found in heterotic string theory.

\vspace{-2mm}
\item This suggests that if M-theory actually exists, it must
include avatars of these super gauge field theory degrees of freedom appearing on MO9-branes.
While many consistency checks for this assumption
have been found, it remained open what the
M-theoretic avatar of the heterotic gauge field actually is.
In \cite{HoravaWitten95}\cite{HoravaWitten96}
the 10d SYM action on the MO9s is just added by hand to that
of 11d supergravity.
\end{itemize}

\vspace{0mm}
\hypertarget{SecondOpenProb}{}
\noindent {\bf Third open problem: Super-exceptional M-geometry.}
The Kaluza-Klein (KK) compactifications of 11d supergravity on $n$-tori have a rich space
of scalar moduli fields invariant under ever larger exceptional Lie groups as $n$ increases \cite{CJ},
reflecting just the expected duality symmetries acting on the corresponding
string theories \cite{HT}.
This led to the proposal
\cite{Hull07}
(see also \cite{KNS00}\cite{West03}\cite{PW}\cite{West11}\cite{Bandos17})
that M-theory is an enhancement of $D=11$ supergravity
to a theory of ``exceptional geometry'' with
a ``generalized tangent bundle'' of the form
\begin{equation}
  \label{EGTangent}
  \underset{
    \mathclap{
    \mbox{
      \tiny
      \color{blue}
      exceptional tangent bundle
    }
    }
  }{
    \underbrace{
      T_{\mathrm{ex}} X^{n}
    }
  }
    \;\coloneqq\;
  T X^{n}
    \;\oplus\;
  \underset{
    \mathclap{
    \mbox{
      \tiny
      \color{blue}
      M2 wrapping modes
    }
    \;\;
    }
  }{
    \underbrace{
      \wedge_X^2 T^\ast X^{n}
    }
  }
    \;\oplus\;
  \underset{
    \mathclap{
    \;\;
    \mbox{
      \tiny
      \color{blue}
      M5 wrapping modes
    }
    }
  }{
    \underbrace{
      \wedge_X^5 T^\ast X^{n}
    }
  }
    \;\oplus\;
  \cdots
\end{equation}
locally encoding wrapping modes of the M2- and the M5-brane
already before KK-compactification.
\begin{itemize}
\vspace{-2mm}
\item While the exceptional generalized
geometry enhancements of the bosonic sector of 11d supergravity
is well studied (see, e.g., \cite{Fe}
and references therein),
the inclusion of fermionic exceptional coordinates,
hence a unification of supergeometry with exceptional generalized
geometry to
``super-exceptional generalized geometry'', had remained an
open problem \cite[p. 39]{Cederwall14}\cite[pp. 4, 7]{CederwallEdlundKarlsson13}.
Arguments  were given in \cite{Vaula07}\cite{Bandos17}\cite{FSS18}\cite{SS18}
that the super-exceptional geometry for maximal $n=11$
is to be identified with what was called the
``hidden supergroup of 11d supergravity'' in
\cite{DF}\cite{BAIPV04}\cite{ADR16},
but open questions remained.
In particular, the physical meaning of

\begin{enumerate}[{\bf (a)}]
\vspace{-4mm}
  \item the extra fermion field $\eta$ on super-exceptional spacetime
   (see Def. \ref{ExceptionalTangentSuperSpacetime} below),

\vspace{-1mm}
  \item the parameter $s \in \mathbb{R} \setminus \{0\}$
   for decompositions of the C-field
   (see Prop. \ref{TransgressionElementForM2Cocycle} below)
\end{enumerate}
\vspace{-4mm}
had remained open.

\vspace{-2mm}
\item It may seem that supersymmetrization is but an afterthought once the bosonic
sector of exceptional geometry is understood, (e.g. \cite{BSS18} for $n = 7$).
But most aspects of M-theory are controlled by -- and are emergent from -- its local
supersymmetry structure (see, e.g., \cite{Townsend97}\cite{FSS19a}), with the bosonic
sector being implied by the spin geometry, instead of the other way around. The lift of this
\emph{supersymmetry first principle} to exceptional generalized geometry had remained open.
\end{itemize}

\hypertarget{JointSolution}{}
\noindent {\bf The joint solution.}
In \cite[4.6]{FSS18}\cite{SS18} we had already observed that a supersymmetric enhancement of
$n=11$ exceptional M-geometry is provided by what \cite{DF} called the ``hidden supergroup''
of 11d supergravity. With \cite[Prop. 4.31]{FSS19b}\cite[Prop. 4.4]{FSS19c},
it follows that this must be the correct target space for M5-brane sigma-models, as
we explain in \cref{SuperExceptionalMGeometry}. Accordingly, in  \cref{SuperExceptionalM5Locus}
we consider super-exceptional 5-brane embeddings and find in \cref{SuperExceptionalLagrangian}
that this induces the Perry-Schwarz Lagrangian (reviewed in \cref{PSLagrangianForM5OnS1})
and, after super-exceptional equivariantization along the M-theory circle fiber
introduced in \cref{SuperExceptionalReductionOfMTheoryCircle}, the full
super-exceptional M5-brane Lagrangian, in \cref{SuperExceptionalM5Lagrangian}.
The resulting D4-brane Lagrangian with its 5d SYM+tYM Lagrangian is manifest
(Remark \ref{D4WZ})
and identifies the super-exceptional fermion as the M-theoretic avatar of the heterotic
gauge field (Remark \ref{Super2FormGaugeFieldStrengthAndGauginos}).

\medskip
\noindent Before giving more detail in the
\hyperlink{Outline}{Outline of results},
we recall the two foundational principles of our development:

\medskip

\hypertarget{FirstPrin}{}
\noindent {\bf Principle 1: Super-geometry.}
Despite the evident relevance of super-geometry for the foundations of M-theory,
many constructions in the literature start out with the bosonic data
(e.g. \cite[Sec. 2]{APPS97}) and relegate super-geometrization
to an afterthought (e.g. \cite[Sec. 3]{APPS97}). Countering this tendency,
the ``doubly supersymmetric'' approach of \cite{BPSTV95}\cite{HS97},
reviewed under the name ``super-embedding approach'' in \cite{Sorokin99}\cite{Sorokin01}, shows that
seemingly mysterious, or at least convoluted-looking, aspects of traditional constructions
find their natural meaning and more elegant formulation when strictly everything
is systematically internalized into super-geometry. In particular, the
all-important ``$\kappa$-symmetry'' of super $p$-brane sigma-models,
which, following \cite{GS84}, is traditionally imposed by hand onto the action
principle, is revealed by the superembedding approach to be
(\cite{STV89}, see \cite[Sec. 4.3]{Sorokin99}\cite[Sec. 4.3]{HoweSezgin05})
nothing but the super-odd-graded component of the super-worldvolume
super-diffeomorphism symmetry -- hence a consequence of
the fundamental principle of general covariance internal to super-geometry.

\vspace{-.5cm}

$$
  \hspace{-.4cm}
  \xymatrix@C=2.5pt@R=-.1em{
    \mbox{\footnotesize
            \begin{tabular}{c}
        $p$-brane
        \\
        sigma-models
      \end{tabular}
    }
    &
    &&
    \fbox{
      \small
      \begin{tabular}{c}
        \multirow{2}{*}{
          NSR-type
        }
        \\
        \\
        \cite{NeveuSchwarz71}\cite{Ramond71}
      \end{tabular}
    }
    &&
    \fbox{
      \small
      \begin{tabular}{c}
        \multirow{2}{*}{
          GS-type
        }
        \\
        \\
        \cite{GS84}\cite{BST87}
      \end{tabular}
    }
    &&
    \fbox{
      \small
      \begin{tabular}{c}
        super
        \\
        embedding
        \\
        \cite{BPSTV95}
      \end{tabular}
    }
    &&&
    \fbox{
      \small
      \begin{tabular}{c}
        super-exceptional
        \\
        embedding
        \\
        \cref{SuperExceptionalMGeometry} \cref{SuperExceptionalM5Locus}
      \end{tabular}
    }
    \\
    & &&
    \mbox{
      \tiny
      bosonic
    }
    &&
    \mbox{
      \tiny
      {\color{blue}super} geometric
    }
    &&
    \mbox{
      \tiny
      {\color{blue}super-}geometric
    }
    &&&
    \mbox{
      \tiny
      {\color{blue}super-exceptionally} geometric
    }
    \\
    \mbox{
      \tiny
      spacetime
    }
    &
    X
    &&
    X^{d,1}
    &&
    X^{d,1\vert \mathbf{N}}
    &&
    X^{d,1\vert \mathbf{N}}
    &&&
    X^{d,1\vert \mathbf{N}}_{\mathrm{ex}_s}
    \\
    \\
    \\
    \\
    \mbox{
      \tiny
      worldvolume
    }
    \ar[uuuu]^{
      \mbox{
        \tiny
        \begin{tabular}{c}
          sigma-model
          \\
          field
        \end{tabular}
      }
    }
    &
    \Sigma
    \ar[uuuu]
    &&
    \Sigma^{p,1\vert \mathbf{N}/k}
    \ar[uuuu]
    &&
    \Sigma^{p,1}
    \ar[uuuu]
    &&
    \Sigma^{p,1\vert \mathbf{N}/k}
    \ar[uuuu]
    &&&
    \Sigma^{p,1\vert \mathbf{N}/k}_{\mathrm{ex}_s}
    \ar[uuuu]
    \\
    &
    &&
    \mathclap{
      \mbox{
        \tiny
        \begin{tabular}{c}
          {\color{blue}super-}geometric
       \end{tabular}
      }
    }
    &&
    \mathclap{
      \mbox{
        \tiny
        \begin{tabular}{c}
          bosonic
       \end{tabular}
      }
    }
    &&
    \mathclap{
      \mbox{
        \tiny
        \begin{tabular}{c}
          {\color{blue} super} geometric
       \end{tabular}
      }
    }
    &&&
    \mathclap{
      \mbox{
        \tiny
        \begin{tabular}{c}
          {\color{blue} super-exceptionally} geometric
       \end{tabular}
      }
    }
  }
$$
Indeed, all of the following has been
systematically obtained from the superembedding approach:
The equations of motion of the superstring \cite[Sec. 4]{BPSTV95}
of the M2-brane \cite[Sec. 3]{BPSTV95}
and of the M5-brane \cite{HS97} \cite{HSW97}\cite[5.2]{Sorokin99},
as well as the Lagrangian density of
the superstring and of the M2-brane \cite{BSV95}\cite{HoweSezgin05}.
But an analogous derivation of the M5-brane's Lagrangian density
had remained open.
Notice that it
is the Lagrangian density which gives the
crucial instanton contributions for these branes
\cite{BeckerBeckerStrominger95}\cite{HarveyMoore99}.

\medskip

 \hypertarget{SecondPrin}{}
\noindent {\bf Principle 2: Homotopy theory.}
  The \emph{gauge principle} of physics
  --
  read as saying that
  no two things (e.g. field configuratons) are ever equal or not,
  but that we have to ask for gauge transformations
  between these, and higher order gauge-of-gauge transformations
  between those
  --
  is mathematically embodied in \emph{homotopy theory},
  these days increasingly referred to as ``higher structures''
  (see \cite[Sec. 2]{BSS18} for a lightning introduction and pointers
  to details, and
  see \cite{JSSW19} for a gentle invitation).
Combining this with super-geometry yields
\emph{super-homotopy theory} where \emph{super-geometric $\infty$-groupoids}
(super-$\infty$-stacks) unify super moduli spaces for higher super gauge fields with
super-orbifolds appearing as super-spacetimes.
\begin{center}
\begin{tabular}{lrll}
  & {\it Physics} & \phantom{AA} & {\it Mathematics}
  \\
  \hline
  & {Gauge principle} & & {Homotopy theory}
  \\
  \& &
  {Pauli exclusion principle}
  & &
  {Super-geometry}
  \\
  \hline
  \hline
  = & & &
    {\it Super-homotopy theory}
\end{tabular}
\end{center}
  Homotopy theory, and more so super-homotopy theory, is extremely rich.
  But if, for the time being, we ignore torsion cohomology groups,
  homotopy theory simplifies to \emph{rational homotopy theory}
  \cite{Quillen69}\cite{Sullivan77}
 (see \cite{Hess06}\cite{GrMo13},
 and see \cite{FSS19a}\cite[Sec. 2]{BSS18} for review in our context).
 The main result here is that topological spaces,
 regarded up to rational weak homotopy equivalence,
 are encoded by their differential graded-commutative algebra
 of Sullivan differential forms, regarded up to
 quasi-isomorphism.
  If we suppress some technical fine-print
 (see \cite[(8)]{BSS18} for the precise statement),
 we may schematically write this as follows:
 $$
   \xymatrix{
   \mathrm{Spaces}_{
     \Big/
     \!\!\!\!\!
     \mbox{
       \tiny
       \begin{tabular}{l}
         rational
         \\
         weak homotopy
         \\
         equivalence
       \end{tabular}
     }
   }
   \ar[rr]^-{ \mathrm{CE}(\mathfrak{l} - ) }_-{\simeq}
   &&
    \mathrm{dgcAlgebras}^{\mathrm{op}}_{
     \Big/\!\!\!\!\!
     \mbox{
       \tiny
       \begin{tabular}{l}
         quasi-
         \\
         isomorphism
       \end{tabular}
     }
   }
  }
 $$
  For super-homotopy theory this yields rational superspaces in
\emph{rational super-homotopy theory}
\cite[Sec. 2]{HSS18} (see \cite{FSS19a} for review)
The following table shows the notation which we use,
exemplified for key examples of rational super spaces:

\medskip
\medskip
{\small
\hspace{-5mm}
\def\arraystretch{1.2}
\begin{tabular}{|c||c|c|c|}
  \hline
  &
  \begin{tabular}{c}
    {\bf Rational }
    \\
    {\bf super space }
  \end{tabular}
  &
  \begin{tabular}{c}
    {\bf Loop }
    \\
    {\bf super $L_\infty$-algebra }
  \end{tabular}
  &
  \begin{tabular}{c}
    {\bf Chevalley-Eilenberg }
    \\
    {\bf super dgc-algebras}
    \\
    (``FDA''s)
  \end{tabular}
  \\
  \hline
  \hline
  \mbox{General}
    &
  $X$
    &
  $\mathfrak{l}X$
    &
  $\mathrm{CE}\big( \mathfrak{l}X \big)$
  \\
  \hline
  \begin{tabular}{c}
    Super
    \\
    spacetime
  \end{tabular}
    &
  $\mathbb{T}^{d,1\vert \mathbf{N}}$
    &
  $\mathbb{R}^{d,1\vert \mathbf{N}}$
    &
  $
    \mathbb{R}\big[ \{\psi^\alpha\}_{\alpha = 1}^N, \{e^a\}_{a = 0}^d \big]
    \Big/
    \left(
      \!\!\!
      \begin{array}{lcl}
        d\,\psi^\alpha & \!\!\!\!\!\!\ = \!\!\!\!\!\!\ & 0
        \\
        d\,e^a & \!\!\!\!\!\!\ = \!\!\!\!\!\!\ &
          \overline{\psi}\,\Gamma^a \psi
      \end{array}
      \!\!\!
    \right)
  $
  \\
  \hline
  \begin{tabular}{c}
    Eilenberg-MacLane
    \\
    space
  \end{tabular}
  &
  $
  \begin{aligned}
    & K(\mathbb{R},p+2)
    \\
    \simeq_{{}_{\mathbb{R}}} & B^{p+1} S^1
  \end{aligned}
  $
  &
  $\mathbb{R}[p+1]$
  &
  $
    \mathbb{R}\big[ c_{p+2} \big]
    \Big/
    \left(
      \!\!\!
      \begin{array}{lcl}
        d\,c_{p+2} & \!\!\!\!\!\!\ = \!\!\!\!\!\!\ & 0
      \end{array}
      \!\!\!
    \right)
  $
  \\
  \hline
  \begin{tabular}{c}
    Odd dimensional
    \\
    sphere
  \end{tabular}
  &
  $S^{2k+1}$
  &
  $\mathfrak{l}(S^{2k+1})$
  &
  $
  \mathbb{R}\big[\omega_{2k+1}\big]\Big/
  \left(
    \!\!\!
    \begin{array}{lcl}
      d\,\omega_{2k+1} & \!\!\!\!\!\! = \!\!\!\!\!\! & 0
    \end{array}
    \!\!\!
  \right)
  $
  \\
  \hline
  \begin{tabular}{c}
    Even dimensional
    \\
    sphere
  \end{tabular}
  &
  $S^{2k}$
  &
  $\mathfrak{l}(S^{2k})$
  &
  $
  \mathbb{R}\big[\omega_{2k}, \omega_{4k-1}\big]\Big/
  \left(
    \!\!\!
    \begin{array}{lcl}
      d\,\omega_{2k} & \!\!\!\!\!\! = \!\!\!\!\!\! & 0
      \\
      d\,\omega_{4k-1} & \!\!\!\!\!\! = \!\!\!\!\!\! &
       -\omega_{2k} \wedge \omega_{2k}
    \end{array}
    \!\!\!
  \right)
  $
  \\
  \hline
  \begin{tabular}{c}
    M2-extended
    \\
    super spacetime
  \end{tabular}
  &
  $\widehat{ \mathbb{T}^{10,1\vert \mathbf{32}} }$
  &
  $\mathfrak{m}2\mathfrak{brane}$
  &
  $
    \mathbb{R}
    \big[
      \{\psi^\alpha\}_{\alpha = 1}^{32},
      \{e^a\}_{a = 0}^{10},
      h_3
    \big]
    \Big/
    \left(
      \!\!\!
      \begin{array}{lcl}
        d\,\psi^\alpha & \!\!\!\!\!\!\!\!\! = \!\!\!\!\!\!\ & 0
        \\
        d\,e^a & \!\!\!\!\!\!\!\!\! = \!\!\!\!\!\!\ &
          \overline{\psi}\,\Gamma^a \psi
        \\
        d\,h_3 & \!\!\!\!\!\!\!\!\!= \!\!\!\!\!\!\ &
         \mu_{{}_{\rm M2}}
      \end{array}
      \!\!\!
    \right)
  $
  \\
  \hline
\end{tabular}
}

\medskip

One finds that a considerable amount of structures
expected in M-theory emerge naturally in rational super-homotopy theory:

\begin{itemize}
\vspace{2mm}
\item On super-geometric $\infty$-groupoids,  the Sullivan construction
of rational homotopy theory (see, e.g., \cite{Hess06} \cite{GrMo13}) unifies
with higher super Lie integration \cite[Sec. 3.1]{BM18}
to exhibit super $L_\infty$-algebroids as models for rational
super-homotopy theory. Their Chevalley-Eilenberg algebras are
the ``FDA''s as known in the supergravity literature
\cite{vanNieuwenhuizen82}\cite{DF}\cite{CDF91}.

\vspace{-2mm}
\item  Using super-homotopy theory, we had shown \cite{FSS13}\cite{FSS16a} that
the completion of the ``old brane scan'' to the full ``brane bouquet'' emerges from the
superpoint $\mathbb{R}^{0\vert 1}$ as the classification of iterated universal invariant
higher central extensions.

\vspace{-2mm}
\item This process culminates \cite[p. 12]{FSS19a} in the $D = 11$,
$\mathcal{N} =1$ (hence $N = \mathbf{32}$) super-Minkowski spacetime,
carrying
the super M2-brane cocycle $\mu_{{}_{\rm M2}}$ and,
on the corresponding higher extension,
the super M5-brane cocycle \cite{FSS13} which is the curvature of the M5 Wess-Zumino (WZ) term
\cite[(8)]{BLNPST97}\cite{FSS15} (see \cite{FSS19a} for review):

\vspace{-1.2cm}

\begin{equation}
  \label{TheMBraneCocycles}
  \hspace{-.7cm}
  \xymatrix@C=1.9em@R=13pt{
    &
    &&
    \\
    \mbox{
      \tiny
      \color{blue}
      \begin{tabular}{c}
        extended
        \\
        super-spacetime
      \end{tabular}
    }
    &
    \widehat{ \mathbb{T}^{10,1\vert \mathbf{32}} }
    \ar[rr]^-{
      \mathclap{\phantom{A \atop A}}
      \mbox{
        \tiny
        $
          2
          \big(
          \overset{
            \mathllap{
              \mbox{
                \tiny
                \color{blue}
                super M5-brane cocycle
              }
            }
            \;\;
            \eqqcolon
            \,
            {\color{blue}
              \mathbf{dL}^{\!\!\mathrm{WZ}}
            }
          }{
          \overbrace{
            \tfrac{1}{2}
            h_3
            \wedge
            \pi^\ast \mu_{{}_{\rm M2}}
            +
            \pi^\ast\mu_{{}_{\rm M5} }
          }
          }
          \big)
        $
      }
    }_<<<<<<{\ }="s"
    \ar[dd]_-{
      \mbox{
        \tiny
        $
          \begin{aligned}
            \pi & \simeq \mathrm{hofib}(\mu_{{}_{\rm M2}})
            \\ & \simeq
               (\mu_{{{}_{\rm M2}}}, 2 \mu_{{}_{\rm M5}})^\ast
               \big( (h_{\mathbb{H}})_{\mathbb{R}}\big)
          \end{aligned}
        $
      }
    }
    ^<<<<<<{\ }="t"
    &&
    S^7_{\mathbb{R}}
    \ar[dd]^-{ (h_{\mathbb{H}})_{{}_{\mathbb{R}}} }
    &
          &
    \left(
      \!\!\!\!\!\!
      \mbox{
        \tiny
        $
        \begin{array}{lcl}
          d\,\psi^\alpha & \hspace{-3mm}= &\hspace{-2mm}  0
          \\
          d\,e^a & \hspace{-3mm}= & \hspace{-2mm}\overline{\psi}\;\Gamma^a\psi
          \\
          d\,h_3 & \hspace{-3mm}= & \hspace{-2mm}
            \mu_{{}_{\rm M2}}
        \end{array}
        $
      }
      \!\!\!\!\!\!
    \right)
    \ar@{<-}[rrrr]^{
      \mbox{
        \tiny
        $
        \begin{array}{ccc}
          h_3 & \!\!\!\!\!\! \mapsfrom \!\!\!\!\!\! & h_3
          \\
          \mu_{{}_{\rm M2}} & \!\!\!\!\!\! \mapsfrom \!\!\!\!\!\! & \omega_4
          \\
          \mu_{{}_{\rm M5}}
          & \!\!\!\!\!\! \mapsfrom \!\!\!\!\!\! &
          \omega_7
        \end{array}
        $
      }
    }
    \ar@{<-}[dd]^-{
      \mbox{
        \tiny
        $
        \begin{array}{ccc}
          \psi^\alpha & e^a
          \\
          \mapsup & \mapsup
          \\
          \psi^\alpha & e^a
        \end{array}
        $
      }
    }
    \ar@{}[ddrrrr]|-{ \mbox{ \tiny (po) } }
    &&&&
    \left(
      \!\!\!\!\!\!
      \mbox{
        \tiny
        $
        \begin{array}{lcl}
          d\,h_3 & \hspace{-3mm}= & \hspace{-2mm}\omega_4
          \\
          d\,\omega_4 & \hspace{-3mm}= & \hspace{-2mm} 0
          \\
          d\,\big(h_3 \wedge \omega_4 + 2\omega_7\big)
          & \hspace{-3mm}= & \hspace{-2mm} 0
        \end{array}
        $
      }
      \!\!\!\!\!\!
    \right)
    \ar@{<-^{)}}[dd]^-{
      \mbox{
        \tiny
        $
        \begin{array}{ccc}
          \omega_4 & \omega_7
          \\
          \mapsup & \mapsup
          \\
          \omega_4 & \omega_7
        \end{array}
        $
      }
    }
    \\
    \\
    \mbox{
      \tiny
      \color{blue}
      super-spacetime
    }
    &
    \mathbb{T}^{10,1\vert \mathbf{32}}
    \ar[rr]^-{
      \overset{
        \mbox{
          \tiny
          \color{blue}
          $
          \begin{tabular}{c}
            joint M2/M5-brane cocycle
            \\
            in rational Cohomotopy
          \end{tabular}
          $
        }
      }{
        \mbox{\tiny
          $
          \mu_{{}_{{\rm M2}/{\rm M5}}}
          \coloneqq
          (\mu_{{}_{\rm M2}}, 2 \mu_{{}_{\rm M5}})
          $
        }
      }
    }
    \ar[dr]_-{
      \mathllap{
        \mbox{
        \tiny
        \color{blue}
        \begin{tabular}{c}
          M2-brane
          \\
          cocycle
        \end{tabular}
      }
      \;\;\;
      }
      \mbox{\tiny
        $\mu_{ {}_{\rm M2}}$
      }
    }
    &&
    S^4_{\mathbb{R}}
    \ar[dl]^-{  }
    &
    &
    \left(
      \!\!\!\!\!\!
      \mbox{
        \tiny
        $
        \begin{array}{lcl}
          d\,\psi^\alpha & \hspace{-3mm}= & \hspace{-2mm} 0
          \\
          d\,e^a & \hspace{-3mm}= & \hspace{-2mm}\overline{\psi}\;\Gamma^a\psi
        \end{array}
        $
      }
      \!\!\!\!\!\!
    \right)
    \ar@{<-}[rrrr]_-{
      \mbox{
        \tiny
        $
        \begin{array}{lcc}
          \overset{
            \eqqcolon \;
            { \color{blue} \mu_{{}_{\rm M2}} }
          }{
          \overbrace{
            \big(\tfrac{i}{2}\overline{\psi}\;{{\Gamma}}_{a_1 a_2}\psi\big)
            \wedge e^{a_1} \wedge e^{a_2}
          }
          }
          & \!\!\!\!\!\! \mapsfrom \!\!\!\!\!\! \!\!\!\!\!   & \omega_4
          \\
          \\
          \underset{
            \eqqcolon \;
            { \color{blue} \mu_{{}_{\rm M5}} }
          }{
          \mathllap{
            2 \,
          }
          \underbrace{
          \tfrac{1}{5!}
          (\overline{\psi}\;{{\Gamma}}_{a_1 \cdots a_5} \psi)
          \wedge e^{a_1} \wedge \cdots \wedge e^{a_5}
          }
          }
          & \!\!\!\!\!\! \mapsfrom \!\!\!\!\!\!\!\!\!\!\!    &
          \omega_7
        \end{array}
        $
      }
    }
    &&&&
    \left(
      \!\!\!\!\!\!
      \mbox{
        \tiny
        $
        \begin{array}{lcl}
          d\,\omega_4 & \hspace{-3mm}=& \hspace{-2mm} 0
          \\
          d\,\omega_7 & \hspace{-3mm}= & \hspace{-2mm} - \omega_4 \wedge \omega_4
        \end{array}
        $
      }
      \!\!\!\!\!\!
    \right)
    \\
    &
    &
    B^4 \mathbb{R}
    \ar@{=>}|{h_3}^{ \mbox{ \tiny (pb) } } "s"; "t"
  }
\end{equation}

\vspace{-2mm}
\item
  The M2- and M5-brane cocycles unify
  \cite[2.5]{Sati13}\cite{FSS15} into a single
  non-abelian cocycle $\mu_{{}_{{\rm M2}/{\rm M5}}}$ \eqref{TheMBraneCocycles}
  with coefficients in the rational 4-sphere,
  hence in rational \emph{Cohomotopy} cohomology theory in degree 4.

\vspace{-2mm}
\item
  The Bredon-equivariant enhancement of the joint $\mu_{{}_{{\rm M2}/{\rm M5}}}$ cocycle
  to rational ADE-equivariant Cohomotopy,
  amounts \cite{HSS18}, via Elmendorf's theorem,
  to relative trivializations
  along super-embeddings of fixed/singular super-spacetimes:
\begin{equation}
  \label{RelativeTrivialization}
  \hspace{-1cm}
  \raisebox{44pt}{
  \xymatrix@R=12pt@C=30pt{
    &
    \ar@{}[rrrr]|-{
      \mbox{
        \tiny
        \color{blue}
        \begin{tabular}{c}
          unified M2/M5-brane cocycle
          \\
          in rational Cohomotopy
        \end{tabular}
      }
    }
    &&&&
    \\
    \mbox{
      \tiny
      \color{blue}
      \begin{tabular}{c}
        super
        \\
        spacetime
      \end{tabular}
    }
    \ar@(ul,ur)@[white]^{
      \mbox{
         \tiny
         \color{blue}
         \begin{tabular}{c}
           super
           \\
           ADE/HW-action
         \end{tabular}
      }
    }
    &
    \mathbb{T}^{10,1\vert \mathbf{32}}
    \ar[rrrr]^-{
      \mu_{{}_{\rm M2/M5}}
    }_<<<{\ }="s"
    \ar@(ul,ur)^{ G_{\mathrm{ADE/HW}} }
    &&&&
    S^4_{\mathbb{R}}
    \ar@(ul,ur)^{ G_{\mathrm{ADE/HW}} }
    \\
    \\
    \mbox{
      \tiny
      \color{blue}
      \begin{tabular}{c}
        super BPS
        \\
        M-brane spacetime
      \end{tabular}
    }
    \ar@{}[uu]|-{
      \mbox{
        \tiny
        \color{blue}
        \begin{tabular}{c}
          super
          \\
          embedding
        \end{tabular}
      }
    }
    &
    \mathbb{T}^{
      d,1
      \vert
      \mathbf{N}
    }
    \ar[rrrr]^>>>{\ }="t"
    \ar@{^{(}->}[uu]^-{ i }
    &&&&
    S^{d \lt 4}_{\mathbb{R}}
    \ar[uu]
    \ar@{=>}|{
      \mbox{
        \tiny
        \color{blue}
        \begin{tabular}{c}
          relative trivialization
          \\
          along super embedding
        \end{tabular}
      }
    \;}
    "s"; "t"
  }
  }
\end{equation}
\end{itemize}

These trivializations relative to $\sfrac{1}{2}$-BPS super-embeddings
constitute the corresponding super $p$-brane Green-Schwarz-type
sigma model Lagrangian, at least for branes without gauge fields
on their worldvolume \cite[Prop. 6.10]{HSS18}.
We recall how this works in the case of the M2-brane:

\medskip
\hypertarget{M2Recalled}{}
\noindent {\bf The M2-brane in super-homotopy theory.}
\cite[Prop. 6.10]{HSS18}
The $\kappa$-symmetric Green-Schwarz-type
Lagrangian density for the M2-brane \cite{BST87}
looks intricate when written out in the traditional component formulation
(see \cite[(2.1)]{deWitHoppeNicolai88}\cite[(3)]{DasguptaNicolaiPlefka03}),
but attains a highly elegant form in a fully supergeometric formulation.
Indeed, promoting the M2 worldvolume itself to a super-manifold
embedded into target super-spacetime,
locally of the form shown on the left of the following diagram
\begin{equation}
  \label{M2LagrangianAsAHomotopy}
  \hspace{-1cm}
  \raisebox{44pt}{
  \xymatrix@R=12pt@C=20pt{
    &
    \ar@{}[rrrr]|-{
      \mbox{
        \tiny
        \color{blue}
        \begin{tabular}{c}
          super
          \\
          M2-brane cocycle
        \end{tabular}
      }
    }
    &&&&
    \\
    \mbox{
      \tiny
      \color{blue}
      \begin{tabular}{c}
        super
        \\
        spacetime
      \end{tabular}
    }
    \ar@(ul,ur)@[white]^{
      \mbox{
         \tiny
         \color{blue}
         \begin{tabular}{c}
           super
           \\
           ADE-action
         \end{tabular}
      }
    }
    &
    \mathbb{T}^{10,1\vert \mathbf{32}}
    \ar[rrrr]^-{
      \mathbf{dL}^{\!\!\mathrm{WZ}}
      \;\coloneqq\;
      \mu_{{}_{\rm M2}}
    }_<<<{\ }="s"
    \ar@(ul,ur)^{ G_{\mathrm{ADE}} }
    &&&&
    B^4 \mathbb{R}
    \\
    \\
    \mbox{
      \tiny
      \color{blue}
      \begin{tabular}{c}
        super
        \\
        M2 spacetime
      \end{tabular}
    }
    \ar@{}[uu]|-{
      \mbox{
        \tiny
        \color{blue}
        \begin{tabular}{c}
          super
          \\
          embedding
        \end{tabular}
      }
    }
    &
    \mathbb{T}^{
      2,1
      \vert
      8 \cdot \mathbf{2}
    }
    \ar[rrrr]^>>>{\ }="t"
    \ar@{^{(}->}[uu]^-{ i }
    &&&&
    \ast
    \ar[uu]
    \ar@{=>}|{
      \underset{
        \mbox{
          \tiny
          \color{blue}
          Lagrangian density
        }
      }{
        \mathbf{L}^{\!\!\mathrm{NG}}
          \coloneqq
        e^0 \wedge e^1 \wedge e^2
      }
    }
    "s"; "t"
  }
  }
  \phantom{AAA}
  \Longleftrightarrow
  \phantom{AAA}
  \underset{
    \mathclap{
      \mbox{
        \tiny
        \color{blue}
        \begin{tabular}{c}
          super
          \\
          embedding
        \end{tabular}
      }
    }
  }{
    \underbrace{
      i^\ast
    }
  }
  \!\!
  \overset{
    \mbox{
      \tiny
      \color{blue}
      \begin{tabular}{c}
        super
        \\
        M2 cocycle
      \end{tabular}
    }
  }{
    \overbrace{
      \mathclap{\phantom{A \atop A}}
      \mathbf{dL}^{\!\!\mathrm{WZ}}
    }
  }
  \;\;=\;\;
  d\,\;
  \overset{
    \mathclap{
    \mbox{
      \tiny
      \color{blue}
      \begin{tabular}{c}
        super
        Nambu-Goto Lagrangian
        \\
        = Green-Schwarz Lagrangian
      \end{tabular}
    }
    }
  }{
    \overbrace{
      \mathclap{\phantom{A \atop A}}
      \mathbf{L}^{\!\!\mathrm{NG}}
    }
  }
\end{equation}
the M2-brane's Lagrangian density $\mathbf{L}^{\!\!\mathrm{NG}}$
arises simply as the super-homotopy theoretic trivialization
of the M2-brane cocycle restricted along the super-embedding.
Concretely, this identifies the Lagrangian
with the \emph{super-volume form}
\begin{equation}
  \label{SuperVolumeFormIn3d}
  \mathbf{L}^{\!\!\mathrm{NG}}
  \;\coloneqq\;
  \mathrm{svol}_{2+1}
  \;\coloneqq\;
  e^0 \wedge e^1 \wedge e^2
  \;=\;
  \tfrac{1}{3!} \epsilon_{a_0 a_1 a_1}
  e^{a_0} \wedge e^{a_1} \wedge e^{a_2}
  \;\;
  \in
  \Omega^\bullet_{\mathrm{li}}\big( \mathbb{R}^{2,1\vert 8 \cdot \mathbf{2}} \big)
  \;\simeq\;
  \mathrm{CE}\big( \mathbb{R}^{2,1\vert 8 \cdot \mathbf{2}} \big)
\end{equation}
on the super M2 worldvolume.
Here
$
  e^a
    \;\coloneqq\;
  d x^a + \overline{\theta} \Gamma^a d \theta
$
denotes the vielbein 1-forms
which are \emph{left-invariant} ``li" with respect to the
translational supersymmetry action of $\mathbb{R}^{2,1\vert 8 \cdot \mathbf{2}}$ on itself
(see Remark \ref{SuperExceptionalSpacetimeAsManifold} below).
Consequently,  the super-volume form \eqref{SuperVolumeFormIn3d}
has as bosonic component the ordinary volume form
$\mathrm{vol}_{2+1} \in \Omega^3\big( \mathbb{R}^{2,1} \big)$,
to which the fermionic components are added, ensuring
overall supersymmetry-invariance of the super-volume form
$$
  \begin{aligned}
    \mathbf{L}^{\!\!\mathrm{NG}}
    \;\coloneqq\;
    \overset{
        \overset{
          \mbox{
            \tiny
            \color{blue}
            super-volume form
          }
        }{
          \mathrm{svol}_{2+1}
          \coloneqq
        }
    }{
      \overbrace{
        e^0 \wedge e^1 \wedge e^2
      }
    }
    & \;=\;
    \overset{
      \overset{
        \mbox{
          \tiny
          \color{blue}
          ordinary volume form
        }
      }{
        = \mathrm{vol}_{2+1}
      }
    }{
      \overbrace{
        d x^0 \wedge dx^1 \wedge dx^2
      }
    }
    +
    \overset{
      \mbox{
        \tiny
        \color{blue}
        \raisebox{16pt}{
          fermionic corrections
        }
      }
    }{
      \mathcal{O}(\theta \Gamma d\theta).
    }
  \end{aligned}
$$
These fermionic correction terms, systematically
obtained here simply by expanding out the super-volume form
in components, constitute the otherwise intricate-looking
components of the Green-Schwarz-type Lagrangian
for the M2-brane, which is thereby revealed simply as the
super-Nambu-Goto Lagrangian.

\medskip
What had been left open in \cite{HSS18} is the analogous result
for brane species with gauge fields on their worldvolume, notably
the case of the M5-brane, which is a much richer situation (see \cite{FSS13}).
This will be one of the main topics that we address in this paper.

 \medskip

\hypertarget{Outline}{}
\noindent {\bf Outline of results.} We establish the following:
\begin{enumerate}[{\bf (i)}]

\vspace{-2mm}
\item In \cref{PSLagrangianForM5OnS1} we
generalize the bosonic Perry-Schwarz Lagrangian
$\mathbf{L}^{\!\!\mathrm{PS}} = F \wedge \widetilde F$
to a coordinate-invariant
expression applicable to possibly non-trivial worldvolume circle bundles.

\vspace{-2mm}
\item
 In \cref{SuperExceptionalMGeometry} we
 recall super-exceptional M-geometry
 with the super-exceptional M5-brane cocycle
 and
 introduce super-exceptional embedding of M-brane spacetimes.

\vspace{-2mm}
\item
  In \cref{SuperExceptionalM5Locus} we introduce specifically the
  super-exceptional embedding of the
  $\tfrac{1}{2}\mathrm{M5} = \mathrm{MK6} \cap \mathrm{MO9}$
  brane configuration
  and find the super-exceptional lift of the
  isometry along the Ho{\v r}ava-Witten-circle $S^1_{\mathrm{HW}}$:

  \vspace{-2mm}

  \begin{center}
  \fbox{
    \xymatrix@R=2pt@C=0pt{
      &
      &
      \ar@{}[rr]_-{
        \mbox{
          {\bf super-exceptional lift of...}
        }
      }
      &
      &
      &
      {\phantom{AA}}
      \\
      {\phantom{AAA}}
      &
      \mathclap{
      \mbox{\color{blue} \footnotesize
        \begin{tabular}{c}
          isometry
          \\
          along $S^1_{\mathrm{HW}}$
        \end{tabular}
      }
      \;\;\;\;\;\;\;\;\;\;\;\;\;
      }
      &
      \mathclap{
      \mbox{\color{blue} \footnotesize
        \begin{tabular}{c}
          on $\tfrac{1}{2}\mathrm{M5}$
        \end{tabular}
      }
      }
      \ar@{}[rr]|-{
        \mbox{\color{blue} \footnotesize
          \begin{tabular}{c}
            embedded in
          \end{tabular}
        }
      }
      &
      {\phantom{AAAAAAAAAAAAAAAAA}}
      &
      \mathclap{
      \mbox{\color{blue} \footnotesize
        \begin{tabular}{c}
          M-theory spacetime
        \end{tabular}
      }
      }
      &
      \\
      &
      \ar@(dl,ul)^{ v_5^{\mathrm{ex}_s} }
      &
      \big(
        \mathbb{R}^{5,1\vert \mathbf{8}}
        \times
        \mathbb{R}^1
      \big)_{\mathrm{ex}_s}
      \ar@{^{(}->}[rr]^-{ i_{\mathrm{ex}_s} }
      &&
      \mathbb{R}^{10,1\vert \mathbf{32}}_{\mathrm{ex}_s}
      &&
      &
      \\
      &
      \mathclap{
        \mbox{
          Prop. \ref{LiftOfVectorField}
        }
      \;\;\;\;\;\;\;\;\;\;\;\;\;
      }
      &
      \mbox{
        Def. \ref{HalfM5LocusAndItsExceptionalTangentBundle}
      }
      &&
      \mbox{
        Def. \ref{ExceptionalTangentSuperSpacetime}
      }
      &
    }
  }
  \end{center}

\item
  In \cref{SuperExceptionalLagrangian}
  we find a natural super-exceptional lift
  of the bosonic gauge field strength with KK-modes,
  the  bosonic Perry-Schwarz Lagrangian as well as of the
  topological Yang-Mills Lagrangian:

\vspace{2mm}
 \hspace{-6mm}
 {\small
  \begin{tabular}{|c|c|c|c|c|}
    \hline
    \multicolumn{5}{|c|}{
      {\bf super-exceptional lift of...}
    }
    \\
    \hline
    \mbox{\color{blue}
    \begin{tabular}{c}
      M5-worldvolume
      \\
      higher gauge flux
    \end{tabular}
    }
    &
    \mbox{\color{blue}
    \begin{tabular}{c}
      M5-brane cocycle
      \\
      = WZW curvature:
    \end{tabular}
    }
    &
    \mbox{\color{blue}
    \begin{tabular}{c}
      (dual) gauge flux
      \\
      with KK-modes:
    \end{tabular}
    }
    &
    \mbox{\color{blue}
    \begin{tabular}{c}
      bosonic
      \\
      Perry-Schwarz
      \\
      Lagrangian:
          \end{tabular}
    }
    &
    \mbox{\color{blue}
    \begin{tabular}{c}
      topological
      \\
      Yang-Mills
      \\
      Lagrangian:
    \end{tabular}
    }
    \\
    &&&&
    \\
    $
      \underset{
        \mathclap{ \phantom{{A \atop A} \atop A} }
        \mbox{
          \small
          s.t.
          $
          d H_{\mathrm{ex}_s}
          =
          (\pi_{\mathrm{ex}_s})^\ast \mu_{{}_{\rm M2}}
          $
        }
      }{
        H_{\mathrm{ex}_s}
      }
    $
    &
    $
      \underset{
        \eqqcolon
        \mathbf{dL}^{\!\!\mathrm{WZ}}_{\mathrm{ex}_s}
      }{
      \underbrace{
      (\pi_{\mathrm{ex}_s})^\ast \mu_{{}_{\rm M5}}
      +
      \tfrac{1}{2}H_{\mathrm{ex}_s} \wedge d H_{\mathrm{ex}_s}
      }
      }
    $
    &
    $
      F_{\mathrm{ex}_s}
      \;\;\;\;
      \big(\,\widetilde F_{\mathrm{ex}_s}\,\big)
    $
    &
    $
      \underset{
        \eqqcolon
        \mathbf{L}^{\!\!\mathrm{PS}}_{\mathrm{ex}_s}
      }{
      \underbrace{
        -\tfrac{1}{2}
        F_{\mathrm{ex}_s}
          \wedge
        \widetilde F_{\mathrm{ex}_s}
      }
      }
    $
    &
    $
      \underset{
        \eqqcolon
        \mathbf{L}^{\!\!\mathrm{tYM}}_{\mathrm{ex}_s}
      }{
      \underbrace{
        -\tfrac{1}{2}
        F_{\mathrm{ex}_s}
          \wedge
        F_{\mathrm{ex}_s}
      }
      }
    $
    \\
    Prop. \ref{TransgressionElementForM2Cocycle}
    &
    Def. \ref{ExceptionalM5SuperCocycle}
    &
    Def. \ref{SuperExceptionalPSLagrangian} (i)
    &
    Def. \ref{SuperExceptionalPSLagrangian} (ii)
    &
    Def. \ref{SuperExceptionalTopologicalYM}
    \\
    \hline
  \end{tabular}
 }

  \vspace{2mm}

  In the course of this identification we find that

   {\bf i)} the putative parameter $s$ of the super-exceptional
    geometry is fixed to $s = -3$
    (Prop. \ref{ExceptionalPreimageOfPerrySchwarzLagrangian}
    \hyperlink{SuperExceptionalPSLagrangianForsMinusThree}{ii)})

    {\bf ii)}
    the super-exception fermion $\eta$ is
    the M-theory avatar of the heterotic gaugino field
    (Prop. \ref{calFexcDecomposed},
    Rem. \ref{Super2FormGaugeFieldStrengthAndGauginos}).

  Then we show (Prop. \ref{TrivializationOf7CocycleOnExceptionalHalfM5})
  that the super-exceptional Perry-Schwarz Lagrangian
 arises via a super-exceptional analog of the super-embedding
 mechanism as a trivialization
 of the $S^1_{\mathrm{HW}}$-compactified
 super-exceptional M5-brane cocycle after restriction along the
 super-exceptional embedding of the $\tfrac{1}{2}\mathrm{M5}$.
 $$
   \underset{
     \mathclap{
     \mbox{
       \tiny
       \color{blue}
       \begin{tabular}{c}
         super-exceptional
         \\
         $S^1_{\mathrm{HW}}$-compactification
       \end{tabular}
     }
     \;\;\;\;\;\;\;\;\;\;\;\;\;
     }
   }{
   \underbrace{
     \mathclap{\phantom{A \atop A}}
     \iota_{v_5}^{\mathrm{ex}_s}
   }
   }
   \;\;
   \underset{
     \mathclap{
     \;\;\;\;\;\;\;\;\;\;\;\;\;
     \mbox{
       \tiny
       \color{blue}
       \begin{tabular}{c}
         super-exceptional
         \\
         embedding
       \end{tabular}
     }
     }
   }{
     \underbrace{
       \mathclap{\phantom{A \atop A}}
       (i_{\mathrm{ex}_s})^\ast
     }
   }
   \;
   \overset{
     \mathclap{
       \mbox{
         \tiny
         \color{blue}
         \begin{tabular}{c}
           super-exceptional
           \\
           M5-brane cocycle
         \end{tabular}
       }
       \;\;\;\;
     }
   }{
     \overbrace{
       \mathclap{ \phantom{A \atop A} }
       \mathbf{dL}^{\!\!\mathrm{WZ}}_{\mathrm{ex}_s}
     }
   }
   \;\;=\;\;
   d\,\;
   \overset{
     \mathclap{
     \;\;\;\;
     \mbox{
       \tiny
       \color{blue}
       \begin{tabular}{c}
         super-exceptional
         \\
         Perry-Schwarz Lagrangian
       \end{tabular}
     }
     }
   }{
     \overbrace{
       \mathclap{\phantom{A \atop A}}
       \mathbf{L}^{\!\!\mathrm{PS}}_{\mathrm{ex}_s}
     }
   }.
 $$
 This is a partial analog for the M5-brane
 of the super-embedding construction of the M2-brane
 \eqref{M2LagrangianAsAHomotopy}. To get the full
 statement we need not just compactify but
 \emph{equivariantize}
 along $S^1_{\mathrm{HW}}$:

\vspace{-2mm}
\item
  In \cref{SuperExceptionalReductionOfMTheoryCircle}
  we show that an equivariant enhancement of the super-exceptional M5-cocycle
  with respect to super-exceptional $\Omega S^2_{\mathrm{HW}}$-action
  exists, where $\Omega S^2$ is the based loop space of the two-sphere.
  Furthermore, this unifies it with the super-exceptional Perry-Schwarz
  and the super-exceptional topological Yang-Mills Lagrangian
  (Theorem \ref{SuperExceptionalPSEquivariantEnhancement}).

  To put this in perspective,
  we also explain
  (by Prop. \ref{RationalFibrationOf6dSuperspacetimeOver5d})
  how $\Omega S^2_{\mathrm{HW}}
  \to S^1_{\mathrm{HW}}$ refines the naive circle
  action by taking the super-cocycle for the little-string
  in 6d into account.
This is a 6d analog
to capturing the form fields in 11d M-theory
via the Cohomotopical 4-sphere coefficient
\cite{Sati13}\cite{FSS15},
leading to a description of type IIA in ten dimensions using
a refined variant of the loop space of  $S^4$,
namely the cyclic loop space
\cite{FSS16a}\cite{FSS16b} (see \cite{FSS19a} for overview).

\vspace{-2mm}
\item
  In \cref{SuperExceptionalM5Lagrangian} we
  put all the pieces together and establish
  (Cor \ref{M5LagrangianIsRelativeTrivialization}) the
  full super-exceptional embedding construction of the M5-brane
  Lagrangian as a sum of the super-exceptional Nambu-Goto Lagrangian
  and the super-exceptional Perry-Schwarz Lagrangian,
  the analogue of the M2 brane construction \eqref{M2LagrangianAsAHomotopy}:

\vspace{-.6cm}

\begin{equation}
  \label{M5LagrangianAsAHomotopy}
  \hspace{-1cm}
  \raisebox{44pt}{
  \xymatrix@R=12pt@C=23pt{
    &
    \ar@{}[rrrr]|-{
      \mbox{
        \tiny
        \color{blue}
        \begin{tabular}{c}
          super-exceptional
          \\
          M5-brane cocycle
        \end{tabular}
      }
    }
    &&&&
    \\
    \mbox{
      \tiny
      \color{blue}
      \begin{tabular}{c}
        super-exceptional
        \\
        spacetime
      \end{tabular}
      \hspace{-1cm}
    }
    \ar@(ul,ur)@[white]^{
      \mbox{
         \tiny
         \color{blue}
         \begin{tabular}{c}
           super-exceptional
           \\
           ADE-action
         \end{tabular}
         \hspace{-1cm}
      }
    }
    &
    \big(
      \mathbb{T}^{9,1\vert \mathbf{16}}
      \times
      \mathbb{R}^1
    \big)_{\mathrm{ex}_s}
    \ar[rrrr]^-{
      \mathbf{dL}^{\!\!\mathrm{WZ}}_{\mathrm{ex}_s}
      \;\coloneqq\;
      (\pi_{\mathrm{ex}_s})^\ast
      \mu_{{}_{\rm M5}}
      +
      \frac{1}{2}
      H_{\mathrm{ex}_s} \wedge d H_{\mathrm{ex}_s}
    }_<<<{\ }="s"
    \ar@(ul,ur)^{ G_{\mathrm{ADE}} }
    &&&&
    S^7_{\mathbb{R}}
    \\
    \\
    \mbox{
      \tiny
      \color{blue}
      \begin{tabular}{c}
        super-exceptional
        \\
        $\tfrac{1}{2}\mathrm{M5}$ spacetime
      \end{tabular}
      \hspace{-1cm}
    }
    \ar@{}[uu]|-{
      \mbox{
        \tiny
        \color{blue}
        \begin{tabular}{c}
          super-exceptional
          \\
          embedding
        \end{tabular}
        \hspace{-1cm}
      }
    }
    &
    \big(
      \mathbb{T}^{
        5,1
        \vert
        \mathbf{8}
      }
      \times
      \mathbb{T}^1
    \big)_{\mathrm{ex}_s}
    \ar[rrrr]^>>>{\ }="t"
    \ar@{^{(}->}[uu]^-{
      i_{\mathrm{ex}_s}
    }
    &&&&
    \ast
    \ar[uu]
    \ar@{=>}|{
      \underset{
        \mbox{
          \tiny
          \color{blue}
          \begin{tabular}{c}
            relative trivialization
            \\
            along super-exceptional embedding
          \end{tabular}
        }
      }{
        \mathbf{L}^{\!\!\mathrm{NG}}_{\mathrm{ex}_s}
        +
        \mathbf{L}^{\!\!\mathrm{PS}}_{\mathrm{ex}_s}
        \wedge e^5
      }
    }
    "s"; "t"
  }
  }
  \phantom{A}
  \Leftrightarrow
  \phantom{A}
  \underset{
    \mathclap{
      \mbox{
        \tiny
        \color{blue}
        \begin{tabular}{c}
          super-exceptional
          \\
          embedding
        \end{tabular}
      }
    }
  }{
    \underbrace{
      (i_{\mathrm{ex}_s})^\ast
    }
  }
  \;
  \overset{
    \mathclap{
    \mbox{
      \tiny
      \color{blue}
      \begin{tabular}{c}
        super-exceptional
        \\
        M5 cocycle
      \end{tabular}
    }
    }
  }{
    \overbrace{
      \mathclap{\phantom{A \atop A}}
      \mathbf{dL}^{\!\!\mathrm{WZ}}_{\mathrm{ex}_s}
    }
  }
  \;\;\;=\;\;\;
  d\,\;
  \Big(
  \overset{
    \mathclap{
    \mbox{
      \tiny
      \color{blue}
      \begin{tabular}{c}
        super-exceptional
        \\
        Nambu-Goto Lagrangian
      \end{tabular}
    }
    \;\;\;\;\;\;\;\;\;
    }
  }{
    \overbrace{
      \mathclap{\phantom{A \atop A}}
      \mathbf{L}^{\!\!\mathrm{NG}}_{\mathrm{ex}_s}
    }
  }
  \;\;\;\;
  +
  \;\;\;\;
  \overset{
    \mathclap{
    \;\;\;\;\;\;\;
    \mbox{
      \tiny
      \color{blue}
      \begin{tabular}{c}
        super-exceptional
        \\
        Perry-Schwarz Lagrangian
      \end{tabular}
    }
    }
  }{
    \overbrace{
      \mathclap{\phantom{A \atop A}}
      \mathbf{L}^{\!\!\mathrm{PS}}_{\mathrm{ex}_s}
    }
    \wedge
    e^5
  }
  \Big)
\end{equation}

 Moreover, we show (Theorem \ref{TheTrivialization})
 that
 the $\Omega S^2_{\mathrm{HW}}$-equivariant enhancement of
 the super-exceptional M5-brane cocycle,
 hence the homotopy-theoretic KK-compactification on the
 $S^1_{\mathrm{HW}}$-fiber, makes this super-embedding
 construction pick up the manifest WZ-term of the D4-brane
 (Remark \ref{D4WZ}):

\vspace{3mm}
\fbox{
$\;
\begin{array}{c}
\small
      \Big(
        \underset{
          \mathclap{
          \mbox{
            \tiny
            \color{blue}
            \begin{tabular}{c}
              super-exceptional
              \\
              embedding
            \end{tabular}
          }
          \;\;\;\;\;\;
          }
        }{
          \underbrace{
            \mathclap{\phantom{A \atop A}}
            (i_{\mathrm{ex}_s})^\ast
          }
        }
        \overset{
          \mathclap{
          \mbox{
            \tiny
            \color{blue}
            \begin{tabular}{c}
              super-exceptional
              \\
              M5 cocycle
            \end{tabular}
          }
          }
        }{
          \overbrace{
            \mathclap{\phantom{A \atop A}}
            \mathbf{dL}^{\!\!\mathrm{WZ}}_{\mathrm{ex}_s}
          }
        }
      \Big)
      \!\!
      \underset{
        \;\;\;\;\;\;
        \mathclap{
        \mbox{
          \tiny
          \color{blue}
          \begin{tabular}{c}
            super-exceptional
            \\
            $\Omega S^2_{\mathrm{HW}}$-equivariant
            \\
            enhancement
          \end{tabular}
        }
        }
      }{
        \underbrace{
          {}_{\sslash \Omega S^2_{\mathrm{HW}}}
        }
      }
  \;=\;
  d
  \Big(
  \overset{
    \mathclap{
    \mbox{
       \tiny
       \color{blue}
       \begin{tabular}{c}
         super-exceptional
         \\
         Nambu-Goto Lagrangian
       \end{tabular}
    }
    }
  }{
  \overbrace{
    \mathclap{\phantom{A \atop A}}
    \mathrm{vol}^{5+1}_{\mathrm{ex}_s}
  }
  }
  \;+
  \!
  \underset{
    \mathclap{
    \mbox{
      \tiny
      \color{blue}
      \begin{tabular}{c}
        MC-form along $S^1_{\mathrm{HW}}$
      \end{tabular}
    }
    }
  }{
    \underbrace{
      \mathclap{\phantom{A \atop A}}
      e^5
    }
  }
  \!\!
  \wedge
  \overset{
    \mathclap{
    \mbox{
      \tiny
      \color{blue}
      \begin{tabular}{c}
        super-exceptional
        \\
        Perry-Schwarz Lagrangian
      \end{tabular}
    }
    }
  }{
  \overbrace{
  \mathclap{\phantom{A \atop a}}
    \tfrac{1}{2}
    F_{\mathrm{ex}_s}
    \wedge
    \widetilde F_{\mathrm{ex}_s}
  }
  }
  \Big)
  \;+\;
  e^5
  \wedge
  d\Big(
  \overset{
    \mathrlap{
      \mbox{
        \tiny
        \color{blue}
        \begin{tabular}{c}
          $
          \mathrlap{
          \!\!\!\!\!\!\!\!\!\!\!\!\!\!
          \mbox{
          super-exceptional
          topological Yang-Mills Lagrangian
          }}
          $
          \\
          D4 WZ Lagrangian
        \end{tabular}
      }
    }
  }
  {
    \overbrace{
    \underset{
      \mathclap{
      \mbox{
        \tiny
        \color{blue}
        \begin{tabular}{c}
          graviphoton
          \\
          RR potential
        \end{tabular}
      }
      }
    }{
    \underbrace{
      \mathclap{\phantom{A \atop A}}
      \!\!C_1
    }
    }
    \!\!
    \wedge
      \mathclap{ \phantom{A \atop A} }
        \tfrac{1}{2}
        F_{\mathrm{ex}_s}
        \wedge
        F_{\mathrm{ex}_s}
    }
  }
  \Big)
  \;+\;
  \underset{
    \mathclap{
      \mbox{
        \tiny
        \color{blue}
        \begin{tabular}{c}
          little string
          \\
          cochain in 5d
        \end{tabular}
      }
    }
  }{
    \underbrace{
      \mu^{5d}_{{}_{\rm L1}}
    }
  }
  \wedge
    \overbrace{
      \mathclap{ \phantom{A \atop A} }
        \tfrac{1}{2}
        F_{\mathrm{ex}_s}
        \wedge
        F_{\mathrm{ex}_s}
    }
  \\
  \;\; \in \;
  H^7
  \Big(
    \underset{
      \mathclap{
      \mbox{
        \tiny
        \color{blue}
        \begin{tabular}{c}
          super-exceptional homotopy-reduction
          \\
          on M-theory circle fiber
        \end{tabular}
      }
      }
    }{
    \underbrace{
    \overset{
      \mathclap{
      \mbox{
        \tiny
        \color{blue}
        \begin{tabular}{c}
          super-exceptional
          \\
          $\tfrac{1}{2}\mathrm{M5}$-spacetime
        \end{tabular}
      }
      }
    }{
    \overbrace{
    \mathclap{\phantom{A \atop A}}
    \big(
      \mathbb{R}^{5,1\vert \mathbf{8}}
      \times
      \mathbb{R}^1
    \big)_{\mathrm{ex}_s}
    }
    }
    \!\sslash\!
    \Omega S^2_{\mathrm{HW}}
    }
    }
  \Big)
\end{array}
$
}

Finally, we observe
(Remark \ref{EquivariantRelativeTrivialization})
that there are two extensions of the
compactified super-exceptional $\tfrac{1}{2}\mathrm{M5}$-spacetime
on which the D4 WZ-term becomes exact already before dimensional reduction:
one of these implements the heterotic Green-Schwarz mechanism and the
WZ-term of the heterotic NS5-brane (Remark \ref{Heterotic5Brane}).

With this, we may elegantly sum up the whole picture in the following
homotopy diagram:

\begin{equation}
  \label{M5LagrangianAsAHomotopy}
  \hspace{-1cm}
  \raisebox{125pt}{
  \xymatrix@R=22pt@C=45pt{
    &
    \ar@{}[rrrr]|-{
      \mbox{
        \tiny
        \begin{tabular}{c}
          {\color{blue} super-exceptional}
          \\
          { \color{blue} M5-brane cocycle }
          \\
          (Def. \ref{ExceptionalM5SuperCocycle})
        \end{tabular}
      }
    }
    &&&&
    \\
    \mbox{
      \tiny
      \begin{tabular}{c}
        {\color{blue} super-exceptional }
        \\
        {\color{blue} heterotic M-theory }
        \\
        {\color{blue} spacetime }
        \\
        (Def. \ref{SuperExceptionalMTheorySpacetime})
      \end{tabular}
      \hspace{-1cm}
    }
    \ar@(ul,ur)@[white]^{
      \mbox{
         \tiny
         \begin{tabular}{c}
           {\color{blue} super-exceptional }
           \\
           {\color{blue} ADE-action }
           \\
           (Lemma \ref{ReflectionAutomorphismOnExceptionalTangentSuperSpacetime})
         \end{tabular}
         \hspace{-1cm}
      }
    }
    &
    \big(
      \mathbb{T}^{9,1\vert \mathbf{16}}
      \times
      \mathbb{R}^1
    \big)_{\mathrm{ex}_s}
    \ar[rrrr]^-{
      \mathbf{dL}^{\!\!\mathrm{WZ}}_{\mathrm{ex}_s}
      \;\coloneqq\;
      (\pi_{\mathrm{ex}_s})^\ast
      \mu_{{}_{\rm M5}}
      +
      \frac{1}{2}
      H_{\mathrm{ex}_s} \wedge d H_{\mathrm{ex}_s}
    }_<<<<<<<<<<{\ }="s2"
    \ar@(ul,ur)^{ G_{\mathrm{ADE}} }
    &&&&
    S^7_{\mathbb{R}}
    \\
    \\
    \mbox{
      \tiny
      \begin{tabular}{c}
        {\color{blue} super-exceptional }
        \\
        {\color{blue} $\tfrac{1}{2}\mathrm{M5}$ spacetime }
        \\
        (Def. \ref{HalfM5LocusAndItsExceptionalTangentBundle})
      \end{tabular}
      \hspace{-1cm}
    }
    \ar@{}[uu]|-{
      \mbox{
        \tiny
        \begin{tabular}{c}
          {\color{blue} super-exceptional }
          \\
          {\color{blue} embedding }
          \\
          (Lemma \ref{SuperExceptionalEmbeddings})
        \end{tabular}
        \hspace{-1cm}
      }
    }
    \ar@{}[dd]|-{
      \mbox{
        \tiny
        \begin{tabular}{c}
          {\color{blue} super-exceptional }
          \\
          {\color{blue} KK-compactification }
          \\
          \eqref{SuperExceptionalHalfM5QuotientedByOmegaS2}
        \end{tabular}
        \hspace{-1cm}
      }
    }
    &
    \big(
      \mathbb{T}^{
        5,1
        \vert
        \mathbf{8}
      }
      \times
      \mathbb{T}^1
    \big)_{\mathrm{ex}_s}
    \ar@{^{(}->}[uu]^-{
      i_{\mathrm{ex}_s}
    }
    \ar[dd]_-{ q_{{}_{\Omega S^2_{\mathrm{HW}}}} }
    &&&&
    \\
    \\
    \mbox{
      \tiny
      \begin{tabular}{c}
        {\color{blue} super-exceptional }
        \\
        {\color{blue} heterotic $\tfrac{1}{2}\mathrm{M5}$ spacetime }
        \\
        {\color{blue} compactified on $S^1_{\mathrm{HW}}$ }
        \\
        (Def. \ref{HomotopyQuotientOfSuperExceptionalHalfM5Spacetime},
         Rem. \ref{Heterotic5Brane})
      \end{tabular}
      \hspace{-1cm}
    }
    &
    \Big(
      \big(
        \mathbb{T}^{5,1\vert \mathbf{8}}
        \times
        \mathbb{T}^1
      \big)_{\mathrm{ex}_s}^{\mathrm{het}}
    \Big)_{\sslash \Omega S^2_{\mathrm{HW}}}
    \ar@/^1pc/[uuuurrrr]|>>>>>>>>>>>>>>>>>>>>{\;\;\;\;\;\;\;\;\;\;\;\;\;\;\;
      \overset{
        \mbox{
          \tiny
          \begin{tabular}{c}
            {\color{blue} $\Omega S^2$-equivariant }
            \\
            {\color{blue} super-exceptional }
            \\
            {\color{blue} M5-brane cocycle }
            \\
            (Theorem \ref{SuperExceptionalPSEquivariantEnhancement})
            \\
            \phantom{a}
          \end{tabular}
        }
      }{
      (i_{\mathrm{ex}_s})^\ast\mathbf{dL}^{\!\!\mathrm{WZ}}_{\mathrm{ex}_s}
      -
      \omega^2 \wedge \mathbf{L}^{\!\!\mathrm{PS}}_{\mathrm{ex}_s}
      -
      \omega_3 \wedge \mathbf{L}^{\!\!\mathrm{tYM}}_{\mathrm{ex}_s}
      }
    }
    ^<<<<<<<<<<<<<<<<<<<<<<<<<<<<<{\ }="t2"
    _<<<<<<<<<<<<<<<<<<<<<<<<<<<<<{\ }="s"
    \ar[rrrr]^-{\ }="t"
    &&&&
    \ast
    \ar[uuuu]
    \ar@{=} "s2"; "t2"
    \ar@{=>}|-{
      \overset{
        \mbox{
          \tiny
          \begin{tabular}{c}
            {\color{blue} super-exceptional }
            \\
            {\color{blue} M5-brane Lagrangian }
            \\
            (Theorem \ref{TheTrivialization})
            \\
            \phantom{a}
          \end{tabular}
        }
      }{
        \mathbf{L}^{\!\!\mathrm{NG}}_{\mathrm{ex}_s}
        \,+\,
        \mathbf{L}^{\!\!\mathrm{PS}}_{\mathrm{ex}_s} \wedge e^5
        \,-\,
        \tfrac{1}{2}
        \mu_{{}_{L1}}
          \wedge
        H^{\mathrm{NS}}_{\mathrm{ex}_s}
      }
    }
    "s"; "t"+(5,0)
  }
  }
\end{equation}

\end{enumerate}

\newpage
\medskip

\section{Perry-Schwarz Lagrangian for $\mathrm{M5}$ on $S^1$ }
\label{PSLagrangianForM5OnS1}

For ease of reference and in order to introduce notation needed in later sections,
we review here the bosonic part of the Perry-Schwarz-Lagrangian from
\cite{PerrySchwarz97}, re-cast in coordinate-independent Cartan calculus
and generalized to possibly non-trivial circle fibrations.
We try to bring out the
logic that motivated the construction in \cite{PerrySchwarz97}, but below in
\cref{SuperExceptionalReductionOfMTheoryCircle}
and \cref{SuperExceptionalM5Lagrangian} we re-derive the Perry-Schwarz
Lagrangian systematically from first principles.
Readers familiar with this material may want to skip this section
and just follow pointers to it from the main text when needed.

\medskip

The formulation of a manifestly covariant Lagrangian for
the self-dual higher gauge field without further auxiliary fields
in 6 dimensions (and generally in $4k+2$-dimensions),
an hence in particular for the single M5-brane sigma-model,
is famously subtle, at best (see e.g. \cite{Moore12}\cite{HeckmanRudelius18}).
But if one considers breaking manifest Lorentz invariance
to  5 dimensions, as befits KK compactification of the theory on a circle fiber,
such as for double dimensional reduction of the M5 brane to the D4-brane,
then there is a Lagrangian formulation
due to Perry-Schwarz \cite{PerrySchwarz97}\cite{Schwarz97}\cite{APPS97},
following \cite{HenneauxTeitelboim88}.

\medskip
This ``non-covariant'' formulation of
self-dual higher gauge theory and specifically of the M5-brane sigma-model may be
covariantized by introducing an auxiliary scalar field \cite{PastiSorokinTonin96}
(whose gradient plays the role of the spacetime direction which gets singled out,
thus promoting this choice to a dynamical field) which yields the covariant formulation
of the M5-brane sigma-model \cite{PastiSorokinTonin97}\cite{BLNPST97}. This
comes with a corresponding auxiliary gauge symmetry that admits a gauge fixing
which recovers the non-covariant formulation, rendering the two formulations
equivalent, with each ``about as complicated'' as the other \cite[p. 3]{APPS97}.

\medskip

\noindent {\bf Worldvolume and self-duality.}
Let $(\Sigma^6, g)$ be a Lorentzian manifold of signature $(-,+,+,+,+,+)$,
to be called (the bosonic body of) the \emph{worldvolume}
of an M5-brane. In this dimension and with this signature, corresponding to the metric $g$,
the Hodge star operator on differential forms
$
  *
  \colon
  \Omega^\bullet( \Sigma^6)
  \to
  \Omega^{ 6-\bullet }( \Sigma^6)
$
squares to $+1$. This allows for considering on a differential 3-form
\begin{equation}
  H \;\in\; \Omega^3\big(\Sigma^6\big)
\end{equation}
the condition that it be self-dual
\begin{equation}
  \label{SelfDualityIn6d}
  H
  \;=\;
  *
  H
  \,.
\end{equation}
We will assume that $H$ is exact and pick a trivializing 2-form
\begin{equation}
  \label{BPotential}
  B \;\in\; \Omega^2\big( \Sigma^6 \big)
  \phantom{AAA}
  \mbox{such that}
  \phantom{AAA}
  H = d B
  \,.
\end{equation}

\medskip

\noindent {\bf Compactification on $S^1$.}
Consider then on the worldvolume $\Sigma^6$ the structure of an
$S^1 = U(1)$-principal bundle
\begin{equation}
  \label{FibrationWorldvolume}
  \xymatrix@R=1em{
    S^1 \ar[r]
    &
    \Sigma^6
    \ar[d]
    \\
    & \Sigma^5
  }
\end{equation}
We write
$
  v_5 \in \Gamma( T \Sigma^6 )
$
for the vector field which encodes the infinitesimal $S^1$-action,
hence the derivative of the circle action
$
  U(1) \times \Sigma^6 \overset{\rho}{\longrightarrow} \Sigma^6
$
at the neutral element, along a chosen basis element
$t \in T_e(U(1)) \simeq \mathfrak{u}(1) \simeq \mathbb{R}$:
\begin{equation}
  \label{FiberVectorField}
  v_5 \colon
  \xymatrix{
    \Sigma^6
     \simeq
    \{(e,t)\} \times \Sigma^6
    ~\ar@{^{(}->}[r]
    &
    T S^1 \oplus_{\Sigma^6} T \Sigma^6
    \simeq
    T \big( S^1 \times \Sigma^6\big)
    \ar[rr]^-{d \rho}
    &&
    T \Sigma^6.
  }
\end{equation}
Accordingly, we write
\begin{equation}
  \label{LieDerivativeAlongv5}
  \mathcal{L}_{v_5}
  \;\coloneqq\;
  \underset{
    \mathclap{
            d \circ \iota_{v_5}
      +
      \iota_{v_5} \circ d
    }
  }{
    \underbrace{
      \big[
        d, \iota_{v_5}
      \big]
    }
  }
  \;:\;
  \Omega^\bullet\big( \Sigma^6 \big)
  \longrightarrow
  \Omega^\bullet\big( \Sigma^6 \big)
\end{equation}
for the Lie derivative of differential forms along the
vector field \eqref{FiberVectorField}, where $d$ denotes the
de Rham differential and where under the brace we are using
Cartan's magic formula.

\medskip
Next, consider an Ehresmann connection on the $S^1$-bundle \eqref{FibrationWorldvolume},
hence a differential 1-form which satisfies the Ehresmann conditions in that it is
normalized and invariant:
\begin{equation}
  \label{EhresmannConditions}
  \theta^5
  \;\in\;
  \Omega^1\big( \Sigma^6 \big)
  \phantom{AAA}
  \mbox{such that}
  \phantom{AAA}
  \iota_{v_5} \theta^5 = 1
  \phantom{AA}
  \mbox{and}
  \phantom{AA}
  \mathcal{L}_{v_5} \theta^5 = 0
  \;.
\end{equation}
Here on the left we have the operation of contracting differential forms
with vector fields, and on the right we have the Lie derivative from
\eqref{LieDerivativeAlongv5}.
So, in particular, the composition
\begin{equation}
  \theta^5 \wedge \circ \iota_{v_5}
  :
  \Omega^\bullet\big( \Sigma^6\big)
  \longrightarrow
  \Omega^\bullet\big( \Sigma^6\big)
\end{equation}
is a projection operator:
$
  \theta^5 \wedge \iota_{v_5}
  \circ
  \theta^5 \wedge \iota_{v_5}
  \;=\;
  \theta^5 \wedge \iota_{v_5}
$.
The complementary projection is that onto horizontal differential forms
with respect to the bundle structure \eqref{FibrationWorldvolume}:
\begin{equation}
  \label{HorizontalProjection}
  (-)^{\mathrm{hor}}
  :=
  \big(\mathrm{id} - \theta^5 \wedge \circ \iota_{v_5})
  \;:\;
  \Omega^\bullet\big( \Sigma^6\big)
  \longrightarrow
  \Omega^\bullet\big( \Sigma^6\big).
\end{equation}
Observe that:
\begin{lemma}[Horizontal vs. vertical differential]
  \label{HorizontalDifferentialOfHorizontalComponent}
  If the Ehresmann connection \eqref{EhresmannConditions}
  on the $S^1$ bundle
  is flat, in that
  \begin{equation}
    \label{Flatness}
    d \theta^5 \;=\; 0
    \,,
  \end{equation}
  then for any differential form
  $\omega \in \Omega^\bullet\big( \Sigma^6 \big)$ we have
  that the vertical component of the differential of its
  horizontal component \eqref{HorizontalProjection} is the vertical component of its full differential:
  \begin{equation}
    \label{}
    \theta^5 \wedge d \big( \omega^{\mathrm{hor}} \big)
    \;=\;
    \theta^5 \wedge d \omega\;.
  \end{equation}
\end{lemma}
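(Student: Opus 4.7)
The plan is to verify the identity by direct computation, expanding the definition of the horizontal projection and using Leibniz, flatness of $\theta^5$, and the basic fact that $\theta^5 \wedge \theta^5 = 0$.

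First I would unpack the definition \eqref{HorizontalProjection} to write
\[
  \omega^{\mathrm{hor}} \;=\; \omega \,-\, \theta^5 \wedge \iota_{v_5}\omega.
\]
Applying the de Rham differential and using the graded Leibniz rule (recalling that $\theta^5$ has odd degree) gives
\[
  d\bigl(\omega^{\mathrm{hor}}\bigr)
  \;=\;
  d\omega
  \,-\,
  d\theta^5 \wedge \iota_{v_5}\omega
  \,+\,
  \theta^5 \wedge d\bigl(\iota_{v_5}\omega\bigr).
\]

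Next I would invoke the flatness hypothesis \eqref{Flatness}, $d\theta^5 = 0$, to drop the middle term, obtaining
\[
  d\bigl(\omega^{\mathrm{hor}}\bigr)
  \;=\;
  d\omega
  \,+\,
  \theta^5 \wedge d\bigl(\iota_{v_5}\omega\bigr).
\]
Finally, wedging on the left with $\theta^5$ annihilates the second summand because $\theta^5 \wedge \theta^5 = 0$, leaving
\[
  \theta^5 \wedge d\bigl(\omega^{\mathrm{hor}}\bigr)
  \;=\;
  \theta^5 \wedge d\omega,
\]
which is the claim.

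There is no real obstacle: the argument is a two-line Cartan-calculus manipulation once the horizontal projector is written out, and the only place the hypothesis enters is to remove the $d\theta^5 \wedge \iota_{v_5}\omega$ contribution. It is worth noting that the same computation without flatness would yield the correction term $-\theta^5 \wedge d\theta^5 \wedge \iota_{v_5}\omega$, which measures exactly the failure of the lemma when the Ehresmann connection has nontrivial curvature; this may be useful to flag for later sections where circle bundles with curvature are considered.
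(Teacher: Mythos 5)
Your proof is correct and is essentially the same two-line Cartan-calculus computation as the paper's: expand the horizontal projector, apply the Leibniz rule, and observe that the only surviving obstruction term is proportional to $d\theta^5$, hence vanishes under the flatness hypothesis, with $\theta^5 \wedge \theta^5 = 0$ killing the remaining cross-term. The paper simply wedges with $\theta^5$ before differentiating, whereas you differentiate first and wedge at the end; the ingredients and logic are identical, and your closing remark about the curvature correction term $-\theta^5 \wedge d\theta^5 \wedge \iota_{v_5}\omega$ is a sensible observation.
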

\begin{proof}
  By direct computation, we have:
  $$
    \begin{aligned}
      \theta^5 \wedge d \big( \omega^{\mathrm{hor}} \big)
      & =
      \theta^5 \wedge d \big( \omega - \theta^5 \wedge \iota_{v_5} \omega \big)
      \\
      & =
      \theta^5 \wedge d \omega
      -
      \theta^5 \wedge d
      \big(
        \theta^5 \wedge \iota_{v_5} \omega
      \big)
      \\
      & =
      \theta^5 \wedge d \omega
      -
      \underset{
        = 0
      }{
        \underbrace{
        \theta^5
        \wedge
        \big(
          d \theta^5
        \big)
        \wedge
        \iota_{v_5} \omega
        }
      }\;.
    \end{aligned}
  $$

  \vspace{-7mm}
\end{proof}

Finally, we require the vector field $v_5$ from \eqref{FiberVectorField} to be a
\emph{spacelike isometry}.
This means that it interacts with the Hodge star operator as
\begin{equation}
  \label{HodgeStarCommutingWithIsometryContraction}
  * \,\circ\, \iota_{v_5}
  =
  -
  \theta^5 \wedge \,\circ\, *
  \;:\;
  \Omega^3\big( \Sigma^6 \big)
  \longrightarrow
  \Omega^4\big( \Sigma^6 \big)
  \,.
\end{equation}

\medskip
\noindent {\bf Self-duality after $S^1$-compactification.}
We introduce notation for the contraction of the 3-form $H$ and its
Hodge dual with the vector field $v_5$ \eqref{FiberVectorField}
as follows (to be called the ``compactified fields'',
a notation that follows \cite[(5), (6)]{PerrySchwarz97}):
\begin{equation}
  \label{DefOfTildeH}
  \mathcal{F}
  \;\coloneqq\;
  \iota_{v_5} H
  \,,
  \phantom{AAAA}
  \tilde H
  \;\coloneqq\;
  \iota_{v_5} * H\;.
\end{equation}
With this, we get the following immediate but crucially important re-formulation
of the self-duality condition after $S^1$-compactification
(extending \cite[(8)]{PerrySchwarz97}):

\begin{lemma}[Self-duality after $S^1$-compactification.]
  Given an $S^1$-bundle structure \eqref{FibrationWorldvolume}
  on the worldvolume $\Sigma^6$
  and any choice of Ehresmann connection \eqref{EhresmannConditions},
  the self-duality condition \eqref{SelfDualityIn6d}
  is equivalently expressed in
  terms of the compactified fields \eqref{DefOfTildeH} as:
  \begin{equation}
    \label{SelfDualityIntermsOfcalF}
    H = \ast H
    \phantom{AAA}
    \Leftrightarrow
    \phantom{AAA}
    \mathcal{F}
    \;=\;
    \tilde H
    \;.
  \end{equation}
\end{lemma}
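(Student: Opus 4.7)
My plan is to handle the two directions of the biconditional separately. The forward implication is immediate: applying $\iota_{v_5}$ to both sides of $H = \ast H$ and using the definitions in \eqref{DefOfTildeH} gives $\mathcal{F} = \iota_{v_5} H = \iota_{v_5} \ast H = \tilde H$.

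For the reverse implication, I set $\omega \coloneqq H - \ast H \in \Omega^3\big(\Sigma^6\big)$, so that the goal becomes $\omega = 0$. The hypothesis $\mathcal{F} = \tilde H$ is equivalent to $\iota_{v_5}\omega = 0$. The strategy is to derive the complementary equation $\theta^5 \wedge \omega = 0$, since these two conditions together force any 3-form to vanish.

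To obtain $\theta^5 \wedge \omega = 0$, I apply the Hodge star to $\iota_{v_5}\omega = 0$ and invoke the spacelike-isometry identity \eqref{HodgeStarCommutingWithIsometryContraction}, namely $\ast \circ \iota_{v_5} = -\theta^5 \wedge \circ \ast$ on 3-forms. This yields
\[
  0 \;=\; \ast \iota_{v_5}\omega \;=\; -\,\theta^5 \wedge \ast \omega.
\]
Since the Hodge star squares to $+1$ on middle-degree forms in signature $(-,+,+,+,+,+)$, we have $\ast \omega = \ast H - \ast \ast H = \ast H - H = -\omega$, and the previous equation becomes $\theta^5 \wedge \omega = 0$.

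It remains to verify that any 3-form $\omega$ satisfying both $\iota_{v_5}\omega = 0$ and $\theta^5 \wedge \omega = 0$ vanishes. The second equation implies $\omega = \theta^5 \wedge \alpha$ for some 2-form $\alpha$; substituting into $\iota_{v_5}\omega = 0$ and using $\iota_{v_5}\theta^5 = 1$ from \eqref{EhresmannConditions} gives $\alpha = \theta^5 \wedge \iota_{v_5}\alpha$, whence $\omega = \theta^5 \wedge \theta^5 \wedge \iota_{v_5}\alpha = 0$. I do not anticipate any serious obstacle; the only delicate points are tracking the sign in the Hodge--contraction identity and confirming $\ast\ast = +1$ on 3-forms, both of which are supplied directly by the setup preceding the lemma.
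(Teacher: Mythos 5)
Your proof is correct and follows essentially the same route as the paper's: both decompose into vertical and horizontal components and use the isometry identity \eqref{HodgeStarCommutingWithIsometryContraction} to pass between them. You merely repackage the paper's observation — that the two component equations of $H = \ast H$ are equivalent — by working with the anti-self-dual deficit $\omega = H - \ast H$ (noting $\ast\omega = -\omega$) and checking explicitly that $\iota_{v_5}\omega = 0$ together with $\theta^5 \wedge \omega = 0$ forces $\omega = 0$, which is the same content spelled out at a slightly finer grain.
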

\begin{proof}
  We have the following chain of equivalences:
  $$
    \begin{aligned}
      H \;+\; \ast H
      & \;\;\;\Longleftrightarrow\;\;\;
      \left(
        \begin{aligned}
          \iota_{v_5} H & \;=\; \iota_{v_5} \ast H
          \\
          \mbox{and}\;\; \theta^5 \wedge H & \;=\; \theta^5 \wedge \ast H
        \end{aligned}
      \right)
      \\
      & \;\;\;\Longleftrightarrow\;\;\;
      \iota_{v_5} H \;=\; \iota_{v_5} \ast H
      \\
      & \;\;\;\Longleftrightarrow\;\;\;
      \mathcal{F} \;=\; \tilde H\;.
    \end{aligned}
  $$
  Here the first step is decomposition into horizontal and
  vertical components \eqref{HorizontalProjection},
  the second step uses
  the isometry property \eqref{HodgeStarCommutingWithIsometryContraction}
  to conclude that the two resulting component equations are
  equivalent to each other.
  The last step identifies the compactified fields
  \eqref{DefOfTildeH}.
\end{proof}

\medskip

\noindent {\bf The gauge field.}
The contraction of the vector field $v_5$ from \eqref{FiberVectorField}
with  the 2-form potential $B$ from \eqref{BPotential}
defines the 1-form potential
\begin{equation}
  \label{AField}
  A
  \;\coloneqq\;
  -
  \iota_{v_5} B\;.
\end{equation}
Hence we get a decomposition of the 2-form as
\begin{equation}
  B
  \;=\;
  A \wedge \theta^5
  +
  B^{\mathrm{hor}}
  \,,
\end{equation}
where on the right we have the horizontal component of $B$
according to \eqref{HorizontalProjection}.
We say that the \emph{2-flux density} encoded by $B$
is the horizontal component of the exterior differential of
this vector potential
\begin{equation}
  \label{FaradayTensor}
  F
  \;\coloneqq\;
  (d A)^{\mathrm{hor}}.
\end{equation}
We will find in a moment that this is the 5d field strength with all higher KK-modes
still included, but it is most convenient here (and in all of the following0
to just call it ``$F$'' already in the 6d compactification before
passing to KK zero-modes. With this we have (cf. \cite[(5)]{PerrySchwarz97}):

\begin{lemma}[Shifted 2-flux]
\label{RelatingFcalF}
The 2-flux density $F$ from \eqref{FaradayTensor}
equals the compactified field $\mathcal{F}$ from \eqref{DefOfTildeH}
up to the Lie derivative \eqref{LieDerivativeAlongv5}
of the horizontal component \eqref{HorizontalProjection} of the
2-form \eqref{BPotential}:
\begin{equation}
  \label{FcalF}
  F
  \;=\;
  \mathcal{F} - \mathcal{L}_{v_5}B^{\mathrm{hor}}.
\end{equation}
\end{lemma}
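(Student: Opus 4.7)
The plan is to obtain the identity by applying Cartan's magic formula to relate $dA$, $\mathcal{L}_{v_5}B$ and $\iota_{v_5}dB$, and then to compare horizontal components on both sides using the projection $(-)^{\mathrm{hor}} = \mathrm{id} - \theta^5 \wedge \iota_{v_5}$ introduced in \eqref{HorizontalProjection}.

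First, starting from $A = -\iota_{v_5}B$ and $H = dB$, Cartan's magic formula gives
\[
  \mathcal{L}_{v_5}B
  \;=\;
  d\,\iota_{v_5}B + \iota_{v_5}\,dB
  \;=\;
  -\,dA + \mathcal{F},
\]
so that $dA = \mathcal{F} - \mathcal{L}_{v_5}B$. Applying the horizontal projection and using $F = (dA)^{\mathrm{hor}}$ by definition \eqref{FaradayTensor}, the statement reduces to the two claims (a) $\mathcal{F}^{\mathrm{hor}} = \mathcal{F}$, and (b) $(\mathcal{L}_{v_5}B)^{\mathrm{hor}} = \mathcal{L}_{v_5}B^{\mathrm{hor}}$.

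Claim (a) will follow immediately from $\iota_{v_5}\mathcal{F} = \iota_{v_5}\iota_{v_5}H = 0$, since contraction with the same vector field twice vanishes, so that $\mathcal{F}$ is already horizontal. For claim (b) I would decompose $B = A\wedge\theta^5 + B^{\mathrm{hor}}$ and use the Ehresmann condition $\mathcal{L}_{v_5}\theta^5 = 0$ from \eqref{EhresmannConditions} to get
\[
  \mathcal{L}_{v_5}B
  \;=\;
  (\mathcal{L}_{v_5}A)\wedge\theta^5 + \mathcal{L}_{v_5}B^{\mathrm{hor}}.
\]
The first summand is manifestly of the form $(\,\cdot\,)\wedge\theta^5$, so its horizontal component vanishes; the second is horizontal because $[\iota_{v_5},\mathcal{L}_{v_5}] = \mathcal{L}_{[v_5,v_5]} = 0$ forces $\iota_{v_5}\mathcal{L}_{v_5}B^{\mathrm{hor}} = \mathcal{L}_{v_5}\iota_{v_5}B^{\mathrm{hor}} = 0$, whence $(\mathcal{L}_{v_5}B^{\mathrm{hor}})^{\mathrm{hor}} = \mathcal{L}_{v_5}B^{\mathrm{hor}}$.

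Combining (a) and (b) with the displayed consequence of Cartan's formula yields $F = \mathcal{F} - \mathcal{L}_{v_5}B^{\mathrm{hor}}$, as required. The only subtle point, and really the only thing one has to watch, is the interplay between $\mathcal{L}_{v_5}$ and the horizontal projection: it works here precisely because $\theta^5$ is $v_5$-invariant, i.e.\ because the connection satisfies the second Ehresmann condition in \eqref{EhresmannConditions}. No flatness assumption \eqref{Flatness} is needed for this lemma, in contrast to Lemma~\ref{HorizontalDifferentialOfHorizontalComponent}.
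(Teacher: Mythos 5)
Your proof is correct and follows essentially the same route as the paper's: apply Cartan's magic formula to relate $dA$, $\mathcal{L}_{v_5}B$ and $\iota_{v_5}dB$, then apply the horizontal projection and observe that $\mathcal{F}$ is already horizontal (nilpotency of $\iota_{v_5}$) while $\mathcal{L}_{v_5}$ commutes with $(-)^{\mathrm{hor}}$ by the Ehresmann invariance condition $\mathcal{L}_{v_5}\theta^5 = 0$. Your closing remark is also accurate: flatness $d\theta^5 = 0$ is not used here, only the two Ehresmann conditions.
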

\begin{proof}
We compute as follows:
\begin{equation}
  \label{calF}
  \begin{aligned}
    \mathcal{F}
    &\coloneqq
    \iota_{v_5} H
    \\
    & =
    \iota_{v_5} d B
    \\
    & =
    - d \iota_{v_5} B + [\iota_{v_5}, d] B
    \\
    & =
    d A + \mathcal{L}_{v_5} B
    \\
    & =
    \underset{
      = F
    }{
      \underbrace{
        (d A)^{\mathrm{hor}}
      }
    }
    +
    \underset{
      = \theta^5 \wedge \mathcal{L}_{v_5} A
    }{
      \underbrace{
        \theta^5 \wedge \iota_{v_5} d A
      }
    }
    +
    \mathcal{L}_{v_5} B^{\mathrm{hor}}
    +
    \underset{
      -\theta^5 \wedge \mathcal{L}_{v_5} A
    }{
      \underbrace{
        \mathcal{L}_{v_5} \theta^5 \wedge \iota_{v_5} B
      }
    }
    \\
    &
    =
    F
    +
    \mathcal{L}_{v_5} B^{\mathrm{hor}}
    .
  \end{aligned}
\end{equation}
Here the first step is the definition \eqref{DefOfTildeH},
while the second step is \eqref{BPotential}.
The fourth step uses the definition \eqref{AField}
of the vector potential and identifies
the Lie derivative \eqref{LieDerivativeAlongv5}.
The fifth step applies vertical/horizontal decomposition
\eqref{HorizontalProjection} to both summands and
uses \eqref{FaradayTensor} under the first brace
and the expressions \eqref{EhresmannConditions} and \eqref{AField}
under the third brace; while under the second brace it uses Cartan's formula
\eqref{LieDerivativeAlongv5}, observing that $\iota_{v_5} A = 0$
by definition \eqref{AField} and by nilpotency of the
contraction operation. The last step notices that this makes the second
and fourth summands cancel each other.
\end{proof}

With Lemma \ref{RelatingFcalF}, the self-duality condition \eqref{SelfDualityIn6d}
in the equivalent form \eqref{SelfDualityIntermsOfcalF} after $S^1$-compactification
says that the combination
$
  \widetilde H
  -
  \mathcal{L}_{v_5} B^{\mathrm{hor}}
$
is horizontally exact:
\begin{equation}
  \label{SelfDualityAsHorizontalExactness}
  H \;=\;\ast H
  \phantom{AAA}
  \Leftrightarrow
  \phantom{AAA}
  \widetilde H
  -
  \mathcal{L}_{v_5} B^{\mathrm{hor}}
  \;=\;
  \underset{
    = F
  }{
    \underbrace{
      (d A)^{\mathrm{hor}}
    }
  }
  \,.
\end{equation}

\medskip

\noindent {\bf Weak self-duality and PS equations of motion.}
In summary, we have the following implication of self-duality after $S^1$-compactification
(extending \cite[(16)]{PerrySchwarz97}):
\begin{prop}[Self-duality for flat circle bundles]
  If the worldvolume $\Sigma^6$ is equipped with an $S^1$-principal
  bundle structure \eqref{FibrationWorldvolume} which is flat \eqref{Flatness},
  then the self-duality condition $H = \ast H$ from \eqref{SelfDualityIn6d},
  in its equivalent incarnation
  on compactified fields \eqref{SelfDualityAsHorizontalExactness}
  implies the following differential equation:
  \begin{equation}
    \label{WeakSelfDuality}
    \theta^5 \wedge d
    \big(
      \tilde H - \mathcal{L}_{v_5} B^{\mathrm{hor}}
    \big)
    \;=\;
    0\;.
  \end{equation}
\end{prop}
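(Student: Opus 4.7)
The plan is to apply the operator $\theta^5 \wedge d(-)$ to both sides of the reformulated self-duality condition \eqref{SelfDualityAsHorizontalExactness} and extract the conclusion almost immediately from Lemma \ref{HorizontalDifferentialOfHorizontalComponent}, using the flatness hypothesis $d\theta^5 = 0$ in an essential way.

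Concretely, I would start from the equivalent form of self-duality established in \eqref{SelfDualityAsHorizontalExactness}, namely
\[
  \widetilde H - \mathcal{L}_{v_5} B^{\mathrm{hor}}
  \;=\;
  (dA)^{\mathrm{hor}}\;.
\]
Wedging both sides with $\theta^5$ and applying $d$ from the left gives
\[
  \theta^5 \wedge d\bigl(\widetilde H - \mathcal{L}_{v_5} B^{\mathrm{hor}}\bigr)
  \;=\;
  \theta^5 \wedge d\bigl((dA)^{\mathrm{hor}}\bigr)\;.
\]
Now Lemma \ref{HorizontalDifferentialOfHorizontalComponent}, which relies precisely on the flatness condition $d\theta^5 = 0$, identifies the right-hand side with $\theta^5 \wedge d(dA)$, which vanishes by $d^2 = 0$. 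Hence \eqref{WeakSelfDuality} follows.

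The only subtlety worth flagging is the role of the flatness assumption: without $d\theta^5 = 0$, the step from $\theta^5 \wedge d((dA)^{\mathrm{hor}})$ to $\theta^5 \wedge d(dA)$ would pick up an extra term $-\theta^5 \wedge (d\theta^5) \wedge \iota_{v_5}(dA)$, which is not generally zero. Thus the main conceptual point, rather than a technical obstacle, is locating where flatness is used; once Lemma \ref{HorizontalDifferentialOfHorizontalComponent} is invoked, the proof reduces to nilpotency of $d$ and is essentially a one-line consequence.
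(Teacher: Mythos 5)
Your proof is correct and coincides with the argument implicitly intended by the paper, which states this proposition without an explicit proof but clearly signposts the route by quoting Lemma \ref{HorizontalDifferentialOfHorizontalComponent} immediately afterward. You correctly apply the operator $\theta^5 \wedge d(-)$ to the reformulation \eqref{SelfDualityAsHorizontalExactness} of self-duality, invoke the lemma (with $\omega = dA$) to pass from $\theta^5 \wedge d\big((dA)^{\mathrm{hor}}\big)$ to $\theta^5 \wedge d(dA)$, and conclude by $d^2 = 0$; your flagging of where flatness is essential is exactly the point the paper emphasizes.
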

By Lemma \ref{HorizontalDifferentialOfHorizontalComponent},
equation \eqref{WeakSelfDuality} may be understood as expressing
``self-duality up to horizontally exact terms''.
The proposal of \cite{PerrySchwarz97}
is to regard \eqref{WeakSelfDuality},
which is second order as a differential equation for $B$,
as the defining \emph{equation of motion} for a self-dual field on $\Sigma^6$ compactified on $S^1$. Given this, one is led
to finding a Lagrangian density
whose Euler-Lagrange equation
is the self-duality equation.

\medskip

\noindent {\bf The PS Lagrangian density.}
As in \cite[(17)]{PerrySchwarz97}, we say:
\begin{defn}[Bosonic PS Lagrangian]
\label{BosonicPSLagrangian}
The \emph{bosonic Perry-Schwarz-Lagrangian}
for a 2-form field \eqref{BPotential}
on a worldvolume $\Sigma^6$ compactified on
a flat \eqref{Flatness} $S^1$-bundle \eqref{FibrationWorldvolume}
is:
\begin{equation}
  \label{CartanCalculusPerrySchwarzLagrangian}
  B
  \;\longmapsto\;
  \mathbf{L}^{\!\!\mathrm{PS}}
  \wedge \theta^5
  \;\coloneqq\;
  \tfrac{1}{2}
  \big(
  \mathcal{L}_{v_5} B^{\mathrm{hor}}
  -
  \tilde H
  \big)
  \wedge
  *
  \tilde H\;.
\end{equation}
\end{defn}
\begin{remark}[Equivalent descriptions]
{\bf (i)} Equivalently, using the isometry property
\eqref{HodgeStarCommutingWithIsometryContraction}
the bosonic PS Lagrangian \eqref{CartanCalculusPerrySchwarzLagrangian}
reads
\begin{equation}
  \label{BosonicPSLagrangianWithH}
  \begin{aligned}
  \mathbf{L}^{\!\!\mathrm{PS}} \wedge \theta^5
  & =
  -
  \tfrac{1}{2}
  \big(
  \tilde H
  -
  \mathcal{L}_{v_5} B^{\mathrm{hor}}
  \big)
  \wedge
  H
  \wedge \theta^5
  \\
  & =
  -
  \tfrac{1}{2}
  \big(
  \iota_{v_5} * H
  -
  \mathcal{L}_{v_5} B^{\mathrm{hor}}
  \big)
  \wedge
  H
  \wedge \theta^5\;,
  \end{aligned}
\end{equation}
where in the second line we inserted the expression for $\widetilde{H}$ from
\eqref{DefOfTildeH};
\item {\bf (ii)} and if the self-duality condition \eqref{SelfDualityIn6d}
is imposed, the bosonic PS Lagrangian \eqref{BosonicPSLagrangianWithH}
becomes
\begin{equation}
  \label{NonCovariantLagrangianInFlatSpacetime}
  \mathbf{L}^{\!\!\mathrm{PS}} \wedge \theta^5
  \;=\;
  -
  \tfrac{1}{2}
  \big(
  \iota_{v_5} H
  -
  \mathcal{L}_{v_5} B^{\mathrm{hor}}
  \big)
  \wedge
  H
  \wedge \theta^5
  \phantom{AAA}
  \mbox{if} \;\; H = * H\;.
\end{equation}
\end{remark}

At this point, as in \cite[(16)]{PerrySchwarz97}, the bosonic PS Lagrangian
of Def. \ref{BosonicPSLagrangian} is motivated as the evident choice that
makes the following Prop. \ref{EquationOfMotionOfPSLagrangian} true.
Below in \cref{SuperExceptionalLagrangian}, we find a deeper origin of the Lagrangian
(in the case that self-duality is imposed).

\begin{prop}[EOMs of the bosonic PS Lagrangian]
  \label{EquationOfMotionOfPSLagrangian}
  The Euler-Lagrange equation corresponding to the
  PS Lagrangian $\mathbf{L}^{\!\!\mathrm{PS}} \wedge \theta^5$ \eqref{CartanCalculusPerrySchwarzLagrangian}
  is the weak self-duality equation \eqref{WeakSelfDuality}.
\end{prop}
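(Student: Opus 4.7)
I would prove this by a direct variational calculation, working from the equivalent form \eqref{BosonicPSLagrangianWithH} of the Lagrangian,
\[
  \mathbf{L}^{\!\!\mathrm{PS}} \wedge \theta^5
  \;=\;
  -\tfrac{1}{2} K \wedge H \wedge \theta^5,
  \qquad
  K \;:=\; \tilde H - \mathcal{L}_{v_5} B^{\mathrm{hor}},
\]
since this form (obtained from \eqref{CartanCalculusPerrySchwarzLagrangian} via the isometry property \eqref{HodgeStarCommutingWithIsometryContraction}) presents both factors as polynomial expressions in $B$, so the variational calculation reduces to Stokes' theorem and Cartan calculus without further appeal to the Hodge star.

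Given a variation $\delta B \in \Omega^2(\Sigma^6)$, I would decompose
$\delta(\mathbf{L}^{\!\!\mathrm{PS}} \wedge \theta^5) = -\tfrac{1}{2}\delta K \wedge H \wedge \theta^5 - \tfrac{1}{2} K \wedge d\delta B \wedge \theta^5$
and handle each piece. The second term is integrated by parts using flatness $d\theta^5 = 0$ from \eqref{Flatness}, producing $\tfrac{1}{2}\delta B \wedge dK \wedge \theta^5$ modulo exact. The first term splits via $\delta K = \delta\tilde H - \mathcal{L}_{v_5}(\delta B)^{\mathrm{hor}}$; to process the two summands I would use \emph{(i)} the symmetry $\alpha \wedge *\beta = \beta \wedge *\alpha$ for same-degree forms together with the identification $\tilde H = *(\theta^5 \wedge H)$ — a consequence of \eqref{HodgeStarCommutingWithIsometryContraction} and $**=+1$ on $3$-forms in $6$-dimensional Lorentzian signature — to rewrite $\delta\tilde H \wedge H \wedge \theta^5 = d\delta B \wedge \tilde H \wedge \theta^5$ and then integrate by parts; and \emph{(ii)} Cartan's magic formula, the Bianchi identity $dH = 0$, and the elementary identity $\iota_{v_5}\omega \wedge \theta^5 = -\omega$ on top forms on $\Sigma^6$, to reduce $\mathcal{L}_{v_5}(\delta B)^{\mathrm{hor}} \wedge H \wedge \theta^5$ to $\delta B \wedge dH^{\mathrm{hor}}$ modulo exact.

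The key structural identity that makes the contributions collapse cleanly is the telescoping relation
\[
  dH^{\mathrm{hor}} \;=\; -\, d\mathcal{L}_{v_5} B^{\mathrm{hor}} \wedge \theta^5,
\]
obtained from the decomposition $H = H^{\mathrm{hor}} + \theta^5 \wedge \mathcal{F}$, flatness of $\theta^5$, and the Cartan-plus-Bianchi computation $d\mathcal{F} = \mathcal{L}_{v_5} H = d\mathcal{L}_{v_5} B$ (of which only the $B^{\mathrm{hor}}$-part survives wedge with $\theta^5$). Substituting this identity converts the contribution from \emph{(ii)} into precisely the $-\tfrac{1}{2}\delta B \wedge d\mathcal{L}_{v_5}B^{\mathrm{hor}} \wedge \theta^5$ needed to combine with the pieces from \emph{(i)} and from the Stokes step into the single expression $\delta B \wedge dK \wedge \theta^5$. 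Since $\delta B$ is an arbitrary 2-form and $dK \wedge \theta^5 = -\theta^5 \wedge dK$, vanishing of the coefficient forces $\theta^5 \wedge d(\tilde H - \mathcal{L}_{v_5}B^{\mathrm{hor}}) = 0$, which is the weak self-duality equation \eqref{WeakSelfDuality}.

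The main obstacle is not conceptual but combinatorial: keeping track of signs through the interplay of $*$, $\iota_{v_5}$, $\theta^5 \wedge$, and the horizontal/vertical decompositions. The one non-mechanical ingredient is the telescoping identity for $dH^{\mathrm{hor}}$ above, without which the $\mathcal{L}_{v_5}(\delta B)^{\mathrm{hor}}$-contribution would appear as a spurious extra term and fail to combine with the others into the clean $dK$; everything else is a routine application of Stokes' theorem, Cartan's formula, and the isometry property already stated in this section.
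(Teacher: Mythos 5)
Your proof is correct, but it takes a different route from the paper's. The paper regards $\mathbf{L}^{\!\!\mathrm{PS}}\wedge\theta^5$ as the quadratic form $B\mapsto\beta(B,B)$ of the bilinear map
\[
  \beta(B,B') \;=\; -\tfrac{1}{2}\big(\iota_{v_5}*H - \mathcal{L}_{v_5}B^{\mathrm{hor}}\big)\wedge dB'\wedge\theta^5,
\]
notes that $\beta$ is symmetric up to a total differential (the Hodge-pairing summand strictly, the $\mathcal{L}_{v_5}B^{\mathrm{hor}}$-summand by a short integration by parts against $\theta^5$, using that $\mathcal{L}_{v_5}\gamma\wedge d\gamma'\wedge\theta^5 - \mathcal{L}_{v_5}\gamma'\wedge d\gamma\wedge\theta^5 = -d(\gamma\wedge d\gamma')$ for horizontal $\gamma,\gamma'$), and then invokes the standard fact $\delta\,\beta(B,B) = 2\beta(B,\delta B)$ modulo exact, so only the $d\delta B$-slot needs to be varied. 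That makes the final step a single one-line integration by parts. Your proposal instead varies both factors directly and absorbs the asymmetry by hand: the combination of the $\iota_{v_5}$-identity on top forms and your telescoping relation $dH^{\mathrm{hor}} = \theta^5\wedge d\mathcal{L}_{v_5}B^{\mathrm{hor}}$ does precisely the work that the paper's (asserted, not shown) symmetry of the second summand does. In effect you are filling in the integration-by-parts detail that the paper compresses into one sentence, at the cost of having to track the Hodge and contraction signs through more terms. Both arguments are sound; the paper's is the more structural packaging, yours makes the hidden use of $dH=0$ and flatness of $\theta^5$ fully explicit.

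One small caution worth flagging: the identity you quote as ``$\iota_{v_5}\omega\wedge\theta^5 = -\omega$ on top forms'' is correct, but in the step where you process $\mathcal{L}_{v_5}(\delta B)^{\mathrm{hor}}\wedge H\wedge\theta^5$ what you actually use is $\iota_{v_5}\big(H^{\mathrm{hor}}\wedge\theta^5\big) = -H^{\mathrm{hor}}$ (a 4-form identity exploiting horizontality and $\iota_{v_5}\theta^5=1$) together with $\iota_{v_5}$ annihilating an overdegree form; these are cousins of the stated identity but not literally it. This does not affect the validity of the argument, only the bookkeeping of which Cartan identity is being invoked where.
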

\begin{proof}
  We may evidently regard the PS Lagrangian as the quadratic part of the
  following bilinear form on
  differential 2-forms with values in differential 6-forms:
  $$
    (B,B')
    \;\longmapsto\;
  -
  \tfrac{1}{2}
  \big(
  \iota_{v_5} * H
  -
  \mathcal{L}_{v_5} B^{\mathrm{hor}}
  \big)
  \wedge
  d B'
  \wedge \theta^5.
  $$
  Observe then that this bilinear form is symmetric up to a total differential:
  the first summand is strictly symmetric, as it is the standard Hodge pairing,
  while for the second summand symmetry up to a total derivative is established
  by a local integration by parts.
  Together these imply that the variational Euler-Lagrange derivative of the
  PS Lagrangian is twice the result of varying just the second factor of
  $B$:
  $$
    \begin{aligned}
      \delta \mathbf{L}^{\!\!\mathrm{PS}} \wedge \theta^5
      & =
      -
      \big(
      \iota_{v_5} * H
      -
      \mathcal{L}_{v_5} B^{\mathrm{hor}}
      \big)
      \wedge
      d (\delta B)
      \wedge \theta^5
      \\
      & =
      -
      \Big(
      \theta^5
      \wedge
      d
      \big(
      \iota_{v_5} * H
      -
      \mathcal{L}_{v_5} B^{\mathrm{hor}}
      \big)
      \Big)
      \wedge
      (\delta B)\;.
    \end{aligned}
  $$
  Hence the vanishing of the variational derivative is
  equivalent to \eqref{WeakSelfDuality}.
\end{proof}

\medskip
\noindent {\bf Reduction to 5d Maxwell theory.}
Consider finally
the special case of Kaluza-Klein compactification/double dimensional
reduction, where
\begin{equation}
  \mathcal{L}_{v_5} B^{\mathrm{hor}} = 0
  \phantom{AAA}
  \mbox{ for KK-reduction to D4}.
\end{equation}
In this case, expression \eqref{calF} reduces to
\begin{equation}
  \label{FFromH}
  \iota_{v_5} H = F
  \phantom{AAA}
  \mbox{if}
  \phantom{AA}
  \begin{array}{l}
    \mathcal{L}_{v_5} B^{\mathrm{hor}} = 0\;.
  \end{array}
\end{equation}
Hence, with self-duality (\cite[above (16)]{PerrySchwarz97}), we have
\begin{equation}
  \label{KKH}
  H = F \wedge \theta^5 + *_5 F \;,
\end{equation}
where now
$
  *_5
  :
  \Omega^\bullet( \Sigma^5)
  \to
  \Omega^{5-\bullet}( \Sigma^5)
$
is the Hodge star operator on the base 5-manifold.
Consequently, we have extension of \cite[above (16)]{PerrySchwarz97} to the topologically nontrivial setting:
\begin{prop}[5d Abelian Yang-Mills from Perry-Schwarz]
\label{MaxwellFromPS}
The $S^1$-dimensional reduction of the PS Lagrangian \eqref{NonCovariantLagrangianInFlatSpacetime}
is the Lagrangian of
5d Maxwell theory for the vector potential $A$
from \eqref{AField}:
\begin{equation}
  \label{5dYMFromPS}
  A
  \;\longmapsto\;
  \mathbf{L}^{\!\!\mathrm{PS}} \wedge \theta^5
  \;=\;
  -
  \tfrac{1}{2}
  \big(
    H \wedge \iota_v H
  \big)
  \;=\;
  -
  \tfrac{1}{2}
  \big(
    F
    \wedge
    *_5 F
  \big)
  \wedge
  \theta^5
  \phantom{AAAA}
  \mbox{if}
  \phantom{A}
     H = * H
    \phantom{A} \mbox{and} \phantom{A}
    \mathcal{L}_{v_5} B^{\mathrm{hor}} = 0\;.
\end{equation}
\end{prop}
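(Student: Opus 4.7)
The plan is to combine three simplifications---imposing $H = *H$ via the already-derived form \eqref{NonCovariantLagrangianInFlatSpacetime} of the PS Lagrangian, using Lemma \ref{RelatingFcalF} together with $\mathcal{L}_{v_5}B^{\mathrm{hor}} = 0$ to identify $\iota_{v_5}H$ with $F$, and decomposing $H$ into its vertical and horizontal parts on the $S^1$-bundle---and then to simplify the resulting wedge products using $\theta^5 \wedge \theta^5 = 0$.

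First, by Lemma \ref{RelatingFcalF}, the hypothesis $\mathcal{L}_{v_5}B^{\mathrm{hor}} = 0$ gives $\mathcal{F} = \iota_{v_5}H = F$, so the simplified Lagrangian \eqref{NonCovariantLagrangianInFlatSpacetime} collapses to
\[
  \mathbf{L}^{\!\!\mathrm{PS}} \wedge \theta^5
  \;=\;
  -\tfrac{1}{2}\, F \wedge H \wedge \theta^5.
\]
The content of the proposition is then that substituting the $S^1$-fibered form of $H$ turns this into the 5d Maxwell density.

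Next, I would verify the decomposition \eqref{KKH}, namely $H = F \wedge \theta^5 + *_5 F$. By the vertical/horizontal split \eqref{HorizontalProjection} together with $\iota_{v_5}H = F$, one has $H = \theta^5 \wedge F + H^{\mathrm{hor}}$. Self-duality $H = *H$ combined with the isometry identity \eqref{HodgeStarCommutingWithIsometryContraction} exchanges horizontal and vertical components under $*$; matching horizontal parts then yields $H^{\mathrm{hor}} = *_5 F$, where $*_5$ is the Hodge star on the 5d base---using the orthogonal metric split induced by $\theta^5$, so that $*_6$ on horizontal $k$-forms produces $\theta^5$ wedged with a 5d Hodge dual. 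This is the only slightly nontrivial step; everything else is bookkeeping.

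Finally, I would substitute this decomposition and use $\theta^5 \wedge \theta^5 = 0$ to kill the $F \wedge F$ term:
\[
  F \wedge H \wedge \theta^5
  \;=\;
  F \wedge F \wedge \theta^5 \wedge \theta^5
  \;+\;
  F \wedge *_5 F \wedge \theta^5
  \;=\;
  F \wedge *_5 F \wedge \theta^5.
\]
Plugging back yields $\mathbf{L}^{\!\!\mathrm{PS}} \wedge \theta^5 = -\tfrac{1}{2}(F \wedge *_5 F) \wedge \theta^5$, while the intermediate expression $-\tfrac{1}{2}H \wedge \iota_{v_5}H$ follows from graded commutativity $F \wedge H = H \wedge F$ (both are even degree) applied to the first line. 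I anticipate no real obstacle: the argument is entirely the imposition of the two hypotheses into the Cartan-calculus identity \eqref{NonCovariantLagrangianInFlatSpacetime}, with the mild technical point being the Hodge-star decomposition on the principal $S^1$-bundle.
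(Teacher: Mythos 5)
Your proof is correct and follows the same route the paper uses (the two preparatory displays \eqref{FFromH} and \eqref{KKH} followed by substitution into \eqref{NonCovariantLagrangianInFlatSpacetime}); you simply spell out the derivation of \eqref{KKH} from the horizontal/vertical split \eqref{HorizontalProjection} and the isometry identity \eqref{HodgeStarCommutingWithIsometryContraction}, which the paper asserts with only a citation to Perry--Schwarz. One small point worth flagging: the middle expression $-\tfrac{1}{2}(H\wedge\iota_{v_5}H)$ in \eqref{5dYMFromPS} is a 5-form while the outer two expressions are 6-forms, so as printed it lacks a trailing $\wedge\,\theta^5$; your argument actually produces $-\tfrac{1}{2}\,H\wedge\iota_{v_5}H\wedge\theta^5$, which is degree-consistent and matches the two outer terms, so you should record this as a typographical slip in the proposition rather than silently identifying the two as you do.
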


\medskip

\noindent {\bf 6d self-dual field as (abelian) 5d Yang-Mills with KK-modes.}
Due to Prop. \ref{MaxwellFromPS}, one may regard the general bosonic
Perry-Schwarz Lagrangian  (Def. \ref{BosonicPSLagrangian}) for the self-dual
field on $\Sigma^6$ compactified on $S^1$ as that of (abelian)
\emph{5d Yang-Mills theory with a tower of KK-modes included},
which is a perspective on the M5-brane theory later advanced in
\cite{Douglas10}\cite{LPS10} (see \cite[3.4.3]{Lambert19})).
Here for the abelian bosonic sector, this perspective may be fully brought
out by introducing the following notation:
\begin{equation}
  \label{FTilde}
    F
    \;\coloneqq\;
    \iota_{v_5} H
    -
    \mathcal{L}_{v_5} B^{\mathrm{hor}}
    \qquad
    \text{and}
    \qquad
    \widetilde F
    \;\coloneqq\;
    H - F \wedge \theta^5,
\end{equation}
where on the left we have \eqref{FcalF},
as before, while on the right we are introducing notation for
the remaining summand in $H$.
With this notation, the bosonic PS Lagrangian, assuming self-duality \eqref{SelfDualityIn6d}
as in \eqref{NonCovariantLagrangianInFlatSpacetime},
finds the following suggestive expression:
\begin{equation}
  \label{YMFormOfPS}
  \mathbf{L}^{\!\!\mathrm{PS}}
  \;=\;
  -
  \tfrac{1}{2}
  F \wedge \widetilde F
  \phantom{AAAA}
  \mbox{if $\;H = \ast H$}
  \,,
\end{equation}
so that KK-reduction to 5d YM theory,
as in Prop. \ref{MaxwellFromPS},
is now given syntactically
simply by replacing $\widetilde F \mapsto \ast_5 F$.
We find the form \eqref{YMFormOfPS} of the PS Lagrangian
to be reflected by its super-exceptionalization
in Prop. \ref{ExceptionalPreimageOfPerrySchwarzLagrangian} below;
see Def. \ref{SuperExceptionalPSLagrangian}
and Remark \ref{CocycleForM5OnS1AsD4PlusKK}.

\medskip

\noindent {\bf The topological 5d Yang-Mills Lagangian.}
After dimenional reduction
\cite[Sec. 6]{APPS97}\cite[Sec. 6 \& App. A]{APPS97b},
the Perry-Schwarz Lagrangian for the M5-brane
is accompanied by
a multiple of the \emph{topological Yang-Mills Lagrangian}
\cite{BaulieuSinger88}\cite{vanBaal90}
\begin{equation}
  \label{TopologicalYM}
  \mathbf{L}^{\!\!\mathrm{tYM}}
  \;\coloneqq\;
  F \wedge F \;.
\end{equation}
We find this arise from the super-exceptional
embedding construction below in Theorem \ref{TheTrivialization},
see Remark \ref{D4WZ}.

\medskip

\section{Super-exceptional M-geometry }
\label{SuperExceptionalMGeometry}

We recall
(in Def. \ref{ExceptionalTangentSuperSpacetime} and
Prop. \ref{TransgressionElementForM2Cocycle})
the ``hidden supergroup of $D= 11$ supergravity''
\cite{DF}\cite{BAIPV04}\cite{ADR16}
interpreted as super-exceptional M-theory spacetime
\cite[4.6]{FSS18}\cite{SS18}
and explain, in Remark \ref{SuperExceptionalTargetForM5},
how the results of \cite{FSS19b}\cite{FSS19c}
identify this as the correct local target for M5-brane sigma-models.

\medskip
For definiteness, we make explicit our spinor and Clifford algebra convention:
\begin{defn}[e.g. {\cite[Prop. A.3 (v)]{HSS18}}]
\label{CliffordAlgebraRepresentation}
We write $\mathbf{32}$ for the irreducible real representation
of $\mathrm{Pin}^+(10,1)$, via a Clifford-algebra representation
$\{\pmb{\Gamma_a}\}_{a = 0}^{10}$ whose generators
satisfy the following relations:
\begin{equation}
  \label{CMinusMajoranaRepresentationForPin}
\begin{aligned}
  &
  \pmb{\Gamma}_a \pmb{\Gamma}_b + \pmb{\Gamma}_b \pmb{\Gamma}_a
  =
  + 2\eta_{a b}
  :=
  2 \mathrm{diag}( -1, +1, +1, \cdots, +1 )_{a b}
  \;,
  \\
  &\left( \pmb{\Gamma}_0\right)^2 = -1\;,
  \phantom{AAAA}\;
  \left( \pmb{\Gamma}_a\right)^2 = +1\;,
 \\
  &( \pmb{\Gamma}_0)^\dagger = - \pmb{\Gamma}_0\;,
  \phantom{AAAA}
  ( \pmb{\Gamma}_a)^\dagger = +  \pmb{\Gamma}_a\;,
   \phantom{AA}
    \mbox{for $a \in \{1,\cdots, 10\}$}.
\end{aligned}
\end{equation}
\end{defn}
\begin{remark}[Gamma matrix conventions]
Def. \ref{CliffordAlgebraRepresentation}
relates to an alternative Clifford algebra convention
$\{\Gamma_a\}_{a=0}^{10}$ used
in much of the relevant literature (e.g. \cite{DF}\cite{BAIPV04})
via the relation
$$
  {\pmb{\Gamma}}_a \coloneqq  i \, \Gamma_a
$$
understood inside a complex Dirac representation
(see {\cite[Prop. A.3]{HSS18}}).
We will use ``\;$\Gamma_a$'' for exhibiting expressions
manifestly compatible with the literature, and
``\;$\pmb{\Gamma}_a$'' for emphasizing the
actual real $\mathrm{Pin}^+(10,1)$-action.
\end{remark}

The following, Def. \ref{ExceptionalTangentSuperSpacetime} and
Prop. \ref{TransgressionElementForM2Cocycle},
are a formulation in rational super-homotopy theory
due to  \cite[Sec. 4.5]{FSS18}, of the
classical supergravity results in
\cite[Sec. 6]{DF}\cite[Sec. 3]{BAIPV04} (see also \cite{ADR16}) .
The Definition \ref{ExceptionalTangentSuperSpacetime}
of super-exceptional spacetime involves a parameter $s$
\cref{s}, which arises mathematically in Prop. \ref{TransgressionElementForM2Cocycle}
from different possibilities of decomposing the $H_3$-flux
on super-exceptional spacetime \cite{BAIPV04}.
We discover the physical meaning of this parameter below in
\cref{SuperExceptionalLagrangian}.

\begin{defn}[Super-exceptional M-theory spacetime]
  \label{ExceptionalTangentSuperSpacetime}
For parameter
\begin{equation}
  \label{s}
  s \in \mathbb{R} \setminus \{0\}
\end{equation}
the
\emph{$D= 11$, $\mathcal{N} = 1$, $n = 11$
super-exceptional M-theory spacetime}
$(\mathbb{T}^{10,1\vert \mathbf{32}})_{\mathrm{ex}_s}$
over ordinary
\emph{$D= 11$, $\mathcal{N} = 1$ super Minkowski spacetime}
$\mathbb{T}^{10,1\vert \mathbf{32}}$
is the rational super space given dually by the following
super dgc-algebra:
\begin{equation}
  \label{FermionicExtensionOfExceptionalTangentSuperspacetime}
  \hspace{-6cm}
  \raisebox{90pt}{
  \xymatrix{
    \big(
      \mathbb{T}^{10,1\vert \mathbf{32}}
    \big)_{\mathrm{ex}_s}
    \ar[dd]_{ \pi_{\mathrm{ex}_s} }
    &&
  \mathbb{R}\left[
    {\begin{array}{l}
      \underset{
        \mathrm{deg} = (1,\mathrm{even})
       }{
        \underbrace{
          \big\{
            e^a
          \}_{ 0 \leq a \leq 10 }
        }
      }
      \\
      \underset{\mathrm{deg} = (1,\mathrm{even}) }{
        \underbrace{
        \big\{
          e_{a_1 a_2}
        \big\}_{ 0 \leq a_1 \lt a_2 \leq 10 }
        }
      }
      \\
      \underset{ \mathrm{deg} = (1,\mathrm{even}) }{
        \underbrace{
          \big\{
            e_{a_1 \cdots a_5}
          \big\}_{ 0 \leq a_1 \lt \cdots \lt a_5 \leq 10 }
        }
      }
      \\
      \underset{ \mathrm{deg} = (1,\mathrm{odd}) }{
        \underbrace{
          \big\{
            \psi^\alpha
          \big\}_{0 \leq \alpha \leq 32 }
        }
      }
      \\
      \underset{ (1,\mathrm{odd}) }{
        \underbrace{
          \big\{
            \eta^\alpha
          \big\}_{ 0 \leq \alpha \leq 32 }
         }
      }
      \end{array}}
    \right]
    \Big/
    \mathrlap{
  \left(
    {\begin{array}{lcl}
      d \,\psi^\alpha
        & \!\!\!\!\!\!\!\!\!  = & \!\!\!\!\!\!
      0,
      \\
      d e^a & \!\!\!\!\!\!\!\!\!   =&  \!\!\!\!\!\! \overline{\psi}\; {\Gamma}^a \psi,
      \\
      d \, e_{a_1 a_2}
        & \!\!\!\!\!\!\!\!\!   = & \!\!\!\!\!\!
      \tfrac{i}{2}\overline{\psi} \; {\Gamma}_{a_1 a_2} \psi,
      \\
      d \, e_{a_1 \cdots a_5}
        &\!\!\!\!\!\!\!\!\!  =& \!\!\!\!\!\!
      \tfrac{1}{5!}\overline{\psi} \; {\Gamma}_{a_1 \cdots a_5} \psi,
      \\
      d \,\eta
        &\!\!\!\!\!\!\!\!\!  =& \!\!\!\!\!\!
       (s+1)
       e^a \wedge {\pmb{\Gamma}}_a \psi
       \\
       & &
       +
       e_{a_1 a_2} \wedge {\pmb{\Gamma}}^{a_1 a_2} \psi
       \\
       & &
       +
       (1 + \tfrac{s}{6})
       e_{a_1 \cdots a_5} \wedge {\pmb{\Gamma}}^{a_1 \cdots a_5} \psi
    \end{array}}
  \!\!\right)
  }
  \ar@{<-}[dd]^-{
    \mbox{
      \tiny
      $
      \begin{array}{ccc}
        \psi^\alpha & e^a
        \\
        \mapsup & \mapsup
        \\
        \psi^\alpha & e^a
      \end{array}
      $
    }
  }
  \\
  \\
  \mathbb{T}^{10,1\vert \mathbf{32}}
  &&
  \mathbb{R}\left[
    {\begin{array}{l}
      \underset{
        \mathrm{deg} = (1,\mathrm{even})
       }{
        \underbrace{
          \big\{
            e^a
          \}_{ 0 \leq a \leq 10 }
        }
      }s
      \\
      \underset{ \mathrm{deg} = (1,\mathrm{odd}) }{
        \underbrace{
          \big\{
            \psi^\alpha
          \big\}_{0 \leq \alpha \leq 32 }
        }
      }
      \end{array}}
    \right]
    \Big/
    \mathrlap{
  \left(
    {\begin{array}{lcl}
      d \,\psi^\alpha
        & \!\!\!\!\!\!\!\!  = & \!\!\!\!\!\!
      0,
      \\
      d e^a
      & \!\!\!\!\!\!\!\!  = & \!\!\!\!\!\!
      \overline{\psi}\; {\Gamma}^a \psi
    \end{array}}
  \right)
  }
  }
  }
\end{equation}
Here the index $\alpha$ ranges over a linear basis
of the real $\mathrm{Pin}^+(10,1)$-representation
$\mathbf{32}$ and the Clifford generators $\Gamma_a$
acting on these are as in Def. \ref{CliffordAlgebraRepresentation};
and
we use Einstein summation convention
with the $e_{a_1 a_2}$ and $e_{a_1 \cdots a_5}$
understood as completely antisymmetrized in their indices.
\end{defn}

\begin{remark}[Super-exceptional M-theory spacetime as a supermanifold]
\label{SuperExceptionalSpacetimeAsManifold}
We may alternatively regard
$\big(\mathbb{T}^{10,1\vert \mathbf{32}}\big)_{\mathrm{ex}_s}$
from Def. \ref{ExceptionalTangentSuperSpacetime}
as a super-manifold with canonical global coordinate functions
$$
  C^\infty
  \Big(
    \big(
      \mathbb{R}^{10,1\vert \mathbf{32}}
    \big)_{\mathrm{ex}_s}
  \Big)
  \;=\;
  \Big\langle
  \underset{
    \mathrm{deg} = (0, \mathrm{even})
  }{
  \underbrace{
    (x^a)
  }},
  \;\;\;
  \underset{
    \mathrm{deg} = (0,\mathrm{even})
  }{
  \underbrace{
    (B_{a_1 a_2})
  }},
  \;\;\;
  \underset{
    \mathrm{deg} = (0,\mathrm{even})
  }{
  \underbrace{
    (B_{a_1 \cdots a_5})
  }},
  \;\;\;
  \underset{
    \mathrm{deg} = (0,\mathrm{odd})
  }{
  \underbrace{
    (\theta^\alpha)
  }},
  \;\;\;
  \underset{
    \mathrm{deg} = (0,\mathrm{odd})
  }{
  \underbrace{
    (\rho^\alpha)
  }}
  \Big\rangle
  \,,
$$
hence with bosonic part being the exceptional tangent bundle \eqref{EGTangent} for maximal $n =11$:
$$
  \Big(
    \big(
      \mathbb{R}^{10,1\vert \mathbf{32}}
    \big)_{\mathrm{ex}_s}
  \Big)_{\mathrm{bos}}
  \;\simeq\;
  T \mathbb{R}^{10,1}
    \;\oplus\;
  \wedge^2 T^\ast \mathbb{R}^{10,1}
    \;\oplus\;
  \wedge^5 T^\ast \mathbb{R}^{10,1}
  \,,
$$
and with super-group structure such that
the Chevalley-Eilenberg algebra in Def. \ref{ExceptionalTangentSuperSpacetime}
identifies with the super de Rham dgc-algebra of
left-invariant (hence translationally supersymmetric) super-differential forms:
\begin{equation}
  \label{LeftInvariantVielbeingInTermsOfCanonicalCoordinates}
  \xymatrix@R=-2pt{
    \mathrm{CE}
    \Big(
      \big(
        \mathbb{T}^{10,1\vert \mathbf{32}}
      \big)_{\mathrm{ex}_s}
    \Big)
    \ar[r]^-{\simeq}
    &
    \Omega^\bullet_{\mathrm{li}}
    \Big(
      \big(
        \mathbb{T}^{10,1\vert \mathbf{32}}
      \big)_{\mathrm{ex}_s}
    \Big)
    ~\ar@{^{(}->}[rrr]
    &&&
    \Omega^\bullet
    \Big(
      \big(
        \mathbb{R}^{10,1\vert \mathbf{32}}
      \big)_{\mathrm{ex}_s}
    \Big).
    \\
    \psi^\alpha
      \ar@{|->}[r]
      &
    \mathrlap{
      \!\!\!\!\!\!\!\!\!\!\!
      \phantom{\tfrac{1}{\alpha_0(s)}}
      d\theta^\alpha
    }
    &&
    \\
    \eta^\alpha
      \ar@{|->}[r]
      &
    \mathrlap{
      \!\!\!\!\!\!\!\!\!\!\!
      \phantom{\tfrac{1}{\alpha_0(s)}}
      d\rho^\alpha
    }
    \\
    e^a
      \ar@{|->}[r]
    &
    \mathrlap{
      \!\!\!\!\!\!\!\!\!\!\!
        \phantom{\tfrac{1}{\alpha_0(s)}}
        d x^a +
        \overline{\theta} \Gamma^a d \theta
    }
    \\
    e_{a_1 a_2}
      \ar@{|->}[r]
      &
      \mathrlap{
        \!\!\!\!\!\!\!\!\!\!\!\!
        \phantom{\tfrac{1}{\alpha_0(s)}}
        d (B_{a_1 a_2})
        +
        \tfrac{i}{2}\overline{\theta} \Gamma_{a_1 a_2} d \theta
      }
    \\
    e_{a_1 \cdots a_5}
     \ar@{|->}[r]
      &
    \mathrlap{
      \!
      \tfrac{1}{\alpha_0(s)}
      d (B_{a_1 \cdots a_5})
      +
      \tfrac{1}{5!}
      \overline{\theta} \Gamma_{a_1 \cdots a_5} d \theta\;.
    }
  }
\end{equation}
Beware the bracketing in the last two lines on the right,
in contrast to $(d B)_{a_1 \cdots a_3}$ etc.
The bosonic component of the generator $e_{a_1 a_2}$
is the de Rham differential
of the  bosonic component functions
of a 2-form $B \coloneqq B_{a_1 a_2} d x^a \wedge dx^b$
$$
  d(B_{a_1 a_2}) \;=\; d x^\mu \partial_\mu B_{a_1 a_2}
$$
(without antisymmetryization over all three indices, at this point),
instead of the component functions of the de Rham differential of a 2-form,
which is instead obtained by anti-symmetrizing over all three indices
$$
  H
    \;\coloneqq\;
  \big( d(B_{a_1 a_2}) \big) \wedge d e^{a_1} \wedge d e^{e_2}
$$
as in \eqref{HwithsMinusThree} below.
This has a crucial effect in the following discussion;
see Lemma \ref{ExceptionalPreimageOfPerrySchwarzLagrangian}.
\end{remark}

\begin{prop}[Transgression of M2-cocycle on super-exceptional spacetime]
  \label{TransgressionElementForM2Cocycle}
  For $s \in \mathbb{R} \setminus \{0\}$, the fermionic extension of exceptional tangent superspacetime
  $\mathbb{R}^{10,1\vert \mathbf{32}}_{\mathrm{ex}_s}$ (Def. \ref{FermionicExtensionOfExceptionalTangentSuperspacetime}),
  regarded as fibered over 11d super-Minkowski spacetime
  $\mathbb{R}^{10,1\vert \mathbf{32}}$
  carries a transgression element
  for the M2-brane 4-cocycle $\mu_{{}_{\rm M2}}$ \eqref{TheMBraneCocycles}:
  \begin{equation}
    \label{Hexs}
    H_{\mathrm{ex}_s}
      \;\in\;
    \mathrm{CE}
    \Big(
      \big(
        \mathbb{R}^{10,1\vert \mathbf{32}}
      \big)_{\mathrm{ex}_s}
    \Big)
    \phantom{AA}
    \mbox{\text such that}
    \;\;
    d H_{\mathrm{ex}_s}
    \;=\;
    (\pi_{\mathrm{ex}_s})^\ast
    \mu_{{}_{\mathrm{M2}}}
  \end{equation}
    given by
\begin{equation}
  \label{DecomposedCFieldAsSumOfBosonicAndFermionicContribution}
  H_{\mathrm{ex}_s}
  \;=\;
  \underset{
    = (d B_{a_1 a_2}) \wedge e_{a_2} \wedge e_{a_3}  =: H
  }{
  \underbrace{
    \alpha_0(s)
    \,
    e_{a_1 a_2} \wedge e^{a_1} \wedge e^{a_2}
  }
  }
   -
   \alpha_3(s)
   \epsilon_{a_1 \cdots a_5 b_1 \cdots b_5 c}
   e^{a_1 \cdots a_5}
     \wedge
   e^{b_1 \cdots b_5} \wedge e^c
  +
  \mathrm{H}^{\mathrm{fib}}_s
\end{equation}
with
\begin{equation}
  \label{DecomposedCFieldBosonicContribution}
\begin{aligned}
   (H^{\mathrm{fib}}_s)_{\mathrm{bos}}
   &  =
     \phantom{+}\,
     \alpha_1(s)
     e^{a_1}{}_{a_2}
       \wedge
     e^{a_2}{}_{a_3}
       \wedge
     e^{a_3}{}_{a_1}
    \\
    &
    \phantom{=}\;
    +
    \alpha_2(s)
    e_{b_1 a_1 \cdots a_4}
      \wedge
    e^{b_1}{}_{b_2}
      \wedge
    e^{b_2 a_1 \cdots a_4}
    \\
    &
    \phantom{=}\;
    +
    \alpha_4(s)
    \epsilon_{\alpha_1 \cdots \alpha_6 b_1 \cdots b_5}
    e^{a_1 a_2 a_3}{}_{c_1 c_2}
      \wedge
    e^{a_4 a_5 a_6 c_1 c_2}
      \wedge
    e^{b_1 \cdots b_5}
  \end{aligned}
\end{equation}
and
\begin{equation}
  \label{DecomposedCFieldFermionicContribution}
  (H^{\mathrm{fib}}_s)_{\mathrm{ferm}}
  =
  -
  \tfrac{1}{2}
  \overline{\eta}_\alpha \wedge \psi^{\beta}
  \wedge
  \Big(
    \beta_1(s) (\Gamma_a)^\alpha{}_\beta \; e^a
    +
    \beta_2(s)
    (\Gamma^{a_1 a_2})^\alpha{}_\beta \;
    e_{a_1 a_2}
    +
    \beta_3(s)
    (\Gamma^{a_1 \cdots a_5})^\alpha{}_{\beta} \;
    e_{a_1 \cdots a_5}
  \Big)\;,
\end{equation}
for analytic functions $\alpha_i, \beta_j$ of the parameter $s \in \mathbb{R} \setminus \{0\}$ with the following zeros
\begin{equation}
  \label{CoefficientZerosInDecomposedCField}
  \begin{array}{lcl}
    \alpha_0(s) \neq 0
    \\
    \alpha_1(s) = 0 &\Leftrightarrow& s = -3
\\
        \alpha_2(s) = 0 &\Leftrightarrow& s = -6
    \\
    \alpha_3(s) = 0 &\Leftrightarrow& s = -6
    \\
    \alpha_4(s) = 0 &\Leftrightarrow& s = -6
    \\
    \beta_1(s) = 0 &\Leftrightarrow& s = -3/2
    \\
    \beta_2(s) = 0 &\Leftrightarrow& s = -3
    \\
    \beta_3(s) = 0 &\Leftrightarrow& s = -6.
  \end{array}
\end{equation}
\end{prop}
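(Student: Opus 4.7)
The plan is to verify the identity $d H_{\mathrm{ex}_s} = (\pi_{\mathrm{ex}_s})^* \mu_{{}_{\rm M2}}$ by direct computation in the super dgc-algebra of Definition \ref{ExceptionalTangentSuperSpacetime}, treating \eqref{DecomposedCFieldAsSumOfBosonicAndFermionicContribution}--\eqref{DecomposedCFieldFermionicContribution} as an Ansatz with undetermined coefficient functions $\alpha_i(s), \beta_j(s)$, and then solving for these functions under the constraint that $dH_{\mathrm{ex}_s}$ coincides with the pullback of the M2-cocycle.

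First I would apply the Chevalley--Eilenberg differential term by term. The leading summand $\alpha_0(s)\, e_{a_1 a_2}\wedge e^{a_1}\wedge e^{a_2}$ contributes $\alpha_0(s)\cdot \tfrac{i}{2}\overline\psi\,\Gamma_{a_1 a_2}\psi \wedge e^{a_1}\wedge e^{a_2}$ upon differentiating $e_{a_1 a_2}$ plus residual terms of the form $e_{a_1 a_2}\wedge \overline\psi\,\Gamma^{a_1}\psi\wedge e^{a_2}$. Differentiating the remaining bosonic summands in \eqref{DecomposedCFieldBosonicContribution} produces further quartic spinor bilinears wedged with vielbeine, and differentiating the fermionic summands in \eqref{DecomposedCFieldFermionicContribution} introduces three families of terms weighted by the $s$-dependent coefficients $(s+1)$, $1$, and $1 + s/6$ coming from $d\eta$, together with further quartic bilinears from the differentials of the vielbeine. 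Setting $d\psi^\alpha = 0$ removes the trivial contributions.

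The next step is to organize the resulting expression into a basis of $\mathrm{Spin}(10,1)$-invariant monomials in the generators and match coefficients on both sides. After Fierz rearrangement of the quartic bilinears $\overline\psi\,\Gamma^A\psi \wedge \overline\psi\,\Gamma^B\psi$ and of the mixed $\eta$--$\psi$ bilinears into an independent basis, cancellation of all terms \emph{other than} $\alpha_0(s)$ times the pullback of $\mu_{{}_{\rm M2}}$ yields an overdetermined linear system for $(\alpha_1,\alpha_2,\alpha_3,\alpha_4,\beta_1,\beta_2,\beta_3)$ whose coefficients depend linearly on $s$ through $d\eta$. Normalizing $\alpha_0(s)$ fixes the remaining $\alpha_i, \beta_j$ as rational functions of $s$, whose numerators are the linear factors $(s+3)$, $(s+6)$, $(2s+3)$ appearing in the zero loci \eqref{CoefficientZerosInDecomposedCField}; analyticity on $\mathbb{R}\setminus\{0\}$ is then immediate, the only possible pole being absorbed by the excluded value $s=0$.

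The main technical obstacle is the Fierz step in 11d with the $\mathbf{32}$-representation of Definition \ref{CliffordAlgebraRepresentation}: the closure of the system rests on the quartic spinor identities of Bianchi type that already underlie the existence of the M2-brane WZ term on ordinary super-Minkowski spacetime, together with their generalizations involving the higher Clifford brackets $\Gamma^{a_1 a_2}$ and $\Gamma^{a_1 \cdots a_5}$. For the explicit discharge of these identities and the closed-form expressions of $\alpha_i(s)$ and $\beta_j(s)$, I would follow the classical computation of D'Auria--Fr\'e \cite{DF}, in the recasting of \cite[Sec.~3]{BAIPV04} and \cite{ADR16}, adapted to the sign and normalization conventions of Definition \ref{CliffordAlgebraRepresentation} as in \cite[Sec.~4.5]{FSS18}; the asserted zero loci \eqref{CoefficientZerosInDecomposedCField} then follow by inspection of the resulting rational functions of $s$.
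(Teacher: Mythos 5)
Your proposal is correct and coincides with the paper's own treatment: the paper does not reproduce the computation but explicitly states, just before Def.~\ref{ExceptionalTangentSuperSpacetime} and Prop.~\ref{TransgressionElementForM2Cocycle}, that these are a rational super-homotopy recasting of the classical results of D'Auria--Fr\'e \cite{DF}, Bandos et al.~\cite{BAIPV04}, and Andrianopoli--D'Auria--Ravera \cite{ADR16}, following \cite[Sec.~4.5]{FSS18}, and defers the Fierz-identity verification to those sources. Your Ansatz-and-solve outline (apply the CE differential to each summand of $H_{\mathrm{ex}_s}$, collect invariant quartic bilinears, impose cancellation against $(\pi_{\mathrm{ex}_s})^*\mu_{{}_{\rm M2}}$, and solve for the $\alpha_i,\beta_j$ as rational functions of $s$ with the stated zero loci) is exactly the strategy carried out in those references, and you correctly localize the technical content in the 11d Fierz identities for the $\mathbf{32}$-representation.
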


\begin{example}[Special parameter-value for super-exceptional geometry]
  \label{ExceptionalSuperspacetimeAtSEqualsMinusOne}
    If the parameter $s$ \eqref{s}
    in Def. \ref{ExceptionalTangentSuperSpacetime}
    and Prop. \ref{TransgressionElementForM2Cocycle}
    takes the value
    $$
      s = - 3
    $$
    the transgression element
    in Prop. \ref{TransgressionElementForM2Cocycle}
    has the property, from \eqref{CoefficientZerosInDecomposedCField}, that
    up to terms proportional to the
    5-index tensor $e_{\alpha_1 \cdots \alpha_5}$,
    its only dependence on
    $d B_{a_1 a_2} := \alpha_0(s) e_{a_1 a_2}$
    is through the leading term
    $H := (d B_{a_2 a_3}) \wedge d x^{a_2} \wedge dx^{a_3}$;
    concretely:
    \begin{equation}
      \label{HwithsMinusThree}
      H_{\mathrm{ex}_{(s=-3)}}
      \;=\;
      \underset{
       H
      }{
        \underbrace{
          (d B_{a_2 a_3}) \wedge d x^{a_2} \wedge dx^{a_3}
        }
      }
      -
      \tfrac{1}{2}
      \beta_1(-3)
      \cdot
      \overline{\eta}\;\Gamma_a \psi \wedge e^a
      \;+\;
      \mathcal{O}\big( \{e_{a_1 \cdots a_5}\} \big)
      \,,
    \end{equation}
    where the last term $\mathcal{O}\big( \{e_{a_1 \cdots a_5}\} \big)$
    denotes summands that vanish when the 5-index generators
    $e_{a_1 \cdots a_5}$ are set to zero.

We will see below in
Prop. \ref{ExceptionalPreimageOfPerrySchwarzLagrangian}
\hyperlink{SuperExceptionalPSLagrangianForsMinusThree}{(ii)}.
that in super-exceptional M-geometry
at value $s = -3$
the
bosonic Perry-Schwarz Lagrangian appears naturally.
\end{example}

The point of the super-exceptional M-theory spacetime
from Def. \ref{ExceptionalTangentSuperSpacetime}
is that it is a super-manifold
(via Remark \ref{SuperExceptionalSpacetimeAsManifold})
which approximates the universal
super 3-stack
$\widehat{ \mathbb{T}^{10,1\vert \mathbf{32}} }$
classified by the super M2-brane cocycle \eqref{TheMBraneCocycles}.
We record this phenomenon (which can be traced back to the
``hidden supergroup of 11d supergravity'' in \cite{DF}):
\begin{lemma}[Super-exceptional M-spacetime
from M2-brane super-2-gerbe {\cite[4.6]{FSS18}}]
\label{Decomposition}
We have a commuting diagram of the form
\begin{equation}
  \label{DecompositionMap}
  \raisebox{10pt}{
  \xymatrix@R=.7em{
    &&
    \\
    \big(
      \mathbb{T}^{10,1\vert \mathbf{32}}
    \big)_{\mathrm{ex}_s}
    \ar[rr]_-{ \mathrm{comp} }
    ^{
      \mbox{
        \tiny
        $
        \begin{array}{rcl}
          \psi^\alpha
          &\!\!\! \mapsfrom \!\!\!&
          \psi^\alpha
          \\
          e^a
          &\!\!\! \mapsfrom \!\!\!&
          e^a
          \\
          H_{\mathrm{ex}_s}
          &\!\!\! \mapsfrom \!\!\!&
          h_3
        \end{array}
        $
      }
    }
    \ar[ddr]_{ \pi_{\mathrm{ex}_s} }
    &&
    \mathfrak{m}2\mathfrak{brane}
    \ar[ddl]^{ \mathrm{\pi} }
    \\
    \\
    &
    \mathbb{R}^{10,1\vert \mathbf{32}}
  }
  }
\end{equation}
mapping the super-exceptional spacetime from Def. \ref{ExceptionalTangentSuperSpacetime}
to the homotopy fiber of the M2-brane cocycle from \eqref{TheMBraneCocycles}, such that the defining degree-3 generator
$h_3$ on the right is pulled back to the transgression element
$H_{\mathrm{ex}_s}$ from Prop. \ref{TransgressionElementForM2Cocycle}.
\end{lemma}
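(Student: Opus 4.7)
Since we are working in rational super-homotopy theory, it suffices to construct the dual dgc-algebra morphism
\[
  \mathrm{comp}^\ast\colon
  \mathrm{CE}(\mathfrak{m}2\mathfrak{brane}) \longrightarrow \mathrm{CE}\bigl((\mathbb{T}^{10,1\vert \mathbf{32}})_{\mathrm{ex}_s}\bigr)
\]
with the prescribed behavior on generators, and then verify that the triangle commutes. From the right-hand column of \eqref{TheMBraneCocycles}, $\mathrm{CE}(\mathfrak{m}2\mathfrak{brane})$ is freely generated over $\mathrm{CE}(\mathbb{R}^{10,1\vert \mathbf{32}})$ by one additional even generator $h_3$ of degree $3$, subject only to $dh_3 = \mu_{{}_{\rm M2}}$. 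Hence specifying $\mathrm{comp}^\ast$ reduces to extending the identity-on-$(\psi^\alpha,e^a)$ inclusion by a single prescribed image for $h_3$.

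The candidate assignment is of course $h_3 \longmapsto H_{\mathrm{ex}_s}$, with $H_{\mathrm{ex}_s}$ as produced in Prop. \ref{TransgressionElementForM2Cocycle}. To see that this extends to a morphism of super dgc-algebras, I would check each defining relation of the source. The Maurer-Cartan equations $d\psi^\alpha = 0$ and $de^a = \overline{\psi}\,\Gamma^a\psi$ are imposed verbatim in $\mathrm{CE}\bigl((\mathbb{T}^{10,1\vert \mathbf{32}})_{\mathrm{ex}_s}\bigr)$ by Def. \ref{ExceptionalTangentSuperSpacetime}, so these impose nothing further. The only substantive condition is
\[
  d\bigl(\mathrm{comp}^\ast h_3\bigr) \;=\; \mathrm{comp}^\ast\!\bigl(\mu_{{}_{\rm M2}}\bigr),
  \qquad \text{i.e.} \qquad
  dH_{\mathrm{ex}_s} \;=\; (\pi_{\mathrm{ex}_s})^\ast \mu_{{}_{\rm M2}},
\]
which is exactly the defining property \eqref{Hexs} of the transgression element supplied by Prop. \ref{TransgressionElementForM2Cocycle}.

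Commutativity of the triangle in \eqref{DecompositionMap} dualizes to the equality
$\mathrm{comp}^\ast \circ \pi^\ast = (\pi_{\mathrm{ex}_s})^\ast$
of dgc-algebra maps $\mathrm{CE}(\mathbb{R}^{10,1\vert \mathbf{32}}) \to \mathrm{CE}\bigl((\mathbb{T}^{10,1\vert \mathbf{32}})_{\mathrm{ex}_s}\bigr)$. Both composites send $\psi^\alpha$ and $e^a$ to the like-named generators of the super-exceptional algebra, and by freeness this forces equality.

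In summary, once Prop. \ref{TransgressionElementForM2Cocycle} is in hand the lemma is essentially formal: it repackages the existence of the transgression element $H_{\mathrm{ex}_s}$ as a morphism into the $L_\infty$-algebraic model of the homotopy fiber of the super M2-brane cocycle. The genuine obstacle has therefore already been absorbed into Prop. \ref{TransgressionElementForM2Cocycle}, where the coefficients $\alpha_i(s),\beta_j(s)$ in \eqref{DecomposedCFieldBosonicContribution}--\eqref{DecomposedCFieldFermionicContribution} must be tuned so that the quartic-in-$\psi$ terms of $dH_{\mathrm{ex}_s}$ precisely reproduce $(\pi_{\mathrm{ex}_s})^\ast \mu_{{}_{\rm M2}} = \tfrac{i}{2}\overline{\psi}\,\Gamma_{a_1 a_2}\psi \wedge e^{a_1}\wedge e^{a_2}$; nothing further is needed at the level of the lemma itself.
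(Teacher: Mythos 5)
Your proof is correct and is exactly the argument one would give. The paper in fact states Lemma~\ref{Decomposition} without a local proof, citing \cite[4.6]{FSS18}, precisely because the content is formal once Prop.~\ref{TransgressionElementForM2Cocycle} is available: $\mathrm{CE}(\mathfrak{m}2\mathfrak{brane})$ is a relatively free extension of $\mathrm{CE}(\mathbb{R}^{10,1\vert\mathbf{32}})$ by one generator $h_3$ with $dh_3 = \mu_{{}_{\rm M2}}$, so a dgc-algebra morphism out of it over the base is uniquely determined by choosing an element with the right differential, and $H_{\mathrm{ex}_s}$ is that element by \eqref{Hexs}. Your identification of where the genuine work lives (in tuning the coefficients $\alpha_i(s),\beta_j(s)$ inside Prop.~\ref{TransgressionElementForM2Cocycle}) and your dual check of the commuting triangle both match the intended reading.
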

Hence we say:
\begin{defn}[Super-exceptional M5-brane cocycle]
\label{ExceptionalM5SuperCocycle}
The \emph{super-exceptional M5-brane cocycle},
to be denoted $\mathbf{dL}^{\!\!\mathrm{WZ}}_{\mathrm{ex}_s}$,
is the pullback of the super M5-brane cocycle \eqref{TheMBraneCocycles}
to the super-exceptional M-theory spacetime (Def. \ref{ExceptionalTangentSuperSpacetime})
along the decomposition morphism \eqref{DecompositionMap}:
\begin{equation}
  \label{SuperExceptionalM5Cocycle}
  \begin{aligned}
  \mathbf{dL}^{\!\!\mathrm{WZ}}_{\mathrm{ex}_s}
  & \coloneqq
  \mathrm{comp}^\ast
  \big(
    \tfrac{1}{2} h_3
    \;\wedge\;
    \overset{
      = d h_3
    }{
      \overbrace{
        \pi^\ast \mu_{{}_{\rm M2}}
      }
    }
    +
    \pi^\ast \mu_{{}_{\rm M5}}
  \big)
  \\
  & =
    \tfrac{1}{2} H_{\mathrm{ex}_s}
    \;\wedge\;
    (\pi_{\mathrm{ex}_s})^\ast \mu_{{}_{\rm M2}}
    +
    (\pi_{\mathrm{ex}_s})^\ast \mu_{{}_{\rm M5}}
  \\
  & =
    \tfrac{1}{2}
    H_{\mathrm{ex}_s}
    \;\wedge\;
    d H_{\mathrm{ex}_s}
    +
    \tfrac{1}{5!}
    (
      \overline{\psi}\;
      \Gamma_{a_1 \cdots a_5}
      \psi
    )
    \wedge
    e^{a_1} \wedge \cdots e^{a_5}
    \phantom{AA}
    \in\;
    \mathrm{CE}
    \Big(
      \big(
        \mathbb{R}^{10,1\vert \mathbf{32}}
      \big)_{\mathrm{ex}_s}
    \Big)
    \,.
  \end{aligned}
\end{equation}
\end{defn}

We close by commenting on the role and meaning of these
constructions:

\begin{remark}[Super-exceptional spacetime as target for M5-brane sigma-models]
\label{SuperExceptionalTargetForM5}
\item {\bf (i)}
The super-exceptional spacetime (Def. \ref{ExceptionalTangentSuperSpacetime}),
when regarded as a super-group of super-translations along itself, is what \cite[Sec. 6]{DF}
called the ``hidden supergroup of 11-dimensional supergravity''; motivated there by the
purely algebraic desideratum of finding a super Lie 1-algebra over which the would-be
M2-brane cocycle trivializes (Prop. \ref{TransgressionElementForM2Cocycle}),
as opposed to the super Lie 3-algebra $\mathfrak{m}2\mathfrak{brane}$ \cite{FSS13}
on which it does so universally.
\item {\bf (ii)} However, the actual meaning or role of this ``hidden supergroup''
had remained open; as it was not used for the
re-derivation of $D=11$ supergravity in \cite{DF}, but discussed
as an afterthought. In particular, the meaning or role of the
extra fermion field required by super-exceptional spacetime
had remained open \cite[p. 3]{ADR16}
and the relation of the bosonic fields to exceptional M-geometry
had remained unnoticed except in \cite{Vaula07}
and then more recently in \cite[p. 6]{Bandos17}\cite{FSS18}\cite{SS18}.

\item {\bf (iii)} But in \cite[Prop. 4.31]{FSS19b}\cite[Prop. 4.4]{FSS19c},
we found that the M5-brane sigma model in a given C-field background is
characterized as making the outermost square of the following
diagram homotopy-commute, here now displayed for flat super-spacetimes
instead of curved topological spacetimes:
$$
  \xymatrix@C=39pt@R=2em{
    \mathclap{
    \mbox{
      \tiny
      \color{blue}
      \begin{tabular}{c}
        M5 worldvolume
        \\
        super manifold
      \end{tabular}
    }
    }
    &
    \mathclap{
    \mbox{
      \tiny
      \color{blue}
      \begin{tabular}{c}
        super-exceptional
        spacetime
        \\
        $\sim$
        \\
        universal super manifold
        \\
        classifying M5-brane fields
      \end{tabular}
    }
    }
    &&
    \mbox{
      \tiny
      \color{blue}
      \begin{tabular}{c}
        extended super
        spacetime
        \\
        =
        \\
        universal super 3-stack
        \\
        classifying M5-brane fields
      \end{tabular}
    }
    &&
    \mathclap{
    \mbox{
      \tiny
      \color{blue}
      \begin{tabular}{c}
        spherical coefficients
        \\
        for Cohomotopy theory
      \end{tabular}
    }
    }
    \\
    \\
    \Sigma
    \ar@/^3pc/[rrrrr]|-{
      H_3 \wedge G_4 + 2 G_7
    }
    \ar@{-->}[r]
    \ar@/_1pc/[ddrrr]
    &
    \big(
      \mathbb{T}^{10,1\vert \mathbf{32}}
    \big)_{\mathrm{ex}_s}
    \ar@/^1.6pc/[rrrr]|-{
      H_{\mathrm{ex}_s}\,\wedge\, \pi^\ast_{\mathrm{ex}_s} \mu_{{}_{\rm M2}}
      + \pi^\ast_{\mathrm{ex}_s} 2 \mu_{{}_{\rm M5}}
    }
    \ar[ddrr]_-{ \pi_{\mathrm{ex}_s} }
    \ar[rr]|-{\;\mathrm{comp}\;}
    &&
    \widehat{
      \mathbb{T}^{10,1\vert \mathbf{32}}
    }
    \ar[dd]^>>>>>>{\ }="t"_-{ \pi }
    \ar[rr]|{\;
      h_3
      \,\wedge\,
      \pi^\ast \mu_{{}_{\rm M2}}
      +
      2
      \pi^\ast \mu_{{}_{\rm M5}}
    }_>>>>>>{\ }="s"
    \ar@{}[ddrr]|-{ \mbox{ \tiny (pb) } }
    &&
    S^7_{\mathbb{R}}
    \ar[dd]^-{ h_{\mathbb{H}} }
    \\
    \\
    &
    &&
    \mathbb{T}^{ 10,1\vert \mathbf{32} }
    \ar[rr]_-{ (\mu_{{}_{\rm M2}}, 2 \mu_{{}_{\rm M5}}) }
    &&
    S^4_{\mathbb{R}}
    \ar@{=>}_{h_3} "s"; "t"
  }
$$
Any such outer square factors universally through the homotopy pullback
square shown on the right, which exhibits
the M2-brane extended super spacetime
$ \widehat{ \mathbb{T}^{10,1\vert \mathbf{32}} }$ \eqref{TheMBraneCocycles}
as the super moduli 3-stack classifying M5-brane sigma-model fields \cite{FSS15}.
But now with the (extended) worldvolume $\Sigma$
itself a super manifold, this factorization
through $\widehat{ \mathbb{T}^{10,1\vert \mathbf{32}} }$ is to be further factorized,
as shown by the dashed map,
through an actual super-manifold still classifying these fields.
This is the role of the super-exceptional spacetime
$\big(\mathbb{R}^{10,1\vert \mathbf{32}}\big)_{\mathrm{ex}_s}$ \cite[Sec. 4.6]{FSS18}: It is the actual super manifold which serves
as a stand-in for the classifying super space
$\widehat{ \mathbb{T}^{10,1\vert \mathbf{32}} }$
(which is not itself a super-manifold).

\item {\bf (iv)}  The key consequence of
$\big( \mathbb{T}^{10,1\vert \mathbf{32}}\big)_{\mathrm{ex}_s}$
being a super-manifold, is that the indecomposable degree-3
generator $h_3$ on $\widehat{\mathbb{T}^{10,1\vert \mathbf{32}}}$
\eqref{TheMBraneCocycles},
which has trivial contraction with any vector field,
pulls back to the decomposable
3-form $H_{\mathrm{ex}_s}$ on $\big( \mathbb{T}^{10,1\vert \mathbf{32}}\big)_{\mathrm{ex}_s}$ (by Lemma \ref{Decomposition}),
which, like any differential form on a super manifold,
in general has non-trivial contraction with vector fields.
Below in \cref{SuperExceptionalLagrangian}
it is such a non-trivial contraction of $H_{\mathrm{ex}_s}$
with a vector field on the super-exceptional spacetime,
which makes the Perry-Schwarz Lagrangian appear.
The analogous contraction with $h_3$ on
$\widehat{\mathbb{T}^{10,1\vert \mathbf{32}}}$ would
vanish, which is the reason why the embedding construction
of the M5-brane does not work with ewxtended super-spacetime,
but requires passing to super-exceptional spacetime.

\end{remark}

\begin{lemma}[{\cite[Prop. 4.26]{FSS18}}]
\label{ReflectionAutomorphismOnExceptionalTangentSuperSpacetime}
The canonical $\mathrm{Pin}^{+}(10,1)$ action
on
super-spacetime $\mathbb{R}^{10,1\vert \mathbf{32}}$ lifts to
super-exceptional spacetime
$\big(\mathbb{R}^{10,1\vert \mathbf{32}}\big)_{\mathrm{ex}_s}$
(Def. \ref{ExceptionalTangentSuperSpacetime}),
such that
$\mathrm{Spin}(10,1) \hookrightarrow \mathrm{Pin}^+(10,1)$
acts in the evident way,
while reflection $\rho$
along the $a_{r}$-axis acts dually as follows:
\footnote{Beware that this is saying that the  generator $e_{a_1 a_2}$ picks up a sign precisely if its indices
do \emph{not} take the value $a_r$.}
$$
  \rho^\ast
  \;:\;
  \left\{
 \begin{aligned}
    e^a
    & \longmapsto
    \left\{
      \begin{array}{rcl}
        - e^a &\vert& a = a_r
        \\
        e^a &\vert& \mbox{otherwise}
      \end{array}
    \right.
    \\
    e_{a_1 a_2}
    &
    \longmapsto
    \left\{
      \begin{array}{rcl}
        - e_{a_1 a_2} &\vert& a_1, a_2 \neq a_r
        \\
        e_{a_1 a_2} &\vert& \mbox{otherwise}
      \end{array}
    \right.
    \\
    e_{a_1 \cdots a_5}
    &
    \longmapsto
    \left\{
      \begin{array}{rcl}
        - e_{a_1 \cdots a_5} &\vert& \mbox{one of the} \; a_i = a_r
        \\
        e_{a_1 \cdots a_5} &\vert& \mbox{otherwise}
      \end{array}
    \right.
    \\
    \psi & \longmapsto \phantom{-}\pmb{\Gamma}_{a_r} \psi
    \\
    \eta & \longmapsto - \pmb{\Gamma}_{a_r} \eta
 \end{aligned}
 \right.
$$
\end{lemma}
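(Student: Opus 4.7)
The plan is to verify that the stated assignments on the generators of $\mathrm{CE}\big( (\mathbb{T}^{10,1\vert \mathbf{32}})_{\mathrm{ex}_s} \big)$ extend uniquely to a dgc-algebra automorphism; equivalently, since preservation of bidegree is immediate from the assignments, that they commute with the CE differential of Def.~\ref{ExceptionalTangentSuperSpacetime} on each generator. The $\mathrm{Spin}(10,1)$-part is the standard (super-)Lorentz-invariance of the super-Poincar\'e CE relations, and extends verbatim to the relations for $e_{a_1\cdots a_5}$ and $\eta$ because each summand on the right-hand side of these relations is fully index-contracted against a Clifford element of the same tensorial type. The substance is therefore in checking the reflection $\rho$.

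For the reflection, the approach reduces every check to two Clifford-algebraic inputs: \textbf{(a)} the conjugation identity $\pmb{\Gamma}_{a_r}\,\pmb{\Gamma}^{a_1\cdots a_p}\,\pmb{\Gamma}_{a_r} = (-1)^{p-k}\,\pmb{\Gamma}^{a_1\cdots a_p}$, where $k\in\{0,1\}$ is the number of upper indices among $a_1,\ldots,a_p$ equal to $a_r$; and \textbf{(b)} the Dirac-conjugation identity $\pmb{\Gamma}_{a_r}^\dagger\,\Gamma_0 = \varepsilon\,\Gamma_0\,\pmb{\Gamma}_{a_r}$, where the sign $\varepsilon$ is determined by~\eqref{CMinusMajoranaRepresentationForPin} and depends on whether $a_r = 0$ or $a_r\in\{1,\ldots,10\}$. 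Combining (a) and (b) computes the sign picked up by each fermion bilinear $\overline{\psi}\,\pmb{\Gamma}^{a_1\cdots a_p}\psi$ under $\psi\mapsto \pmb{\Gamma}_{a_r}\psi$; a direct sign audit then shows that for $p=1,2,5$ these signs match exactly the reflection signs prescribed on $e^a$, $e_{a_1 a_2}$ and $e_{a_1\cdots a_5}$ respectively, which verifies the three middle CE relations.

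The main obstacle I expect is the relation for $d\eta$. On its right-hand side, the substitution $\psi\mapsto \pmb{\Gamma}_{a_r}\psi$ together with identity (a) produces, for each of the three summands $e_{a_1\cdots a_p}\wedge \pmb{\Gamma}^{a_1\cdots a_p}\psi$ (with $p\in\{1,2,5\}$), a factor $(-1)^{p-k}\,\pmb{\Gamma}_{a_r}$ times the original; combining this with the separate reflection sign on $e_{a_1\cdots a_p}$ must yield a total sign equal to $-1$, uniformly across $p$ and across $a_r$, in order to match the assignment $\rho^\ast\eta = -\pmb{\Gamma}_{a_r}\eta$. The plan is to verify this uniformity by a finite case analysis on $(p,k) \in \{(1,0),(1,1),(2,0),(2,1),(5,0),(5,1)\}$: this consistency is precisely what forces the asymmetric sign rule stated in the lemma (in particular, that $e_{a_1 a_2}$ flips exactly when neither index equals $a_r$ and that $e_{a_1\cdots a_5}$ flips exactly when one does), as the unique bosonic sign conventions absorbing the $(-1)^{p-k}$ from (a) together with the sign $\varepsilon$ from (b). Once this audit is carried out for one representative axis $a_r$, the remaining cases follow by the same bookkeeping, and the lemma is established.
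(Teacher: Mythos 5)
The paper does not give its own proof of this lemma (it cites \cite[Prop.~4.26]{FSS18}), so your attempt is assessed on its own logic. Your approach --- verify that the stated assignments commute with the CE differential of Def.~\ref{ExceptionalTangentSuperSpacetime} by moving $\pmb{\Gamma}_{a_r}$ past the various $\Gamma$-tensors --- is the natural one and is essentially correct. The identification of the $d\,\eta$ relation as the source of the asymmetric sign rule on $e_{a_1 a_2}$ and $e_{a_1\cdots a_5}$ is the right structural point: the commutation sign is $(-1)^{p-k}$ with $k\in\{0,1\}$ (antisymmetry forces $k\leq 1$), which equals $-1$ exactly when the number of indices different from $a_r$ is odd, giving opposite parities for $(p,k)=(2,1)$ vs.\ $(5,0)$; combined with the prescribed bosonic sign this is uniformly $-1$, matching $\rho^\ast\eta = -\pmb{\Gamma}_{a_r}\eta$. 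Your case analysis over $(p,k)\in\{(1,0),(1,1),(2,0),(2,1),(5,0),(5,1)\}$ is complete.

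Two technical imprecisions to flag. First, your identity (a) as written omits a factor: $\pmb{\Gamma}_{a_r}\pmb{\Gamma}^{a_1\cdots a_p}\pmb{\Gamma}_{a_r} = (-1)^{p-k}\,\eta_{a_r a_r}\,\pmb{\Gamma}^{a_1\cdots a_p}$, since $\pmb{\Gamma}_{a_r}\pmb{\Gamma}_{a_r}=\eta_{a_r a_r}$; the formula you state holds only for spatial $a_r$. Second, your identity (b) actually gives $\varepsilon = -1$ uniformly with the conventions of Def.~\ref{CliffordAlgebraRepresentation}: for $a_r=0$, $\pmb{\Gamma}_0^\dagger = -\pmb{\Gamma}_0$ while $\pmb{\Gamma}_0$ commutes with $\Gamma_0$; for spatial $a_r$, $\pmb{\Gamma}_{a_r}^\dagger = +\pmb{\Gamma}_{a_r}$ while it anticommutes with $\Gamma_0$; in both cases the product of signs is $-1$. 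So the timelike/spacelike dependence lives in (a), not (b). Consequently the bilinear $\overline{\psi}\,\Gamma^{a_1\cdots a_p}\psi$ transforms by $(-1)^{p-k+1}\eta_{a_r a_r}$, whereas the $d\,\eta$ constraint imposes $(-1)^{p-k}$ without the $\eta_{a_r a_r}$ factor (only a single pass of $\pmb{\Gamma}_{a_r}$ is involved, no square). These two constraints agree precisely for spatial $a_r$. That is fine for every use of the lemma in the paper (reflections $\pmb{\Gamma}_5$, $\pmb{\Gamma}_9$, and the Spin element $\pmb{\Gamma}_{6789}$), but your closing claim of uniformity ``across $a_r$'' should be restricted to spatial axes, or the sign factor $\eta_{a_r a_r}$ carried through explicitly.
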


\section{Super-exceptional $\mathrm{MK6}$- and $\tfrac{1}{2}\mathrm{M5}$-geometry}
\label{SuperExceptionalM5Locus}

We now consider the exceptionalization (Def. \ref{HalfM5LocusAndItsExceptionalTangentBundle} below)
of the local super-geometry of M-theory compactified on
$\mathbb{H}_{\sslash G_{\mathrm{ADE}}} \times S^1_{\sslash \mathbb{Z}_2}$
\cite{KSTY99}\cite{CabreraHananySperling19},
hence of Ho{\v r}ava-Witten heterotic M-theory on an ADE-orbifold, or, equivalently, of
non-perturbative type $\mathrm{I}^\prime$ string theory
with D6-branes intersecting the O8-plane in a $\tfrac{1}{2}\mathrm{NS}5$-brane \cite[6]{GKST01} \cite[p. 18]{ApruzziFazzi17}.
Therefore, following \cite[Ex. 2.2.7]{HSS18}, we will speak here of the
``$\tfrac{1}{2}\mathrm{M5}$ super-spacetime'', for emphasis of the
full 11-dimensional perspective; details are given in Remark \ref{TheHalfM5Spacetime}.

\medskip
It is traditionally understood that this compactification is one of two possible ways
of obtaining classes of $D = 6$, $\mathcal{N}= (1,0)$ superconformal field
theories from M-theory \cite[Sec. 6]{DHTV14}.
We make this mathematically concrete with Prop. \ref{LiftOfVectorField} below,
whose proof shows that the particular spinor structure of the
$\tfrac{1}{2}\mathrm{M5}$-locus (Def. \ref{SpinorProjection} below)
is what allows a lift of the spacetime isometry along the M-theory circle
to a symmetry also of the exceptionalized super-spacetime, including
the exceptionalized 3-flux density on the M5-brane.

\medskip
We use this to identify in Prop. \ref{ExceptionalPreimageOfPerrySchwarzLagrangian}
the exceptional pre-image of the bosonic 2-flux density $F$. Finally, we show in
Prop. \ref{calFexcDecomposed} that this induces the super 2-flux of super Yang-Mills
theory thereby, in particular, identifying the extra fermion field on exceptional
super-spacetime with the M-theoretic avatar of the gaugino field in
10d heterotic string theory.

\medskip

\begin{defn}[Spinor projection of $\tfrac{1}{2}\mathrm{M5}$-locus]
\label{SpinorProjection}
For $\mathbf{32}$ the Pin-representation from Def. \ref{CliffordAlgebraRepresentation},
let $P_{{}_{\mathbf{8}}}$ denote the linear projection
$$
  \xymatrix@R=2pt@C=4em{
    \mathrm{Pin}^+(10,1)
    \ar@{<-^{)}}[r]
    & ~
    \mathrm{Pin}^+(6,1)
    \ar@{<-^{)}}[r]
    & ~
    \mathrm{Pin}^+(5,1)
    \\
    \mathbf{32}
    \ar@/_2pc/[ddddrr]_-{ P_{{}_{\mathbf{8}}} }
    \ar[ddr]_{\;\;\;\;\;\; P_{{}_{\mathbf{16}}} }
    \ar[r]^-{ \simeq_{{}_{\mathbb{R}}} }
    &
    2 \cdot \mathbf{16}
    \ar[dd]
    \ar[r]^-{ \simeq_{{}_{\mathbb{R}}} }
    &
    4 \cdot \mathbf{8}
    \ar[dd]
    \\
    {\phantom{a}}
    \\
    &
    {\phantom{1 \cdot\;}} \mathbf{16}
    \ar[r]^-{ \simeq_{{}_{\mathbb{R}}} }
    \ar[ddr]
    &
    2 \cdot \mathbf{8}
    \ar[dd]
    \\
    {\phantom{a}}
    \\
    & &
    {\phantom{2 \cdot\;}} \mathbf{8}
  }
$$
onto the joint fixed locus of $\pmb{\Gamma}_5$ and $\pmb{\Gamma}_{6789}$.
Hence this is characterized by the following
\begin{equation}
  \label{FermionProjections}
  \begin{array}{rclclcl}
    \mathrm{MK6}
    &&
    \pmb{\Gamma}_{\phantom{5}6789} (P_{{}_{\mathbf{16}}} \psi)
      & = &
      (P_{{}_{\mathbf{16}}} \psi)
    &&
    \mathcal{O}6
    \\
    \tfrac{1}{2}\mathrm{M5}
    &&
    \pmb{\Gamma}_{\phantom{5}6789} (P_{{}_{\mathbf{8}}} \psi)
      & = &
      (P_{{}_{\mathbf{8}}} \psi)
    &&
    \mathcal{I}5
    \\
    &&
    \pmb{\Gamma}_{5\phantom{6789}} (P_{{}_{\mathbf{8}}} \psi)
      & = &
      (P_{{}_{\mathbf{8}}} \psi)
    &
    \\
    &&
    \pmb{\Gamma}_{56789} (P_{{}_{\mathbf{8}}} \psi)
      & = & (P_{{}_{\mathbf{8}}} \psi)
    &&
  \end{array}
\end{equation}
\end{defn}

\begin{remark}[The $\tfrac{1}{2}\mathrm{M5}$-locus]
\label{TheHalfM5Spacetime}
The spinor projections in Def. \ref{SpinorProjection} correspond to
the $\tfrac{1}{2}\mathrm{M5} = \mathrm{MK6} \cap \mathrm{MO9}$
super spacetime \cite[6]{GKST01} \cite[6]{DHTV14} \cite[p. 18 ]{ApruzziFazzi17};
see \cite[Ex. 2.2.7]{HSS18} in the present context:
\begin{equation}
 \label{HalfM5BraneSetup}
 \raisebox{40pt}{
 \xymatrix@C=4pt@R=9pt{
   &
   &
   &
   &
   &
   &
   S^1_B
   &
   \!
   S^1_{\mathrm{HW}}\!
   \ar@(ul,ur)^{
     \mathclap{
       G_{\mathrm{HW}} = \langle {\pmb\Gamma}_{5} \rangle
     }
   }
   \ar@{}[rrrrr]_-{ \!\!\!\!\!\!\!\!\!\!\!\!\!\!\!\!\underset{\overbrace{\phantom{------.}}}{} }
   \ar@{}[rrrrr]|-{ \!\!\!\!\!\!\!\!\!\!\!\!\!\!\!\!\mbox{ $\mathbb{H}$ } }
   &
   &
   &
   &
   &
   \\
   \fbox{ I' }
   &
   0
   &
   1
   &
   2
   &
   3
   &
   4
   &
   5'
   &
   5
   &
   6
   &
   7
   \ar@<+18pt>@/^2.5pc/[r]^{ G_{\mathrm{ADE}} = \langle {\pmb\Gamma}_{6789}  \rangle }
   &
   8
   &
   9
   &
   \fbox{HET}
   \\
   \mathrm{MO9}
   &
   \mbox{---}
   &
   \mbox{---}
   &
   \mbox{---}
   &
   \mbox{---}
   &
   \mbox{---}
   &
   \mbox{---}
   &
   &
   \mbox{---}
   &
   \mbox{---}
   &
   \mbox{---}
   &
   \mbox{---}
   \\
   \mathrm{MK6}
   &
   \mbox{---}
   &
   \mbox{---}
   &
   \mbox{---}
   &
   \mbox{---}
   &
   \mbox{---}
   &
   \mbox{---}
   &
   \mbox{---}
   &
   &
   &
   &
   &
   \mathcal{O}6
   \\
   \tfrac{1}{2}\mathrm{NS5}
   &
   \mbox{---}
   &
   \mbox{---}
   &
   \mbox{---}
   &
   \mbox{---}
   &
   \mbox{---}
   &
   \mbox{---}
   &
   &
   &
   &
   &
   &
   \mathcal{I}5
   \\
   &
   \ar@{}[rrrrr]_-{
     \underset{
       \mbox{ M5 worldvolume }
     }{
       \underbrace{ \phantom{--------------.} }
     }
   }
   \mbox{---}
   &
   \mbox{---}
   &
   \mbox{---}
   &
   \mbox{---}
   &
   \mbox{---}
   &
   \ar@{}[r]|-{\ddots}
   &
   &
   &
   &
   &
   &
 }
 }
 \mbox{
   \hspace{-.4cm}
   \scalebox{.76}{
   \raisebox{-96pt}{
   \includegraphics[width=.5\textwidth]{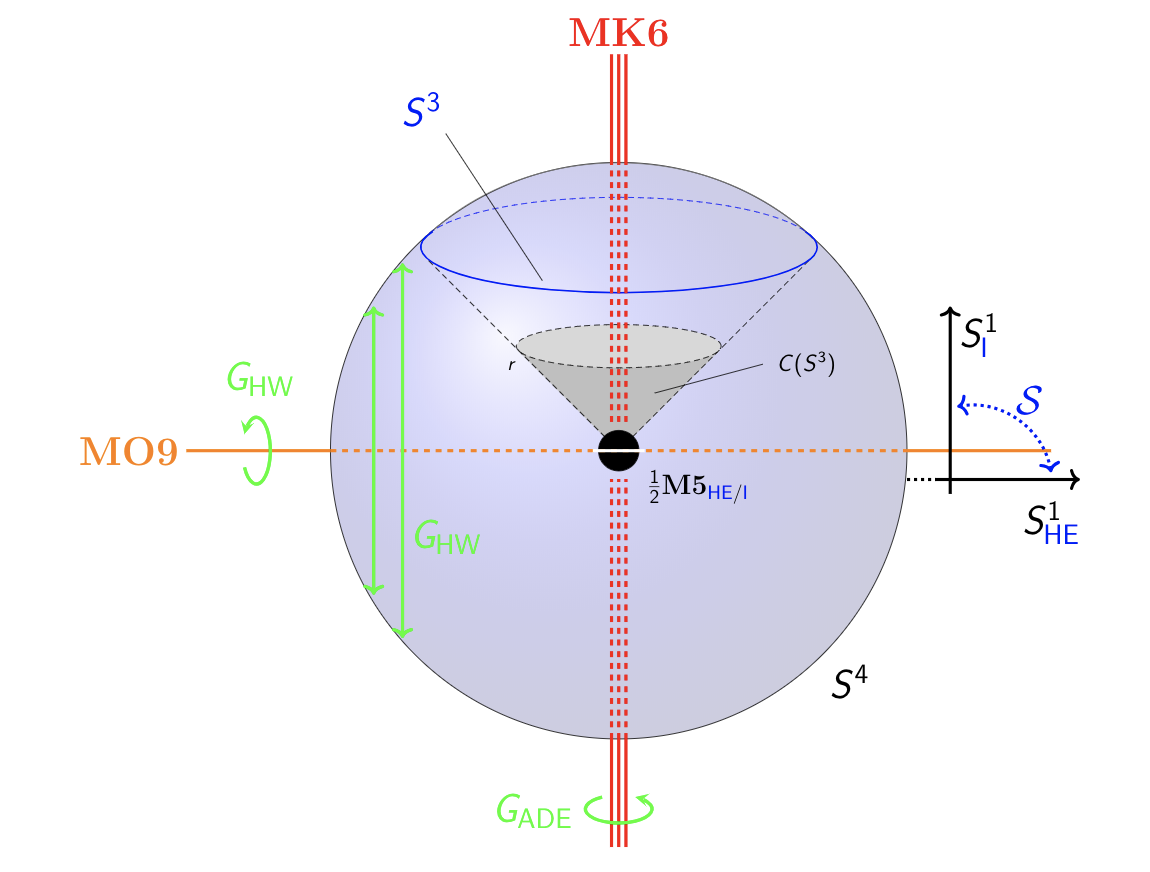}
   }}
 }
\end{equation}

\begin{enumerate}
\vspace{-3mm}
 \item {\bf Black branes. }
The appropriate names of the fixed loci depend on the duality frame:
if the circle on which $G_{\mathrm{A}}$ acts is taken for
M/IIA duality, then the result is type I' and the fixed loci are named as shown on the left
(see also e.g. \cite[around Fig. 2]{HananyZaffaroni97}\cite{HKLTY15}).
On the other hand, if the circle on which $G_{\mathrm{HW}}$  acts is taken
for M/IIA duality, the result is HET on a $G_{\mathrm{ADE}}$-orbifold,
and labels as used in \cite[p. 8]{GKST01} are as shown on the right.
Moreover, under T-duality along $S^1_B$ the I'-perspective
turns into a  configuration of a IIB  NS5-brane parallel to an O9-plane
(see, e.g., \cite{HananyZaffaroni99}).

\vspace{-2mm}
\item {\bf  Fundamental (sigma-model) brane. } The last line in the above diagram
indicates the extension of the actual M5-brane worldvolume for which we are to
construct a Lagrangian density. This is not required to be fixed by $\pmb{\Gamma}_5$
and hence may stretch along the $5'$-$5$-plane/torus ``at an angle'',
partially wrapping around the $S^1_{\mathrm{HW}}$, partially
running parallel to it \cite[p. 9]{Witten97},
corresponding to D4/NS5 bound states in IIA \cite[Sec. 2]{Witten97}
and $(p,q)$-5brane webs in IIB \cite{AharonyHanany97},
or rather their further intersection with an O-plane
\cite{HananyZaffaroni99} (see \cite[around Fig. 10]{HKLTY15}).
\end{enumerate}

\end{remark}

The following definition generalizes the super $\mathrm{MK6}$-spacetime
regarded as a $\mathbb{Z}_2$-fixed locus inside $D = 11$, $\mathcal{N} = 1$
super Minkowski spacetime \cite[Prop. 4.7, Ex. 2.2.5]{FSS18} to
a $\mathbb{Z}_2$-fixed locus inside the super-exceptional M-theory spacetime:

\begin{defn}[Super-exceptional $\mathrm{MK6}$-spacetime]
  \label{SuperExceptionalMK6Spacetime}
  For $s \in \mathbb{R} \setminus \{0\}$,  we say that
  \emph{super-exceptional $\mathrm{MK6}$-spacetime}
  $\big(\mathbb{R}^{6,1\vert \mathbf{16}}\big)_{\mathrm{ex}_s}$
  is the fixed locus inside the  super-exceptional M-theory spacetime
  (Def. \ref{ExceptionalTangentSuperSpacetime})
  of the $\mathbb{Z}_2 \hookrightarrow \mathrm{Pin}^+(10,1)$-action
  of Lemma \ref{ReflectionAutomorphismOnExceptionalTangentSuperSpacetime}
  generated from the element $\pmb{\Gamma}_{6789} \in \mathrm{Pin}^+(10,1)$ as in Prop. \ref{SpinorProjection}.
  Hence, from the super dgc-algebra \eqref{FermionicExtensionOfExceptionalTangentSuperspacetime}
  and Lemma \ref{ReflectionAutomorphismOnExceptionalTangentSuperSpacetime},
  this is the rational super space given dually by the following
  super dgc-algebra:
\begin{equation}
  \label{CESuperExceptionalMK6}
  \hspace{-7.5cm}
  \raisebox{90pt}{
  \xymatrix{
    \big(
      \mathbb{T}^{6,1\vert \mathbf{16}}
    \big)_{\mathrm{ex}_s}
    \ar[dd]_{ \pi^{\mathrm{MK6}}_{\mathrm{ex}_s} }
    &&
  \mathbb{R}\left[\!\!
    {\begin{array}{l}
      \underset{
        \mathrm{deg} = (1,\mathrm{even})
       }{
        \underbrace{
          \big\{
            e^a
          \}_{ a \in \{0,1,2,3,4,5',5\} }
        }
      }
      \\
      \underset{\mathrm{deg} = (1,\mathrm{even}) }{
        \underbrace{
        \big\{
          e_{a_1 a_2}
        \big\}_{
          \!\!\!\!\!\!
            \mbox{
              \tiny
              $
              \begin{array}{l}
                a_i \in \{0,1,2,3,4,5',5,6,7,8,9\}
                \\
                \mbox{with 0 or 2 of $a_i$s in}
                \{0,1,2,3,4,5',5\}
              \end{array}
              $
            }
        }
        }
      }
      \\
      \underset{ \mathrm{deg} = (1,\mathrm{even}) }{
        \underbrace{
          \big\{
            e_{a_1 \cdots a_5}
          \big\}_{
            \!\!\!\!\!\!
            \mbox{
              \tiny
              $
              \begin{array}{l}
                a_i \in \{0,1,2,3,4,5',5,6,7,8,9\}
                \\
                \mbox{with 1 or 3 of $a_i$s in}
                \{0,1,2,3,4,5',5\}
              \end{array}
              $
            }
          }
        }
      }
      \\
      \underset{ \mathrm{deg} = (1,\mathrm{odd}) }{
        \underbrace{
          \big\{
            (P_{{}_{\mathbf{16}}}\psi)^\alpha
          \big\}_{0 \leq \alpha \leq 16 }
        }
      }
      \\
      \underset{ (1,\mathrm{odd}) }{
        \underbrace{
          \big\{
            (P_{{}_{\mathbf{16}}}\eta)^\alpha
          \big\}_{ 0 \leq \alpha \leq 16 }
         }
      }
      \end{array}}
    \!\!\right]
    \Big/
    \mathrlap{
  \left(\!\!\!
    {\begin{array}{lcl}
      d\,(P_{{}_{\mathbf{16}}}\psi)^\alpha
      & \!\!\!\!\!\!\!\!\!  = & \!\!\!\!\!\!
      0,
      \\
      d\,e^a
      & \!\!\!\!\!\!\!\!\!  = & \!\!\!\!\!\!
        (\overline{P_{{}_{\mathbf{16}}}\psi})
          {\Gamma}^a
        (P_{{}_{\mathbf{16}}}\psi),
      \\
      d \, e_{a_1 a_2}
        & \!\!\!\!\!\!\!\!\!  = & \!\!\!\!\!\!
      \tfrac{i}{2}
      (\overline{P_{{}_{\mathbf{16}}}\psi})
        {\Gamma}_{a_1 a_2}
      (P_{{}_{\mathbf{16}}}\psi),
      \\
      d \, e_{a_1 \cdots a_5}
        & \!\!\!\!\!\!\!\!\!  = & \!\!\!\!\!\!
      \tfrac{1}{5!}
      (\overline{P_{{}_{\mathbf{16}}}\psi})
        {\Gamma}_{a_1 \cdots a_5}
      (P_{{}_{\mathbf{16}}}\psi),
      \\
      d \,\eta
       & \!\!\!\!\!\!\!\!\!  = & \!\!\!\!\!\!
       (s+1)
       e^a
         \wedge
       {\pmb{\Gamma}}_a
       (P_{{}_{\mathbf{16}}} \psi)
       \\
       & &
       +
       e_{a_1 a_2}
         \wedge
       {\pmb{\Gamma}}^{a_1 a_2}
       (P_{{}_{\mathbf{16}}} \psi)
       \\
       & &
       +
       (1 + \tfrac{s}{6})
       e_{a_1 \cdots a_5}
         \wedge
       {\pmb{\Gamma}}^{a_1 \cdots a_5}
       (P_{{}_{\mathbf{16}}} \psi)
    \end{array}}
  \!\!\! \right)
  }
  \ar@{<-}[dd]^-{
    \mbox{
      \tiny
      $
      \begin{array}{ccc}
        \psi^\alpha & e^a
        \\
        \mapsup & \mapsup
        \\
        \psi^\alpha & e^a
      \end{array}
      $
    }
  }
  \\
  \\
  \mathbb{T}^{6,1\vert \mathbf{16}}
  &&
  \mathbb{R}\left[
    {\begin{array}{l}
      \underset{
        \mathrm{deg} = (1,\mathrm{even})
       }{
        \underbrace{
          \big\{
            e^a
          \}_{ a \in \{ 0,1,2,3,4,5',5 \} }
        }
      }
      \\
      \underset{ \mathrm{deg} = (1,\mathrm{odd}) }{
        \underbrace{
          \big\{
            (P_{{}_{\mathbf{16}}} \psi)^\alpha
          \big\}_{0 \leq \alpha \leq 16 }
        }
      }
      \end{array}}
    \right]
    \Big/
    \mathrlap{
  \left(\!\!
    {\begin{array}{lcl}
      d \,(P_{{}_{\mathbf{16}}}\psi)^\alpha
        & \!\!\!\!\!\!\!  = & \!\!\!\!\!\!
      0,
      \\
      d e^a & \!\!\!\!\!\!\!  = & \!\!\!\!\!\!
        (\overline{P_{{}_{\mathbf{16}}}\psi}) {\Gamma}^a (P_{{}_{\mathbf{16}}}\psi)
    \end{array}}
  \!\! \right)
  }
  }
  }
\end{equation}
\end{defn}

We now incorporate spacetime isometries.

\begin{prop}[Lift of isometries to super-exceptional $\mathrm{MK6}$-spacetime]
\label{LiftOfVectorField}
 For all parameters $s$ from \eqref{s}, distinct from -6, i.e., for
\begin{equation}
  \label{sNotMinusSix}
  s \in \mathbb{R} \setminus \{0,-6\},
\end{equation}
and for all $a \in \{1,2,3,4,5',5\}$,
the infinitesimal superspace symmetry $v_a$
\begin{equation}
  \label{TheVectorField}
  v_a
    \in
  \Gamma\big(
    T \mathbb{R}^{6,1\vert \mathbf{16}}
  \big)
  \,,
  \phantom{AA}
  \iota_{v_a}
  \;\colon\;
  \left\{
  \begin{array}{rcl}
    e^b &\mapsto& \delta^a_b
    \\
    (P_{{}_{\mathbf{16}}} \psi)^\alpha &\mapsto& 0
  \end{array}
  \right.
  \,,
  \phantom{AA}
  \mathcal{L}_{v_a}
  \coloneqq
  \big[ d, \iota_{v_a}\big]
  = 0
\end{equation}
of the super $\mathrm{MK6}$-spacetime
lifts to the super-exceptional $\mathrm{MK6}$-spacetime
(Def. \ref{SuperExceptionalMK6Spacetime})
as
\begin{equation}
  \label{LiftedVectorField}
  v_a^{\mathrm{ex}_s}
  \;\coloneqq\;
  v_a
    -
  \tfrac
  {1+s}
  {1 + \frac{s}{6}}
  \tfrac{1}{5!}
  v^{a6789}
  +
  \chi^\alpha v^\eta_{\alpha}
  \in
  \Gamma
  \big(
    T
    \mathbb{R}^{5,1\vert \mathbf{8}}
  \big)
  \,,
  \phantom{AAA}
  \iota_{v_a^{\mathrm{ex}_s}}
  \;\colon\;
  \left\{
  \begin{array}{lcl}
    e^b &\mapsto& \delta^a_b
    \\
    e_{a_1 a_2} &\mapsto&  0
    \\
    e_{a_1 \cdots a_5}
      &\mapsto&
      -
      \frac
      {1+s}
      {1 + \frac{s}{6}}
      \epsilon^{a_1 \cdots a_5}_{a 6 7 8 9}
    \\
    (P \psi)^\alpha &\mapsto& 0
    \\
    (P\eta)^\alpha &\mapsto& (P\chi)^\alpha
  \end{array}
  \right.
\end{equation}
for any odd-graded constants
\begin{equation}
  \label{OddParameters}
  \underset{
    \mathclap{ \mathrm{deg} = (0,\mathrm{odd}) }
  }{
    \underbrace{ (P\chi)^\alpha }
  }
\end{equation}
in that
\begin{equation}
  \label{LieDerivativeAlongExceptionalLiftOfv5VanishesOnHalfM5}
  \mathcal{L}_{v_a^{\mathrm{ex}_s}}
  \coloneqq
  \big[ d_{\mathrm{ex}_s} , \iota_{v_a^{\mathrm{ex}_s}} \big]
  \;=\;
  0
  \phantom{AAAA}
  \mbox{
    on
    $\big(
      \mathbb{R}^{6,1\vert \mathbf{16}}
    \big)_{\mathrm{ex}_s}$.
  }
\end{equation}
\end{prop}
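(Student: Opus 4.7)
The strategy is to apply Cartan's magic formula $\mathcal{L}_{v_a^{\mathrm{ex}_s}} = d \circ \iota_{v_a^{\mathrm{ex}_s}} + \iota_{v_a^{\mathrm{ex}_s}} \circ d$ and verify directly, on each generator of the Chevalley--Eilenberg dgc-algebra \eqref{CESuperExceptionalMK6}, that $\mathcal{L}_{v_a^{\mathrm{ex}_s}}$ vanishes. Since the differentials on generators (from \eqref{CESuperExceptionalMK6}) and the contractions with $v_a^{\mathrm{ex}_s}$ (from \eqref{LiftedVectorField}) are both given explicitly, this reduces to a finite bilinear check.

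For each of the bosonic generators $e^b$, $e_{b_1 b_2}$, $e_{b_1 \cdots b_5}$ and the odd generator $(P_{\mathbf{16}}\psi)^\alpha$, both summands of Cartan's formula vanish trivially: by \eqref{LiftedVectorField} the contraction $\iota_{v_a^{\mathrm{ex}_s}}$ sends each of these either to $0$ or to a scalar constant (such as $\delta^a_b$ or $-\tfrac{1+s}{1+s/6}\epsilon^{a_1\cdots a_5}_{a 6 7 8 9}$), so $d \circ \iota_{v_a^{\mathrm{ex}_s}}$ kills them; and their differentials are either zero (for $P\psi$) or quadratic in $P\psi$, which $\iota_{v_a^{\mathrm{ex}_s}}$ annihilates, so $\iota_{v_a^{\mathrm{ex}_s}} \circ d$ also vanishes. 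The odd parameters $(P\chi)^\alpha$ of \eqref{OddParameters} enter only through $\iota_{v_a^{\mathrm{ex}_s}}(P\eta) = P\chi$, which, being a constant, is sent to zero by $d$; hence closure is independent of $P\chi$, confirming a family of valid lifts.

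The only substantive check is on $(P_{\mathbf{16}}\eta)^\alpha$. Here the first summand is $d(P\chi) = 0$, and we must show $\iota_{v_a^{\mathrm{ex}_s}} d(P\eta) = 0$. Contracting the three summands of the defining expression for $d(P\eta)$ in \eqref{CESuperExceptionalMK6}: the $e_{b_1 b_2}$-term vanishes (since $\iota_{v_a^{\mathrm{ex}_s}}$ annihilates both $e_{b_1 b_2}$ and $P\psi$); the $e^b$-term contributes $(s+1)\pmb{\Gamma}_a (P_{\mathbf{16}}\psi)$; and the $e_{b_1\cdots b_5}$-term contributes $-(1+s)\,\pmb{\Gamma}_{a\,6\,7\,8\,9} (P_{\mathbf{16}}\psi)$, where the factor $(1+s)$ arises precisely because the coefficient $-\tfrac{1+s}{1+s/6}$ in \eqref{LiftedVectorField} is tuned to cancel the prefactor $(1+s/6)$ coming from $d(P\eta)$ --- this ratio being well-defined for exactly the values in \eqref{sNotMinusSix}, which is where the exclusion $s \neq -6$ enters.

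It remains to combine the two nonvanishing contributions. Since $a \in \{1,2,3,4,5',5\}$ is disjoint from $\{6,7,8,9\}$, the Clifford generators already commute to give $\pmb{\Gamma}_{a\,6\,7\,8\,9} = \pmb{\Gamma}_a \pmb{\Gamma}_{6 7 8 9}$; and on the $\mathrm{MK6}$-fixed locus we have the projection identity $\pmb{\Gamma}_{6789}(P_{\mathbf{16}}\psi) = P_{\mathbf{16}}\psi$ from \eqref{FermionProjections}. Therefore the sum becomes $(s+1)\bigl(\pmb{\Gamma}_a P\psi - \pmb{\Gamma}_a P\psi\bigr) = 0$, completing the verification. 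The main subtlety throughout is just bookkeeping of Einstein summation and antisymmetrization conventions in the multi-indices; the conceptual point is that both the precise coefficient $-\tfrac{1+s}{1+s/6}$ and the excluded value $s = -6$ are forced by the $\mathrm{MK6}$ spinor projection, and no corresponding lift exists without passing to the fixed locus of $\pmb{\Gamma}_{6789}$.
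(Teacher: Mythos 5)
Your proof is correct and takes essentially the same approach as the paper: you apply Cartan's formula generator by generator, observe that only $\eta$ requires a nontrivial check, and then contract the three summands of $d(P\eta)$ to find that the coefficient $-\tfrac{1+s}{1+s/6}$ makes the $e^b$- and $e_{a_1\cdots a_5}$-contributions cancel exactly via the $\mathrm{MK6}$ projection identity $\pmb{\Gamma}_{6789}(P_{\mathbf{16}}\psi) = P_{\mathbf{16}}\psi$. Your additional spelled-out remarks on why the lift is independent of $P\chi$ and why $s=-6$ is excluded match the paper's (more terse) proof in substance.
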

\begin{proof}
  Direct inspection of the defining
  differential relations in  \eqref{CESuperExceptionalMK6}
  shows that the derivation
  $\mathcal{L}_{v_a^{\mathrm{ex}_s}}$ evidently vanishes on all generators
  except possibly on $\eta$.
  The action on $\eta$ is computed as follows:
  \begin{equation}
    \label{MakeSuperExceptionalIsometryWork}
    \begin{aligned}
      \mathcal{L}_{v_a^{\mathrm{ex}_s}}
      (P_{{}_{\mathbf{16}}} \eta)
      & =
      \iota_{v_a^{\mathrm{ex}_s}} d (P\eta)
      +
      \underset{
         = 0
      }{
        \underbrace{
          d
          \;\,
          \overset{
             = (P \chi)^\alpha \; \mathrlap{= \mathrm{const}}
          }{
          \overbrace{
            \iota_{v_5^{\mathrm{ex}_s}} (P \eta)
          }
          }
        }
      }
      \\
      & =
      \iota_{ v_a^{\mathrm{ex}_s} }
      \big(
        (s+1)
        e^{a_1} {\pmb{\Gamma}}_{a_1} (P\psi)
        +
        e_{a_1 a_2} {\pmb{\Gamma}}^{a_1 a_2} (P\psi)
        +
        (1 + \tfrac{s}{6})
        e_{a_1 \cdots a_5}
        {\pmb{\Gamma}}^{a_1 \cdots a_5}
        (P\psi)
      \big)
      \\
      & =
        (s+1)
        {\pmb{\Gamma}}_a (P_{{}_{\mathbf{16}}}\psi)
        -
        (1 + \tfrac{s}{6})
        \tfrac{1 + s}{ 1 + \tfrac{s}{6} }\;
        \pmb{\Gamma}_a
        \underset{
          = (P_{{}_{\mathbf{16}}} \psi)
        }{
        \underbrace{
          {\pmb{\;\Gamma}}^{6 7 8 9} (P_{{}_{\mathbf{16}}}\psi)
        }}
      \\
      & = 0\;,
    \end{aligned}
  \end{equation}
  where under the brace we used the defining property
  \eqref{FermionProjections}
  of the spinor projections of Def. \ref{SpinorProjection}.
\end{proof}

With the super-exceptional lift of the circle isometry given, we have the
corresponding super-exceptional version of the projection \eqref{HorizontalProjection}
onto horizontal differential forms.
\begin{defn}[Super-exceptional horizontal projection]
  \label{SuperExceptionalHorizontalProjection}
  We say that
  \emph{projection onto the super-exceptional horizontal component}
  is the operation
  \begin{equation}
    (-)^{\mathrm{hor}_{\mathrm{ex}_s}}
    \;\coloneqq\;
    \tfrac{1}{2}
    \big(
      \mathrm{id}
      -
      e^5
      \wedge
      \iota_{v_5^{\mathrm{ex}_s}}
    \big)
    \;\colon\;
    \mathrm{CE}
    \big(
      \mathbb{R}^{5,1\vert \mathbf{8}}
      \times
      \mathbb{R}^1
    \big)_{\mathrm{ex}_s}
    \xymatrix{\ar[r]&}
    \mathrm{CE}
    \big(
      \mathbb{R}^{5,1\vert \mathbf{8}}
      \times
      \mathbb{R}^1
    \big)_{\mathrm{ex}_s}
  \end{equation}
  on the CE-algebra (FDA) of the super-exceptional $\tfrac{1}{2}\mathrm{M5}$-spacetime
  (Def. \ref{HalfM5LocusAndItsExceptionalTangentBundle}), where
    $\iota_{v_5}^{\mathrm{ex}_s}$ is the contraction \eqref{LiftedVectorField}
  from Prop. \ref{LiftOfVectorField}
\end{defn}

The next Definition \ref{HalfM5LocusAndItsExceptionalTangentBundle} formalizes
the  $\tfrac{1}{2}\mathrm{M5}$-locus (as in Remark \ref{TheHalfM5Spacetime})
inside the super-exceptional $\mathrm{MK6}$-spacetime (as formalized by Def.
\ref{SuperExceptionalMK6Spacetime}) without discarding the ambient MK6 spacetime,
but breaking its supersymmetry from $D = 7$, $\mathcal{N} = 1$, to $D = 6$,
$\mathcal{N} = (1,0)$. Hence the generators of the
$\mathrm{MK6}$ super dgc-algebra (``FDA'') are
all retained, but all spinors on the right of the differential
relations \eqref{CESuperExceptionalMK6} get projected
not just by $P_{{}_{\mathbf{16}}}$ but by $P_{{}_{\mathbf{8}}}$
(as in Def. \ref{SpinorProjection}):

\begin{defn}[Super-exceptional $\tfrac{1}{2}\mathrm{M5}$ spacetime]
\label{HalfM5LocusAndItsExceptionalTangentBundle}
For $s \in \mathbb{R} \setminus \{0, -6\}$, we say that the \emph{super-exceptional
$\tfrac{1}{2}\mathrm{M5}$-spacetime} is the rational super space given dually by
the same super dgc-algebra \eqref{CESuperExceptionalMK6} as that of the
super-exceptional $\mathrm{MK6}$ of Def. \ref{SuperExceptionalMK6Spacetime}, but
with spinor projections $P_{{}_{\mathbf{8}}}$
instead of just $P_{{}_{\mathbf{16}}}$
(Def. \ref{SpinorProjection})
on the right of the differential relations:
\begin{equation}
  \label{CESuperExceptionalMK6-part2}
  \hspace{-7.5cm}
  \raisebox{90pt}{
  \xymatrix{
    \big(
      \mathbb{T}^{5,1\vert \mathbf{8}}
    \big)_{\mathrm{ex}_s}
    \ar[dd]_{ \pi^{\sfrac{1}{2}\mathrm{M5}}_{\mathrm{ex}_s} }
    &&
  \mathbb{R}\left[
    {\begin{array}{l}
      \underset{
        \mathrm{deg} = (1,\mathrm{even})
       }{
        \underbrace{
          \big\{
            e^a
          \}_{ a \in \{0,1,2,3,4,5',5\} }
        }
      }
      \\
      \underset{\mathrm{deg} = (1,\mathrm{even}) }{
        \underbrace{
        \big\{
          e_{a_1 a_2}
        \big\}_{
          \!\!\!\!\!\!
            \mbox{
              \tiny
              $
              \begin{array}{l}
                a_i \in \{0,1,2,3,4,5',5,6,7,8,9\}
                \\
                \mbox{with 0 or 2 of $a_i$s in}
                \{0,1,2,3,4,5',5\}
              \end{array}
              $
            }
        }
        }
      }
      \\
      \underset{ \mathrm{deg} = (1,\mathrm{even}) }{
        \underbrace{
          \big\{
            e_{a_1 \cdots a_5}
          \big\}_{
            \!\!\!\!\!\!
            \mbox{
              \tiny
              $
              \begin{array}{l}
                a_i \in \{0,1,2,3,4,5',5,6,7,8,9\}
                \\
                \mbox{with 1 or 3 of $a_i$s in}
                \{0,1,2,3,4,5',5\}
              \end{array}
              $
            }
          }
        }
      }
      \\
      \underset{ \mathrm{deg} = (1,\mathrm{odd}) }{
        \underbrace{
          \big\{
            (P_{{}_{\mathbf{16}}}\psi)^\alpha
          \big\}_{0 \leq \alpha \leq 16 }
        }
      }
      \\
      \underset{ (1,\mathrm{odd}) }{
        \underbrace{
          \big\{
            (P_{{}_{\mathbf{16}}}\eta)^\alpha
          \big\}_{ 0 \leq \alpha \leq 16 }
         }
      }
      \end{array}}
    \right]
    \Big/
    \mathrlap{
  \left(\!\!\!
    {\begin{array}{lcl}
      d\,(P_{{}_{\mathbf{16}}}\psi)^\alpha
        & \!\!\!\!\!\!\!\!\!  = & \!\!\!\!\!\!
      0,
      \\
      d\,e^a
      & \!\!\!\!\!\!\!\!\!  = & \!\!\!\!\!\!
        (\overline{P_{{}_{\mathbf{8}}}\psi})
          {\Gamma}^a
        (P_{{}_{\mathbf{8}}}\psi),
      \\
      d \, e_{a_1 a_2}
        & \!\!\!\!\!\!\!\!\!  = & \!\!\!\!\!\!
      \tfrac{i}{2}
      (\overline{P_{{}_{\mathbf{8}}}\psi})
        {\Gamma}_{a_1 a_2}
      (P_{{}_{\mathbf{8}}}\psi),
      \\
      d \, e_{a_1 \cdots a_5}
       & \!\!\!\!\!\!\!\!\!  = & \!\!\!\!\!\!
      \tfrac{1}{5!}
      (\overline{P_{{}_{\mathbf{8}}}\psi})
        {\Gamma}_{a_1 \cdots a_5}
      (P_{{}_{\mathbf{8}}}\psi),
      \\
      d \,\eta
       & \!\!\!\!\!\!\!\!\!  = & \!\!\!\!\!\!
       (s+1)
       e^a
         \wedge
       {\pmb{\Gamma}}_a
       (P_{{}_{\mathbf{8}}} \psi)
       \\
       & &
       +
       e_{a_1 a_2}
         \wedge
       {\pmb{\Gamma}}^{a_1 a_2}
       (P_{{}_{\mathbf{8}}} \psi)
       \\
       & &
       +
       (1 + \tfrac{s}{6})
       e_{a_1 \cdots a_5}
         \wedge
       {\pmb{\Gamma}}^{a_1 \cdots a_5}
       (P_{{}_{\mathbf{8}}} \psi)
    \end{array}}
 \!\!\! \right)
  }
  \ar@{<-}[dd]^-{
    \mbox{
      \tiny
      $
      \begin{array}{ccc}
        \psi^\alpha & e^a
        \\
        \mapsup & \mapsup
        \\
        \psi^\alpha & e^a
      \end{array}
      $
    }
  }
  \\
  \\
  \mathbb{T}^{5,1\vert \mathbf{8}}
  &&
  \mathbb{R}\left[
    {\begin{array}{l}
      \underset{
        \mathrm{deg} = (1,\mathrm{even})
       }{
        \underbrace{
          \big\{
            e^a
          \}_{ a \in \{ 0,1,2,3,4,5',5 \} }
        }
      }
      \\
      \underset{ \mathrm{deg} = (1,\mathrm{odd}) }{
        \underbrace{
          \big\{
            (P_{{}_{\mathbf{8}}} \psi)^\alpha
          \big\}_{0 \leq \alpha \leq 8 }
        }
      }
      \end{array}}
    \right]
    \Big/
    \mathrlap{
  \left(\!\!\!
    {\begin{array}{lcl}
      d \,(P_{{}_{\mathbf{8}}}\psi)^\alpha
        & \!\!\!\!\!\!\!  = & \!\!\!\!\!\!
      0,
      \\
      d e^a
      & \!\!\!\!\!\!\!  = & \!\!\!\!\!\!
        (\overline{P_{{}_{\mathbf{8}}}\psi}) {\Gamma}^a (P_{{}_{\mathbf{8}}}\psi)
    \end{array}}
  \!\!\!  \right)
  }
  }
  }
\end{equation}
\end{defn}

Directly analogous to Def. \ref{HalfM5LocusAndItsExceptionalTangentBundle}
we may apply the heterotic spinor projection already on the
full super-exceptional M-theory spacetime:
\begin{defn}[Super-exceptional heterotic M-theory spacetime]
\label{SuperExceptionalMTheorySpacetime}
For $s \in \mathbb{R} \setminus \{0\}$, we say that the
$n = 11$
\emph{super-exceptional heterotic M-theory spacetime}
$$
  \xymatrix{
    \big(
      \mathbb{T}^{9,1\vert \mathbf{16}}
      \times
      \mathbb{R}^1
    \big)_{\mathrm{ex}_s}
    \ar@{^{(}->}[rrrr]^-{
      \mbox{
        \tiny
        $
        \begin{array}{lcl}
          (P_{{}_{\mathbf{16}}}\psi) & \!\!\!\!\!\! \mapsfrom  \!\!\!\!\!\!&
            \psi
          \\
          \\
          \eta & \!\!\!\!\!\! \mapsfrom  \!\!\!\!\!\!&
            \eta
          \\
          e^{a} & \!\!\!\!\!\! \mapsfrom  \!\!\!\!\!\!&
            e^a
          \\
          e_{a_1 a_2} & \!\!\!\!\!\! \mapsfrom  \!\!\!\!\!\!&
            e_{a_1 a_2}
          \\
          e_{a_1 \cdots a_5} & \!\!\!\!\!\! \mapsfrom  \!\!\!\!\!\!&
            e_{a_1 \cdots a_5}
        \end{array}
        $
      }
    }
    &&&&
    \big(
      \mathbb{T}^{10,1\vert \mathbf{32}}
    \big)_{\mathrm{ex}_s}
  }
$$
is the rational super space given dually by
the same super dgc-algebra \eqref{FermionicExtensionOfExceptionalTangentSuperspacetime}
as that of the
$n = 11$
super-exceptional M-theory spacetime
of Def. \ref{ExceptionalTangentSuperSpacetime}, but
with spinor projections $P_{{}_{\mathbf{16}}}$
(Def. \ref{SpinorProjection})
on the right of the differential relations.
\end{defn}

The following is a direct consequence of the above definitions:
\begin{lemma}[Super-exceptional embeddings]
\label{SuperExceptionalEmbeddings}
We have a diagram of consecutive super embeddings covered by
super-exceptional embeddings of

\item {\bf (i)}
the super-exceptional $\tfrac{1}{2}\mathrm{M5}$-spacetime (Def.
\ref{HalfM5LocusAndItsExceptionalTangentBundle}) inside
the $\mathrm{MK6}$-spacetime (Def. \ref{SuperExceptionalMK6Spacetime})
inside the $n=11$ super-exceptional M-theory spacetime
(Def. \ref{ExceptionalTangentSuperSpacetime}) as follows:
\begin{equation}
  \label{M5FixedLocusNormallyEnhanced}
  \hspace{-.45cm}
  \xymatrix@R=1.2em@C=2.75em{
    &
    {\phantom{
    \big(
      \mathbb{T}^{5,1\vert \mathbf{8}}
      \times
      \mathbb{T}^1
    \big)_{\mathrm{ex}_s}
    }}
    \ar@{<-|}[rrrr]^{
      \mbox{
        \tiny
        $
        \begin{array}{lcl}
          e^a
          &\!\!\!\!\!\!\!\!\!\!\mapsfrom\!\!\!\!\!\!\!\!\!\!&
          e^a
          \\
          e_{a_1 a_2}
          &\!\!\!\!\!\!\!\!\!\!\mapsfrom\!\!\!\!\!\!\!\!\!\!&
          e_{a_1 a_2}
          \\
          e_{a_1 \cdots a_5}
          &\!\!\!\!\!\!\!\!\!\!\mapsfrom\!\!\!\!\!\!\!\!\!\!&
          e_{a_1 \cdots a_5}
          \\
          (P_{{}_{\mathbf{8}}})^\alpha
          &\!\!\!\!\!\!\!\!\!\!\mapsfrom\!\!\!\!\!\!\!\!\!\!&
          (P_{{}_{\mathbf{16}}}\psi)^\alpha
       \end{array}
        $
      }
    }
    &&&&
    {\phantom{
    \big(
      \mathbb{T}^{6,1\vert \mathbf{16}}
    \big)_{\mathrm{ex}_s}
    }}
    \ar@{<-|}[rrrr]^{
      \mbox{
        \tiny
        $
        \begin{array}{lcl}
          \left.
          \begin{array}{lcl}
            e^a
              &\!\!\!\!\!\!\!\!\!\!\!\!\vert\!\!\!\!\!\!\!\!\!\!\!\!&
              \mbox{if $a \in \{0,1,2,3,4,5',5\}$}
            \\
            0
              &\!\!\!\!\!\!\!\!\!\!\!\!\vert\!\!\!\!\!\!\!\!\!\!\!\!&
              \mbox{otherwise}
          \end{array}
          \!\!\!\!\!\!\!\!
          \right\}
          &\!\!\!\!\!\!\!\!\!\!\!\mapsfrom\!\!\!\!\!\!\!\!\!\!\!&
          e^a
          \\
          \left.
          \begin{array}{lcl}
            e_{a_1 a_2}
              &\!\!\!\!\!\!\!\!\!\!\!\!\vert\!\!\!\!\!\!\!\!\!\!\!\!&
              \mbox{if even num. of $a_i$s $\in \{0,1,2,3,4,5',5\}$}
            \\
            0
              &\!\!\!\!\!\!\!\!\!\!\!\!\vert\!\!\!\!\!\!\!\!\!\!\!\!&
              \mbox{otherwise}
          \end{array}
          \!\!\!\!\!\!\!\!
          \right\}
          &\!\!\!\!\!\!\!\!\!\!\mapsfrom\!\!\!\!\!\!\!\!\!\!&
          e_{a_1 a_2}
          \\
          \left.
          \begin{array}{lcl}
            e_{a_1 \cdots a_5}
              &\!\!\!\!\!\!\!\!\!\!\!\!\vert\!\!\!\!\!\!\!\!\!\!\!\!&
              \mbox{if odd num. of $a_i$s $\in \{0,1,2,3,4,5',5\}$}
            \\
            0
              &\!\!\!\!\!\!\!\!\!\!\!\!\vert\!\!\!\!\!\!\!\!\!\!\!\!&
              \mbox{otherwise}
          \end{array}
          \!\!\!\!\!\!\!\!
          \right\}
          &\!\!\!\!\!\!\!\!\!\!\mapsfrom\!\!\!\!\!\!\!\!\!\!&
          e_{a_1 \cdots a_5}
          \\
          (P_{{}_{\mathbf{16}}}\psi)^\alpha
          &\!\!\!\!\!\!\!\!\!\!\mapsfrom\!\!\!\!\!\!\!\!\!\!&
          \psi^\alpha
          \\
          \eta
          &\!\!\!\!\!\!\!\!\!\!\mapsfrom\!\!\!\!\!\!\!\!\!\!&
          \eta
        \end{array}
        $
      }
    }
    &&&&
    {\phantom{
    \big(
      \mathbb{T}^{10,1\vert \mathbf{32}}
    \big)_{\mathrm{ex}_s}
    }}
    \\
    \\
    \mbox{
      \tiny
      \color{blue}
      \begin{tabular}{c}
        super-exceptional
        \\
        spacetime
      \end{tabular}
    }
       \hspace{-1.5cm}
   &
    \big(
      \mathbb{T}^{5,1\vert \mathbf{8}}
      \times
      \mathbb{T}^1
    \big)_{\mathrm{ex}_s}
    \ar[dd]_-{ \pi^{\sfrac{1}{2}\mathrm{M5}}_{\mathrm{ex}_s} }
    \ar@{^{(}->}[rrrr]
    \ar@{^{(}->}@/^2pc/[rrrrrrrr]|-{\; i_{\mathrm{ex}_s} \;}
    &&&&
    \big(
      \mathbb{T}^{6,1\vert \mathbf{16}}
    \big)_{\mathrm{ex}_s}
    \ar[dd]|-{ \pi_{\mathrm{ex}_s}^{ \mathrm{HET} } }
    \ar@{^{(}->}[rrrr]
    &&&&
    \ar@(ul,ur)^{G_{\mathrm{ADE}} = \langle \pmb{\Gamma}_{6789} \rangle}
    \big(
      \mathbb{T}^{10,1\vert \mathbf{32}}
    \big)_{\mathrm{ex}_s}
    \ar[dd]^-{ \pi_{\mathrm{ex}_s} }
    \\
    &
    &&
    \mathclap{
    \mbox{
      \tiny
      \color{blue}
      \begin{tabular}{c}
        breaking supersymmetry
        \\
        from $D=7$, $\mathcal{N} =1$
        \\
        to $D = 6$, $\mathcal{N} = (1,0)$
      \end{tabular}
    }
    }
    &&
    &&
    \mathclap{
    \mbox{
      \tiny
      \color{blue}
      \begin{tabular}{c}
        embedding of
        \\
        fixed/singular locus
      \end{tabular}
    }
    }
    &&
    \\
    \mbox{
      \tiny
      \color{blue}
      \begin{tabular}{c}
        super-
        \\
        spacetime
      \end{tabular}
    }
    \hspace{-1.8cm}
    &
    \mathbb{T}^{5,1\vert \mathbf{8}}
    \times
    \mathbb{T}^1
    \ar@{_{(}->}[rrrr]
    \; \ar@{_{(}->}@/_2pc/[rrrrrrrr]|-{\; i \; }
    &&&&
    \mathbb{T}^{6,1\vert \mathbf{16}}
    \; \ar@{_{(}->}[rrrr]
    &&&&
    \mathbb{T}^{10,1\vert \mathbf{32}}
    \ar@(dl,dr)_{G_{\mathrm{ADE}} = \langle \pmb{\Gamma}_{6789} \rangle}
    \\
    \\
    &
    \mbox{
      \tiny
      \color{blue}
      $\tfrac{1}{2}\mathrm{M5}$
    }
    &&&&
    \mbox{
      \tiny
      \color{blue}
      $\mathrm{MK6}$
    }
  }
\end{equation}

\item  {\bf (ii)}
the super-exceptional $\tfrac{1}{2}\mathrm{M5}$-spacetime (Def.
\ref{HalfM5LocusAndItsExceptionalTangentBundle})
inside the super-exceptional heterotic M-theory spacetime
(Def. \ref{SuperExceptionalMTheorySpacetime})
inside the $n=11$ super-exceptional M-theory spacetime
(Def. \ref{ExceptionalTangentSuperSpacetime})
as follows:
\begin{equation}
  \hspace{-.45cm}
  \xymatrix@R=1.2em@C=2.5em{
    \mbox{
      \tiny
      \color{blue}
      \begin{tabular}{c}
        super-exceptional
        \\
        spacetime
      \end{tabular}
    }
       \hspace{-1.5cm}
   &
    \big(
      \mathbb{T}^{5,1\vert \mathbf{8}}
      \times
      \mathbb{T}^1
    \big)_{\mathrm{ex}_s}
    \ar[dd]_-{ \pi^{\sfrac{1}{2}\mathrm{M5}}_{\mathrm{ex}_s} }
    \ar@{^{(}->}[rrrr]
    \ar@{^{(}->}@/^3.4pc/[rrrrrrrr]|-{\; i_{\mathrm{ex}_s} \;}
    &&&&
    \ar@(ul,ur)^{G_{\mathrm{ADE}} = \langle \pmb{\Gamma}_{6789} \rangle}
    \big(
      \mathbb{T}^{9,1\vert \mathbf{16}}
      \times
      \mathbb{T}^1
    \big)_{\mathrm{ex}_s}
    \ar[dd]|-{~\pi_{\mathrm{ex}_s}^{ \mathrm{MK6} }~ }
    \ar@{^{(}->}[rrrr]
    &&&&
    \big(
      \mathbb{T}^{10,1\vert \mathbf{32}}
    \big)_{\mathrm{ex}_s}
    \ar[dd]^-{ \pi_{\mathrm{ex}_s} }
    \\
    &
    &&
    \mathclap{
    \mbox{
      \tiny
      \color{blue}
      \begin{tabular}{c}
        embedding of
        \\
        fixed/singular locus
      \end{tabular}
    }
    }
    &&
    &&
    \mathclap{
    \mbox{
      \tiny
      \color{blue}
      \begin{tabular}{c}
        breaking supersymmetry
        \\
        from $D=11$, $\mathcal{N} =1$
        \\
        to $D = 10$, $\mathcal{N} = (1,0)$
      \end{tabular}
    }
    }
    &&
    \\
    \mbox{
      \tiny
      \color{blue}
      \begin{tabular}{c}
        super-
        \\
        spacetime
      \end{tabular}
    }
    \hspace{-1.8cm}
    &
    \mathbb{T}^{5,1\vert \mathbf{8}}
    \times
    \mathbb{T}^1
    \ar@{_{(}->}[rrrr]
    \; \ar@{_{(}->}@/_3.4pc/[rrrrrrrr]|-{\; i \; }
    &&&&
    \mathbb{T}^{9,1\vert \mathbf{16}}
    \ar@(dl,dr)_{G_{\mathrm{ADE}} = \langle \pmb{\Gamma}_{6789} \rangle}
    \ar@{_{(}->}[rrrr]
    &&&&
    \mathbb{T}^{10,1\vert \mathbf{32}}
    \\
    \\
    &
    \mbox{
      \tiny
      \color{blue}
      $\tfrac{1}{2}\mathrm{M5}$
    }
    &&&&
    \mbox{
      \tiny
      \color{blue}
      $\mathrm{HET}$
    }
  }
\end{equation}

\end{lemma}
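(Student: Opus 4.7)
The plan is to verify that each horizontal arrow in the two diagrams is a well-defined dgc-algebra homomorphism (upon dualization), that it is an inclusion of the stated fixed/singular locus or supersymmetry-breaking projection, and that the indicated squares commute. All of this is a matter of direct inspection of the defining Chevalley--Eilenberg algebras \eqref{FermionicExtensionOfExceptionalTangentSuperspacetime}, \eqref{CESuperExceptionalMK6}, \eqref{CESuperExceptionalMK6-part2} together with the explicit $\mathrm{Pin}^+(10,1)$-action of Lemma \ref{ReflectionAutomorphismOnExceptionalTangentSuperSpacetime} and the spinor projections of Def.~\ref{SpinorProjection}.

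For part \textbf{(i)}, I would first treat the outer embedding $\mathrm{MK6}_{\mathrm{ex}_s} \hookrightarrow (\mathbb{T}^{10,1\vert\mathbf{32}})_{\mathrm{ex}_s}$ by identifying the image as the subalgebra generated by those $e^a, e_{a_1a_2}, e_{a_1\cdots a_5}, \psi, \eta$ that are invariant under $\rho^\ast_{6}\rho^\ast_{7}\rho^\ast_{8}\rho^\ast_{9}$ in the sense of Lemma \ref{ReflectionAutomorphismOnExceptionalTangentSuperSpacetime}; the resulting index selection rules (0 or 2 indices in the MK6 directions for $e_{a_1a_2}$, and 1 or 3 for $e_{a_1\cdots a_5}$) match exactly those retained in \eqref{CESuperExceptionalMK6}, and the spinors surviving under $\pmb{\Gamma}_{6789}=+1$ are precisely $P_{{}_{\mathbf{16}}}\psi$ and $P_{{}_{\mathbf{16}}}\eta$ (Def.~\ref{SpinorProjection}). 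Dually, extending $P_{{}_{\mathbf{16}}}$ along the generator map yields a dgc-morphism because every differential relation in \eqref{FermionicExtensionOfExceptionalTangentSuperspacetime}, when restricted to fixed generators, produces only other fixed generators; in particular the relation $d\eta = (s+1)e^a\pmb{\Gamma}_a\psi + \cdots$ closes since $\pmb{\Gamma}_{6789}$ commutes with $\pmb{\Gamma}_a$, $\pmb{\Gamma}^{a_1 a_2}$, $\pmb{\Gamma}^{a_1\cdots a_5}$ exactly on the index sets retained.

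For the inner embedding $(\mathbb{T}^{5,1\vert\mathbf{8}}\times\mathbb{T}^1)_{\mathrm{ex}_s} \hookrightarrow \mathrm{MK6}_{\mathrm{ex}_s}$, the generators are the same but the spinors on the right-hand sides of the differential relations are replaced by $P_{{}_{\mathbf{8}}}\psi$ (Def.~\ref{HalfM5LocusAndItsExceptionalTangentBundle}). The nontrivial point to check is that the map $P_{{}_{\mathbf{8}}}\psi \mapsfrom P_{{}_{\mathbf{16}}}\psi$ (together with identity on the bosonic generators and $\eta$) is still compatible with the modified differential. This follows because $P_{{}_{\mathbf{8}}}$ is a further linear projection factoring through $P_{{}_{\mathbf{16}}}$, and because the quadratic spinor bilinears appearing on the right of $d e^a, d e_{a_1 a_2}, d e_{a_1 \cdots a_5}, d\eta$ are $\mathrm{Pin}^+(5,1)$-equivariant — so replacing $P_{{}_{\mathbf{16}}}\psi$ with $P_{{}_{\mathbf{8}}}\psi$ consistently on both sides yields a genuine dgc-algebra projection. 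The composite is the claimed $i_{\mathrm{ex}_s}$, and commutativity of the squares with the projections $\pi$ to the underlying bosonic bodies is immediate since the projections act as identity on $e^a$ and $\psi$.

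Part \textbf{(ii)} is entirely parallel: the super-exceptional heterotic M-theory spacetime of Def.~\ref{SuperExceptionalMTheorySpacetime} is obtained from $(\mathbb{T}^{10,1\vert\mathbf{32}})_{\mathrm{ex}_s}$ by applying $P_{{}_{\mathbf{16}}}$ to the spinors appearing on the right of the differential relations (without restricting the bosonic generators), which again is consistent by $\mathrm{Pin}^+(9,1)$-equivariance of the spinor bilinears. The further embedding of $(\mathbb{T}^{5,1\vert\mathbf{8}}\times \mathbb{T}^1)_{\mathrm{ex}_s}$ into this heterotic spacetime then proceeds by the same fixed-locus selection under $\pmb{\Gamma}_{6789}$ (now acting on an already $P_{{}_{\mathbf{16}}}$-projected spinor) and the subsequent further projection to $P_{{}_{\mathbf{8}}}$ on the right of the differential relations. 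The main (very mild) obstacle is purely bookkeeping: organizing the four cases by the parity of the number of indices lying in $\{0,1,2,3,4,5',5\}$ versus $\{6,7,8,9\}$ and confirming the signs against the explicit formulas in Lemma \ref{ReflectionAutomorphismOnExceptionalTangentSuperSpacetime}; no further structural argument is required.
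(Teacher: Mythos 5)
Your approach — verifying the embeddings by direct inspection of the Chevalley--Eilenberg algebras, using the $\mathbb{Z}_2$-fixed-locus interpretation of the MK6 spacetime and the nested spinor projections of Def.~\ref{SpinorProjection} — is essentially the route the paper intends: the paper offers no proof at all, introducing the lemma with ``the following is a direct consequence of the above definitions,'' so what you have written is the spelled-out verification.

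One small but instructive wrinkle in your write-up: you assert that the fixed-locus analysis under $\rho^\ast_6\rho^\ast_7\rho^\ast_8\rho^\ast_9$ selects $e_{a_1\cdots a_5}$ with ``$1$ or $3$ of the $a_i$'s'' in the MK6 worldvolume directions and claims this ``matches exactly'' \eqref{CESuperExceptionalMK6}. If you actually run the sign count from Lemma~\ref{ReflectionAutomorphismOnExceptionalTangentSuperSpacetime} you get that $e_{a_1\cdots a_5}$ is fixed iff an \emph{even} number of its indices lie in $\{6,7,8,9\}$, hence iff $1$, $3$, \emph{or} $5$ of them lie in $\{0,1,2,3,4,5',5\}$ — i.e. an odd number. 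This is indeed what the top horizontal arrow in the lemma's own diagram \eqref{M5FixedLocusNormallyEnhanced} states (``if odd num.\ of $a_i$s $\in\{0,1,2,3,4,5',5\}$''). The ``1 or 3'' in Def.~\ref{SuperExceptionalMK6Spacetime} appears to silently drop the all-tangential case; your proof should not inherit that omission, since the fixed-locus characterization — which is the structural content your argument actually relies on — makes it unambiguous. The rest (dgc-morphism property via equivariance of the bilinears, commutativity of the squares with the bosonic-body projections, and the parallel argument for part (ii)) is sound and in line with the intended reading of the definitions.
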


\begin{remark}[The M5 locus for admissible  values of the parameters]
Henceforth we declare  that the parameter $s$ from \eqref{s}
is distinct from -6, i.e. $s \in \mathbb{R} \setminus \{0,-6\}$
as in \eqref{sNotMinusSix}.
In this case,  Prop. \ref{LiftOfVectorField} applies
and we have, in summary,
the situation as shown in the following diagram:
\begin{equation}
  \label{theM5LocusWithLieDerivative}
  \hspace{-1cm}
  \xymatrix@C=4.5em@R=1.5em{
    \Sigma
    \ar[ddr]
    \ar@{-->}[r]^-{ (f,H) }
    &
    \big(
      \mathbb{R}^{ 5,1\vert \mathbf{8} } \times \mathbb{R}^1
    \big)
    _{\mathrm{ex}_s}
    \ar@(ul,ur)^{ \color{blue} \mathcal{L}_{\widehat {v_5}} }
    \ar[dd]^-{ \pi^{\sfrac{1}{2}\mathrm{M5}}_{\mathrm{ex}_s} }
    \ar@{^{(}->}[r]^-{ i_{\mathrm{ex}_s} }
    &
    \mathbb{R}^{ 10,1\vert \mathbf{16} }_{\mathrm{ex}_s}
    \ar[dd]^>>>>>>{\ }="t"_-{ \pi_{\mathrm{ex}_s} }
    \ar[rrr]^-{
      H_{\mathrm{ex}_s}
      \;\wedge\;
      \overset{
         = d H_{\mathrm{ex}_s}
      }{\footnotesize
        \overbrace{
          \pi^\ast_{\mathrm{ex}_s}\mu_{{}_{\rm M2}}
        }
      }
      +
      2
      i^\ast_{\mathrm{ex}_s}
      \mu_{{}_{\rm M5}}
    }_>>>>>>{\ }="s"
    &&&
    S^7_{\mathbb{R}}
    \ar[dd]^-{ h_{\mathbb{H}} }
    \\
    \\
    &
    \mathbb{R}^{ 5,1\vert \mathbf{8} } \times \mathbb{R}^{1}
    \ar@(dl,dr)_{ \color{blue} \mathcal{L}_{v_5} }
    \ar@{^{(}->}[r]^-{ i }
    &
    \mathbb{R}^{ 10,1\vert \mathbf{32} }
    \ar[rrr]_{ (\mu_{{}_{\rm M2}}, 2 \mu_{{}_{\rm M5}}) }
    &&&
    S^4_{\mathbb{R}}\;.
    \ar@{=>}^{H_{\mathrm{ex}_s}} "s"; "t"
  }
\end{equation}
\end{remark}

\begin{remark}[Alternative brane configuration $\mathrm{MO1} \Vert \mathrm{MO9}$]
  \label{MWaveInMO9}
  For the case when the parameter value is $s = - 6$ after all, there is an alternative
  brane configuration one may consider, namely the configuration
  of an ``M-wave'' (see \cite[2.2.3]{HSS18}) inside an MO9-plane,
  i.e., with the spinor projection \eqref{FermionProjections}
  replaced by the following:
  $$
  \begin{array}{rclclcl}
    \mathrm{MO9}
    &&
    \pmb{\Gamma}_{9\phantom{1}} (P \psi) & = & (P \psi)\;,
    &
    \\
    \mathrm{MO1}
    &&
    \pmb{\Gamma}_{01} (P \psi) & = & (P \psi)\;.
    &&
  \end{array}
  $$
  In this case, there is a corresponding alternative to the
  super-exceptional isometry
  \eqref{LiftedVectorField} given by
  $$
    v_9^{\mathrm{ex}_s}
    \;\coloneqq\;
    \delta^a_9 v_a
      -
    (1+s)
    \tfrac{1}{2}
    \epsilon^{a_1 a_2}_{0 1}
    v_{a_1 a_2}
    +
    \chi^\alpha v_{\alpha}
    \,.
  $$
  With this alternative
  brane configuration and alternative super-exceptional isometry, all of the
  following constructions go through for all of $s \in \mathbb{R} \setminus \{0\}$,
  including $s = -6$; but then there is no value of $s$ for which the leading
  term of the super-exceptional Perry-Schwarz Lagrangian equals the original
  bosonic Perry-Schwarz Lagrangian, i.e., what fails is item
  $(\hyperlink{SuperExceptionalPSLagrangianForsMinusThree}{\bf ii})$ of
  Prop. \ref{ExceptionalPreimageOfPerrySchwarzLagrangian} below.
  This does not mean that this alternative case
  is not of interest, but its interpretation will
  need to be discussed elsewhere.
\end{remark}

We record the following basic fact:
\begin{lemma}[Vanishing bilinears on $\tfrac{1}{2}$M5 spacetime]
  \label{VanishingSpinorPairingsOnHalfM5Spacetime}
  The following bispinor pairings vanish identically
  $$
    \begin{array}{lcl}
      (\overline{P_{{}_{\mathbf{8}}}\psi})
      \Gamma_5
      (P_{{}_{\mathbf{8}}}\psi)
      & = &
      0
      \\
      (\overline{P_{{}_{\mathbf{8}}}\psi})
      \Gamma_{a_1 a_2}
      (P_{{}_{\mathbf{8}}}\psi)
      & = &
      0 \phantom{AA} \mbox{ for $a_1, a_2 \neq 5$ }\!,
      \\
      (\overline{P_{{}_{\mathbf{8}}}\psi})
      \Gamma_{a_1 \cdots a_5}
      (P_{{}_{\mathbf{8}}}\psi)
      & = &
      0 \phantom{AA} \mbox{ if one $a_i = 5$  }\!.
    \end{array}
  $$
\end{lemma}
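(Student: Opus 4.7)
The plan is to prove each identity via the standard trick of inserting the spinor projections $\pmb{\Gamma}_5 (P_{\mathbf{8}}\psi) = (P_{\mathbf{8}}\psi)$ and $\pmb{\Gamma}_{56789}(P_{\mathbf{8}}\psi) = (P_{\mathbf{8}}\psi)$ from Def. \ref{SpinorProjection}, and then showing that each bilinear equals its own negative. Concretely, for any Clifford monomial $\Gamma^{a_1\cdots a_p}$ and any (anti-)involution $\pmb{\Gamma}_Y \in \{\pmb{\Gamma}_5,\pmb{\Gamma}_{56789}\}$ acting as identity on $P_{\mathbf{8}}\psi$, one has
\[
  \overline{(P_{\mathbf{8}}\psi)}\,\Gamma^{a_1\cdots a_p}\,(P_{\mathbf{8}}\psi)
  \;=\;
  \overline{\pmb{\Gamma}_Y(P_{\mathbf{8}}\psi)}\,\Gamma^{a_1\cdots a_p}\,\pmb{\Gamma}_Y(P_{\mathbf{8}}\psi)
  \;=\;
  \sigma_Y\;\overline{(P_{\mathbf{8}}\psi)}\,\pmb{\Gamma}_Y\,\Gamma^{a_1\cdots a_p}\,\pmb{\Gamma}_Y\,(P_{\mathbf{8}}\psi),
\]
where $\sigma_Y$ is the sign produced by moving $\pmb{\Gamma}_Y$ across the Dirac conjugation. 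Using the standard 11d Majorana identity $\overline{\Gamma^{a_1\cdots a_p}\psi} = (-1)^{p(p+1)/2}\,\bar\psi\,\Gamma^{a_1\cdots a_p}$, we have $\sigma_{\pmb{\Gamma}_5} = -1$ (since $p=1$) and $\sigma_{\pmb{\Gamma}_{56789}} = -1$ (since $p=5$). It therefore remains in each case to verify that the internal conjugation $\pmb{\Gamma}_Y\,\Gamma^{a_1\cdots a_p}\,\pmb{\Gamma}_Y$ equals $+\Gamma^{a_1\cdots a_p}$, so that the bilinear is forced to vanish.

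For the first identity, take $\pmb{\Gamma}_Y = \pmb{\Gamma}_{56789}$. Since $\Gamma_5$ commutes with $\Gamma_{6789}$ (four anticommutations) and $\Gamma_5^2 = 1$, one computes directly that $\pmb{\Gamma}_{56789}\,\Gamma_5\,\pmb{\Gamma}_{56789} = \Gamma_5$, using also $(\pmb{\Gamma}_{56789})^2 = 1$. Combined with $\sigma = -1$ this gives $(\overline{P_{\mathbf{8}}\psi})\Gamma_5(P_{\mathbf{8}}\psi) = -(\overline{P_{\mathbf{8}}\psi})\Gamma_5(P_{\mathbf{8}}\psi)$, hence zero.

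For the second identity, take $\pmb{\Gamma}_Y = \pmb{\Gamma}_5$. Each of the two generators $\Gamma_{a_i}$ with $a_i \neq 5$ anticommutes with $\pmb{\Gamma}_5$, so conjugation yields the sign $(-1)^2 = +1$, giving $\pmb{\Gamma}_5\,\Gamma_{a_1 a_2}\,\pmb{\Gamma}_5 = \Gamma_{a_1 a_2}$. Together with $\sigma = -1$, this again forces the bilinear to vanish. For the third identity, again take $\pmb{\Gamma}_Y = \pmb{\Gamma}_5$: among the five generators $\Gamma_{a_i}$ exactly one equals $\Gamma_5$ (commuting with $\pmb{\Gamma}_5$) and four anticommute, giving overall sign $(-1)^4 = +1$, so $\pmb{\Gamma}_5\,\Gamma_{a_1\cdots a_5}\,\pmb{\Gamma}_5 = \Gamma_{a_1\cdots a_5}$, and the argument concludes as before.

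There is no substantive obstacle here; the entire proof is a matter of sign bookkeeping, and the only thing one must be careful about is using the 11d Majorana flip rule with the correct $(-1)^{p(p+1)/2}$ factor fixed by the conventions of Def. \ref{CliffordAlgebraRepresentation}. The constraints on the index sets ($a_1,a_2 \neq 5$ in the second, exactly one $a_i = 5$ in the third) are precisely what is needed to make the internal conjugation trivial; any other index configuration would produce a $+1$ instead of a $-1$ cancellation and the bilinear would generically be nonzero.
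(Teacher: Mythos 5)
Your proof is correct and reaches the right conclusion, but it is organized differently from the paper's and is less self-contained. The paper's proof treats all three cases at once: it writes out the Dirac adjoint explicitly as $\psi^\dagger\Gamma_{0 a_1\cdots a_n}\psi$, inserts one $\pmb{\Gamma}_5$ on the right, commutes it through the \emph{entire} monomial $\Gamma_{0a_1\cdots a_n}$ (so that the $\Gamma_0$ coming from the Dirac adjoint always contributes exactly one anticommutation, which is the source of the decisive $-1$), and then absorbs the $\pmb{\Gamma}_5$ into $\psi^\dagger$ using only the Hermiticity $\pmb{\Gamma}_5^\dagger = \pmb{\Gamma}_5$ from Def.~\ref{CliffordAlgebraRepresentation}. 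That single computation reads off the sign $\sigma$ as $+1$ or $-1$ according to whether the number of $a_i$ with $a_i \neq 5$ is odd or even, covering all three items uniformly and using only the $\pmb{\Gamma}_5$ projection.

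You instead abstract the Dirac-adjoint pass-through into a flip sign $\sigma_Y$ and separately compute the internal conjugation $\pmb{\Gamma}_Y\,\Gamma^{a_1\cdots a_p}\,\pmb{\Gamma}_Y$. This is a legitimate reorganization of the same underlying bookkeeping, and your final signs ($\sigma_Y = -1$, internal conjugation trivial) are correct when checked against the paper's explicit conventions. However, two things about the way you present it make it less robust than the paper's version. First, your quoted flip rule $(-1)^{p(p+1)/2}$ holds for Clifford generators with Hermitian spacelike gammas, which in this paper's notation are the boldface $\pmb{\Gamma}_a$, not $\Gamma_a$ (for which the paper's conventions $\Gamma_a = -i\pmb{\Gamma}_a$ give anti-Hermitian spacelike gammas and hence the opposite sign $(-1)^{p(p-1)/2}$); you write the flip rule with $\Gamma$ but apply it to $\pmb{\Gamma}_Y$, which happens to be the correct combination, but a reader checking signs has to untangle this. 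Your aside that ``$\Gamma_5^2 = 1$'' has the same issue: it is true for $\pmb{\Gamma}_5$ but false for the paper's $\Gamma_5$, whose square is $-1$. (This does not break your argument, since what you actually use is $(\pmb{\Gamma}_{56789})^2 = 1$ and commutativity.) Second, using $\pmb{\Gamma}_{56789}$ for the first identity is unnecessary: the $\pmb{\Gamma}_5$-projection alone handles all three cases, as the paper's unified treatment shows, because the $\Gamma_0$ hidden in the Dirac adjoint always supplies the needed extra anticommutation. Your version buys a case-by-case transparency in the internal-conjugation step, at the cost of invoking a convention-dependent external identity and mixing the two gamma conventions; the paper's version buys self-containment and uniformity, deriving everything from the stated Hermiticity relations in Def.~\ref{CliffordAlgebraRepresentation}.
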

\begin{proof}
Consider the following computation, for any
$a_i \in \{1,2,3,4,5',5,6,7,8,9\}$:
$$
  \begin{aligned}
    (\overline{P_{{}_{\mathbf{8}}}\psi})
    \Gamma_{a_1 \cdots a_n}
    (P_{{}_{\mathbf{8}}}\psi)
    & =
    (P_{{}_{\mathbf{8}}}\psi)^\dagger
    \Gamma_{0 a_1 \cdots a_n}
    (P_{{}_{\mathbf{8}}}\psi)
    \\
    & =
    (P_{{}_{\mathbf{8}}}\psi)^\dagger
    \Gamma_{0 a_1 \cdots a_n}
    \pmb{\Gamma}_5
    (P_{{}_{\mathbf{8}}}\psi)
    \\
    & =
    \pm
    (P_{{}_{\mathbf{8}}}\psi)^\dagger
    \pmb{\Gamma}_5
    \Gamma_{0 a_1 \cdots a_n}
    (P_{{}_{\mathbf{8}}}\psi)
    \\
    & =
    \sigma
    \cdot
    (\pmb{\Gamma}_5 P_{{}_{\mathbf{8}}}\psi)^\dagger
    \Gamma_{0 a_1 \cdots a_n}
    (P_{{}_{\mathbf{8}}}\psi)
    \\
    & =
    \sigma
    \cdot
    (P_{{}_{\mathbf{8}}}\psi)^\dagger
    \Gamma_{0 a_1 \cdots a_n}
    (P_{{}_{\mathbf{8}}}\psi)
    \\
    & =
    \sigma
    \cdot
    (\overline{P_{{}_{\mathbf{8}}} \psi})
    \Gamma_{0 a_1 \cdots a_n}
    (P_{{}_{\mathbf{8}}}\psi)\;.
  \end{aligned}
$$
Here the first line is the definition of the Dirac adjoint,
the second line uses that $\pmb{\Gamma}_5$ is the identity
on the projected spinors, by definition. In the third step
we commute $\pmb{\Gamma}_5 = i \Gamma_5$ with $\Gamma_{0 a_1 \cdots a_n}$,
thereby picking up a sign
\begin{equation}
  \label{SignsFromCommutingGamma5}
  \sigma
  \;=\;
  \left\{
    \begin{array}{ccc}
      +1 &\vert& \mbox{ odd number of $a_i$s $\neq 5$ }
      \\
      -1 &\vert& \mbox{ even number of $a_i$s $\neq 5$ }
    \end{array}
  \right.
\end{equation}
Finally we use $(\pmb{\Gamma}_5)^\dagger = \pmb{\Gamma}_5$
from \eqref{CMinusMajoranaRepresentationForPin} to
absorb the $\pmb{\Gamma}_5$ again, this time into the left
spinor factor.
Hence the expression we started with equals its product with
$\sigma$, and so vanishes when $\sigma = -1$, hence
when $\Gamma_{a_1 \cdots a_n}$ has an even number of indices
differing from 5.
\end{proof}

For the M-brane cochains, and in terms of
super-exceptional embedding, this means the following:

\begin{lemma}[Pullback of M-brane cocycles to $\tfrac{1}{2}\mathrm{M5}$]
\label{PullbackOfM2CocycleToHalfM5}
For the M-brane cocycles \eqref{TheMBraneCocycles},
pulled back along the super-embedding
of $\tfrac{1}{2}$M5-embedding $i$ in \eqref{M5FixedLocusNormallyEnhanced},
we have
\begin{eqnarray}
  \label{RestrictionOfM2CocycleToMO9}
  i^\ast_{{}_{}} \mu_{{}_{\rm M2}}
  & = &
  e^5 \wedge \iota_{v_5} i^\ast \big( \mu_{{}_{\rm M2}} \big),
  \\
  i^\ast_{{}_{}} \mu_{{}_{\rm M5}}
  & = &
  (\mathrm{id} - e^5 \wedge \iota_{v_5})
  i^\ast \mu_{{}_{\rm M5}},
\end{eqnarray}
with $\iota_{v_5}$ given in \eqref{TheVectorField}.
Hence, for the super-exceptional M-brane cocycles
(Def. \ref{ExceptionalM5SuperCocycle}), we have
$$
  \begin{aligned}
    (i_{\mathrm{ex}_s})^\ast
    (\pi_{\mathrm{ex}_s})^\ast
    \mu_{{}_{\rm M2}}
    & =
    e^5 \wedge \iota_{v_5^{\mathrm{ex}_s}}
    (i_{\mathrm{ex}_s})^\ast
    (\pi_{\mathrm{ex}_s})^\ast
    \mu_{{}_{\rm M2}},
    \\
    (i_{\mathrm{ex}_s})^\ast
    (\pi_{\mathrm{ex}_s})^\ast
    \mu_{{}_{\rm M5}}
    & =
    (\mathrm{id} - e^5 \wedge \iota_{v_5^{\mathrm{ex}_s}})
    (i_{\mathrm{ex}_s})^\ast
    (\pi_{\mathrm{ex}_s})^\ast
    \mu_{{}_{\rm M5}}
    \,.
  \end{aligned}
$$
\end{lemma}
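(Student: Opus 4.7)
The strategy is to compute the restrictions $i^\ast \mu_{{}_{\rm M2}}$ and $i^\ast \mu_{{}_{\rm M5}}$ in components and then read off the claimed horizontal/vertical decompositions directly from the vanishing bilinears of Lemma \ref{VanishingSpinorPairingsOnHalfM5Spacetime}. The super-exceptional versions then follow by naturality of pullback along the commutative square \eqref{M5FixedLocusNormallyEnhanced}.

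\textbf{Step 1: Restrictions in components.} From \eqref{TheMBraneCocycles} we have
$\mu_{{}_{\rm M2}}=\tfrac{i}{2}\bigl(\overline{\psi}\,\Gamma_{a_1a_2}\psi\bigr)\wedge e^{a_1}\wedge e^{a_2}$
and
$\mu_{{}_{\rm M5}}=\tfrac{1}{5!}\bigl(\overline{\psi}\,\Gamma_{a_1\cdots a_5}\psi\bigr)\wedge e^{a_1}\wedge\cdots\wedge e^{a_5}$.
Looking at the super-embedding $i$ in \eqref{M5FixedLocusNormallyEnhanced}, the pullback sends $\psi\mapsto P_{{}_{\mathbf{8}}}\psi$ and kills all $e^a$ with $a\in\{6,7,8,9\}$. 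Hence $i^\ast\mu_{{}_{\rm M2}}$ is a sum over indices $a_1,a_2\in\{0,1,2,3,4,5',5\}$ of terms $\tfrac{i}{2}(\overline{P_{{}_{\mathbf{8}}}\psi}\,\Gamma_{a_1 a_2}P_{{}_{\mathbf{8}}}\psi)\wedge e^{a_1}\wedge e^{a_2}$, and analogously for $i^\ast\mu_{{}_{\rm M5}}$.

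\textbf{Step 2: Apply the vanishing Lemma.} By Lemma \ref{VanishingSpinorPairingsOnHalfM5Spacetime}, the bispinor $(\overline{P_{{}_{\mathbf{8}}}\psi})\Gamma_{a_1 a_2}(P_{{}_{\mathbf{8}}}\psi)$ vanishes unless one of $a_1,a_2$ equals $5$; similarly $(\overline{P_{{}_{\mathbf{8}}}\psi})\Gamma_{a_1\cdots a_5}(P_{{}_{\mathbf{8}}}\psi)$ vanishes if any $a_i=5$. Consequently, every surviving term of $i^\ast\mu_{{}_{\rm M2}}$ carries exactly one factor of $e^5$ (using antisymmetry of $\Gamma_{a_1a_2}$ to collapse the two cases), while every surviving term of $i^\ast\mu_{{}_{\rm M5}}$ has all five indices distinct from $5$, hence contains no $e^5$. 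Therefore $i^\ast\mu_{{}_{\rm M2}}=e^5\wedge\alpha$ for some form $\alpha$ with $\iota_{v_5}\alpha=0$, while $\iota_{v_5}\,i^\ast\mu_{{}_{\rm M5}}=0$.

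\textbf{Step 3: Read off the decompositions.} From $i^\ast\mu_{{}_{\rm M2}}=e^5\wedge\alpha$ with $\iota_{v_5}\alpha=0$ we get $\iota_{v_5}i^\ast\mu_{{}_{\rm M2}}=\alpha$ (using $\iota_{v_5}e^5=1$ from \eqref{TheVectorField}), hence $e^5\wedge\iota_{v_5}i^\ast\mu_{{}_{\rm M2}}=e^5\wedge\alpha=i^\ast\mu_{{}_{\rm M2}}$, which is \eqref{RestrictionOfM2CocycleToMO9}. Dually, $\iota_{v_5}i^\ast\mu_{{}_{\rm M5}}=0$ gives $(\mathrm{id}-e^5\wedge\iota_{v_5})i^\ast\mu_{{}_{\rm M5}}=i^\ast\mu_{{}_{\rm M5}}$.

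\textbf{Step 4: Super-exceptional versions by naturality.} The square \eqref{M5FixedLocusNormallyEnhanced} commutes, so $(i_{\mathrm{ex}_s})^\ast(\pi_{\mathrm{ex}_s})^\ast=(\pi^{\sfrac{1}{2}\mathrm{M5}}_{\mathrm{ex}_s})^\ast i^\ast$ on differential forms. Applying $(\pi^{\sfrac{1}{2}\mathrm{M5}}_{\mathrm{ex}_s})^\ast$ to the two identities from Step 3, it remains to exchange $\iota_{v_5}$ with $\iota_{v_5^{\mathrm{ex}_s}}$. But the formula \eqref{LiftedVectorField} for $v_5^{\mathrm{ex}_s}$ shows that on the image of $(\pi^{\sfrac{1}{2}\mathrm{M5}}_{\mathrm{ex}_s})^\ast$, which involves only the generators $e^b$ and $(P_{{}_{\mathbf{8}}}\psi)^\alpha$, the contractions $\iota_{v_5^{\mathrm{ex}_s}}$ and $\iota_{v_5}$ agree (both send $e^b\mapsto\delta^5_b$ and $(P_{{}_{\mathbf{8}}}\psi)^\alpha\mapsto 0$); the extra terms in $v_5^{\mathrm{ex}_s}$ involve $\iota$-derivations only active on $e_{a_1\cdots a_5}$ and $\eta$, which do not appear in pulled-back forms. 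This yields the two super-exceptional identities.

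\textbf{Anticipated obstacle.} The computation is essentially bookkeeping; the one subtlety is verifying that on forms pulled back from $\mathbb{T}^{5,1\vert\mathbf{8}}\times\mathbb{T}^1$ along $\pi^{\sfrac{1}{2}\mathrm{M5}}_{\mathrm{ex}_s}$ the exceptional correction terms in $v_5^{\mathrm{ex}_s}$ are genuinely inert, which is immediate once one notes that no $e_{a_1 a_2}$, $e_{a_1\cdots a_5}$, or $\eta$ appears in such forms.
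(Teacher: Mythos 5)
Your proof is correct and follows essentially the same route as the paper: project spinors by $P_{\mathbf{8}}$, apply the vanishing Lemma on bilinears to see that only the $e^5$-bearing summands survive in $i^\ast\mu_{\rm M2}$ and only the $e^5$-free ones in $i^\ast\mu_{\rm M5}$, and read off the decomposition. The paper handles the ordinary and super-exceptional cases simultaneously by noting that Lemma \ref{SuperExceptionalEmbeddings} gives the same $P_{\mathbf{8}}$-projection for both embeddings, whereas you derive the super-exceptional case separately via naturality and the intertwining $\iota_{v_5^{\mathrm{ex}_s}}\circ(\pi^{\sfrac{1}{2}\mathrm{M5}}_{\mathrm{ex}_s})^\ast=(\pi^{\sfrac{1}{2}\mathrm{M5}}_{\mathrm{ex}_s})^\ast\circ\iota_{v_5}$ — a fact the paper itself records right after, in Example \ref{MBraneCochainsPulledBackToSuperExceptionalBraneLoci}, equation \eqref{PullbacktoExceptionalIntertwinesContractions}; so this is the same content, just sequenced differently.
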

\begin{proof}
  By Lemma \ref{SuperExceptionalEmbeddings}, the pullbacks along $i$ and
  $i_{\mathrm{ex}_s}$ act on the spinors by applying the projection
  $P_{{}_{\mathbf{8}}}$. Hence Lemma \ref{VanishingSpinorPairingsOnHalfM5Spacetime}
  applies to the spinor bilinears after restriction
  and says that all summands in the pullbacks of the
  M-brane cochains that do not contain an index = 5
  (for the M2-brane cocycle) or do contain an index = 5
  (for the M5-brane cochain) vanish. This is just what
  is expressed by the projection operators in
  \eqref{RestrictionOfM2CocycleToMO9}.
\end{proof}

\begin{example}[M-brane cochains pulled back to super-exceptional brane loci]
  \label{MBraneCochainsPulledBackToSuperExceptionalBraneLoci}
  The pullbacks of the
  super M2-brane cocycle $\mu_{{}_{\rm M2}}$
  and of the super M5-brane cochain
  $\mu_{{}_{\rm M5}}$ \eqref{TheMBraneCocycles}
  horizontally along the
  embeddings in \eqref{M5FixedLocusNormallyEnhanced}
  are as follows:
\begin{equation}
  \hspace{-.5cm}
  \xymatrix@C=6em@R=1.2em{
    \big(
      \mathbb{T}^{5,1\vert \mathbf{8}}
      \times
      \mathbb{T}^1
    \big)_{\mathrm{ex}_s}
    \ar[dd]_-{ \pi^{\sfrac{1}{2}\mathrm{M5}}_{\mathrm{ex}_s} }
    \ar@{^{(}->}[rr]
    \ar@{^{(}->}@/^2pc/[rrrr]^-{ i_{\mathrm{ex}_s} }
    &&
    \big(
      \mathbb{T}^{6,1\vert \mathbf{16}}
    \big)_{\mathrm{ex}_s}
    \ar[dd]
    \ar[rr]
    \ar@{^{(}->}[rr]
    &&
    \big(
      \mathbb{T}^{10,1\vert \mathbf{32}}
    \big)_{\mathrm{ex}_s}
    \ar[dd]^-{ \pi_{\mathrm{ex}_s} }
    \\
    \\
    \mathbb{T}^{5,1\vert \mathbf{8}}
    \times
    \mathbb{T}^1
    \ar@{_{(}->}[rr]
    \; \ar@{_{(}->}@/_2pc/[rrrr]|-{ i }
    &&
    \mathbb{T}^{6,1\vert \mathbf{16}}
    \ar@{_{(}->}[rr]
    &&
    \mathbb{T}^{10,1\vert \mathbf{32}}
    }
    \end{equation}

    \begin{equation*}
    \footnotesize
\xymatrix@C=1em@R=1pt{
              \tfrac{i}{2}
    \underset{\tiny \mathclap{
        a \in \{0,1,2,3,4,5'\}
      }
    }{\sum}
     (
    (P_{{}_{\mathbf{8}}}\overline{\psi})
    \Gamma_{a 5}
    (P_{{}_{\mathbf{8}}}\psi)
    )
    \wedge
    e^{a} \wedge e^{5}
            \ar@{<-|}[rr]
    &&
    \tfrac{i}{2}
    \underset{\tiny
      \mathclap{
        a_i \in \{0,1,2,3,4,5',5\}
      }
    }{\sum}
    (
    (P_{{}_{\mathbf{16}}}\overline{\psi})
    \Gamma_{a_1 a_2}
    (P_{{}_{\mathbf{16}}}\psi)
    )
    \wedge
    e^{a_1} \wedge e^{a_2}
    \ar@{<-|}[rr]
    &&
    \underset{
      \eqqcolon \mu_{{}_{\rm M2}}
    }{
    \underbrace{
    \tfrac{i}{2}
    \underset{\tiny
      \mathclap{
        a_i \in \{0,1,2,3,4,5',5,6,7,8,9\}
      }
    }{\sum}
    (
    \overline{\psi}\;
    \Gamma_{a_1 a_2}
    \psi
    )
    \wedge
    e^{a_1} \wedge e^{a_2}
    }
    }
    \\
    \tfrac{1}{5!}
    \underset{\tiny
      \mathclap{
        a_i \in \{0,1,2,3,4,5'\}
      }
    }{\sum}
    (
    (\overline{P_{{}_{\mathbf{8}}}\psi})
    \Gamma_{a_1 \cdots a_5}
    (P_{{}_{\mathbf{8}}}\psi)
    )
    \wedge
    e^{a_1} \wedge \cdots \wedge e^{a_5}
    \ar@{<-|}[rr]
    &&
    \tfrac{1}{5!}
    \underset{\tiny
      \mathclap{
        a_i \in \{0,1,2,3,4,5',5\}
      }
    }{\sum}
    (
    (\overline{P_{{}_{\mathbf{16}}}\psi})
    \Gamma_{a_1 \cdots a_5}
    (P_{{}_{\mathbf{16}}}\psi)
    )
    \wedge
    e^{a_1} \wedge \cdots \wedge e^{a_5}
    \ar@{<-|}[rr]
    &&
    \underset{
      \eqqcolon \mu_{{}_{\rm M5}}
    }{
    \underbrace{
    \tfrac{1}{5!}
    \underset{\tiny
      \mathclap{
        a_i \in \{0,1,2,3,4,5',5,6,7,8,9\}
      }
    }{\sum}
    (
    \overline{\psi}\;
    \Gamma_{a_1 \cdots a_5}
    \psi
    )
    \wedge
    e^{a_1} \wedge \cdots \wedge e^{a_5}
    }
    }
  }
\end{equation*}
Here $P_{{}_{\mathbf{16}}}$ and $P_{{}_{\mathbf{8}}}$ are the spinor projections
from Def. \ref{SpinorProjection}, and on the far left we used
Lemma \ref{PullbackOfM2CocycleToHalfM5}
to recognize that the M2-brane cocycle
on the $\tfrac{1}{2}\mathrm{M5}$-locus retains only the summands
proportional to $e^5$, while the M5-brane cochain
on the $\tfrac{1}{2}\mathrm{M5}$-locus retains only the summands
not proportional to $e^5$.
Notice that the vertical pullback is syntactically the identity,
due to \eqref{FermionicExtensionOfExceptionalTangentSuperspacetime}.
This makes manifest that the vertical pullback
to the exceptional spacetimes intertwines the
contraction operations for $v_5$ \eqref{TheVectorField}
and for $v_5^{\mathrm{ex}_s}$ \eqref{LiftedVectorField}:
\begin{equation}
  \label{PullbacktoExceptionalIntertwinesContractions}
  (\pi^{\sfrac{1}{2}\mathrm{M5}}_{\mathrm{ex}_s})^\ast \circ \iota_{v_5}
  \;=\;
  \iota_{v_5}^{\mathrm{ex}_s}
  \circ
  (\pi^{\sfrac{1}{2}\mathrm{M5}}_{\mathrm{ex}_s})^\ast
  \,.
\end{equation}
Since we also have
$$
  i^\ast \circ \iota_{v_5}
  \;=\;
  \iota_{v_5} \circ i^\ast ,
$$
this implies
\begin{equation}
  (i \circ \pi^{\sfrac{1}{2}\mathrm{M5}}_{\mathrm{ex}_s})^\ast
    \circ
  \iota_{v_5}
  \;=\;
  \iota_{v_5}^{\mathrm{ex}_s}
    \circ
  (i \circ \pi^{\sfrac{1}{2}\mathrm{M5}}_{\mathrm{ex}_s})^\ast
  \,.
\end{equation}
\end{example}

\section{Super-exceptional Perry-Schwarz \& Yang-Mills Lagrangians}
\label{SuperExceptionalLagrangian}

In Prop. \ref{ExceptionalPreimageOfPerrySchwarzLagrangian}
we find a natural super-exceptional pre-image of the bosonic Perry-Schwarz Lagrangian,
recorded as Def. \ref{SuperExceptionalPSLagrangian} below.
This allows us to extract the super-components (in Prop. \ref{calFexcDecomposed} below)
and identify the super-exceptional M-theory avatar of the gaugino field
(Remark \ref{Super2FormGaugeFieldStrengthAndGauginos} below).
We also find the super-exceptional lift of the topological Yang-Mills Lagrangian
(Def. \ref{SuperExceptionalTopologicalYM} below) and its
relation to the super-exceptional Perry-Schwarz Lagrangian
(Lemma \ref{SuperExceptionaltYMIsContractionalOfPS} below).
This plays a crucial role when we unify all
this super-exceptional data in \cref{SuperExceptionalReductionOfMTheoryCircle}.
Then we show (Prop. \ref{TrivializationOf7CocycleOnExceptionalHalfM5} below)
that the super-exceptional Perry-Schwarz Lagrangian arises
as the trivialization of the super-exceptional M5-brane cocycle
restricted along the super-embedding
of the $\tfrac{1}{2}\mathrm{M5}$-spacetime and compactified
on the M/HW-theory circle (Def. \ref{DimensionalReductionOfM5BraneSuperCocycle} below).
This is a key ingredient in the full super-embedding
Theorem \ref{TheTrivialization}
further below in \cref{SuperExceptionalM5Lagrangian}

\medskip

We start with
identifying the super-exceptional lift of the PS Lagrangian.
\begin{prop}[Super-exceptional lift of bosonic 2-flux and PS Lagrangian]
  \label{ExceptionalPreimageOfPerrySchwarzLagrangian}
  Consider
  $\Sigma_{\mathrm{bos}} \coloneqq \mathbb{R}^{5,1} \times \mathbb{R}^1$
  and a section $\sigma$
  of the super-exceptional $\tfrac{1}{2}$M5-brane spacetime projection
  \eqref{M5FixedLocusNormallyEnhanced}
$$
\hspace{10mm}
  \xymatrix@C=6.8em@R=1.5em{
    \mathllap{
      \Sigma_{\mathrm{bos}} =
    }\;
    \mathbb{R}^{5,1} \times \mathbb{R}^1
    \ar[ddr]
    \ar@(ul,ur)^{ \color{blue} v_5 }
    \ar@{-->}[r]^-{ \sigma }
    &
    \big(
      \mathbb{R}^{ 5,1\vert \mathbf{8} } \times \mathbb{R}^1
    \big)
    _{\mathrm{ex}_s}\;,
    \ar[dd]^-{ \pi^{\sfrac{1}{2}\mathrm{M5}}_{\mathrm{ex}_s} }
    &
    \!\!\!\!\!\!\!\!\!\!\!\!\!\!\!\!\!\!\!\!\!\!\!\!\!\!\!\!\!\!\!\!\!\!\!\!\!\!\!\!\!\!\!\!\!\!\!\!\!
    \Omega^\bullet
    \big(
      \mathbb{R}^{5,1\vert \mathbf{8}} \times \mathbb{R}^1
    \big)
    \ar@{<-}[ddr]
    \ar@{<-}[r]^-{
      \mbox{
        \raisebox{20pt}{
        \tiny
        $
        \begin{array}{ccl}
          d x^a
            &\!\!\!\!\!\!\!\!\!\!\!\!\!\mapsfrom\!\!\!\!\! & e^{a \leq 5}
          \\
          \tfrac{1}{\alpha_0(s)}
          d (B_{a_1 a_2}) &\!\!\!\!\!\!\!\!\!\!\!\!\!\mapsfrom\!\!\!\!\!& e_{a_1 a_2}
          \\
          0
            &\!\!\!\!\!\!\!\!\!\!\!\!\!\mapsfrom\!\!\!\!\!& e^{a \gt 5}
          \\
          0 &\!\!\!\!\!\!\!\!\!\!\!\!\!\mapsfrom\!\!\!\!\!& e_{a_1 \cdots a_5}
        \end{array}
        $
        }
      }
    }
    &
    \mathrm{CE}
    \Big(\!
      \big(
        \mathbb{R}^{5,1\vert \mathbf{8}} \times \mathbb{R}^1
      \big)_{\mathrm{ex}_s}
 \!   \Big).
    \ar@{<-}[dd]^{  (\pi^{\sfrac{1}{2}\mathrm{M5}}_{\mathrm{ex}_s})^\ast }
    \\
    \\
    &
    \mathbb{R}^{ 5,1\vert \mathbf{8} } \times \mathbb{R}^{1}
    \ar@(dl,dr)_{ \color{blue} v_5 }
    &&
    \mathrm{CE}
    \big(
      \mathbb{R}^{5,1 \vert \mathbf{8}} \times \mathbb{R}
    \big)
    }
$$

\item {\bf (i)}
If $\sigma$ is such that the normal forms $e^{a \gt 5}$ and the 5-index forms
$e_{a_1 \cdots a_5}$ pull back to zero, as shown above on the right, then the
pullback of the contraction of the transgression element $H_{\mathrm{ex}_s}$
(Prop. \ref{TransgressionElementForM2Cocycle}) with the lifted vector field
$v_5^{\mathrm{ex}_s}$ (Prop. \ref{LiftOfVectorField}) is the bosonic 2-form flux
$F$ \eqref{FcalF}:
\begin{equation}
  \label{PullbackOfContractedHs}
  \xymatrix@R=-8pt{
    \Omega^\bullet(\Sigma_{\mathrm{bos}})
    \ar@{<-}[rr]^-{ \sigma^{\ast} }
    &&
    \mathrm{CE}
    \Big(
      \big(
        \mathbb{R}^{5,1\vert \mathbf{8}}
        \times
        \mathbb{R}^1
      \big)_{\mathrm{ex}_s}
    \Big)
    \\
    \underset{
      = \,
      \color{blue}
      F
    }{
      \underbrace{
        \overset{
           = \mathcal{F}
        }{
          \overbrace{
            \iota_{v_5}H
          }
        }
        -
        \mathcal{L}_{v_5} B^{\mathrm{hor}}
      }
    }
    \ar@{<-|}[rr]
    &&
    \underset{
      \eqqcolon
      \,
      \color{blue}
      F_{\mathrm{ex}_s}
    }{
      \underbrace{
        \iota_{v_5^{\mathrm{ex}_s}} H_{\mathrm{ex}_s}
      }
    }
  }
\end{equation}
where on the left
$H \coloneqq  d(B_{a_2 a_2}) \wedge d x^{a_2} \wedge d x^{a_3}$
denotes the plain H-flux
\eqref{DecomposedCFieldAsSumOfBosonicAndFermionicContribution}
and $F$ its induced 2-form flux \eqref{FcalF}
according to Lemma \ref{RelatingFcalF};
and hence on the right we find a super-exceptional pre-image
$F_{\mathrm{ex}_s}$ of the 2-form flux.

\item \hypertarget{SuperExceptionalPSLagrangianForsMinusThree}{}
  {\bf (ii)} If, moreover, $s = - 3$
  (as in Example \ref{ExceptionalSuperspacetimeAtSEqualsMinusOne}),
we have
\begin{equation}
  \label{PullbackOfContractedHs}
  \xymatrix@R=-12pt{
    \Omega^\bullet(\Sigma_{\mathrm{bos}})
    \ar@{<-}[rr]^-{ \sigma^{\ast} }
    &&
    \mathrm{CE}
    \Big(
      \big(
        \mathbb{R}^{5,1\vert \mathbf{8}}
        \times
        \mathbb{R}^1
      \big)_{\mathrm{ex}_s}
    \Big)
    \\
    \underset{
      =
      \,
      { \color{blue} \mathbf{L}^{\!\!\mathrm{PS}} }
      \wedge
      dx^5
    }{
    \underbrace{
      -
      \tfrac{1}{2}
      \big(
        \iota_{v_5}H
        -
        \mathcal{L}_{v_5} B^{\mathrm{hor}}
      \big)
      \wedge
      H
      \wedge dx^5
    }
    }
    \ar@{<-|}[rr]
    &&
    \underset{
      =:
       \,
      {\color{blue} \mathbf{L}^{\!\!\mathrm{PS}}_{\mathrm{ex}_s}}
      \wedge e^5
    }{
      \underbrace{
        -
        \tfrac{1}{2}
        \big(
          \iota_{v_5^{\mathrm{ex}_{s}}} H_{\mathrm{ex}_{s}}
        \big)
        \wedge
        H_{\mathrm{ex}_{s}}
        \wedge e^5
      }
    }
  }
\end{equation}
where on the left we have the Perry-Schwarz Lagrangian
\eqref{NonCovariantLagrangianInFlatSpacetime},
and hence on the right we find a super-exceptional pre-image
$\mathbf{L}^{\!\!\mathrm{PS}}_{\mathrm{ex}_s}$.
\end{prop}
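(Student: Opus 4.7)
The strategy is to expand $H_{\mathrm{ex}_s}$ term by term using the explicit decomposition \eqref{DecomposedCFieldAsSumOfBosonicAndFermionicContribution}--\eqref{DecomposedCFieldFermionicContribution} from Prop. \ref{TransgressionElementForM2Cocycle}, apply the super-exceptional contraction $\iota_{v_5^{\mathrm{ex}_s}}$ from Prop. \ref{LiftOfVectorField}, and then track which summands survive the section pullback $\sigma^\ast$. The first observation is that, by construction, $\sigma^\ast$ annihilates
\begin{itemize}
\item all odd generators $\psi^\alpha$ and $\eta^\alpha$ (since the target $\Sigma_{\mathrm{bos}}$ is purely even),
\item the normal vielbein $e^{a>5}$, and
\item the 5-index generators $e_{a_1\cdots a_5}$.
\end{itemize}
Consequently, the entire fermionic fiber contribution $(H^{\mathrm{fib}}_s)_{\mathrm{ferm}}$ vanishes under $\sigma^\ast$ (each summand contains $\overline{\eta}\wedge\psi$), as do the $\alpha_2, \alpha_3, \alpha_4$-summands once contracted, because $\iota_{v_5^{\mathrm{ex}_s}}$ removes at most one of the two 5-index factors, leaving a survivor that $\sigma^\ast$ then kills.

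For part \textbf{(i)}, apply $\iota_{v_5^{\mathrm{ex}_s}}$ first. Since $\iota_{v_5^{\mathrm{ex}_s}}(e_{a_1a_2})=0$, the bosonic $\alpha_1$-summand -- a triple wedge of two-index generators -- contracts identically to zero. Thus only the leading term $\alpha_0(s)\,e_{a_1 a_2}\wedge e^{a_1}\wedge e^{a_2}$ survives both the contraction and the pullback; a short direct computation gives
\[
 \iota_{v_5^{\mathrm{ex}_s}}\bigl(\alpha_0(s)\,e_{a_1 a_2}\wedge e^{a_1}\wedge e^{a_2}\bigr) \;=\; -2\,\alpha_0(s)\,e_{5a}\wedge e^a,
\]
which under $\sigma^\ast$ becomes $-2\,d(B_{5a})\wedge dx^a$. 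Identifying this (up to the overall normalization absorbed into $\alpha_0$) with $\iota_{v_5}H - \mathcal{L}_{v_5}B^{\mathrm{hor}}$ via the chain of identities in \eqref{calF} / Lemma \ref{RelatingFcalF} completes part (i).

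For part \textbf{(ii)}, we additionally need $\sigma^\ast(H_{\mathrm{ex}_s})$ itself. The danger is the $\alpha_1$-term of the bosonic $H^{\mathrm{fib}}_s$: it consists only of two-index generators and hence survives $\sigma^\ast$ as a nonzero cubic in $dB$, which would spoil the identification with the bosonic $H$. This is exactly where the parameter value $s=-3$ enters: by \eqref{CoefficientZerosInDecomposedCField} we have $\alpha_1(-3)=0$, so the obstructing term disappears, and one recovers $\sigma^\ast(H_{\mathrm{ex}_{-3}})=H$ in agreement with \eqref{HwithsMinusThree}. Combining with part (i) yields
\[
 \sigma^\ast\bigl(-\tfrac{1}{2}(\iota_{v_5^{\mathrm{ex}_{-3}}}H_{\mathrm{ex}_{-3}})\wedge H_{\mathrm{ex}_{-3}}\wedge e^5\bigr) \;=\; -\tfrac{1}{2}\,F\wedge H\wedge dx^5 \;=\; \mathbf{L}^{\!\!\mathrm{PS}}\wedge dx^5,
\]
as required by \eqref{NonCovariantLagrangianInFlatSpacetime}.

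\textbf{Main obstacle.} The principal bookkeeping challenge is to verify exhaustively that, amongst the many summands of $H_{\mathrm{ex}_s}$, the cubic $\alpha_1$-summand is the \emph{only} additional obstruction to the pullback reducing to $H$ beyond the 5-index and fermionic pieces already killed by $\sigma^\ast$, and that precisely this obstruction is removed by the unique zero $\alpha_1(-3)=0$. Everything else reduces to a routine (if tedious) application of the Leibniz rule for $\iota_{v_5^{\mathrm{ex}_s}}$ combined with the vanishing prescriptions of $\sigma^\ast$.
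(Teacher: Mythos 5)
Your proof is correct and follows essentially the same route as the paper's: kill the fermionic and $5$-index contributions via $\sigma^\ast$, observe that $\iota_{v_5^{\mathrm{ex}_s}}$ annihilates the $\alpha_1$-cubic of two-index generators (giving (i)), and note that at $s=-3$ the remaining obstruction $\alpha_1$ vanishes so that $\sigma^\ast H_{\mathrm{ex}_{-3}} = H$ (giving (ii)). Your account is if anything a little more explicit than the paper's about auditing each $\alpha_i$-summand and about why $\alpha_1(-3)=0$ is exactly the zero that matters.
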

\begin{proof}
  By the assumption that $\sigma^\ast e_{a_1 \cdots a_5} = 0$,
  and since the odd forms $\sigma^\ast \psi$ and $\sigma^\ast \eta$ vanish
  after pullback to the bosonic
  space $\mathbb{R}^{5,1}\times \mathbb{R}^1$,
  we find from \eqref{DecomposedCFieldAsSumOfBosonicAndFermionicContribution}
  by direct computation that
  $$
    \begin{aligned}
      \sigma^\ast
      \big( \iota_{v_5^{\mathrm{ex}_s}}  H_{\mathrm{ex}_s} \big)
      & =
      \alpha_0(s)
      \cdot
      \sigma^\ast
      \big(
        \iota_{v_5^{\mathrm{ex}_s}}
        e_{a_2 a_3} \wedge e^{a_2} \wedge e^{a_3}
      \big)
      \\
      & =
      \alpha_0(s)
      \cdot
      \sigma^\ast
      \big(
        - 2
        e_{5 a_3} \wedge e^{5} \wedge e^{a_3}
      \big)
      \\
      & =
      - 2 ( d B_{5 a_3} ) \wedge d x^{5} \wedge d x^{a_3}
      \\
      & =
      - 2
        (\partial_{v^{a_1}} B_{5 a_3} )
        \wedge dx^{a_1} \wedge d x^{5} \wedge d x^{a_3}
      \\
      & =
      \iota_{v_5}
      \big(
          \partial_{a_1} B_{ a_2 a_3  }
          \wedge dx^{a_1} \wedge d x^{a_2} \wedge d x^{a_3}
      \big)
      -
      \mathcal{L}_{v^{5}}
      \underset{a_2, a_3 \neq 5}{\sum}
      B_{ a_2 a_3  }
      \wedge d x^{a_2} \wedge d x^{a_3}
      \\
      & =
      \iota_{v_5}
      \big(
        \underset{
          = H
        }{
        \underbrace{
          d B_{a_2 a_3}
          \wedge d x^{a_2} \wedge d x^{a_3}
        }
        }
      \big)
      -
      \mathcal{L}_{v^{5}}
      \underset{a_2, a_3 \neq 5}{\sum}
      B_{ a_2 a_3  }
      \wedge d x^{a_2} \wedge d x^{a_3}
      \\
      & =
      \iota_{v_5} H - \mathcal{L}_{v_5} B^{\mathrm{hor}}.
    \end{aligned}
  $$
  This proves the first statement. For the second, it is
  now sufficient to observe with \eqref{HwithsMinusThree}
  that, by the assumption $s = -3$, we have in the present case
  $\sigma^\ast H_{\mathrm{ex}_s} = H$. Hence
   the second claim now follows directly from the first.
\end{proof}

\begin{defn}[Super-exceptional (dual) 2-flux and PS Lagrangian]
  \label{SuperExceptionalPSLagrangian}
  On the super-exceptional $\tfrac{1}{2}\mathrm{M5}$ spacetime
  $\big(
    \mathbb{R}^{5,1\vert \mathbf{8}} \times \mathbb{R}^1 \big)_{\mathrm{ex}_s}$
     from Def. \ref{HalfM5LocusAndItsExceptionalTangentBundle}
    define the following forms
    in $\mathrm{CE}\big((
    \mathbb{R}^{5,1\vert \mathbf{8}} \times \mathbb{R}^1)_{\mathrm{ex}_s}\big)$:

\item {\bf (i)}   The \emph{super-exceptional 2-flux} and
  \emph{dual super-exceptional 2-flux}, respectively:
\begin{equation}
    \label{ExceptionalF}
    F_{\mathrm{ex}_s}
    \;\coloneqq\;
    \iota_{v_5^{\mathrm{ex}_s}}
    (i_{\mathrm{ex}_s})^\ast
    H_{\mathrm{ex}_s}
    \,,
    \phantom{AAAAA}
    \widetilde F_{\mathrm{ex}_s}
    \;\coloneqq\;
    (i_{\mathrm{ex}_s})^\ast
    H_{\mathrm{ex}_s}
    -
    e^5
      \wedge
    F_{\mathrm{ex}_s}
    \,,
\end{equation}
where $H_{\mathrm{ex}_s}$ is from Prop. \ref{TransgressionElementForM2Cocycle} and
$\iota_{v_5^{\mathrm{ex}_s}}$ from Prop. \ref{LiftOfVectorField}.
\item {\bf (ii)} The \emph{super-exceptional PS Lagrangian}:
\begin{equation}
    \label{LagrangianPSExceptional}
    \mathbf{L}^{\!\!\mathrm{PS}}_{\mathrm{ex}_s}
    \;\coloneqq\;
    -
    \tfrac{1}{2}
    F_{\mathrm{ex}_s}
    \wedge
    \widetilde F_{\mathrm{ex}_s}
    \,.
\end{equation}
\end{defn}

With the exceptional pre-image of the bosonic 2-form flux identified,
we find the induced supersymmetric completion, keeping in mind the notation
${\rm deg}=({\rm bosonic}, {\rm fermionic})$:
\begin{prop}[Super 2-flux from super-exceptional 2-flux]
  \label{calFexcDecomposed}
  For paraneter value $s = -3$ \cref{s}, the exceptional 2-flux density \eqref{PullbackOfContractedHs}
  is the sum of the bosonic term plus a fermionic term
  $F_{(1,1)}$ as follows:
  \begin{equation}
    \label{DecomposingcalFexc}
    F_{\mathrm{ex}_{s}}
    \;=\;
    \underset{
      \eqqcolon F_{(2,0)}
    }{
      \underbrace{
        F
      }
    }
    +
    \underset{
      \eqqcolon F_{(1,1)}
    }{
      \underbrace{
        (\overline{\psi} \;\Gamma_a \chi)\wedge e^a
      }
    }
    \;+\;
    \mathcal{O}\big(\{ e_{a_1 \cdots a_5} \}\big)
    \,.
  \end{equation}
\end{prop}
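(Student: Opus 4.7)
The strategy is to evaluate the definition $F_{\mathrm{ex}_s} \coloneqq \iota_{v_5^{\mathrm{ex}_s}}(i_{\mathrm{ex}_s})^\ast H_{\mathrm{ex}_s}$ from Def.\ \ref{SuperExceptionalPSLagrangian}(i) at $s = -3$ by direct expansion, using the explicit decomposition of $H_{\mathrm{ex}_s}$ from Prop.\ \ref{TransgressionElementForM2Cocycle} combined with the explicit action of the super-exceptional contraction given by Prop.\ \ref{LiftOfVectorField}. At $s=-3$, the zeros listed in \eqref{CoefficientZerosInDecomposedCField} make $\alpha_1(-3)=0$ and $\beta_2(-3)=0$, so the remaining summands fall into three classes: the $\alpha_0$-bosonic piece $\alpha_0(-3)\,e_{a_1 a_2}\wedge e^{a_1}\wedge e^{a_2}$; the $\beta_1$-fermionic piece $-\tfrac{1}{2}\beta_1(-3)\,\overline{\eta}\,\pmb{\Gamma}_a\psi\wedge e^a$; and summands with at least one $e_{a_1\cdots a_5}$-factor, packaged in \eqref{HwideMinusThree} as $\mathcal{O}(\{e_{a_1\cdots a_5}\})$. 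The super-exceptional embedding $(i_{\mathrm{ex}_{-3}})^\ast$ acts syntactically as the identity on the even generators and on $\psi,\eta$, but restricts all spinor bilinears to the $P_{\mathbf{8}}$-projected ones of Def.\ \ref{HalfM5LocusAndItsExceptionalTangentBundle}.

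For the bosonic contribution, I would contract $\alpha_0(-3)\,e_{a_1 a_2}\wedge e^{a_1}\wedge e^{a_2}$ using the super-Leibniz rule with $\iota_{v_5^{\mathrm{ex}_{-3}}} e^a = \delta^a_5$ and $\iota_{v_5^{\mathrm{ex}_{-3}}} e_{a_1a_2}=0$ from \eqref{LiftedVectorField}. Via the canonical-coordinate identification \eqref{LeftInvariantVielbeingInTermsOfCanonicalCoordinates}, which writes $\alpha_0(s)\,e_{a_1 a_2} = d(B_{a_1 a_2}) + \tfrac{i}{2}\overline{\theta}\Gamma_{a_1a_2}d\theta$, this reproduces the bosonic component $F$ precisely as in the purely bosonic calculation of Prop.\ \ref{ExceptionalPreimageOfPerrySchwarzLagrangian}(i), up to fermion bilinears that are counted into the $F_{(1,1)}$ piece.

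For the fermionic contribution, the super-Leibniz expansion of $\iota_{v_5^{\mathrm{ex}_{-3}}}$ applied to $-\tfrac{1}{2}\beta_1(-3)\,\overline{\eta}\,\pmb{\Gamma}_a\psi\wedge e^a$ yields two branches: the branch where $\iota$ hits $\eta$ produces $-\tfrac{1}{2}\beta_1(-3)\,\overline{\chi}\,\pmb{\Gamma}_a(P_{\mathbf{8}}\psi)\wedge e^a$, which, after fixing the normalization of the odd constant $\chi$ from \eqref{OddParameters} and using the Dirac-adjoint symmetry $\overline{\chi}\Gamma_a\psi = \overline{\psi}\Gamma_a\chi$, becomes exactly the claimed $(\overline{\psi}\,\Gamma_a\chi)\wedge e^a$; the branch where $\iota$ hits $e^a$ produces a $1$-form proportional to $\overline{\eta}\,\pmb{\Gamma}_5(P_{\mathbf{8}}\psi)$, which I expect to either vanish by the sign argument of Lemma \ref{VanishingSpinorPairingsOnHalfM5Spacetime} applied to the mixed $(\eta,P_{\mathbf{8}}\psi)$-pairing, or to be absorbable into the remainder by the projection identities \eqref{FermionProjections}. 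All terms in $\mathcal{O}(\{e_{a_1\cdots a_5}\})$ on the $\alpha_2,\alpha_3,\alpha_4$ branches carry at least two $e_{a_1\cdots a_5}$-factors, so a single contraction cannot free them from the remainder.

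The main obstacle will be the careful handling of the $\beta_3$ fermionic summand in $H^{\mathrm{fib}}_s$, which contains only one $e_{a_1\cdots a_5}$-factor and hence can contribute a genuinely 5-index-free piece upon contraction through $\iota_{v_5^{\mathrm{ex}_{-3}}} e_{a_1\cdots a_5} = -\tfrac{1+s}{1+s/6}\,\epsilon^{a_1\cdots a_5}_{5\,6\,7\,8\,9}$; evaluated at $s=-3$ this coefficient is $-4$, and so a priori yields a nonzero term $\propto \overline{\eta}\,\pmb{\Gamma}^{56789}(P_{\mathbf{8}}\psi)$ that must be controlled. The resolution should come from the eigenvalue relation $\pmb{\Gamma}_{56789}(P_{\mathbf{8}}\psi) = (P_{\mathbf{8}}\psi)$ in \eqref{FermionProjections}, which collapses the dangerous contribution onto the $\beta_1$-branch already accounted for (or makes it vanish by the sign-parity argument used in the proof of Lemma \ref{VanishingSpinorPairingsOnHalfM5Spacetime}). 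Verifying this cancellation, and checking that no additional unexpected terms of order $(\overline{\psi}\Gamma\chi)$ survive outside of the stated $F_{(1,1)}$, is the heart of the argument.
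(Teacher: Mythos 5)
Your proposal follows essentially the same route as the paper's proof: expand $H_{\mathrm{ex}_{-3}}$ using the decomposition of Prop.\ \ref{TransgressionElementForM2Cocycle} (equivalently, Example \ref{ExceptionalSuperspacetimeAtSEqualsMinusOne}), contract with $v_5^{\mathrm{ex}_{-3}}$ from Prop.\ \ref{LiftOfVectorField}, and then appeal to the spinor-bilinear vanishing of Lemma \ref{VanishingSpinorPairingsOnHalfM5Spacetime}. The way you track the individual $\alpha_i,\beta_j$ branches and explicitly flag the $\beta_3$ summand as the one whose contraction can escape the $\mathcal{O}(\{e_{a_1\cdots a_5}\})$ remainder is a genuine insight; the paper handles exactly this through the cited identity $(\overline{P\psi})\,\Gamma_{56789}(P\chi)=0$, which is what makes that branch innocuous after the spinor projections of Def.\ \ref{SpinorProjection}. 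You arrive at the same structural understanding.

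Where you stop short is precisely the place you yourself flag: you write that the dangerous contribution either ``collapses onto the $\beta_1$-branch already accounted for'' \emph{or} ``makes it vanish by the sign-parity argument.'' These are not two interchangeable alternatives; the argument must be closed by the sign-parity mechanism of Lemma \ref{VanishingSpinorPairingsOnHalfM5Spacetime}, and it is worth noticing that ``the eigenvalue relation $\pmb{\Gamma}_{56789}(P_{\mathbf{8}}\psi)=(P_{\mathbf{8}}\psi)$'' and the sign-parity argument are in fact the same computation seen from two ends: the sign $\sigma=-1$ in the lemma's proof is exactly what the $\pmb{\Gamma}_5$-fixedness buys you for indices with an even number of entries $\neq 5$ (such as $\Gamma_{56789}$ and $\Gamma_5$ itself). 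You should commit to this rather than hedge, since it is also what disposes of the branch where the contraction hits $e^a$ to produce a residual $\Gamma_5$-bilinear — again killed by Lemma \ref{VanishingSpinorPairingsOnHalfM5Spacetime} once the $P_{\mathbf{8}}$-projection of the odd constant $\chi$ from \eqref{OddParameters} is in place.

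Two small bookkeeping slips: the coefficient $-\tfrac{1+s}{1+s/6}$ at $s=-3$ evaluates to $-\tfrac{-2}{1/2}=+4$, not $-4$; and the bilinear you describe as ``a 1-form proportional to $\overline{\eta}\,\pmb{\Gamma}_5(P_{\mathbf{8}}\psi)$'' is a $(2,\mathrm{even})$-form (the wedge of two degree-$(1,\mathrm{odd})$ generators), so by the degree bookkeeping $({\rm bos},{\rm ferm})$ it sits in the $(0,2)$ slot and must be argued to vanish, not to be ``absorbable into the remainder'' — the $\mathcal{O}(\{e_{a_1\cdots a_5}\})$ remainder by definition contains only terms that vanish when the $5$-index generators are set to zero, which this one does not.
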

\begin{proof}
  By Example \ref{ExceptionalSuperspacetimeAtSEqualsMinusOne} we have
  $
    H_{\mathrm{ex}_{-3}}
    \;=\;
    H
      +
    (\overline{\psi} \; \Gamma_a \eta)
     \wedge
    e^a + \mathcal{O}(\{e_{a_1 \cdots a_5}\})\,.
  $
  From this the statement follows by the definition
  \eqref{LiftedVectorField} of $v_5^{\mathrm{ex}_s}$ and
  using the identities
  $(\overline{P \psi})\Gamma_5 (P \chi) = 0$
  and
  $(\overline{P \psi})\Gamma_{56789} (P \chi) = 0$
  from
  Lemma \ref{VanishingSpinorPairingsOnHalfM5Spacetime}.
\end{proof}
\begin{remark}[Super 2-form gauge field strength and gauginos]
\label{Super2FormGaugeFieldStrengthAndGauginos}
The summand
$$
  F_{(1,1)}
  \;=\;
  (\overline{\psi} \;\Gamma_a \chi) \wedge e^a
  $$
in \eqref{DecomposingcalFexc} is exactly the supersymmetic enhancement
of the gauge curvature in 10d SYM, with $\chi$ identified
as the gaugino field (\cite{Witten86}\cite[(4.14)]{ADR86}\cite[(2.27)]{BBLPT88}).
But by the last line of \eqref{LiftedVectorField},
$\chi$ is the component of the super-exceptional lift
$v_5^{\mathrm{ex}_s}$
of the isometry along $S^1_{\mathrm{HW}}$ in the
fermionic direction defined by the extra super-exceptional 1-form
$\eta$ \eqref{FermionicExtensionOfExceptionalTangentSuperspacetime}
$$
  \iota_{v_5^{\mathrm{ex}_s}}
  \;\colon\;
  \eta \mapsto \chi
  \,.
$$
In this way, it is the extra super-exceptional fermionic
coordinate $\eta$ which is the avatar on the super-exceptional
M-theory spacetime of what becomes the gaugino field
upon compactification to heterotic M-theory on $S^1_{\mathrm{HW}}$.

Note that  an
approximate construction of the 11d gravitino
in the context of $E_8$ gauge theory
as a condensate of the gauge theory fields is given in \cite{ES}.
\end{remark}

\medskip

From Prop. \ref{ExceptionalPreimageOfPerrySchwarzLagrangian}
it is clear that we have a super-exceptional lift of the
topological Yang-Mills Lagrangian
$\mathbf{L}^{\!\!\mathrm{tYM}} = -\tfrac{1}{2} F \wedge F$
\eqref{TopologicalYM}:

\begin{defn}
  \label{SuperExceptionalTopologicalYM}
The
\emph{super-exceptional topological Yang-Mills Lagrangian}
is the wedge square of the super-exceptional 2-flux \eqref{ExceptionalF}
from Def. \ref{SuperExceptionalPSLagrangian}:
\begin{equation}
  \label{SuperExceptionalThetaAngleTerm}
  \mathbf{L}^{\!\!\mathrm{tYM}}_{\mathrm{ex}_s}
  \;\coloneqq\;
  -
  \tfrac{1}{2}
  F_{\mathrm{ex}_s}
  \wedge
  F_{\mathrm{ex}_s}\;.
\end{equation}
\end{defn}

\begin{lemma}[Super-exceptional topological Yang-Mills
as compactification of super-exceptional Perry-Schwarz]
  \label{SuperExceptionaltYMIsContractionalOfPS}
  The contraction of the super-exceptional
  Perry-Schwarz Lagrangian
  \eqref{LagrangianPSExceptional}
  with the super-exceptional isometry
  $v_5^{\mathrm{ex}_s}$ \eqref{LiftedVectorField} along $S^1_{\mathrm{HW}}$
  is the super-exceptional topological Yang-Mills Lagrangian
  \eqref{SuperExceptionalThetaAngleTerm}:
  $$
    \iota_{v_5^{\mathrm{ex}_s}}
    \mathbf{L}^{\!\mathrm{PS}}_{\mathrm{ex}_s}
    \;=\;
    \mathbf{L}^{\!\mathrm{tYM}}_{\mathrm{ex}_s}
    \,.
  $$
\end{lemma}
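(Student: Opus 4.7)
The plan is a direct verification by unpacking the two definitions and applying the graded Leibniz rule for the contraction. Starting from $\mathbf{L}^{\!\!\mathrm{PS}}_{\mathrm{ex}_s} = -\tfrac{1}{2}\, F_{\mathrm{ex}_s} \wedge \widetilde F_{\mathrm{ex}_s}$ as in Def.~\ref{SuperExceptionalPSLagrangian}(ii), and substituting the dual flux $\widetilde F_{\mathrm{ex}_s} = (i_{\mathrm{ex}_s})^\ast H_{\mathrm{ex}_s} - e^5 \wedge F_{\mathrm{ex}_s}$ from Def.~\ref{SuperExceptionalPSLagrangian}(i), I would rewrite
\begin{equation*}
  \mathbf{L}^{\!\!\mathrm{PS}}_{\mathrm{ex}_s}
    \;=\;
  -\tfrac{1}{2}\, F_{\mathrm{ex}_s} \wedge (i_{\mathrm{ex}_s})^\ast H_{\mathrm{ex}_s}
    \;+\;
  \tfrac{1}{2}\, e^5 \wedge F_{\mathrm{ex}_s} \wedge F_{\mathrm{ex}_s},
\end{equation*}
using that $F_{\mathrm{ex}_s}$ is of even form-degree, so that it commutes with the $1$-form $e^5$ up to a trivial sign. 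The problem is thereby reduced to contracting each of the two summands on the right.

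Next I would apply $\iota_{v_5^{\mathrm{ex}_s}}$, viewed as an even super-derivation of form-degree shift $-1$, using three key inputs: the tautology $\iota_{v_5^{\mathrm{ex}_s}} (i_{\mathrm{ex}_s})^\ast H_{\mathrm{ex}_s} = F_{\mathrm{ex}_s}$, which is the very definition of $F_{\mathrm{ex}_s}$ in Def.~\ref{SuperExceptionalPSLagrangian}(i); the normalization $\iota_{v_5^{\mathrm{ex}_s}} e^5 = 1$, supplied by Prop.~\ref{LiftOfVectorField}; and the nilpotency $\iota_{v_5^{\mathrm{ex}_s}} F_{\mathrm{ex}_s} = 0$, which holds because $v_5^{\mathrm{ex}_s}$ is an even super-vector field and so $\iota_{v_5^{\mathrm{ex}_s}}^2 = 0$ on the CE-algebra of $\big(\mathbb{R}^{5,1\vert \mathbf{8}}\times\mathbb{R}^1\big)_{\mathrm{ex}_s}$. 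Expanding the Leibniz products termwise and collecting the surviving contributions reassembles them into $-\tfrac{1}{2}\, F_{\mathrm{ex}_s} \wedge F_{\mathrm{ex}_s}$, which is $\mathbf{L}^{\!\!\mathrm{tYM}}_{\mathrm{ex}_s}$ by Def.~\ref{SuperExceptionalTopologicalYM}.

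The main obstacle is graded-sign bookkeeping: the factor $e^5$ is a bosonic $1$-form, so its passage across the even form-degree factors $F_{\mathrm{ex}_s}$ in the Leibniz expansion requires care, and the contributions from the two summands above must be verified to combine---rather than cancel---into the single term $-\tfrac{1}{2}\, F_{\mathrm{ex}_s} \wedge F_{\mathrm{ex}_s}$. Once this is settled, the lemma becomes the concrete infinitesimal manifestation of the $S^1_{\mathrm{HW}}$-compactification at the level of super-exceptional Lagrangian forms, compatible with the later identification (Rem.~\ref{D4WZ}) of the super-exceptional topological Yang-Mills term as the Wess-Zumino coupling of the D4-brane.
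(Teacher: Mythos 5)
Your strategy is the same as the paper's proof: unpack the definitions, apply the graded Leibniz rule for $\iota_{v_5^{\mathrm{ex}_s}}$, and use nilpotency together with the normalization $\iota_{v_5^{\mathrm{ex}_s}} e^5 = 1$. But the ``main obstacle'' you flag---verifying that the contributions combine rather than cancel---is precisely where the plan fails if carried out. The dual flux $\widetilde F_{\mathrm{ex}_s}$ of Def.~\ref{SuperExceptionalPSLagrangian}(i) is horizontal by construction: $\iota_{v_5^{\mathrm{ex}_s}}\widetilde F_{\mathrm{ex}_s} = F_{\mathrm{ex}_s} - \bigl(\iota_{v_5^{\mathrm{ex}_s}}e^5\bigr)F_{\mathrm{ex}_s} + e^5 \wedge \iota_{v_5^{\mathrm{ex}_s}} F_{\mathrm{ex}_s} = F_{\mathrm{ex}_s} - F_{\mathrm{ex}_s} + 0 = 0$. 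Since $\iota_{v_5^{\mathrm{ex}_s}} F_{\mathrm{ex}_s} = 0$ as well by nilpotency, \emph{both} wedge factors of $\mathbf{L}^{\!\!\mathrm{PS}}_{\mathrm{ex}_s} = -\tfrac12 F_{\mathrm{ex}_s}\wedge\widetilde F_{\mathrm{ex}_s}$ lie in the kernel of $\iota_{v_5^{\mathrm{ex}_s}}$, so the Leibniz rule forces $\iota_{v_5^{\mathrm{ex}_s}}\mathbf{L}^{\!\!\mathrm{PS}}_{\mathrm{ex}_s} = 0$, not $\mathbf{L}^{\!\!\mathrm{tYM}}_{\mathrm{ex}_s}$. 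In your own decomposition, the first summand $-\tfrac12 F_{\mathrm{ex}_s}\wedge (i_{\mathrm{ex}_s})^\ast H_{\mathrm{ex}_s}$ contributes $-\tfrac12 F_{\mathrm{ex}_s}\wedge F_{\mathrm{ex}_s}$ under contraction, the second summand $+\tfrac12 e^5\wedge F_{\mathrm{ex}_s}\wedge F_{\mathrm{ex}_s}$ contributes $+\tfrac12 F_{\mathrm{ex}_s}\wedge F_{\mathrm{ex}_s}$, and they cancel.

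This is not special to your writeup: the paper's own displayed computation passes from $\iota_{v_5^{\mathrm{ex}_s}}\bigl((H - e^5\wedge\iota H)\wedge\iota H\bigr)$ to $(\iota H)\wedge(\iota H)$ without accounting for the Leibniz contribution from the $e^5\wedge\iota H$ summand, which equals $(\iota H)\wedge(\iota H)$ and would cancel the term that is kept. The identity in the lemma does hold under the alternative normalization $\mathbf{L}^{\!\!\mathrm{PS}}_{\mathrm{ex}_s} := -\tfrac12 F_{\mathrm{ex}_s}\wedge(i_{\mathrm{ex}_s})^\ast H_{\mathrm{ex}_s}$ (i.e.\ without the $e^5$-correction in $\widetilde F_{\mathrm{ex}_s}$), which is still compatible with Prop.~\ref{ExceptionalPreimageOfPerrySchwarzLagrangian}(ii) because that proposition only fixes $\mathbf{L}^{\!\!\mathrm{PS}}_{\mathrm{ex}_s}\wedge e^5$ and the two candidates differ by a multiple of $e^5$. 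But that choice is in tension with the final line of the proof of Prop.~\ref{TrivializationOf7CocycleOnExceptionalHalfM5}, which produces $d\bigl(-\tfrac12\widetilde F_{\mathrm{ex}_s}\wedge F_{\mathrm{ex}_s}\bigr)$ on the nose. Before accepting this lemma you should settle which normalization of $\mathbf{L}^{\!\!\mathrm{PS}}_{\mathrm{ex}_s}$ is in force and check that both it and Prop.~\ref{TrivializationOf7CocycleOnExceptionalHalfM5} hold for the same choice.
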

\begin{proof}
We need to show that
  $$
    \iota_{v_5^{\mathrm{ex}_s}}
    \big(
      \widetilde F_{\mathrm{ex}_s}
      \wedge
      F_{\mathrm{ex}_s}
    \big)
    \;=\;
    F_{\mathrm{ex}_s}
    \wedge
    F_{\mathrm{ex}_s}\;.
  $$
But this follows directly from the definitions and the
the fact that the contraction is a graded derivation of degree
$(-1,\mathrm{even})$, hence in particular nilpotent. Indeed, we have
$$
  \begin{aligned}
    \iota_{v_5^{\mathrm{ex}_s}}
    \big(
      \widetilde F_{\mathrm{ex}_s}
      \wedge
      F_{\mathrm{ex}_s}
    \big)
    & =
    \iota_{v_5^{\mathrm{ex}_s}}
    \Big(
      \big(
        H_{\mathrm{ex}_s}
        -
        e^5 \wedge \iota_{v_5^{\mathrm{ex}_s}}H_{\mathrm{ex}_s}
      \big)
      \wedge
      \iota_{v_5^{\mathrm{ex}_s}}H_{\mathrm{ex}_s}
    \Big)
    \\
    & =
    \big(
      \iota_{v_5^{\mathrm{ex}_s}} H_{\mathrm{ex}_s}
    \big)
    \wedge
    \big(
      \iota_{v_5^{\mathrm{ex}_s}}H_{\mathrm{ex}_s}
    \Big)
    \\
    & =
    F_{\mathrm{ex}_s}
    \wedge
    F_{\mathrm{ex}_s}\;.
  \end{aligned}
$$

\vspace{-7mm}
\end{proof}

\medskip

In order to see the super-exceptional Perry-Schwarz Lagrangian
arise from the super-exceptional M5-brane cocycle,
we now first consider the M5-brane sigma-model
wrapped on the $S^1_{\mathrm{HW}}$-fiber
(see Remark \ref{TheHalfM5Spacetime})
of super-exceptional M-theory spacetime (Def. \ref{ExceptionalTangentSuperSpacetime}).
By the general rules of (double-)dimensional reduction of super $p$-brane
cocycles \cite[Sec. 3]{FSS16a}\cite[Sec. 3]{FSS16b}\cite[Sec. 2.2]{BSS18},
this means that we are to contract the plain M5-brane cocycle
\eqref{TheMBraneCocycles} with the vector field corresponding
to the flow along this fiber, hence with $v_5$ \eqref{TheVectorField}.
The following definition lifts this situation to
super-exceptional spacetime.

\begin{defn}[Super-exceptional circle compactification of M5 cocycle]
\label{DimensionalReductionOfM5BraneSuperCocycle}
The \emph{compactification on $S^1_{\mathrm{HW}}$}
of the super-exceptional M5-brane cocycle
$\mathbf{dL}^{\!\!\mathrm{WZ}}_{\mathrm{ex}_s}$
(Def. \ref{ExceptionalM5SuperCocycle}) pulled back along the
super-exceptional embedding $i_{\mathrm{ex}_s}$
\eqref{M5FixedLocusNormallyEnhanced} to the normally thickened
super-exceptional $\tfrac{1}{2}\mathrm{M5}$-spacetime
(Def. \ref{HalfM5LocusAndItsExceptionalTangentBundle})
is its contraction with the super-exceptional lift $v_5^{\mathrm{ex}_s}$
(Prop. \ref{LiftOfVectorField}) of the vector field $v_5$ along $S^1_{\mathrm{HW}}$
\eqref{HalfM5BraneSetup}:
\begin{equation}
  \label{DimensionalReductionOfM5Cocycle}
  \xymatrix{
    \big(
      \mathbb{R}^{5,1\vert \mathbf{8}}
      \times
      \mathbb{R}^1
    \big)_{\mathrm{ex}_s}
    \ar[rrrrrr]|-{\footnotesize
      \;\;\overset{
        \mathclap{
        \mbox{
          \tiny
          \color{blue}
          \begin{tabular}{c}
            contraction with
            \\
            super-exceptional
            \\
            isometry along $S^1_{\mathrm{HW}}$
          \end{tabular}
        }
        }
      }{
      \overbrace{
        \mathclap{ \phantom{ a \atop a } }
        \iota_{v_5^{\mathrm{ex}_s}}
      }}
      \;\;\;
      \underset{
        \mathclap{
        \mbox{
          \tiny
          \color{blue}
          \begin{tabular}{c}
            restriction to
            \\
            super-exceptional
            \\
            $\tfrac{1}{2}\mathrm{M5}$
          \end{tabular}
        }
        }
      }{
        \underbrace{
          \mathclap{ \phantom{ a \atop a } }
          (i_{\mathrm{ex}_s})^\ast
        }
      }
      \;\;\;
      \overset{
        \mathclap{
        \mbox{
          \tiny
          \color{blue}
          \begin{tabular}{c}
            super-exceptional
            \\
            M5-brane cocycle
          \end{tabular}
        }
        }
      }{
      \overbrace{
        \mathclap{ \phantom{ a \atop a } }
        \mathbf{dL}^{\!\!\mathrm{WZ}}_{\mathrm{ex}_s}
        }
      }
      \;\;
    }
        &&&&&&
    B^6 \mathbb{R}
  }.
\end{equation}
  Notice that this is indeed still a cocycle,
  in that it is closed,
  $
    d\,
    \big(
      \,
      \iota_{v_5^{\mathrm{ex}_s}}
      (i_{\mathrm{ex}_s})^\ast
      \mathbf{dL}^{\!\!\mathrm{WZ}}_{\mathrm{ex}_s}
    \big)
    =
    0
      $,
  by \eqref{LieDerivativeAlongExceptionalLiftOfv5VanishesOnHalfM5}
  in Prop. \ref{LiftOfVectorField}.
\end{defn}
\begin{remark}[Cocycle for $\mathrm{M5}_{/ S^1}$ as $\mathrm{D4}+\mathrm{KK}$]
  \label{CocycleForM5OnS1AsD4PlusKK}
  From the type $\mathrm{I}^\prime$ perspective on
  the $\tfrac{1}{2}\mathrm{M5}$-locus (as in Remark \ref{TheHalfM5Spacetime})
  the compactified M5-cocycle of Def. \ref{DimensionalReductionOfM5BraneSuperCocycle}
  would be that of a D4-brane
  inside a $\tfrac{1}{2}\mathrm{NS5} = \mathrm{NS5} \cap \mathrm{O8}$,
  by the general rules of dimensional reduction of brane cocycles
  \cite[Sec. 3]{FSS16a}\cite[Sec. 3]{FSS16b}\cite[Sec. 2.2]{BSS18}.
  However, since we need not consider \emph{double} dimensional reduction,
  in that the super-exceptional coordinate functions along $v_5$
  are still present in the normally thickened $\tfrac{1}{2}\mathrm{M5}$ locus
  $\big( \mathbb{R}^{5,1\vert \mathbf{8}} \times \mathbb{R}^1\big)_{\mathrm{ex}_s}$
  (Def. \ref{HalfM5LocusAndItsExceptionalTangentBundle}),
  fields on this would-be D4 may still depend on the M-theory circle direction,
  hence have KK-modes along the circle.
  In this sense, Def. \ref{DimensionalReductionOfM5BraneSuperCocycle}
  exhibits the brane cocycle corresponding to the
  perspective on the
  M5-brane as a non-perturbative D4-brane with KK-modes
  included, as considered
  in \cite{Douglas10}\cite{LPS10} (see \cite[3.4.3]{Lambert19}).
\end{remark}

We now establish the following super-exceptional analog of the super-embedding mechanism.

\begin{prop}[PS Lagrangian trivializes compactified M5-cocycle along super-exceptional embedding]
\label{TrivializationOf7CocycleOnExceptionalHalfM5}
The compactification of the restriction of the
super-exceptional
M5-brane cocycle (Def. \ref{DimensionalReductionOfM5BraneSuperCocycle})
along the embedding
of the super-exceptional $\tfrac{1}{2}\mathrm{M5}$ spacetime
(Def. \ref{HalfM5LocusAndItsExceptionalTangentBundle})
is trivialized by the super-exceptional PS Lagrangian
\eqref{LagrangianPSExceptional}:
\begin{equation}
  \label{TheTrivialization}
  \iota_{v_5^{\mathrm{ex}_s}}
  (i_{\mathrm{ex}_s})^\ast
  \mathbf{dL}^{\!\!\mathrm{WZ}}_{\mathrm{ex}_s}
  \;=\;
  d\,\mathbf{L}^{\!\!\mathrm{PS}}_{\mathrm{ex}_s}.
\end{equation}
\end{prop}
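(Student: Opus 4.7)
\medskip
\noindent \textbf{Proof proposal.} The plan is to unpack both sides of the claimed identity, reduce each to the common expression $\tfrac{1}{2}\widetilde F_{\mathrm{ex}_s}\wedge dF_{\mathrm{ex}_s}$, and match. Write $\mathcal{H}\coloneqq (i_{\mathrm{ex}_s})^\ast H_{\mathrm{ex}_s}$, so that by Def.\ \ref{SuperExceptionalPSLagrangian} we have $\mathcal{H}=\widetilde F_{\mathrm{ex}_s}+e^5\wedge F_{\mathrm{ex}_s}$, and by Def.\ \ref{ExceptionalM5SuperCocycle} pulled back to the super-exceptional $\tfrac{1}{2}\mathrm{M5}$-spacetime (Def.\ \ref{HalfM5LocusAndItsExceptionalTangentBundle}) we have
\begin{equation*}
  (i_{\mathrm{ex}_s})^\ast \mathbf{dL}^{\!\!\mathrm{WZ}}_{\mathrm{ex}_s}
  \;=\;
  \tfrac{1}{2}\mathcal{H}\wedge d\mathcal{H}
  \;+\;
  (i_{\mathrm{ex}_s})^\ast(\pi_{\mathrm{ex}_s})^\ast\mu_{{}_{\rm M5}}\,.
\end{equation*}

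\noindent \textbf{Step 1 (eliminate the $\mu_{{}_{\rm M5}}$-term).} Example \ref{MBraneCochainsPulledBackToSuperExceptionalBraneLoci} shows that $(i_{\mathrm{ex}_s})^\ast(\pi_{\mathrm{ex}_s})^\ast\mu_{{}_{\rm M5}}$ is a sum over indices $a_i\in\{0,1,2,3,4,5'\}$ of terms involving only $e^{a_i}$ and $P_{{}_{\mathbf{8}}}\psi$; it contains no $e^5$, no $\eta$, and no five-index generator $e_{a_1\cdots a_5}$. Since $v_5^{\mathrm{ex}_s}$ from Prop.\ \ref{LiftOfVectorField} contracts nontrivially only against $e^5$, against $\eta$, and against $e_{56789}$, it follows that $\iota_{v_5^{\mathrm{ex}_s}}(i_{\mathrm{ex}_s})^\ast(\pi_{\mathrm{ex}_s})^\ast\mu_{{}_{\rm M5}}=0$.

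\noindent \textbf{Step 2 (reduce the Chern--Simons-like term).} Using the graded Leibniz rule for $\iota_{v_5^{\mathrm{ex}_s}}$ (with $\mathcal{H}$ of degree $3$) and Cartan's magic formula together with $\mathcal{L}_{v_5^{\mathrm{ex}_s}}=0$ from Prop.\ \ref{LiftOfVectorField}, we get $\iota_{v_5^{\mathrm{ex}_s}}d\mathcal{H}=-dF_{\mathrm{ex}_s}$ and therefore
\begin{equation*}
  \iota_{v_5^{\mathrm{ex}_s}}\bigl(\mathcal{H}\wedge d\mathcal{H}\bigr)
  \;=\;
  F_{\mathrm{ex}_s}\wedge d\mathcal{H}
  \;+\;
  \mathcal{H}\wedge dF_{\mathrm{ex}_s}\,.
\end{equation*}
Next, since $d\mathcal{H}=(i_{\mathrm{ex}_s})^\ast(\pi_{\mathrm{ex}_s})^\ast\mu_{{}_{\rm M2}}$, Lemma \ref{PullbackOfM2CocycleToHalfM5} gives $d\mathcal{H}=e^5\wedge\iota_{v_5^{\mathrm{ex}_s}}d\mathcal{H}=-e^5\wedge dF_{\mathrm{ex}_s}$. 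Substituting and using that $F_{\mathrm{ex}_s}$ has even total degree, so commutes past $e^5$, we regroup to obtain
\begin{equation*}
  \tfrac{1}{2}\iota_{v_5^{\mathrm{ex}_s}}(i_{\mathrm{ex}_s})^\ast\mathbf{dL}^{\!\!\mathrm{WZ}}_{\mathrm{ex}_s}
  \;=\;
  \tfrac{1}{2}\bigl(\mathcal{H}-e^5\wedge F_{\mathrm{ex}_s}\bigr)\wedge dF_{\mathrm{ex}_s}
  \;=\;
  \tfrac{1}{2}\widetilde F_{\mathrm{ex}_s}\wedge dF_{\mathrm{ex}_s}\,.
\end{equation*}

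\noindent \textbf{Step 3 (match the right-hand side).} Computing $d\mathbf{L}^{\!\!\mathrm{PS}}_{\mathrm{ex}_s}=-\tfrac{1}{2}\bigl(dF_{\mathrm{ex}_s}\wedge\widetilde F_{\mathrm{ex}_s}+F_{\mathrm{ex}_s}\wedge d\widetilde F_{\mathrm{ex}_s}\bigr)$ requires the vanishing of $d\widetilde F_{\mathrm{ex}_s}$. Expanding $d\widetilde F_{\mathrm{ex}_s}=d\mathcal{H}-de^5\wedge F_{\mathrm{ex}_s}+e^5\wedge dF_{\mathrm{ex}_s}$ and plugging in $d\mathcal{H}=-e^5\wedge dF_{\mathrm{ex}_s}$, the outer terms cancel and we are left with $d\widetilde F_{\mathrm{ex}_s}=-de^5\wedge F_{\mathrm{ex}_s}$. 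The crucial point is now that on the super-exceptional $\tfrac{1}{2}\mathrm{M5}$-spacetime the defining differential of $e^5$ reads $de^5=(\overline{P_{{}_{\mathbf{8}}}\psi})\Gamma^5(P_{{}_{\mathbf{8}}}\psi)$, which vanishes identically by Lemma \ref{VanishingSpinorPairingsOnHalfM5Spacetime}. Hence $d\widetilde F_{\mathrm{ex}_s}=0$, and therefore $d\mathbf{L}^{\!\!\mathrm{PS}}_{\mathrm{ex}_s}=-\tfrac{1}{2}dF_{\mathrm{ex}_s}\wedge\widetilde F_{\mathrm{ex}_s}=\tfrac{1}{2}\widetilde F_{\mathrm{ex}_s}\wedge dF_{\mathrm{ex}_s}$ (the sign flip from graded commutativity of two odd-degree forms), matching Step 2.

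\noindent The main subtlety of the argument is the vanishing $d\widetilde F_{\mathrm{ex}_s}=0$, which is not a formal consequence of the Chern--Simons structure but hinges on the specific spinor projection defining the $\tfrac{1}{2}\mathrm{M5}$-locus: it is precisely the identity $\overline{P_{{}_{\mathbf{8}}}\psi}\,\Gamma^5\,P_{{}_{\mathbf{8}}}\psi=0$ of Lemma \ref{VanishingSpinorPairingsOnHalfM5Spacetime} that forces $e^5$ to be closed, and without this the right-hand side $d\mathbf{L}^{\!\!\mathrm{PS}}_{\mathrm{ex}_s}$ would pick up extra fermionic bilinears incompatible with the clean expression on the left. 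In this sense the result codifies the fact that the $\tfrac{1}{2}\mathrm{M5}$ spinor structure is exactly what is required for the super-exceptional Perry--Schwarz Lagrangian to serve as a relative trivialization of the compactified M5 cocycle.
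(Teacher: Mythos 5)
Your argument is correct and uses exactly the same key ingredients as the paper's proof: the vanishing of the restricted $\mu_{{}_{\rm M5}}$-contraction via Lemma \ref{PullbackOfM2CocycleToHalfM5} / Example \ref{MBraneCochainsPulledBackToSuperExceptionalBraneLoci}, the vanishing Lie derivative \eqref{LieDerivativeAlongExceptionalLiftOfv5VanishesOnHalfM5} to exchange $\iota_{v_5^{\mathrm{ex}_s}}$ and $d$, the projection form of $d\mathcal{H}$ from Lemma \ref{PullbackOfM2CocycleToHalfM5}, and crucially $d e^5 = 0$ on the $\tfrac{1}{2}\mathrm{M5}$-locus from Lemma \ref{VanishingSpinorPairingsOnHalfM5Spacetime}.

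The only genuine (and mild) difference is organizational: the paper carries out the contraction as a single integration-by-parts chain \eqref{SummandForContractedCocycleSecond}, never forming $dF_{\mathrm{ex}_s}$ or $d\widetilde F_{\mathrm{ex}_s}$ as separate objects, and directly produces a total differential $d\big(-\tfrac12 \widetilde F_{\mathrm{ex}_s}\wedge F_{\mathrm{ex}_s}\big)$. You instead reduce both sides to the common intermediary $\tfrac{1}{2}\widetilde F_{\mathrm{ex}_s}\wedge d F_{\mathrm{ex}_s}$, which requires establishing $d\widetilde F_{\mathrm{ex}_s}=0$ as a separate claim. Your route makes the role of the spinor projection more visible (since $d\widetilde F_{\mathrm{ex}_s}= -de^5\wedge F_{\mathrm{ex}_s}$ vanishes purely because of the $P_{\mathbf 8}$-projection), whereas the paper buries $de^5=0$ inside a Leibniz-rule step; both are the same mathematics.

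One small slip: in your Step 2 the displayed equation carries an extraneous factor $\tfrac{1}{2}$ on the left-hand side; it should read $\iota_{v_5^{\mathrm{ex}_s}}(i_{\mathrm{ex}_s})^\ast\mathbf{dL}^{\!\!\mathrm{WZ}}_{\mathrm{ex}_s}=\tfrac{1}{2}\widetilde F_{\mathrm{ex}_s}\wedge dF_{\mathrm{ex}_s}$ (the $\tfrac12$ inside $\mathbf{dL}^{\!\!\mathrm{WZ}}_{\mathrm{ex}_s}$ is already absorbed). As written the display is off by a factor of two, though your subsequent Step 3 makes clear what was intended.
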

\begin{proof}
Unravelling the definitions, we have to show that
$$
  \begin{aligned}
    \iota_{v_5^{\mathrm{ex}_s}}
    (i_{\mathrm{ex}_s})^\ast
    \big(
      (\pi_{\mathrm{ex}_s})^\ast \mu_{{}_{\rm M5}}
      +
      \tfrac{1}{2}
      H_{\mathrm{ex}_s}
      \wedge
      d H_{\mathrm{ex}_s}
    \big)
    & =
    d
    \big(
      -
      \tfrac{1}{2}
      F_{\mathrm{ex}_s}
      \wedge
      \widetilde F_{\mathrm{ex}_s}
    \big)
    \,.
  \end{aligned}
$$
For the first summand on the left, we immediately obtain
\begin{equation}
\label{FirstSummandForContractedCocycle}
\begin{aligned}
  \iota_{v_5^{\mathrm{ex}_s}}
  (i_{\mathrm{ex}_s})^\ast
  (\pi_{\mathrm{ex}_s})^\ast
  \mu_{{}_{\rm M5}}
  & =
  (\pi^{\sfrac{1}{2}\mathrm{M5}}_{\mathrm{ex}_s})^\ast
  \underset{
    = 0
  }{
  \underbrace{
    \big(
      i^\ast
      \iota_{v_5}
      \mu_{{}_{\rm M5}}
    \big)
  }
  }
  \\
  & =
  0\;,
\end{aligned}
\end{equation}
by Lemma \ref{PullbackOfM2CocycleToHalfM5};
see Example \ref{MBraneCochainsPulledBackToSuperExceptionalBraneLoci}.
For the second summand
(or rather twice the second summand, for
notational convenience) we compute as follows:
\begin{equation}
  \label{SummandForContractedCocycleSecond}
  \begin{aligned}
    2
    \,
    \iota_{v_5}^{\mathrm{ex}_s}
    (i_{\mathrm{ex}_s})^\ast
    \Big(
      \tfrac{1}{2}
      H_{\mathrm{ex}_s}
      \wedge
      d H_{\mathrm{ex}_s}
    \Big)
    & =
    \iota_{v_5}^{\mathrm{ex}_s}
    (i_{\mathrm{ex}_s})^\ast
    \Big(
      H_{\mathrm{ex}_s}
      \wedge
      d H_{\mathrm{ex}_s}
    \Big)
    \\
    & =
    \iota_{v_5}^{\mathrm{ex}_s}
    \Big(
      \big(
        (i_{\mathrm{ex}_s})^\ast
        H_{\mathrm{ex}_s}
      \big)
      \wedge
      \big(
        (i_{\mathrm{ex}_s})^\ast
        d
        H_{\mathrm{ex}_s}
      \big)
    \Big)
    \\
    & =
      \big(
        \iota_{v_5}^{\mathrm{ex}_s}
        (i_{\mathrm{ex}_s})^\ast
        H_{\mathrm{ex}_s}
      \big)
      \wedge
      \big(
        (i_{\mathrm{ex}_s})^\ast
        d
        H_{\mathrm{ex}_s}
      \big)
    -
      \big(
        (i_{\mathrm{ex}_s})^\ast
        H_{\mathrm{ex}_s}
      \big)
      \wedge
      \big(
        \iota_{v_5}^{\mathrm{ex}_s}
        (i_{\mathrm{ex}_s})^\ast
        d
        H_{\mathrm{ex}_s}
      \big)
    \\
    & =
      \big(
        \iota_{v_5}^{\mathrm{ex}_s}
        (i_{\mathrm{ex}_s})^\ast
        H_{\mathrm{ex}_s}
      \big)
      \wedge
      \big(
        (i_{\mathrm{ex}_s})^\ast
        d
        H_{\mathrm{ex}_s}
      \big)
    +
      \big(
        (i_{\mathrm{ex}_s})^\ast
        H_{\mathrm{ex}_s}
      \big)
      \wedge
      \Big(
        d
        \big(
          \iota_{v_5}^{\mathrm{ex}_s}
          (i_{\mathrm{ex}_s})^\ast
          H_{\mathrm{ex}_s}
        \big)
      \Big)
    \\
    & =
      2
      \big(
        \iota_{v_5}^{\mathrm{ex}_s}
        (i_{\mathrm{ex}_s})^\ast
        H_{\mathrm{ex}_s}
      \big)
      \wedge
      \big(
        (i_{\mathrm{ex}_s})^\ast
        d
        H_{\mathrm{ex}_s}
      \big)
    -
    d
    \Big(
      \big(
        (i_{\mathrm{ex}_s})^\ast
        H_{\mathrm{ex}_s}
      \big)
      \wedge
      \big(
        \iota_{v_5}^{\mathrm{ex}_s}
        (i_{\mathrm{ex}_s})^\ast
        H_{\mathrm{ex}_s}
      \big)
    \Big)
    \\
    & =
      2
      \big(
        \iota_{v_5}^{\mathrm{ex}_s}
        (i_{\mathrm{ex}_s})^\ast
        H_{\mathrm{ex}_s}
      \big)
      \wedge
      \big(
        e^5
        \wedge
        \iota_{v_5}^{\mathrm{ex}_s}
        (i_{\mathrm{ex}_s})^\ast
        d
        H_{\mathrm{ex}_s}
      \big)
    -
    d
    \Big(
      \big(
        (i_{\mathrm{ex}_s})^\ast
        H_{\mathrm{ex}_s}
      \big)
      \wedge
      \big(
        \iota_{v_5}^{\mathrm{ex}_s}
        (i_{\mathrm{ex}_s})^\ast
        H_{\mathrm{ex}_s}
      \big)
    \Big)
    \\
    & =
      -
      2
      e^5
      \wedge
      \big(
        \iota_{v_5}^{\mathrm{ex}_s}
        (i_{\mathrm{ex}_s})^\ast
        H_{\mathrm{ex}_s}
      \big)
      \wedge
      d
      \big(
        \iota_{v_5}^{\mathrm{ex}_s}
        (i_{\mathrm{ex}_s})^\ast
        H_{\mathrm{ex}_s}
      \big)
    -
    d
    \Big(
      \big(
        (i_{\mathrm{ex}_s})^\ast
        H_{\mathrm{ex}_s}
      \big)
      \wedge
      \big(
        \iota_{v_5}^{\mathrm{ex}_s}
        (i_{\mathrm{ex}_s})^\ast
        H_{\mathrm{ex}_s}
      \big)
    \Big)
    \\
    & =
    d
    \Big(
      e^5
      \wedge
      \big(
        \iota_{v_5}^{\mathrm{ex}_s}
        (i_{\mathrm{ex}_s})^\ast
        H_{\mathrm{ex}_s}
      \big)
      \wedge
      \big(
        \iota_{v_5}^{\mathrm{ex}_s}
        (i_{\mathrm{ex}_s})^\ast
        H_{\mathrm{ex}_s}
      \big)
    \Big)
    -
    d
    \Big(
      \big(
        (i_{\mathrm{ex}_s})^\ast
        H_{\mathrm{ex}_s}
      \big)
      \wedge
      \big(
        \iota_{v_5}^{\mathrm{ex}_s}
        (i_{\mathrm{ex}_s})^\ast
        H_{\mathrm{ex}_s}
      \big)
    \Big)
    \\
    & =
    d
    \Big(
      -
      \big(
        (i_{\mathrm{ex}_s})^\ast
        H_{\mathrm{ex}_s}
        -
        e^5 \wedge
        \iota_{v_5}^{\mathrm{ex}_s}
        (i_{\mathrm{ex}_s})^\ast
        H_{\mathrm{ex}_s}
      \big)
      \wedge
      \big(
        \iota_{v_5}^{\mathrm{ex}_s}
        (i_{\mathrm{ex}_s})^\ast
        H_{\mathrm{ex}_s}
      \big)
    \Big)
    \\
    & =
    d
    \Big(
      -
      \widetilde F_{\mathrm{ex}_s}
      \wedge
      F_{\mathrm{ex}_s}
    \Big)
    \\
    & =
    2\,
    d
    \Big(
      -
      \tfrac{1}{2}
      \widetilde F_{\mathrm{ex}_s}
      \wedge
      F_{\mathrm{ex}_s}
    \Big)
    \,.
  \end{aligned}
\end{equation}
Here the first step just collects the factors.
The second fact uses that pullback is an algebra homomorphism,
by definition. The third step uses that contraction
with $v_5^{\mathrm{ex}_s}$ is a graded derivation of bi-degree
$(-1,\mathrm{even})$.
In the fourth step we commute
the differential in the second summand,
first with the pullback operation
(using that pullback is in fact a dg-algebra homomorphism, by definition)
and then with the contraction operation, using that the
corresponding Lie derivative vanishes, by
\eqref{LieDerivativeAlongExceptionalLiftOfv5VanishesOnHalfM5} in
Prop. \ref{LiftOfVectorField}.
In the fifth step we use that the differential
(commutes with pullback, as before, and) is
a graded derivation of degree $(1,\mathrm{even})$.
In the sixth step we realize the presence of the projection
$e^5 \wedge \iota_{v_5}^{\mathrm{ex}_s}$ according to
Lemma \ref{PullbackOfM2CocycleToHalfM5}, in view of
$d H_{\mathrm{ex}_s} = (\pi_{\mathrm{ex}_s})^\ast \mu_{{}_{\rm M2}}$
\eqref{Hexs}.
In the seventh step we again commute the differential
with pullback and with contraction, as before.
In the eighth step we again use the derivation property
of the differential to collect a total differential,
observing that $d e^5 = 0$ holds on the super-exceptional
$\tfrac{1}{2}\mathrm{M5}$-spacetime,
by Lemma \ref{VanishingSpinorPairingsOnHalfM5Spacetime}.
In the ninth step we collect terms and identify,
in the tenth step, the super-exceptional 2-flux and its dual,
from Def. \ref{SuperExceptionalPSLagrangian}.
Finally, in the last step we split off the factor of
2 again, just for emphasis.
\end{proof}

As a corollary we observe the following:

\begin{prop}[Super-exceptional YM-Lagrangian is closed and horizontal]
  \label{SuperExceptionalYangMillsLagrangianIsClosedAndHorizontal}
  The super-exceptional topological Yang-Mills Lagrangian
  (Def. \ref{SuperExceptionalTopologicalYM})
  is
  \vspace{-1mm}
\item  {\bf (i)} closed:
  $
    d \mathbf{L}^{\!\!\mathrm{tYM}}_{\mathrm{ex}_s}
    =
    0;
  $
\vspace{-1mm}
\item  {\bf (ii)} super-exceptionally horizontal (Def. \ref{SuperExceptionalHorizontalProjection}):
  $
    \iota_{v_5^{\mathrm{ex}_s}}
    \mathbf{L}^{\!\!\mathrm{tYM}}_{\mathrm{ex}_s}
    =
    0.
  $
\end{prop}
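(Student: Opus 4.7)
The plan is to prove both parts by reducing to the nilpotency of the contraction operation $\iota_{v_5^{\mathrm{ex}_s}}$ and leveraging the results already established, rather than performing any direct component-level computation.

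For part (ii), the horizontality, this is essentially immediate. Since $F_{\mathrm{ex}_s} = \iota_{v_5^{\mathrm{ex}_s}} (i_{\mathrm{ex}_s})^\ast H_{\mathrm{ex}_s}$ has even form degree, the graded Leibniz rule for the contraction gives
\begin{equation*}
  \iota_{v_5^{\mathrm{ex}_s}}(F_{\mathrm{ex}_s} \wedge F_{\mathrm{ex}_s})
  \;=\;
  2\, F_{\mathrm{ex}_s} \wedge \iota_{v_5^{\mathrm{ex}_s}} F_{\mathrm{ex}_s}
  \;=\;
  2\, F_{\mathrm{ex}_s} \wedge \iota_{v_5^{\mathrm{ex}_s}} \iota_{v_5^{\mathrm{ex}_s}} (i_{\mathrm{ex}_s})^\ast H_{\mathrm{ex}_s}
  \;=\;
  0,
\end{equation*}
since $\iota_{v_5^{\mathrm{ex}_s}} \circ \iota_{v_5^{\mathrm{ex}_s}} = 0$.

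For part (i), the closedness, the key is to combine Lemma \ref{SuperExceptionaltYMIsContractionalOfPS} with Prop. \ref{TrivializationOf7CocycleOnExceptionalHalfM5} and the vanishing of the Lie derivative along $v_5^{\mathrm{ex}_s}$ from Prop. \ref{LiftOfVectorField}. Using Cartan's magic formula and these three ingredients in sequence,
\begin{equation*}
  d\, \mathbf{L}^{\!\!\mathrm{tYM}}_{\mathrm{ex}_s}
  \;=\;
  d\, \iota_{v_5^{\mathrm{ex}_s}} \mathbf{L}^{\!\!\mathrm{PS}}_{\mathrm{ex}_s}
  \;=\;
  \underbrace{\mathcal{L}_{v_5^{\mathrm{ex}_s}} \mathbf{L}^{\!\!\mathrm{PS}}_{\mathrm{ex}_s}}_{=\,0}
  \;-\;
  \iota_{v_5^{\mathrm{ex}_s}} d\, \mathbf{L}^{\!\!\mathrm{PS}}_{\mathrm{ex}_s}
  \;=\;
  -\,\iota_{v_5^{\mathrm{ex}_s}}\iota_{v_5^{\mathrm{ex}_s}} (i_{\mathrm{ex}_s})^\ast \mathbf{dL}^{\!\!\mathrm{WZ}}_{\mathrm{ex}_s}
  \;=\;
  0,
\end{equation*}
again by nilpotency of $\iota_{v_5^{\mathrm{ex}_s}}$.

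There is essentially no obstacle here, as the work has already been done in the preceding lemma and proposition. The only subtlety to check carefully is the sign in Cartan's formula and the form degree parity that enters the Leibniz rule for the contraction on the wedge square $F_{\mathrm{ex}_s} \wedge F_{\mathrm{ex}_s}$. Given the structure of the argument, the proof can be written out compactly in a handful of lines, with each of the two parts occupying its own short display.
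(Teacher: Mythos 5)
Your proof is correct and follows essentially the same route as the paper. For part (i) you match the paper step for step: apply Lemma \ref{SuperExceptionaltYMIsContractionalOfPS} to write $\mathbf{L}^{\!\!\mathrm{tYM}}_{\mathrm{ex}_s} = \iota_{v_5^{\mathrm{ex}_s}}\mathbf{L}^{\!\!\mathrm{PS}}_{\mathrm{ex}_s}$, commute $d$ past $\iota_{v_5^{\mathrm{ex}_s}}$ (up to sign) via the vanishing Lie derivative from Prop.~\ref{LiftOfVectorField}, invoke Prop.~\ref{TrivializationOf7CocycleOnExceptionalHalfM5}, and finish by nilpotency of the contraction. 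For part (ii) the paper simply applies $\iota_{v_5^{\mathrm{ex}_s}}$ to the Lemma \ref{SuperExceptionaltYMIsContractionalOfPS} identity and uses nilpotency, whereas you unpack $\mathbf{L}^{\!\!\mathrm{tYM}}_{\mathrm{ex}_s} = -\tfrac12 F_{\mathrm{ex}_s}\wedge F_{\mathrm{ex}_s}$, apply the graded Leibniz rule, and reduce to $\iota_{v_5^{\mathrm{ex}_s}}\iota_{v_5^{\mathrm{ex}_s}}(i_{\mathrm{ex}_s})^\ast H_{\mathrm{ex}_s}=0$ — a cosmetically more direct route to the same nilpotency observation, with no substantive difference.
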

\begin{proof}
For the first statement, we may compute as follows:
 $$
     \begin{aligned}
       d \mathbf{L}^{\!\!\mathrm{tYM}}_{\mathrm{ex}_s}
       & =
       d
       \iota_{v_5}^{\mathrm{ex}_s} \mathbf{L}^{\!\!\mathrm{PS}}_{\mathrm{ex}_s}
       \\
       & =
       -
       \iota_{v_5}^{\mathrm{ex}_s}
       d
       \mathbf{L}^{\!\!\mathrm{PS}}_{\mathrm{ex}_s}
       \\
       & =
       \underset{
         = 0
       }{
         \underbrace{
           \iota_{v_5}^{\mathrm{ex}_s}
           \iota_{v_5}^{\mathrm{ex}_s}
         }
       }
       (i_{\mathrm{ex}_s})^\ast
       \mathbf{dL}^{\!\!\mathrm{WZ}}_{\mathrm{ex}_s}
       \\
       & = 0
       \,,
     \end{aligned}
 $$
 where we used first Lemma \ref{SuperExceptionaltYMIsContractionalOfPS},
 then \eqref{LieDerivativeAlongExceptionalLiftOfv5VanishesOnHalfM5}
 from Lemma \ref{LiftOfVectorField} and
 then Prop. \ref{TrivializationOf7CocycleOnExceptionalHalfM5},
 and finally, under the brace, we observe that contraction with
 elements in degree $(1,\mathrm{even})$ is nilpotent.
 This nilpotency also directly implies the second statement,
 by Lemma \ref{SuperExceptionaltYMIsContractionalOfPS}.
\end{proof}

\section{Super-exceptional equivariance along M-theory circle}
\label{SuperExceptionalReductionOfMTheoryCircle}

We show (Theorem \ref{SuperExceptionalPSEquivariantEnhancement} below)
that the super-exceptional Perry-Schwarz Lagrangian and the super-exceptional
topological Yang-Mills Lagrangian unify with the super-exceptional M5 WZ curvature
term into the Borel-equivariant enhancement of the super-exceptional M5-brane
cocycle with respect to the super-exceptional $S^1_{\mathrm{HW}}$ isometry
left-induced to an $\Omega S^2_{\mathrm{HW}}$-action
on the super-exceptional $\tfrac{1}{2}\mathrm{M5}$-spacetime
(Def. \ref{HomotopyQuotientOfSuperExceptionalHalfM5Spacetime}).
In order to put this in perspective, we first show
(Prop. \ref{RationalFibrationOf6dSuperspacetimeOver5d} below) that,
similarly, the little-string-extended $D=6$, $\mathcal{N} = (1,1)$,
superspacetime carries an $\Omega S^2$-action whose homotopy
quotient is the $D = 5$, $\mathcal{N} = 2$, superspacetime.

\medskip
To set the scene, we first recall how homotopy quotients are represented in
rational cohomology by (Borel-) equivariant de Rham cohomology.
From general homotopy theory we need the following
two basic facts (see \cite{NSS12}). For any kind of higher geometric spaces
(here, rational super spaces), we have:

\begin{enumerate}[{\bf (i)}]
  \vspace{-3mm}
  \item Forming based loop spaces is an equivalence
  from pointed connected spaces to $\infty$-groups,
  whose inverse is the classifying space construction
  \begin{equation}
    \label{LoopingAndDelooping}
    \xymatrix{
      \infty \mathrm{Groups}
      \ar@<+14pt>@{}[rrr]^-{ \Omega X\; \mapsfrom \;X   }
      \ar@<+6pt>@{<-}[rrr]^-{ \mbox{\color{blue} \tiny form loop space}  }_-{ \simeq }
      \ar@<-6pt>@{->}[rrr]_-{ \mbox{\color{blue} \tiny form classifying space} }
      \ar@<-14pt>@{}[rrr]_-{ G \;\mapsto  \; B G   }
      &&&
      \mathrm{Spaces}_{\!\!{\mathrm{pointed} \& \atop \mathrm{connected}}}
    }
  \end{equation}

\vspace{-3mm}
\item For $X$ a space and $G$ an $\infty$-group,
an $\infty$-action $\rho$ of $G$ on $X$
is equivalently a homotopy fiber sequence of the
following form
\begin{equation}
  \label{HomotopyQuotientFiberSequence}
  \xymatrix{
    X
    \ar[rr]^-{ \mathrm{hofib}( p_{\rho} ) }
    &&
    X \!\sslash\! G
    \ar[rr]^-{ p_{\rho} }
    &&
    B G
  },
\end{equation}
which then exhibits the space in the middle as the
homotopy quotient of $X$ by the $\infty$-action $\rho$.
\end{enumerate}

To prepare for Prop. \ref{RationalFibrationOf6dSuperspacetimeOver5d}
and Theorem \ref{SuperExceptionalPSEquivariantEnhancement}
below, we now consider a sequence of examples of homotopy quotients of
rational super spaces as in \eqref{HomotopyQuotientFiberSequence}.
In the following diagrams we always show the systems of spaces on the left
with their super dgc-algebra (FDA) models shown on the right. Throughout we use that
homotopy pullbacks of super spaces are modeled by pushouts of
semi-free super dgc-algebras (FDAs) as soon as the morphism
pushed out along is a cofibration in that it exhibits iterated addition of
generators. For more background see,
for instance,
\cite{Hess06}\cite{GrMo13}\cite{FSS16a}\cite{BSS18}.

\begin{example}[Rational $S^1$-equivariant cohomology and Cartan model]
\label{RationalS1EquivariantCohomologyAndCartanModel}
Let $X$ be any rational super-space of finite type,
hence $\mathrm{CE}\big( \mathfrak{l}X \big)$ any
finitely generated super-dgc algebra,
with differential to be denoted $d_X$,
and equipped with
a graded derivation $\iota_v$ of degree $(-1,\mathrm{even})$
such that the corresponding Lie derivative vanishes identically:
\begin{equation}
  \label{VanishinLiDerivativeOndgAlgebra}
  \xymatrix{
    \mathrm{CE}( \mathfrak{l}X)
    \ar[r]^-{ \iota_v }
    &
    \mathrm{CE}( \mathfrak{l}X)
  }
  \phantom{AAA}
  \mathcal{L}_v
  \;\coloneqq\;
  \big[
    d_{X}, \iota_v
  \big]
  \;=\; 0
  \;.
\end{equation}
The homotopy quotient
by the corresponding rational $S^1$-action
as in \eqref{HomotopyQuotientFiberSequence} is given by
\begin{equation}
  \label{CartanModelS1Quotient}
  \xymatrix@R=1.2em{
    X
    \ar[rr]^-{ \mathrm{hofib}(p_\rho) }
    &&
    X \sslash S^1
    \ar[dd]^-{ p_\rho }
    &&
    \mathrm{CE}
    (
      \mathfrak{l}X
    )
    \ar@{<-}[rr]^-{
      \mbox{
       \tiny
       $
       \begin{array}{lcl}
         0 & \!\!\! \mapsfrom \!\!\! & \omega_2
         \\
         \alpha & \!\!\! \mapsfrom \!\!\! & \alpha
       \end{array}
       $
      }
    }
    &&
    \mathrm{CE}
    (
      \mathfrak{l}X
    )
    [
      \omega_2
    ]
    \Big/
    \left(
      \!\!\!\!\!
      \mbox{
        \small
        $
        {\begin{array}{lcl}
          d\,\omega_2 & \!\!\! = \!\!\! & 0
          \\
          d\,\alpha
            & \!\!\! = \!\!\! &
          d_{{}_X} \alpha
            +
          \omega_2 \wedge \iota_{v} \alpha
        \end{array}}
        $
      }
      \!\!\!\!\!\!
    \right).
    \ar@{<-^{)}}[dd]^{
      \mbox{
        \tiny
        $
        \begin{array}{cc}
          \omega_2
          \\
          \mapsup
          \\
          \omega_2
        \end{array}
        $
      }
    }
    \\
    \\
    &&
    B S^1
    &&
    &&
    \mathbb{R}
    [
      \omega_2
    ]
    \Big/
    \left( \!\!\!\!\!
      \mbox{
        \small
        $
        {\begin{array}{rcl}
          d\,\omega_2 & \!\!\! = \!\!\! & 0
        \end{array}}
        $
      }
   \!\!\!\!\! \right)
  }
\end{equation}
This is the algebraic structure of the\emph{Cartan model}
for $G$-equivariant Borel cohomology
(see e.g. \cite[Sec. 5]{MathaiQuillen86} \cite{GSt}\cite{Mein}), here for $G = S^1$.
\end{example}

\begin{example}[Complex Hopf fibration]
\label{ComplexHopfFibration}
The complex Hopf fibration $h_{\mathbb{C}}$ realizes the 2-sphere
$S^2$ as the homotopy quotient of the 3-sphere by an action which is
classified by the canonical comparison map from $B \Omega S^2$ to $B S^1$:
\begin{equation}
\hspace{2.3cm}
  \label{CircleMapsToOmegaS2}
  \;\;\;\;\;\;\;\;\;\;
  \xymatrix@R=1.3em{
    S^3
    \ar[dd]^-{ h_{\mathbb{C}} }
    & &&
    \left(
      \!\!
      {\begin{array}{lcl}
        d\,\omega_3 &\!\!\!=\!\!\!& 0
      \end{array}}
      \!\!
    \right)
    \ar@{<-}[dd]^{
      \mbox{
        \tiny
        $
        \begin{array}{cc}
          \omega_2 & 0
          \\
          \mapsup & \mapsup
          \\
          \omega_2 & \omega_3
        \end{array}
        $
      }
    }
    \\
    \\
    \mathllap{
      S^3 \!\sslash\! S^1
      \simeq
      B\big(\Omega S^2 \big)
      \simeq
      \;
    }
    S^2
    \ar[rr]^-{c_1}
    &&
    B S^1\;,
       &
    \left(
      \!\!
      {\begin{array}{lcl}
        d\,\omega_2 &\!\!\!=\!\!\!& 0
        \\
        d\,\omega_3 &\!\!\!=\!\!\!& - \omega_2 \wedge \omega_2
      \end{array}}
      \!\!
    \right)
    \ar@{<-^{)}}[rr]^-{
      \mbox{
        \tiny
        $
        \omega_2 \mapsfrom \omega_2
        $
      }
    }
    &&
    \left(
      \!\!
      {\begin{array}{lcl}
        d\,\omega_2 &\!\!\!\!\!=\!\!\!\!\!& 0
      \end{array}}
      \!\!
    \right).
  }
\end{equation}
Notice that the classifying map $c_1$ here exhibits,
by \eqref{LoopingAndDelooping},
a canonical comparison homomorphism of $\infty$-groups
\begin{equation}
  \label{LoopsOfS2}
  \xymatrix{
    \Omega S^2
    \ar[rr]^-{ \Omega c_1 }
    &&
    S^1
  }.
\end{equation}
This loop $\infty$-group of the 2-sphere, $\Omega S^2$,
is an $\infty$-group
very similar to but just slightly richer than the plain circle.
\end{example}

\begin{example}[Left-induced $\Omega S^2$-equivariant cohomology]
\label{LeftInducedOmegaS2Action}
This means that an $\infty$-action
by $S^1$ as in Example \ref{RationalS1EquivariantCohomologyAndCartanModel}
left-induces an $\infty$-action by $\Omega S^2$,
with its homotopy quotient fiber sequence \eqref{HomotopyQuotientFiberSequence}
given by homotopy pullback along
\eqref{LoopsOfS2}, as shown in the following:
\begin{equation}
\footnotesize
  \label{OmegaS2Quotient}
  \hspace{-1mm}
  \xymatrix@R=1.2em@C=14pt{
    X
    \ar[r]^-{ {\mathrm{hofib}(p_{\Omega S^2})} \atop {\phantom{A}} }
    &
    X \!\sslash\! \Omega S^2
    \ar@{}[ddr]|{ \mbox{ \tiny (pb) } }
    \ar[r]_{\ }="s"
    \ar[dd]_-{ p_{\Omega S^2} }^<<<<<<<{\ }="t"
    &
    X \!\sslash\! S^1,
    \ar[dd]^{ p_{S^1}  }
    &
    \mathrm{CE}
    (
      \mathfrak{l}X
    )
    \ar@{<-}[r]^-{
      \mbox{
       \tiny
       $
       \begin{array}{lcl}
         0 & \!\!\!\!\!\!\!\!\!\!\!\! \mapsfrom \!\!\!\!\!\!\!\!\!\!\!\! & \omega_2
         \\
         0 & \!\!\!\!\!\!\!\!\!\!\!\! \mapsfrom \!\!\!\!\!\!\!\!\!\!\!\! & \omega_3
         \\
         \alpha & \!\!\!\!\!\!\!\!\!\!\!\! \mapsfrom \!\!\!\!\!\!\!\!\!\!\!\! & \alpha
         \\
         \\
       \end{array}
       $
      }
    }
    &
    {\color{blue}
    \mathrm{CE}
    (
      \mathfrak{l}X
    )
    \left[
      \!\!\!\!
      {\begin{array}{c}
      \omega_2,
      \\
      \omega_3
      \end{array}}
      \!\!\!\!
    \right]
    \!\!\Big/\!\!
    \left(\!\!\!\!\!\!\!
      \mbox{
        \small
        $
        {\begin{array}{lcl}
          d\,\omega_2 & \!\!\!\!\!\!\!  = \!\!\!\!\!\!\!\!\!  & 0
          \\
          d\,\omega_3 & \!\!\! \!\!\!\! = \!\!\!\!\!\!\!\!\!  & - \omega_2 \wedge \omega_2
          \\
          d\,\alpha
            & \!\!\! \!\!\!\!\!\! = \!\!\!\!\!\!\!\!\!  &
          d_{{}_X} \alpha
            +
          \omega_2 \wedge \iota_{v} \alpha
        \end{array}}
        $
    }
    \!\!\!\!\!\!\!\! \right)\!
    \ar@{<-}[dd]^{
      \mbox{
        \tiny
        $
        \begin{array}{cc}
          \omega_2 & \omega_3
          \\
          \mapsup & \mapsup
          \\
          \omega_2 & \omega_3
        \end{array}
        $
      }
    }
    }
    \ar@{<-}[r]
    \ar@{}[ddr]|-{ \mbox{ \tiny (po) } }
    &
    \mathrm{CE}
    (
      \mathfrak{l}X
    )
    [\omega_2]
    \!\!\Big/\!\!
    \left(\!\!\!\!\!\!\!
      \mbox{
        \small
        $
        {\begin{array}{lcl}
          d\,\omega_2 & \!\!\!\!\!\!\!\!  = \!\!\!\!\!\!\!\!\!  & 0
          \\
          d\,\alpha
            & \!\!\!\!\!\!\!\!  = \!\!\!\!\!\!\!\!\!  &
          d_{{}_X} \alpha
            +
          \omega_2 \wedge \iota_{v} \alpha
        \end{array}}
        $
      }
    \!\!\!\!\!\!\!\!\! \right)\!
    \ar@{<-}[dd]^{
      \mbox{
        \tiny
        $
        \begin{array}{c}
          \omega_2
          \\
          \mapsup
          \\
          \omega_2
        \end{array}
        $
      }
    }
    \\
    \\
    &
    B(\Omega S^2)
    \ar[r]
    &
    B S^1
    &
    &
    \mathbb{R}
    [
      \omega_2, \omega_3
    ]
    \Big/
    \left(\!\!\!\!\!
      \mbox{
        \small
        $
        {\begin{array}{rcl}
          d\,\omega_2 & \!\!\! = \!\!\! & 0
          \\
          d\,\omega_3 & \!\!\! = \!\!\! & - \omega_2 \wedge \omega_2
        \end{array}}
        $
      }
   \!\!\!\!\!
   \right)
   \ar@{<-^{)}}[r]_-{
     \mbox{
       \tiny
       $
       \begin{array}{lcl}
         \omega_2 & \!\!\!\!\!\!\mapsfrom\!\!\!\!\!\! & \omega_2
       \end{array}
       $
     }
   }
   &
   ~ \mathbb{R}
    [
      \omega_2
    ]
    \big/
    \big(\!\!\!\!\!\!
      \mbox{
        \small
       $
        {\begin{array}{rcl}
          d\,\omega_2 & \!\!\! = \!\!\! & 0
        \end{array}}
        $
       }
   \!\!\!\!\!\!
   \big)
   %
  }
\end{equation}
The resulting dgc-algebra,
shown in blue,
is much like the Cartan model for $S^1$-equivariant cohomology
as in Example \ref{RationalS1EquivariantCohomologyAndCartanModel},
except that here all even powers of the
generator $\omega_2$ in bi-degree $(2,\mathrm{even})$,
which classifies the circle action,
are trivialized in cohomology, by
the new generator $\omega_3$ in bi-degree $(3,\mathrm{even})$.
\end{example}

\begin{lemma}[$\Omega S^2$-equivariant cocycles]
  \label{OmegaS2EquivariantCocycles}
  Rational cocycles on a homotopy quotient $X \sslash \Omega S^2$
  for an $\Omega S^2$-action that is left-induced
  according to Example \ref{LeftInducedOmegaS2Action}
  from an $S^1$-action as in Example \ref{RationalS1EquivariantCohomologyAndCartanModel}
  are, if they are at most linear in the generator $\omega_2$,
  precisely given by pairs consisting of a cocycle
  $\alpha$ on $X$ and a trivialization $\beta$
  of its contraction with $v$:
  \begin{equation}
    \label{FormulaOmega2EquivariantCocycles}
    d
    \big(
      \,
      \alpha
      \;-\;
      \omega_2 \wedge \beta
      \;-\;
      \omega_3 \wedge \iota_v \beta
      \,
    \big)
    \;=\;
    0
    \phantom{AAAA}
    \mbox{if $\;d_X \alpha = 0\;$ and $\;d_X \beta = \iota_v \alpha\;$. }
  \end{equation}
\end{lemma}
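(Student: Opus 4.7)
The plan is to verify the displayed identity by a direct computation of the differential, using the explicit dgc-algebra model of $X \sslash \Omega S^2$ recorded in \eqref{OmegaS2Quotient}, and then to observe that the consistency condition on the $\omega_3$-coefficient is automatic from the standing hypothesis $\mathcal{L}_v = 0$ from \eqref{VanishinLiDerivativeOndgAlgebra}. The key inputs are: $d\,\omega_2 = 0$, $d\,\omega_3 = -\,\omega_2 \wedge \omega_2$, the twisted action $d\,\gamma = d_X \gamma + \omega_2 \wedge \iota_v \gamma$ on elements $\gamma \in \mathrm{CE}(\mathfrak{l}X)$, the graded Leibniz rule (noting that $\omega_2$ has even, and $\omega_3$ has odd, total degree), and the nilpotency $\iota_v^2 = 0$.

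First I would expand $d(\alpha - \omega_2 \wedge \beta - \omega_3 \wedge \iota_v \beta)$ term by term. The first summand gives $d_X \alpha + \omega_2 \wedge \iota_v \alpha$. The second summand, using $d\omega_2 = 0$ and the evenness of $\omega_2$, yields $-\,\omega_2 \wedge d_X \beta - \omega_2 \wedge \omega_2 \wedge \iota_v \beta$. The third summand, using $d\omega_3 = -\omega_2^{\wedge 2}$ and the oddness of $\omega_3$, yields $+\,\omega_2 \wedge \omega_2 \wedge \iota_v \beta + \omega_3 \wedge d_X(\iota_v \beta)$, where I have also used $\iota_v \iota_v \beta = 0$ to kill the induced $\omega_2 \wedge \omega_3$ term. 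Summing, the $\omega_2^{\wedge 2}$-terms cancel and one obtains
\begin{equation}
\label{PlanAssembledDifferential}
d\bigl(\alpha - \omega_2 \wedge \beta - \omega_3 \wedge \iota_v \beta\bigr)
\;=\;
d_X \alpha \;+\; \omega_2 \wedge \bigl(\iota_v \alpha - d_X \beta\bigr) \;+\; \omega_3 \wedge d_X \iota_v \beta .
\end{equation}

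Since $\{1,\,\omega_2,\,\omega_3\}$ are linearly independent in the coefficient algebra $\mathbb{R}[\omega_2,\omega_3]/(d\omega_3 = -\omega_2^{\wedge 2})$ over $\mathrm{CE}(\mathfrak{l}X)$, vanishing of \eqref{PlanAssembledDifferential} is equivalent to the three conditions $d_X \alpha = 0$, $d_X \beta = \iota_v \alpha$, and $d_X \iota_v \beta = 0$. The point I would then highlight is that the third condition is \emph{not} independent: using $\mathcal{L}_v = [d_X, \iota_v] = 0$ together with the first two conditions gives $d_X \iota_v \beta = -\iota_v d_X \beta = -\iota_v \iota_v \alpha = 0$ by nilpotency. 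Hence the two conditions stated in \eqref{FormulaOmega2EquivariantCocycles} are sufficient.

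For the converse, parametrizing a general element at most linear in $\omega_2$ as $\alpha + \omega_2 \wedge \gamma_2 + \omega_3 \wedge \gamma_3 + \omega_2 \wedge \omega_3 \wedge \gamma_{23}$ with $\alpha,\gamma_2,\gamma_3,\gamma_{23} \in \mathrm{CE}(\mathfrak{l}X)$, the same computation produces, collecting coefficients of the monomials $1$, $\omega_2$, $\omega_2^{\wedge 2}$, $\omega_3$, $\omega_2 \wedge \omega_3$, the relations $d_X \alpha = 0$, $d_X \gamma_2 = -\iota_v \alpha$, $\gamma_3 = \iota_v \gamma_2$ (forcing the $\omega_3$-coefficient to be determined by the $\omega_2$-coefficient), together with auxiliary vanishings that are again automatic from $\mathcal{L}_v = 0$ and $\iota_v^2 = 0$. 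Setting $\beta := -\gamma_2$ then recovers exactly the form displayed in \eqref{FormulaOmega2EquivariantCocycles}. The main thing to watch is bookkeeping of Koszul signs, since $\omega_3$ is odd while $\omega_2$ is even; nothing beyond this enters, so there is no substantive obstacle.
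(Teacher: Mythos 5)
Your proof is correct and follows essentially the same approach as the paper's: a direct Leibniz-rule computation in the dgc-algebra model of $X \sslash \Omega S^2$, followed by matching coefficients of the monomials in $\omega_2, \omega_3$, using nilpotency of $\iota_v$ and the hypothesis $\mathcal{L}_v = 0$. Your converse direction is somewhat more explicit than the paper's (which only notes tersely that "reading this same equation as a condition for the vanishing of the coefficients" gives the claim), in that you parametrize a general element at most linear in $\omega_2$ and verify that the cocycle constraints force precisely the stated form; but this is filling in detail rather than taking a different route.
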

\begin{proof}
To see that cocycles of this form are closed,
we compute as follows, directly unwinding the definitions,
where the three lines correspond to application of the three summands
of the differential
$d = d_{{}_{X}} + \omega_2 \wedge \iota_v + d_{{}_{S^2}}$:
$$
  \begin{aligned}
    d
    \big(
    \alpha
    \;-\;
    \omega_2 \wedge \beta
    \;-\;
    \omega_3 \wedge \iota_v \beta
    \big)
    & =
    \underset{
      = 0
    }{
      \underbrace{
        d_X \alpha
      }
    }
    -
    \omega_2 \wedge
    \underset{
      = \iota_v \alpha
    }{
      \underbrace{
        d_X \beta
      }
    }
    -
    \omega_3 \wedge
    \underset{
      \mathclap{
        = \iota_v \iota_v \alpha = 0
      }
    }{
      \underbrace{
        \iota_v d_X \beta
      }
    }
    \\
    &
    \phantom{=} \;
    +
    \omega_2 \wedge \iota_v \alpha
    \;-\;
    \omega_2 \wedge \omega_2 \wedge \iota_v \beta
    \;+\;
    \omega_3 \wedge \omega_2 \wedge
    \underset{
     = 0
    }{
      \underbrace{
        \iota_v \iota_v \beta
      }
    }
    \\
    &
    \phantom{=} \;
    +
    \omega_2 \wedge \omega_2 \wedge \iota_v \beta
    \\
    & = 0\;.
  \end{aligned}
$$
Conversely, reading this same equation as a condition
for the vanishing of the coefficients of the products
of generators shows that every cocycle of the form on the left
of \eqref{FormulaOmega2EquivariantCocycles} satisfies
the conditions shown on the right.
\end{proof}

\medskip

We next observe, in Prop \ref{RationalFibrationOf6dSuperspacetimeOver5d}
below, that an $\infty$-action by the $\infty$-group
$\Omega S^2$ \eqref{LoopsOfS2} characterizes the
$D = 6$, $\mathcal{N} = (1,1)$,
super-spacetime fibered over the $D = 5$, $\mathcal{N} =2$, superspacetime
and exhibits the little-string cocycle in $D = 6$
as coming from a 2-sphere-valued super-cocycle in $D = 5$
(hence in rational Cohomotopy in degree 2).
To put this in perspective, we first recall
the analogous situation in $D = 11$:
\begin{example}[Rational sphere-valued cocycles for M-branes in $D =11$]
  We have the situation in \eqref{11dexts} below:
  \begin{enumerate}[{\bf (i)}]
  \vspace{-2mm}
  \item {\cite[Rem. 4.4, Prop. 4.5]{FSS13}}:
  The $D = 11$ $\mathcal{N} = 1$
  super Minkowski spacetime (as in Def. \ref{ExceptionalTangentSuperSpacetime})
  is the rational $S^1$-extension of
  the $D = 10$ $\mathcal{N} = (1,1)$ (i.e. type IIA)
  superspacetime, classified by the D0-brane cocycle $\mu_{{}_{\rm D0}}$.
    \vspace{-2mm}
  \item
  \cite[Sec. 2.5]{Sati13}\cite{FSS15}\cite[Sec. 2]{FSS16a}:
  On that  extension, the super M2/M5-brane cochains
  $(\mu_{{}_{\rm M2}}, 2 \mu_{{}_{\rm M5}})$ \eqref{TheMBraneCocycles}
  constitute a super rational $S^4$-valued cocycle,
  i.e., a cocycle in rational Cohomotopy in degree 4.
  \end{enumerate}
  \begin{equation}
    \label{11dexts}
    \xymatrix{
      \mathfrak{m}5\mathfrak{brane}
      \ar[rr]_>>>>>>{\ }="s1"
      \ar[dd]_-{ \mathrm{hofib}(\mu_{{}_{\rm M2}},2 \mu_{{}_{\rm M5}}) }
      ^>>>>>>{\ }="t1"
      \ar@{}[ddrr]|-{
        \mbox{
          \tiny
          (pb)
        }
      }
      &&
      \ast
      \ar[dd]
      &&
      \hspace{-1.2cm}
      \underset{
        a \in \{0,1,2,3,4,5',{\color{blue}5},6,7,8,9\}
      }{
      \left(
        \!\!\!\!
        {\begin{array}{lcl}
          d\,\psi^\alpha & \!\!\!\! = \!\!\!\!  & 0
          \\
          d\,e^{a} & \!\!\!\! = \!\!\!\! & \overline{\psi}\;\Gamma^a\psi
          \\
          d\,h_3 & \!\!\!\! = \!\!\!\! & \mu_{{}_{\rm M2}}
          \\
          d\,h_6 & \!\!\!\! = \!\!\!\! &
            \mu_{{}_{\rm M5}} +\tfrac{1}{2}h_3 \wedge \mu_{{}_{\rm M2}}
        \end{array}}
        \!\!\!\!
      \right)
      }
      \ar@{<-}[rr]^-{
        \mbox{
          \tiny
          $
          \begin{array}{lcl}
            h_3 & \!\!\!\!\!\!\!\!\!\!\!\!\!\!\! \mapsfrom \!\!\!\!\!\!\!\!\!\!\!\! & h_3
            \\
            h_6 & \!\!\!\!\!\!\!\!\!\!\!\!\!\!\! \mapsfrom \!\!\!\!\!\!\!\!\!\!\!\! & h_6
            \\
            \mu_{{}_{\rm M2}} & \!\!\!\!\!\!\!\!\!\!\!\!\!\!\! \mapsfrom \!\!\!\!\!\!\!\!\!\!\!\! & \omega_4
            \\
            2\mu_{{}_{\rm M5}} & \!\!\!\!\!\!\!\!\!\!\!\!\!\!\! \mapsfrom \!\!\!\!\!\!\!\!\!\!\!\! & \omega_7
          \end{array}
          $
        }
      }
      \ar@{<-}[dd]_-{
        \mbox{
          \tiny
          $
          \begin{array}{cc}
            \psi^\alpha & e^a
            \\
            \mapsup & \mapsup
            \\
            \psi^\alpha & e^a
          \end{array}
          $
        }
      }
      \ar@{}[ddrr]|-{ \mbox{ \tiny (po) } }
      &&
      \left(
        \!\!\!\!
        {\begin{array}{lcl}
          d\,h_3 & \!\!\!\! = \!\!\!\! & \mu_{{}_{\rm M2}}
          \\
          d\,h_6 & \!\!\!\! = \!\!\!\! &
            \mu_{{}_{\rm M5}} + \tfrac{1}{2} h_3 \wedge \mu_{{}_{\rm M2}}
          \\
          d\,\omega_4 & \!\!\!\! = \!\!\!\! & 0
          \\
          d\,\omega_7 & \!\!\!\! = \!\!\!\! & - \omega_4 \wedge \omega_4
        \end{array}}
        \!\!\!\!\!\!\!
      \right)
      \ar@{<-^{)}}[dd]^-{
        \mbox{
          \tiny
          $
          \begin{array}{cc}
            \omega_4 & \omega_7
            \\
            \mapsup & \mapsup
            \\
            \omega_4 & \omega_7
          \end{array}
          $
        }
      }
      \\
      \\
      \mathbb{T}^{10,1\vert \mathbf{32}}
      \ar[dd]_-{ \mathrm{hofib}(\mu_{{}_{\rm D0}}) }
      \ar[rr]_-{ (\mu_{{}_{\rm M2}}, 2\mu_{{}_{\rm M5}}) }
      &&
      S^4_{\mathbb{R}}
      &&
      \underset{
        a \in \{0,1,2,3,4,5',{\color{blue}5},6,7,8,9\}
      }{
      \left(
        \!\!\!\!
        {\begin{array}{lcl}
          d\,\psi^\alpha & \!\!\!\! = \!\!\!\!  & 0
          \\
          d\,e^{a} & \!\!\!\! = \!\!\!\! & \overline{\psi}\;\Gamma^a\psi
        \end{array}}
        \!\!\!\!
      \right)
      }
      \ar@{<-}[rr]_-{
        \mbox{
          \tiny
          $
          \begin{array}{lcl}
            \mu_{{}_{\rm M2}} & \!\!\!\!\!\!\!\!\!\!\!\!   \mapsfrom \!\!\!\! \!\!\!\!\!\!\!\!  & \omega_4
            \\
            2\mu_{{}_{\rm M5}} & \!\!\!\!\!\!\!\!\!\!\!\!   \mapsfrom \!\!\!\! \!\!\!\!\!\!\!\!  & \omega_7
          \end{array}
          $
        }
      }
      \ar@{<-}[dd]^-{
        \mbox{
          \tiny
          $
          \begin{array}{cc}
            \psi^\alpha & e^a
            \\
            \mapsup & \mapsup
            \\
            \psi^\alpha & e^a
          \end{array}
          $
        }
      }
      &&
      \left(
        \!\!\!\!
        {\begin{array}{lcl}
          d\,\omega_4 & \!\!\!\! = \!\!\!\!  & 0
          \\
          d\,\omega_7 & \!\!\!\! = \!\!\!\! & - \omega_4 \wedge \omega_4
        \end{array}}
        \!\!\!\!
      \right)
      \\
      \\
      \mathbb{T}^{9,1\vert \mathbf{16} + \overline{\mathbf{16}}}
      \ar@{->}[rr]_-{
        \mu_{{}_{\rm D0}}
        =
        \overline{\psi}\; \Gamma^5 \psi      }
      &&
      B S^1_{\mathbb{R}}
      &&
      \underset{
        a \in \{0,1,2,3,4,5',6,7,8,9\}
      }{
      \left(
        \!\!\!\!
        {\begin{array}{lcl}
          d\,\psi^\alpha & \!\!\!\! = \!\!\!\!  & 0
          \\
          d\,e^{a} & \!\!\!\! = \!\!\!\! & \overline{\psi}\;\Gamma^a\psi
        \end{array}}
        \!\!\!\!
      \right)
      }
      \ar@{<-}[rr]_-{
        \overline{\psi}\;\Gamma^5\psi
        \;\mapsfrom \;
        \omega_2
      }
      &&
      \left(
        \!\!\!\!
        {\begin{array}{lcl}
          d\,\omega_2 & \!\!\!\! = \!\!\!\!  & 0
        \end{array}}
        \!\!\!\!
      \right)
      \ar@{=>} "s1"; "t1"
    }
  \end{equation}
  Notice that, when expressed in the canonical super coordinate functions
  as in \eqref{LeftInvariantVielbeingInTermsOfCanonicalCoordinates},
  the bi-fermionic component of the potential $e^5$
  for the D0-brane cocycle is
  identified as the \emph{super 1-form Ramond-Ramond (RR) potential} $C_1$,
  i.e., the \emph{graviphoton} of KK-reduction on $S^1_{\mathrm{HW}}$
  (e.g. \cite[below (51)]{APPS97b}):
  \begin{equation}
    \label{RRPotential}
    e^5
    \;=\;
    d x^5
    \;+\;
    \underset{
      \eqqcolon C_1
    }{
      \underbrace{
        \overline{\theta} \Gamma^5 d\theta
      }
    }
    \;\;\overset{ d }{\longmapsto}\;\;
    (\overline{d\theta}) \Gamma^5 d\theta
    \;=\;
    \mu_{{}_{\rm D0}}.
  \end{equation}
\end{example}

This is analogous to the following situation:

\begin{prop}[Fibration of $D = 6$, $\mathcal{N} = (1,1)$,
super-spacetime over $D=5$ super-spacetime]
  \label{RationalFibrationOf6dSuperspacetimeOver5d}
  The $D = 6$, $\mathcal{N} = (1,1)$, super-spacetime
  is a rational circle fibration over the $D = 5$, $\mathcal{N}= 2$,
  super-spacetime, which comes via homotopy pullback from the complex
  Hopf fibration (Example \ref{ComplexHopfFibration}).
  Moreover, the super-cocycle $\mu_{{}_{\rm L1}}$
  for the little string in 6d is induced by a 2-sphere-valued super-cocycle
  in $D = 5$
  $$
    \xymatrix{
      \mathbb{R}^{4,1 \vert \mathbf{8} + \mathbf{8}}
      \ar[rrr]^-{ ( \overline{\psi}\;\Gamma_5\psi, \; \mu^{5d}_{{}_{\rm L1}}) }
      &&&
      S^2_{\mathbb{R}}
    },
  $$
  realizing the little-string extended $D = 6$
  superspacetime
  as a rational $\Omega S^2$-fibration over $D =5$ super-spacetime:
  \begin{equation}
  \label{ltlstringHomotopyPullbackDiagram}
  \hspace{-.3cm}
    \xymatrix
    {
      \mathfrak{ltl}\mathfrak{string}
      \ar[dd]^>>>>>>{\ }="t1"
      \ar[rr]_>>>>>>{\ }="s1"
      \ar@{}[ddrr]|-{ \mbox{\rm \tiny (pb)} }
      &&
      \ast
      \ar[dd]
      &&
      \;\;\;\;
      \left(
      \!\!\!\!\!
      \mbox{
        \small
        $
        \begin{array}{lcl}
          d\,\psi^\alpha & \!\!\!\!\!\! =\!\!\!\!\!\! & 0
          \\
          d\,e^{a \leq 5} & \!\!\!\!\!\!=\!\!\!\!\!\! & \overline{\psi}\; \Gamma^a\psi
          \\
          d\,f_2 & \!\!\!\!\!\!=\!\!\!\!\!\! & \mu_{{}_{\rm L1}}
        \end{array}
        $
      }
      \!\!\!\!\!
      \right)
      \ar@{<-}[dd]^-{
        \mbox{
          \tiny
          $
          \begin{array}{cc}
            \psi^\alpha & e^{a \leq 4}
            \\
            \mapsup & \mapsup
            \\
            \psi^\alpha & e^{a \leq 4}
          \end{array}
          $
        }
      }
      \ar@{<-}[rrr]^-{
        \mbox{
          \tiny
          $
          \begin{array}{ccl}
            e^5
              &\!\!\!\!\!\!\!\!\mapsfrom\!\!\!\!\!\!\!\!\!\!\!\!&
            e
            \\
            \overline{\psi}\; \Gamma^5\psi
              &\!\!\!\!\!\!\!\!\mapsfrom\!\!\!\!\!\!\!\!\!\!\!\!&
            \omega_2
            \\
            \underoverset{a = 0}{4}{\sum}
              (\overline{\psi}\;\Gamma_a\psi) \wedge e^a
              &\!\!\!\!\!\!\!\!\mapsfrom\!\!\!\!\!\!\!\!\!\!\!\!&
            \omega_3
            \\
            f_2
              &\!\!\!\!\!\!\!\!\mapsfrom\!\!\!\!\!\!\!\!\!\!\!\!&
            f_2
          \end{array}
          $
        }
      }
      \ar@{}[ddrrr]|{
        \mbox{\rm
          \tiny (po)
        }
      }
      &
      {\phantom{AA}}
      &&
      \left(
      \!\!\!\!\!
      \mbox{
        \small
        $
        \begin{array}{lcl}
          d\, e & \!\!\!\!\!\!=\!\!\!\!\!\! & \omega_2
          \\
          d\,\omega_2 & \!\!\!\!\!\!=\!\!\!\!\!\! & 0
          \\
          d\,\omega_3 & \!\!\!\!\!\!=\!\!\!\!\!\! & -\omega_2 \wedge \omega_2
          \\
          d\,f_2 & \!\!\!\!\!\!=\!\!\!\!\!\! & \omega_3 + \omega_2 \wedge e
        \end{array}
        $
      }
      \!\!\!\!\!
      \right)
      \ar@{<-^{)}}[dd]^-{
        \mbox{
          \tiny
          $
          \begin{array}{ccc}
            e & \omega_2 & \omega_3
            \\
            \mapsup & \mapsup & \mapsup
            \\
            e & \omega_2 & \omega_3
          \end{array}
          $
        }
      }
      \\
      \\
      \mathbb{R}^{5,1\vert \mathbf{8} + \mathbf{8}}
      \ar[dd]^>>>>>>{\ }="t"
      \ar[rr]^{
        \mu_{{}_{\rm L1}}
      }_>>>>>>{\ }="s"
      \ar@{}[ddrr]|{ \mbox{\rm \tiny (pb) } }
      &&
      S^3_{\mathbb{R}}
      \ar[dd]^-{ h_{\mathbb{C}} }
      &&
      \;\;\;\;
      \left(
      \!\!\!\!\!
      \mbox{
        \small
        $
        \begin{array}{lcl}
          d\,\psi^\alpha & \!\!\!=\!\!\! & 0
          \\
          d\,e^{a \leq 5} & \!\!\!=\!\!\! & \overline{\psi}\; \Gamma^a\psi
        \end{array}
        $
      }
      \!\!\!\!\!
      \right)
      \ar@{<-}[dd]^-{
        \mbox{
          \tiny
          $
          \begin{array}{cc}
            \psi^\alpha & e^{a \leq 4}
            \\
            \mapsup & \mapsup
            \\
            \psi^\alpha & e^{a \leq 4}
          \end{array}
          $
        }
      }
      \ar@{<-}[rrr]|-{
        \mbox{
          \tiny
          $
          \begin{array}{ccl}
            e^5
              &\!\!\!\!\!\!\!\!\!\!\!\!\!\mapsfrom\!\!\!\!\!\!\!\!\!\!\!\!\!&
            e
            \\
            \overline{\psi}\; \Gamma^5\psi
              &\!\!\!\!\!\!\!\!\!\!\!\!\!\mapsfrom\!\!\!\!\!\!\!\!\!\!\!\!\!&
            \omega_2
            \\
            \underoverset{a = 0}{4}{\sum}
              (\overline{\psi}\;\Gamma_a\psi) \wedge e^a
              &\!\!\!\!\!\!\!\!\!\!\!\!\!\mapsfrom\!\!\!\!\!\!\!\!\!\!\!\!\!&
            \omega_3
            \\
          \end{array}
          $
        }
      }
      \ar@{}[ddrrr]|{
        \mbox{\rm
          \tiny (po)
        }
      }
      &
      {\phantom{AA}}
      &&
      \left(
      \!\!\!\!\!
      \mbox{
        \small
        $
        \begin{array}{lcl}
          d\, e & \!\!\!=\!\!\! & \omega_2
          \\
          d\,\omega_2 & \!\!\!=\!\!\! & 0
          \\
          d\,\omega_3 & \!\!\!=\!\!\! & -\omega_2 \wedge \omega_2
        \end{array}
        $
      }
      \!\!\!\!\!
      \right)
      \ar@{<-^{)}}[dd]^-{
        \mbox{
          \tiny
          $
          \begin{array}{cc}
            \omega_2 & \omega_3
            \\
            \mapsup & \mapsup
            \\
            \omega_2 & \omega_3
          \end{array}
          $
        }
      }
      \\
      \\
      \mathbb{R}^{4,1\vert \mathbf{8}+ \mathbf{8}}
      \ar[rr]_{ ( \overline{\psi}\Gamma_5\psi,  \mu^{5d}_{{}_{\rm L1}}) }
      &&
      S^2_{\mathbb{R}}
      \mathrlap{\; \simeq B \Omega S^2_{\mathbb{R}}\;,}
      &&
      \;\;\;\;
      \left(
      \!\!\!\!\!
      \mbox{
        \small
        $
        \begin{array}{lcl}
          d\,\psi^\alpha & \!\!\!=\!\!\! & 0
          \\
          d\,e^{a \leq 4} & \!\!\!=\!\!\! & \overline{\psi}\; \Gamma^a\psi
        \end{array}
        $
      }
      \!\!\!\!\!
      \right)
      \ar@{<-}[rrr]_-{
        \mbox{
          \tiny
          $
          \begin{array}{ccl}
            \overline{\psi}\; \Gamma^5\psi
              &\!\!\!\!\!\!\!\!\!\!\!\!\mapsfrom\!\!\!\!\!\!\!\!\!\!\!\!&
            \omega_2
            \\
            \underset{
              \eqqcolon \mu^{5d}_{{}_{\rm L1}}
            }{
            \underbrace{
              \underoverset{a = 0}{4}{\sum}
                (\overline{\psi}\;\Gamma_a\psi) \wedge e^a
            }
            }s
              &\!\!\!\!\!\!\!\!\!\!\!\!\mapsfrom\!\!\!\!\!\!\!\!\!\!\!\!&
            \omega_3
          \end{array}
          $
        }
      }
      &&&
      \left(
      \!\!\!\!\!
      \mbox{
        \small
        $
        \begin{array}{lcl}
          d\,\omega_2 & \!\!\!=\!\!\! & 0
          \\
          d\,\omega_3 & \!\!\!=\!\!\! & -\omega_2 \wedge \omega_2
        \end{array}
        $
      }
      \!\!\!\!\!
      \right).
      \ar@{=>} "s"; "t"
      \ar@{=>} "s1"; "t1"
    }
  \end{equation}
\end{prop}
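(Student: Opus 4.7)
\medskip
\noindent \textbf{Proof plan.} The strategy is to verify both claimed homotopy pullback squares by exhibiting them as strict pushouts of semi-free super dgc-algebras on their right-hand sides, making use of the standard fact that dgc-algebra pushouts along cofibrations (in the sense of iterated free adjunction of generators) model rational homotopy pullbacks of super-spaces -- a fact we already invoked in Examples \ref{RationalS1EquivariantCohomologyAndCartanModel}--\ref{LeftInducedOmegaS2Action}. The key structural inputs are (a) the cocycle condition for the 2-sphere-valued map on 5d super-Minkowski, and (b) the explicit identification of the pushout CE algebras with those of the super-Minkowski spacetimes.

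\medskip
\noindent First I would verify that $\big(\overline{\psi}\Gamma_5\psi,\, \mu^{5d}_{{}_{\rm L1}}\big)$ does define a super $L_\infty$-algebra morphism $\mathbb{R}^{4,1\vert\mathbf{8}+\mathbf{8}} \to \mathfrak{l}(S^2_\mathbb{R})$, i.e.\ that it respects the Sullivan differentials. The condition $d(\overline{\psi}\Gamma_5\psi) = 0$ is immediate from $d\psi = 0$. The essential check is
$$
  d\,\mu^{5d}_{{}_{\rm L1}}
  \;=\;
  \sum_{a=0}^{4}
  \big(\overline{\psi}\,\Gamma_a\psi\big)
    \wedge
  \big(\overline{\psi}\,\Gamma^a\psi\big)
  \;\stackrel{!}{=}\;
  -\,\big(\overline{\psi}\,\Gamma_5\psi\big) \wedge \big(\overline{\psi}\,\Gamma_5\psi\big),
$$
which follows from the Fierz identity in 6d, $\mathcal{N}=(1,1)$ that makes $\mu_{{}_{\rm L1}}$ closed in the middle row, namely $\sum_{a=0}^{5} (\overline{\psi}\,\Gamma^a\psi)\wedge(\overline{\psi}\,\Gamma_a\psi) = 0$: one isolates the $a=5$ summand and uses $\Gamma^5 = \eta^{55}\Gamma_5 = +\Gamma_5$. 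This is the one nontrivial piece of bi-spinor algebra in the argument, and I expect it to be the main technical obstacle -- though it is a well-known consequence of the spinor conventions fixed in Def.~\ref{CliffordAlgebraRepresentation}.

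\medskip
\noindent Next I would compute the middle pushout at the dgc-algebra level. The right vertical map $(\omega_2\!\mapsfrom\!\omega_2,\;\omega_3\!\mapsfrom\!\omega_3)$ in the middle row realizes $\mathrm{CE}(\mathfrak{l}S^3_\mathbb{R})$ as the free adjunction of the single generator $e$ (in degree $(1,\mathrm{even})$) with $d\,e = \omega_2$, hence is a cofibration. Pushing out along the $S^2$-valued cocycle verified in the previous step identifies $\omega_2 \mapsto \overline{\psi}\,\Gamma_5\psi$ and $\omega_3 \mapsto \mu^{5d}_{{}_{\rm L1}}$, and adds a single new generator $e$ with $d\,e = \overline{\psi}\,\Gamma_5\psi$. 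Setting $e \coloneqq e^5$ gives precisely the CE algebra of $\mathbb{R}^{5,1\vert \mathbf{8}+\mathbf{8}}$, while the relation $d\,\omega_3 = -\omega_2\wedge\omega_2$ becomes automatic by the Fierz step, so no further relations are imposed. This establishes the middle square as a homotopy pullback.

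\medskip
\noindent Finally, for the top square, the homotopy fiber of $\mu_{{}_{\rm L1}}: \mathbb{R}^{5,1\vert \mathbf{8}+\mathbf{8}} \to S^3_\mathbb{R}$ is by definition obtained by freely adjoining a degree-$(2,\mathrm{even})$ generator $f_2$ with $d\,f_2 = \mu_{{}_{\rm L1}}$. Under the identification $\mu_{{}_{\rm L1}} = \omega_3 + \omega_2 \wedge e$ from the middle step, this reads $d\,f_2 = \omega_3 + \omega_2\wedge e$, as displayed in the top-right dgc-algebra. The compatibility $d^2 f_2 = 0$ follows from the graded Leibniz rule for $|\omega_2| = 2$:
$$
  d\,(\omega_3 + \omega_2 \wedge e)
  \;=\;
  -\,\omega_2\wedge\omega_2 + \omega_2\wedge d\,e
  \;=\;
  -\,\omega_2\wedge\omega_2 + \omega_2\wedge\omega_2
  \;=\;
  0.
$$
Since the defining morphism $\ast \to S^3_\mathbb{R}$ is likewise a cofibration (presenting $\mathrm{CE}(\mathfrak{l}S^3_\mathbb{R}) \to \mathbb{R}[f_2,\omega_3]/(df_2 = \omega_3)$ as the addition of a contractible pair), the outer top square is also a strict pushout of semi-free dgc-algebras, hence a homotopy pullback. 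This identifies the top-left vertex as $\mathfrak{ltl}\mathfrak{string}$ and completes the proof.
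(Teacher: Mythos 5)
Your proof is correct and takes essentially the same approach as the paper: both reduce the claim to the $D=5$ Fierz identity $\sum_{a=0}^{4}(\overline{\psi}\Gamma_a\psi)(\overline{\psi}\Gamma^a\psi) = -(\overline{\psi}\Gamma_5\psi)(\overline{\psi}\Gamma_5\psi)$, recognized as the closure condition of the $D=6$ little-string cocycle with the $a=5$ summand separated off. You additionally spell out the cofibrant-pushout bookkeeping that the paper leaves implicit after having set up the framework in the preceding examples, but the one nontrivial input — the Fierz/closure identity — and its role are identified in exactly the same way.
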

\begin{proof}
  The statement comes down to proving that
  $$
    d \mu_{{}_{\rm L1}}
    \;=\;
    -
    \big( \overline{\psi} \;\Gamma_5 \psi \big)
    \big( \overline{\psi} \;\Gamma_5 \psi \big),
  $$
  hence that
  \begin{equation}
    \label{5dFI}
    \underoverset{a = 0}{4}{\sum}
    \big( \overline{\psi} \;\Gamma_a \psi \big)
    \big( \overline{\psi}\; \Gamma^a \psi \big)
    \;=\;
    -
    \big( \overline{\psi} \;\Gamma_5 \psi \big)
    \big( \overline{\psi} \;\Gamma_5 \psi \big)
    \,.
  \end{equation}
  This is indeed a Fierz identity satisfied by spinors in $D= 5$ (e.g. \cite[III.5.50]{CDF91}).
  Equivalently, from the perspective of
  $D = 6$, it reflects the fact that there is
  the super-cocycle for the little-string (e.g. \cite[III.7.14]{CDF91})
  \begin{equation}
    \label{LittleStringCocycle}
    \mu_{{}_{\rm L1}}
    \;\coloneqq\;
    \underoverset{a = 0}{5}{\sum}
    \big(
      \overline{\psi}
      \;\Gamma_a \psi
    \big)
      \wedge
    e^a
    \;=\;
    \underoverset{a = 0}{4}{\sum}
    \big(
      \overline{\psi}
      \;
      \Gamma_a \psi
    \big)
      \wedge
    e^a
    \;+\;
    \big(
      \overline{\psi}
      \;
      \Gamma_5 \psi
    \big)
      \wedge
    e^5
  \end{equation}
  and  that \eqref{5dFI} is equivalently its closure condition,
  written with the 5th spatial coordinate summand separated off:
  $$
    \begin{aligned}
      0
      & =
      d \mu_{{}_{\rm L1}}
      \\
      & =
      \underoverset{a=0}{5}{\sum}
        \big(\overline{\psi}\;\Gamma_a\psi\big) d e^a
      \\
      & =
      \underoverset{a = 0}{5}{\sum}
      \big( \overline{\psi}\; \Gamma_a \psi \big)
      \big( \overline{\psi} \;\Gamma^a \psi \big)
      \\
      & =
      \underoverset{a = 0}{4}{\sum}
      \big( \overline{\psi} \;\Gamma_a \psi \big)
      \big( \overline{\psi} \;\Gamma^a \psi \big)
      +
      \big( \overline{\psi}\; \Gamma_5 \psi \big)
      \big( \overline{\psi} \; \Gamma^5 \psi \big).
    \end{aligned}
  $$

  \vspace{-7mm}
\end{proof}

\begin{remark}[$D=5$ super spacetime as homotopy $\Omega S^2$-quotient of $D=6$]

\vspace{-2mm}
 \item {\bf (i)}  Conversely, the total rectangle on the left of \eqref{ltlstringHomotopyPullbackDiagram}
  exhibits, by \eqref{HomotopyQuotientFiberSequence},
  the $D=5$ super-spacetime
  $\mathbb{R}^{4,1\vert \mathbf{8}+\mathbf{8}}$
  as the homotopy $\Omega S^2$-quotient of
  the little string-extended $D = 6$, $\mathcal{N} = 2$, super-spacetime:

  $$
    \mathbb{T}^{4,1\vert \mathbf{8} + \mathbf{8}}
    \;\simeq\;
    \mathbb{T}^{5,1\vert \mathbf{8} + \mathbf{8}} \!\sslash\! S^1
    \;\simeq\;
    \mathfrak{ltl}\mathfrak{string} \!\sslash\! \Omega S^2
    \,.
  $$

  \item {\bf (ii)} To make this explicit
  as an $\Omega S^2$-quotient in the form
  of \eqref{OmegaS2Quotient},
  we may (co)fibrantly resolve
  the bottom instead of the right morphism of the
  rectangle in \eqref{ltlstringHomotopyPullbackDiagram}.
  This yields the following:
  \begin{equation}
  \label{OtherltlstringHomotopy}
  \hspace{-.2cm}
    \xymatrix
    @C=1em
    {
      \mathfrak{ltl}\mathfrak{string}
      \ar[dd]^>>>>>>{\ }="t1"
      \ar[rr]_>>>>>>{\ }="s1"
      \ar@{}[ddrr]|-{ \mbox{\rm \tiny (pb)} }
      &&
      \ast
      \ar[dd]
      &&
      \;\;\;\;
      \left(
      \!\!\!\!\!
      \mbox{
        \small
        $
        \begin{array}{lcl}
          d\,\psi^\alpha & \!\!\!=\!\!\! & 0
          \\
          d\,e^{a \leq 5} & \!\!\!=\!\!\! & \overline{\psi}\; \Gamma^5\psi
          \\
          d\,\theta_2 & \!\!\!=\!\!\! & \mu_{{}_{\rm L1}}
        \end{array}
        $
      }
      \!\!\!\!\!
      \right)
      \ar@{<-}[dd]|-{
        \mbox{
          \tiny
          $
          \begin{array}{ccccccc}
            \psi^\alpha & e^{a \leq 5} & f_2 & 0 & 0
            \\
            \mapsup & \mapsup & \mapsup & \mapsup & \mapsup
            \\
            \psi^\alpha & e^{a \leq 5} & f_2 & \omega_2 & \omega_3
          \end{array}
          $
        }
      }
      \ar@{<-}[rrr]^-{
        \mbox{
          \tiny
          $
          \begin{array}{lcl}
          \end{array}
          $
        }
      }
      \ar@{}[ddrrr]|{
        \mbox{
          \tiny (po)
        }
      }
      &
      {\phantom{AA}}
      &&
      \mathbb{R}
      \ar@{<-}[dd]^-{
        \mbox{
          \tiny
          $
          \begin{array}{cc}
            0 & 0
            \\
            \mapsup & \mapsup
            \\
            \omega_2 & \omega_3
          \end{array}
          $
        }
      }
      \\
      \\
      \mathbb{R}^{4,1\vert \mathbf{8}+ \mathbf{8}}
      \ar[rr]_{ }
      &&
      S^2_{\mathbb{R}}
      \mathrlap{\; \simeq B \Omega S^2_{\mathbb{R}}\;,}
     \;\;\;\;\;\;\;\;\;\;\; &&
      \;\;\;\;
      \left(
      \!\!\!\!\!
      \mbox{
        \small
        $
        \begin{array}{lcl}
          d\,\psi^\alpha & \!\!\!=\!\!\! & 0
          \\
          d\,e^{a \leq 4} & \!\!\!=\!\!\! & \overline{\psi}\; \Gamma^a\psi
          \\
          d\,e^5
            & \!\!\!=\!\!\! &
            \overline{\psi}\; \Gamma^a\psi
            +
            \omega_2
          \\
          d\,f_2 & \!\!\!=\!\!\! &
            \mu_{{}_{\rm L1}}
            +
            (\omega_3 + e \wedge \omega_2)
          \\
          d\,\omega_2
            & \!\!\!=\!\!\! &
            0
          \\
          d\,\omega_3
            & \!\!\!=\!\!\! &
            - \omega_2 \wedge \omega_2
        \end{array}
        $
      }
      \!\!\!\!\!
      \right)
      \ar@{<-^{)}}[rrr]_-{
        \mbox{
          \tiny
          $
          \begin{array}{lcl}
            \omega_2
              &\!\!\!\!\!\!\!\!\mapsfrom\!\!\!\!\!\!\!\!&
            \omega_2
            \\
            \omega_3
              &\!\!\!\!\!\!\!\!\mapsfrom\!\!\!\!\!\!\!\!&
            \omega_3
          \end{array}
          $
        }
      }
      &&&
      \left(
      \!\!\!\!\!
      \mbox{
        \small
        $
        \begin{array}{lcl}
          d\,\omega_2 & \!\!\!=\!\!\! & 0
          \\
          d\,\omega_3 & \!\!\!=\!\!\! & -\omega_2 \wedge \omega_2
        \end{array}
        $
      }
      \!\!\!\!\!
      \right).
      %
      \ar@{=>} "s1"; "t1"
    }
  \end{equation}
\end{remark}

In conclusion, $D = 6$, $\mathcal{N} = (1,1)$,
superspacetime extended by its brane content
reduces to $D = 5$ super-spacetime by enhancing the
$\infty$-action by $S^1$ to an $\infty$-action by
$\Omega S^2$ \eqref{LoopsOfS2}.

\medskip
This provides the motivation to similarly form the homotopy quotient
of the super-exceptional $\mathcal{N} = (1,0)$ spacetime
$  \big(
    \mathbb{R}^{5,1\vert \mathbf{8}}
      \times
    \mathbb{R}^1
  \big)_{\mathrm{ex}_s}
$
from Def. \ref{HalfM5LocusAndItsExceptionalTangentBundle}
not just by the $S^1_{\mathrm{HW}}$-action of flowing along the M-theory circle fiber
along the super-exceptional isometry $v_5^{\mathrm{ex}_s}$
from Def. \ref{LiftOfVectorField},
but the left-induced $\Omega S^2$-action,
via Example \ref{LeftInducedOmegaS2Action}
To indicate this,
we will write $\Omega S^2_{\mathrm{HW}}$
to denote this $\infty$-group with that $\infty$-action understood,
hence with the comparison \eqref{CircleMapsToOmegaS2}
specifically being
\begin{equation}
  \label{OmegaS2HW}
  \xymatrix{
    \Omega S^2_{\mathrm{HW}}
    \ar[r]
    &
    S^1_{\mathrm{HW}}
    \,.
  }
\end{equation}
Hence:
\begin{defn}[Homotopy $\Omega S^2_{\mathrm{HW}}$-quotient of super-exceptional
$\tfrac{1}{2}\mathrm{M5}$ along $S^1_{\mathrm{HW}}$]
\label{HomotopyQuotientOfSuperExceptionalHalfM5Spacetime}
Write
$
    \big(
      \mathbb{R}^{5,1\vert \mathbf{8}}
      \times
      \mathbb{R}^1
    \big)_{\mathrm{ex}_s}
    \!\sslash\! \Omega S^2_{\mathrm{HW}}
$
for the homotopy quotient of the super-exceptional
$\tfrac{1}{2}\mathrm{M5}$ spacetime \eqref{HalfM5LocusAndItsExceptionalTangentBundle}
by the rational $\Omega S^2_{\mathrm{HW}}$-action
which is left-induced, via Example \ref{LeftInducedOmegaS2Action},
by the rational
$S^1_{\mathrm{HW}}$-action given by the
the super-exceptional $S^1_{\mathrm{HW}}$-flow \eqref{LiftedVectorField}.
Hence, with \eqref{OmegaS2Quotient}, the defining
super dgc-algebra (FDA) is that on the right of the
following diagram:
\begin{equation}
  \label{SuperExceptionalHalfM5QuotientedByOmegaS2}
  \xymatrix@R=-2pt{
    \big(
      \mathbb{R}^{5,1\vert \mathbf{8}}
      \times
      \mathbb{R}^1
    \big)_{\mathrm{ex}_s}
    \ar[rr]^-{ q_{{}_{\Omega S^2_{\mathrm{HW}}}} }
    &&
    \big(
      \mathbb{R}^{5,1\vert \mathbf{8}}
      \times
      \mathbb{R}^1
    \big)_{\mathrm{ex}_s}
    \sslash \Omega S^2_{\mathrm{HW}}
    \\
    \\
    \mathrm{CE}
    \big(
      (
        \mathbb{R}^{5,1\vert \mathbf{8}}
        \times
        \mathbb{R}^1
      )_{\mathrm{ex}_s}
    \big)
    \ar@{<-}[rr]^-{
      \mbox{
       \tiny
       $
       \begin{array}{lcl}
         0 & \!\!\!\!\!\!\! \mapsfrom \!\!\!\!\!\!\! & \omega_2
         \\
         0 & \!\!\!\!\!\!\! \mapsfrom \!\!\!\!\!\!\! & \omega_3
         \\
         \alpha & \!\!\!\!\!\!\! \mapsfrom \!\!\!\!\!\!\! & \alpha
       \end{array}
       $
      }
    }
    &&
    \mathrm{CE}
    \big(
    (
      \mathbb{R}^{5,1\vert \mathbf{8}}
      \times
      \mathbb{R}^1
    )_{\mathrm{ex}_s}
    \big)
    [
      \omega_2, \omega_3
    ]
    \Big/
    \left(
      \!\!\!\!
      \mbox{
        \small
        $
        {\begin{array}{lcl}
          d\,\omega_2 & \!\!\! = \!\!\! & 0
          \\
          d\,\omega_3 & \!\!\! = \!\!\! & - \omega_2 \wedge \omega_2
          \\
          d\,\alpha
            & \!\!\! = \!\!\! &
          d_{{}_{\sfrac{1}{2}\mathrm{M5}}} \alpha
            +
          \omega_2 \wedge \iota_{v_5^{\mathrm{ex}_s}} \alpha
        \end{array}}
        $
      }
      \!\!\!\!\!\!
    \right)
    \,,
  }
\end{equation}
where on the right
$\alpha
  \in
  \mathrm{CE}\big(
    (
      \mathbb{R}^{5,1\vert \mathbf{8}}
      \times
      \mathbb{R}^1
    )_{\mathrm{ex}_s}
  \big)$
  is any element in the CE-algebra of the
  super-exceptional $\tfrac{1}{2}\mathrm{M5}$-spacetime
  (Def. \ref{HalfM5LocusAndItsExceptionalTangentBundle}),
  $\iota_{v_5^{\mathrm{ex}_s}}$ is
  contraction with the super-exceptional isometry \eqref{LiftedVectorField},
  and
  $d_{{}_{\sfrac{1}{2}\mathrm{M5}}}$
  now denotes the
  differential on that algebra, in contrast to the new
  differential $d$ defined above.
\end{defn}

Now we may state and prove the main statement of this section:

\begin{theorem}[Super-exceptional $\Omega S^2_{\mathrm{HW}}$-equivariant M5-cocycle]
\label{SuperExceptionalPSEquivariantEnhancement}
The super-exceptional Perry-Schwarz Lagrangian
$\mathbf{L}^{\!\mathrm{PS}}_{\mathrm{ex}_s}$ \eqref{LagrangianPSExceptional}
and the super-exceptional topological Yang-Mills Lagrangian
$\mathrm{L}^{\!\mathrm{tYM}}_{\mathrm{ex}_s}$ \eqref{SuperExceptionalThetaAngleTerm}
are the components that enhance
the super-exceptional M5-brane cocycle
$\mathbf{dL}^{\!\!\mathrm{WZ}}_{\mathrm{ex}}$ \eqref{SuperExceptionalM5Cocycle}
restricted along the embedding $i_{\mathrm{ex}_s}$ of the super-exceptional $\tfrac{1}{2}\mathrm{M5}$ spacetime \eqref{M5FixedLocusNormallyEnhanced}
to an equivariant
cocycle with respect to the
$\Omega S^2_{\mathrm{HW}}$-action \eqref{OmegaS2HW},
hence to a cocycle
on the homotopy $\Omega S^2_{\mathrm{HW}}$-quotient \eqref{SuperExceptionalHalfM5QuotientedByOmegaS2}
of the super-exceptional $\tfrac{1}{2}\mathrm{M5}$-spacetime,
as follows:
\begin{equation}
    \label{EquivariantSuperExceptionalM5BraneCocycle}
    \big(
      (i_{\mathrm{ex}_s})^\ast
      \mathbf{dL}^{\!\!\mathrm{WZ}}_{\mathrm{ex}_s}
    \big)_{
      \!\sslash \Omega S^2_{\mathrm{HW}}
    }
    \;\coloneqq\;
    \Big(
    \,
    (i_{\mathrm{ex}_s})^\ast
    \mathbf{dL}^{\!\!\mathrm{WZ}}_{\mathrm{ex}_s}
    \;-\;
    \omega_2
    \wedge
    \mathbf{L}^{\!\!\mathrm{PS}}_{\mathrm{ex}_s}
    \;-\;
    \omega_3
    \wedge
    \mathbf{L}^{\mathrm{tYM}}_{\mathrm{ex}_s}
    \,
    \Big)
    \;\;\in\;\;
  \mathrm{CE}
  \Big(
    \big(
      \mathbb{R}^{5,1\vert \mathbf{8}}
      \times
      \mathbb{R}^1
    \big)_{\mathrm{ex}_s}
    \!\sslash\!
    \Omega S^2_{\mathrm{HW}}
  \Big)
\end{equation}
in that
\begin{enumerate}[{\bf (i)}]
\vspace{-2mm}
\item this is indeed a cocycle with respect to the differential
$
  d
    =
    d_{\sfrac{1}{2}\mathrm{M5}}
    + \omega_2 \wedge \iota_{v_5^{\mathrm{ex}_s}}
    + d_{S^2}
$
from \eqref{SuperExceptionalHalfM5QuotientedByOmegaS2}:
\begin{equation}
  \label{dOfEquivariant7Cocycle}
  d
  \Big(
    \big(
    (i_{\mathrm{ex}_s})^\ast
    \mathbf{dL}^{\!\!\mathrm{WZ}}_{\mathrm{ex}_s}
    \big)_{\!\sslash \Omega S^2_{\mathrm{HW}}}
  \Big)
  \;=\;
  0
  \;\;\in\;\;
  \mathrm{CE}
  \Big(
    \big(
      \mathbb{R}^{5,1\vert \mathbf{8}}
      \times
      \mathbb{R}^1
    \big)_{\mathrm{ex}_s}
    \!\sslash\!
    \Omega S^2_{\mathrm{HW}}
  \Big);
\end{equation}

\vspace{-4mm}
\item it does enhance the super-exceptional M5-brane cocycle,
in that it extends it through
the homotopy quotient projection $q_{{}_{\Omega S^2_{\mathrm{HW}}}}$
\eqref{SuperExceptionalHalfM5QuotientedByOmegaS2}:
\vspace{-4mm}
\begin{equation}
  \label{OmegaS2EquivariantEnhancement}
  \hspace{-2mm}
  \xymatrix@R=-1pt@C=3em{
    \big(
      \mathbb{R}^{5,1\vert \mathbf{8}}
      \times
      \mathbb{R}^1
    \big)_{\mathrm{ex}_s}
    \!\sslash\!
    \Omega S^2_{\mathrm{HW}}
    &&
    \big(
      \mathbb{R}^{5,1\vert \mathbf{8}}
      \times
      \mathbb{R}^1
    \big)_{\mathrm{ex}_s}
    \ar@{^{(}->}[rr]^-{ i_{\mathrm{ex}_s} }
    \ar[ll]_-{ q_{{}_{\Omega S^2_{\mathrm{HW}}}} }
    &&
    \big(\mathbb{R}^{10,1\vert \mathbf{32}}\big)_{\mathrm{ex}_s}.
    \\
    \big(
      (i_{\mathrm{ex}_s})^\ast
      \mathbf{dL}^{\!\!\mathrm{WZ}}_{\mathrm{ex}_s}
    \big)_{
      \!\sslash \Omega S^2_{\mathrm{HW}}
    }
    &&
    (i_{\mathrm{ex}_s})^\ast\mathbf{dL}^{\!\!\mathrm{WZ}}_{\mathrm{ex}_s}
    \ar@{<-|}[rr]
    \ar@{<-|}[ll]_-{ }
    &&
    \mathbf{dL}^{\!\!\mathrm{WZ}}_{\mathrm{ex}_s}
  }
\end{equation}
\end{enumerate}
\end{theorem}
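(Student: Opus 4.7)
The plan is to recognize the claimed expression \eqref{EquivariantSuperExceptionalM5BraneCocycle} as an instance of the general $\Omega S^2$-equivariant cocycle pattern codified by Lemma \ref{OmegaS2EquivariantCocycles}, applied with the identifications
\[
  \alpha \;\coloneqq\; (i_{\mathrm{ex}_s})^\ast \mathbf{dL}^{\!\!\mathrm{WZ}}_{\mathrm{ex}_s}\,,
  \qquad
  \beta \;\coloneqq\; \mathbf{L}^{\!\!\mathrm{PS}}_{\mathrm{ex}_s}\,,
  \qquad
  v \;\coloneqq\; v_5^{\mathrm{ex}_s}\,.
\]
With these identifications, the three hypotheses needed by Lemma \ref{OmegaS2EquivariantCocycles} are
(a) $d_{\sfrac{1}{2}\mathrm{M5}}\alpha = 0$, (b) $d_{\sfrac{1}{2}\mathrm{M5}}\beta = \iota_v \alpha$, and (c) the identification $\iota_v \beta = \mathbf{L}^{\!\!\mathrm{tYM}}_{\mathrm{ex}_s}$ to match the $\omega_3$-coefficient in \eqref{EquivariantSuperExceptionalM5BraneCocycle}. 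I would verify these three facts in turn.

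For (a): closedness of $\mathbf{dL}^{\!\!\mathrm{WZ}}_{\mathrm{ex}_s}$ on the ambient super-exceptional M-theory spacetime follows from Def.~\ref{ExceptionalM5SuperCocycle} together with $dH_{\mathrm{ex}_s} = (\pi_{\mathrm{ex}_s})^\ast \mu_{{}_{\mathrm{M2}}}$ from Prop.~\ref{TransgressionElementForM2Cocycle} and the relation $d\mu_{{}_{\mathrm{M5}}} = -\tfrac{1}{2}\mu_{{}_{\mathrm{M2}}}\wedge\mu_{{}_{\mathrm{M2}}}$ built into the M-brane cocycle construction \eqref{TheMBraneCocycles}; closedness is then preserved by the dgc-algebra homomorphism $(i_{\mathrm{ex}_s})^\ast$. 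For (b): this is precisely the content of Prop.~\ref{TrivializationOf7CocycleOnExceptionalHalfM5}, which asserts $d\,\mathbf{L}^{\!\!\mathrm{PS}}_{\mathrm{ex}_s} = \iota_{v_5^{\mathrm{ex}_s}}(i_{\mathrm{ex}_s})^\ast \mathbf{dL}^{\!\!\mathrm{WZ}}_{\mathrm{ex}_s}$. For (c): this is exactly Lemma~\ref{SuperExceptionaltYMIsContractionalOfPS}. With all three verified, Lemma~\ref{OmegaS2EquivariantCocycles} yields \eqref{dOfEquivariant7Cocycle} immediately, completing part (i). As a cross-check one may expand $d$ directly using the differential of \eqref{SuperExceptionalHalfM5QuotientedByOmegaS2}: the $\omega_2$-linear terms cancel via (b), the pure $\omega_2^2$-terms cancel via (c), and the $\omega_3$-terms vanish because $\mathbf{L}^{\!\!\mathrm{tYM}}_{\mathrm{ex}_s}$ is closed and horizontal by Prop.~\ref{SuperExceptionalYangMillsLagrangianIsClosedAndHorizontal}.

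For part (ii) I would argue purely at the level of dgc-algebras: by construction in \eqref{SuperExceptionalHalfM5QuotientedByOmegaS2}, the projection $q_{{}_{\Omega S^2_{\mathrm{HW}}}}$ corresponds dually to the inclusion that sends the generators $\omega_2,\omega_3$ to zero while fixing elements of $\mathrm{CE}\big((\mathbb{R}^{5,1\vert \mathbf{8}}\times\mathbb{R}^1)_{\mathrm{ex}_s}\big)$. Applying this map to the right-hand side of \eqref{EquivariantSuperExceptionalM5BraneCocycle} kills the $\omega_2$- and $\omega_3$-terms and returns $(i_{\mathrm{ex}_s})^\ast \mathbf{dL}^{\!\!\mathrm{WZ}}_{\mathrm{ex}_s}$, which gives the commutativity of \eqref{OmegaS2EquivariantEnhancement}. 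I do not anticipate a serious obstacle: the entire theorem is an assembly of previously-established closedness (Prop.~\ref{SuperExceptionalYangMillsLagrangianIsClosedAndHorizontal}), trivialization (Prop.~\ref{TrivializationOf7CocycleOnExceptionalHalfM5}), and contraction (Lemma~\ref{SuperExceptionaltYMIsContractionalOfPS}) identities into the algebraic template of Lemma~\ref{OmegaS2EquivariantCocycles}. The only book-keeping subtlety is ensuring that $\iota_{v_5^{\mathrm{ex}_s}}$ commutes with $(i_{\mathrm{ex}_s})^\ast$ where needed, which is the content of \eqref{PullbacktoExceptionalIntertwinesContractions} in Example~\ref{MBraneCochainsPulledBackToSuperExceptionalBraneLoci}.
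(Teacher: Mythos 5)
Your proposal is correct and takes essentially the same route as the paper's own proof: it invokes Lemma~\ref{OmegaS2EquivariantCocycles} as the algebraic template, with the two substantive inputs being Prop.~\ref{TrivializationOf7CocycleOnExceptionalHalfM5} (giving $d_X\beta = \iota_v\alpha$) and Lemma~\ref{SuperExceptionaltYMIsContractionalOfPS} (identifying the $\omega_3$-coefficient as $\iota_v\beta$), and dispenses with part~(ii) by noting that $q_{{}_{\Omega S^2_{\mathrm{HW}}}}$ kills $\omega_2,\omega_3$. The paper leaves closedness of $\alpha=(i_{\mathrm{ex}_s})^\ast \mathbf{dL}^{\!\!\mathrm{WZ}}_{\mathrm{ex}_s}$ implicit while you spell it out, and your direct-expansion cross-check is a welcome but inessential addition.
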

\begin{proof}
  By Lemma \ref{OmegaS2EquivariantCocycles}
  the first claim
  equation \eqref{dOfEquivariant7Cocycle}, is equivalent to the two statements
  \begin{enumerate}
  \vspace{-2mm}
    \item
    $
    d
    \mathbf{L}^{\!\!\mathrm{PS}}_{\mathrm{ex}_s}
    \;=\;
    \iota_{v_5^{\mathrm{ex}_s}}
    (i_{\mathrm{ex}_s})^\ast
    \mathbf{dL}^{\!\!\mathrm{WZ}}_{\mathrm{ex}_s}
    \,.
    $
    \vspace{-2mm}
    \item
    $
    \iota_{v_5^{\mathrm{ex}_s}}
    \mathbf{L}^{\!\mathrm{PS}}_{\mathrm{ex}_s}
    \;=\;
    \mathbf{L}^{\!\mathrm{tYM}}_{\mathrm{ex}_s}
    \,.
    $
  \end{enumerate}
  \vspace{-2mm}
  The first of these is the content of Prop. \ref{TrivializationOf7CocycleOnExceptionalHalfM5},
  while the second is Lemma \ref{SuperExceptionaltYMIsContractionalOfPS}.

  The second claim \eqref{OmegaS2EquivariantEnhancement} is
  immediate from the nature of the map \eqref{SuperExceptionalHalfM5QuotientedByOmegaS2}.
\end{proof}

Before moving on, we record the following further
properties of the compactified super-exceptional
$\tfrac{1}{2}\mathrm{M5}$-spacetime:

\begin{prop}[Equivariant closedness of tYM]
  \label{EquivariantClosedtYM}
  The super-exceptional topological Yang-Mills Lagrangian
  (Def. \ref{SuperExceptionalTopologicalYM})
  is closed on the homotopy $\Omega S^2_{\mathrm{HW}}$-quotient
  of the super-exceptional $\tfrac{1}{2}\mathrm{M5}$-spacetime
  (Def. \ref{HomotopyQuotientOfSuperExceptionalHalfM5Spacetime}):
 \begin{equation}
   \label{topologicalYMTermIsEquivariantlyClosed}
     d \mathbf{L}^{\!\!\mathrm{tYM}}_{\mathrm{ex}_s}
     \;=\;
     0
   \phantom{AAA}
   \in\;
   \mathrm{CE}
   \Big(
   \Big(
     \big(
       \mathbb{R}^{5,1\vert \mathbf{8}}
       \times
       \mathbb{R}^1
     \big)_{\mathrm{ex}_s}
   \Big)
   _{\sslash \Omega S^2_{\mathrm{HW}}}
   \Big).
 \end{equation}
\end{prop}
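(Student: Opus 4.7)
The plan is to prove \eqref{topologicalYMTermIsEquivariantlyClosed} by directly unpacking the definition of the differential on the homotopy quotient CE-algebra \eqref{SuperExceptionalHalfM5QuotientedByOmegaS2} and then reducing to the two already-established properties of $\mathbf{L}^{\!\!\mathrm{tYM}}_{\mathrm{ex}_s}$ recorded in Prop.~\ref{SuperExceptionalYangMillsLagrangianIsClosedAndHorizontal}.

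First I would observe that, by its very Definition~\ref{SuperExceptionalTopologicalYM},
$$
  \mathbf{L}^{\!\!\mathrm{tYM}}_{\mathrm{ex}_s}
    \;=\;
  -\tfrac{1}{2}\, F_{\mathrm{ex}_s} \wedge F_{\mathrm{ex}_s}
  \;\in\;
  \mathrm{CE}\Big(\big(\mathbb{R}^{5,1\vert \mathbf{8}} \times \mathbb{R}^1\big)_{\mathrm{ex}_s}\Big)
  \,,
$$
lives in the sub-dgc-algebra generated before the adjunction of the equivariance generators $\omega_2,\omega_3$ from \eqref{SuperExceptionalHalfM5QuotientedByOmegaS2}. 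Hence applying the full differential of the homotopy-quotient CE-algebra to it produces only the two summands that act non-trivially on such an element, namely the summand $d_{\sfrac{1}{2}\mathrm{M5}}$ and the summand $\omega_2 \wedge \iota_{v_5^{\mathrm{ex}_s}}$; the $d_{S^2}$-summand acts solely on $\omega_2,\omega_3$ and so contributes nothing.

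Second, I would kill each of the remaining two summands in turn by invoking the two items of Prop.~\ref{SuperExceptionalYangMillsLagrangianIsClosedAndHorizontal}: closedness $d_{\sfrac{1}{2}\mathrm{M5}} \mathbf{L}^{\!\!\mathrm{tYM}}_{\mathrm{ex}_s} = 0$ from (i), and super-exceptional horizontality $\iota_{v_5^{\mathrm{ex}_s}} \mathbf{L}^{\!\!\mathrm{tYM}}_{\mathrm{ex}_s} = 0$ from (ii). The entire computation then reads
$$
  d\, \mathbf{L}^{\!\!\mathrm{tYM}}_{\mathrm{ex}_s}
  \;=\;
  \underbrace{d_{\sfrac{1}{2}\mathrm{M5}} \mathbf{L}^{\!\!\mathrm{tYM}}_{\mathrm{ex}_s}}_{= 0\;\text{by (i)}}
  \;+\;
  \omega_2 \wedge
  \underbrace{\iota_{v_5^{\mathrm{ex}_s}} \mathbf{L}^{\!\!\mathrm{tYM}}_{\mathrm{ex}_s}}_{= 0\;\text{by (ii)}}
  \;=\;
  0\,,
$$
which is the claim.

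There is no real obstacle here: the statement is a direct consequence of what has already been established, and the only thing to be careful about is distinguishing the original differential $d_{\sfrac{1}{2}\mathrm{M5}}$ of $\mathrm{CE}\big((\mathbb{R}^{5,1\vert\mathbf{8}}\times\mathbb{R}^1)_{\mathrm{ex}_s}\big)$ from the enlarged differential $d$ on the homotopy quotient, and tracking which of the three summands of the latter actually see $\mathbf{L}^{\!\!\mathrm{tYM}}_{\mathrm{ex}_s}$. In this sense, Prop.~\ref{EquivariantClosedtYM} is a clean corollary of Prop.~\ref{SuperExceptionalYangMillsLagrangianIsClosedAndHorizontal} that records the fact that the super-exceptional topological Yang-Mills Lagrangian survives the $\Omega S^2_{\mathrm{HW}}$-equivariantization as a closed form in its own right, a fact that will be needed for the WZ-term interpretation in \cref{SuperExceptionalM5Lagrangian}.
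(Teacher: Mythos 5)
Your proof is correct and follows essentially the same route as the paper: you split the homotopy-quotient differential applied to $\mathbf{L}^{\!\!\mathrm{tYM}}_{\mathrm{ex}_s}$ into its $d_{\sfrac{1}{2}\mathrm{M5}}$ and $\omega_2 \wedge \iota_{v_5^{\mathrm{ex}_s}}$ contributions and kill them separately by Prop.~\ref{SuperExceptionalYangMillsLagrangianIsClosedAndHorizontal}. Your extra remark that the $d_{S^2}$-part contributes nothing because $\mathbf{L}^{\!\!\mathrm{tYM}}_{\mathrm{ex}_s}$ contains no $\omega_2,\omega_3$ is implicit in the paper's formula but worth making explicit.
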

\begin{proof}
  By definition \eqref{SuperExceptionalHalfM5QuotientedByOmegaS2}
  of the equivariant differential,
  we have
  $$
     d \mathbf{L}^{\!\!\mathrm{tYM}}_{\mathrm{ex}_s}
     \;\coloneqq\;
     d_{\sfrac{1}{2}\mathrm{M5}} \mathbf{L}^{\!\!\mathrm{tYM}}_{\mathrm{ex}_s}
     \;+\;
     \omega_2 \wedge \iota_{v_5^{\mathrm{ex}_s}}
     \mathbf{L}^{\!\!\mathrm{tYM}}_{\mathrm{ex}_s}  .
  $$
  By
  Prop. \ref{SuperExceptionalYangMillsLagrangianIsClosedAndHorizontal},
  both summands here already vanish separately.
\end{proof}

\begin{prop}[Closedness in the homotopy quotient]
 \label{Closede5omega2Plusomega3}
 On the homotopy $\Omega S^2_{\mathrm{HW}}$-quotient
 of the super-exceptional $\tfrac{1}{2}\mathrm{M5}$-brane spacetime
 (Def. \ref{HomotopyQuotientOfSuperExceptionalHalfM5Spacetime}),
 we have
 \begin{equation}
   \label{omega2IsExactOnHalfM5HomotopyQuotent}
   d\,e^5 \;=\; \omega_2
  \quad
  \text{and}
  \quad
   d
   \big(
     e^5 \wedge \omega_2
     +
     \omega_3
   \big)
   \;=\;
   0
   \phantom{AAA}
   \in\;
   \mathrm{CE}
   \Big(
   \Big(
     \big(
       \mathbb{R}^{5,1\vert \mathbf{8}}
       \times
       \mathbb{R}^1
     \big)_{\mathrm{ex}_s}
   \Big)
   _{\sslash \Omega S^2_{\mathrm{HW}}}
   \Big).
 \end{equation}
 \end{prop}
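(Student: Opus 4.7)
The plan is to verify both equations by direct application of the definition of the equivariant differential on the homotopy $\Omega S^2_{\mathrm{HW}}$-quotient, as given on the right of \eqref{SuperExceptionalHalfM5QuotientedByOmegaS2}, namely
$
  d \;=\; d_{\sfrac{1}{2}\mathrm{M5}} \;+\; \omega_2 \wedge \iota_{v_5^{\mathrm{ex}_s}} \;+\; d_{S^2},
$
where $d_{S^2}\omega_2 = 0$ and $d_{S^2}\omega_3 = -\omega_2 \wedge \omega_2$, and where $\iota_{v_5^{\mathrm{ex}_s}}$ is extended by zero on the new generators $\omega_2, \omega_3$.

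For the first equation, I would apply this differential to $e^5$ term by term. The $d_{\sfrac{1}{2}\mathrm{M5}}$-part gives $(\overline{P_{{}_{\mathbf{8}}}\psi})\,\Gamma^5 (P_{{}_{\mathbf{8}}}\psi)$, which vanishes by Lemma \ref{VanishingSpinorPairingsOnHalfM5Spacetime}. The $d_{S^2}$-part vanishes since $e^5$ does not involve $\omega_2, \omega_3$. The remaining piece is $\omega_2 \wedge \iota_{v_5^{\mathrm{ex}_s}} e^5 = \omega_2$, using that $\iota_{v_5^{\mathrm{ex}_s}} e^5 = \delta^5_5 = 1$ from the definition \eqref{LiftedVectorField} of the super-exceptional isometry. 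This gives $d\,e^5 = \omega_2$.

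For the second equation, I would use the derivation property of $d$ together with the first equation and the defining relations of the $S^2$-generators:
\begin{align*}
  d\big( e^5 \wedge \omega_2 + \omega_3 \big)
  & =
  (d\,e^5)\wedge \omega_2 \;-\; e^5 \wedge d\omega_2 \;+\; d\omega_3
  \\
  & =
  \omega_2 \wedge \omega_2 \;-\; 0 \;+\; (-\omega_2 \wedge \omega_2)
  \;=\; 0.
\end{align*}
Here, the contraction $\iota_{v_5^{\mathrm{ex}_s}}\omega_2$ and $\iota_{v_5^{\mathrm{ex}_s}}\omega_3$ vanish because $\omega_2, \omega_3$ are free generators added in forming the Cartan-type model \eqref{SuperExceptionalHalfM5QuotientedByOmegaS2}.

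There is no substantive obstacle: the statement is an immediate consequence of unwinding the definition of the differential in the homotopy quotient, combined with the vanishing bilinear $(\overline{P_{{}_{\mathbf{8}}}\psi})\Gamma^5(P_{{}_{\mathbf{8}}}\psi) = 0$ from Lemma \ref{VanishingSpinorPairingsOnHalfM5Spacetime}. The content is therefore that the generator $\omega_2$, which classifies the $S^1_{\mathrm{HW}}$-action, becomes exact in the $\Omega S^2_{\mathrm{HW}}$-equivariant model with primitive $e^5$, reflecting the fact that the Hořava--Witten circle is trivialized once one passes to the full $\Omega S^2_{\mathrm{HW}}$-quotient.
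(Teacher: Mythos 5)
Your proof is correct and follows essentially the same route as the paper: expand the equivariant differential from \eqref{SuperExceptionalHalfM5QuotientedByOmegaS2} on $e^5$, use Lemma \ref{VanishingSpinorPairingsOnHalfM5Spacetime} to kill the $d_{\sfrac{1}{2}\mathrm{M5}}$-part and $\iota_{v_5^{\mathrm{ex}_s}} e^5 = 1$ for the remainder, then deduce the second identity from the derivation property and the $S^2$-relations. The only (harmless) additions beyond the paper's proof are your explicit remarks that $\iota_{v_5^{\mathrm{ex}_s}}$ vanishes on the new generators $\omega_2,\omega_3$ and the closing interpretive comment.
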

 \begin{proof}
  For the first statement we compute as follows:
  \begin{equation}
    \begin{aligned}
      d e^5
      & \coloneqq
      (
      d_{\sfrac{1}{2}\mathrm{M5}}
        +
      \omega_2 \wedge \iota_{v_5^{\mathrm{ex}_s}}
      )
      e^5
      \\
      & =
      \underset{
        \mathclap{
          = (\overline{P \psi})\Gamma^5 (P \psi)
          = 0
        }
      }{
      \underbrace{
        d_{{}_{\sfrac{1}{2}\mathrm{M5}}} e^5
      }
      }
      \;+\;
      \omega_2 \wedge
      \underset{
        \mathclap{
          = \delta_5^5 = 1
        }
      }{
        \underbrace{
          \iota_{v_5^{\mathrm{ex}_s}} e^5
        }
      }
      \;=\;
      \omega_2
    \,.
    \end{aligned}
  \end{equation}
  Here the first step is the definition of the
  equivariant differential \eqref{SuperExceptionalHalfM5QuotientedByOmegaS2},
  while the second step uses the definition of
  $d_{\sfrac{1}{2}}\mathrm{M5}$ from \eqref{CESuperExceptionalMK6-part2}.
  Then under the first brace we used
  Lemma \ref{VanishingSpinorPairingsOnHalfM5Spacetime},
  and under the second brace we used
  the definition \eqref{LiftedVectorField}
  in Prop. \ref{LiftOfVectorField}.
  The second statement is directly implied by the first
  and by the differential relations $d \omega_2 = 0$ and $d \omega_3 = - \omega_2 \wedge \omega_2$
  from \eqref{SuperExceptionalHalfM5QuotientedByOmegaS2}.
 \end{proof}

\section{Super-exceptional M5 Lagrangian from super-exceptional embedding}
\label{SuperExceptionalM5Lagrangian}

Finally we discuss the super-exceptional embedding construction
of the M5-brane Lagrangian.
We consider the super-exceptional Nambu-Goto Lagrangian for the
$\tfrac{1}{2}\mathrm{M5}$-brane (Def. \ref{NG}) below
and prove (Corollary \ref{M5LagrangianIsRelativeTrivialization} below)
that the sum of super-exceptional Nambu-Goto Lagrangian
and the super-exceptional Perry-Schwarz Lagrangian
arise as relative trivialization of
the super-exceptional M5-brane cocycle
along the super-exceptional embedding
of the $\tfrac{1}{2}\mathrm{M5}$-brane.
We go further and and consider
the $\Omega S^2_{\mathrm{HW}}$-equivariant enhancement of this
statement, corresponding to KK-compactification to the D4-brane,
and prove
(Theorem \ref{TheTrivialization} below)
that this corrects the relative trivialization by
a summand proportional to the
super-exceptional topological Yang-Mills term
that constitutes the D4-brane WZ term
(Remark \ref{D4WZ} below).
While a priori this further summand is exact only after
compactification on $S^1_{\mathrm{HW}}$,
as befits the nature of the D4 arising form the M5,
we observe (Remark \ref{EquivariantRelativeTrivialization} below)
that this D4 term, too, does become genuinely exact
after a natural completion of
the $\Omega S^2_{\mathrm{HW}}$-action on the
super-exceptional $\tfrac{1}{2}\mathrm{M5}$ spacetime.

\medskip

In direct generalization of the super Nambu-Goto Lagrangian
\eqref{SuperVolumeFormIn3d}, we set:
\begin{defn}
  \label{NG}
  The \emph{super-exceptional Nambu-Goto Lagrangian} for the
  $\tfrac{1}{2}\mathrm{M5}$-brane (Remark \ref{TheHalfM5Spacetime})
  is the super volume form of the $\tfrac{1}{2}\mathrm{M5}$-locus,
  hence, with \eqref{LeftInvariantVielbeingInTermsOfCanonicalCoordinates},
  the left-invariant completion of the bosonic volume form
  under translational supersymmetry, hence
  is the element
  \begin{equation}
    \label{M5SuperVolumeForm}
    \begin{aligned}
    \mathbf{L}^{\!\!\mathrm{NG}}_{\mathrm{ex}_s}
    \;\coloneqq\;
    \mathrm{svol}^{5+1}_{\mathrm{ex}_s}
    & \coloneqq
    e_0
      \wedge
    e_1
      \wedge
    e_2
      \wedge
    e_3
      \wedge
    e_4
      \wedge
    e_{5'}
    \\
    & =
    (\pi_{\mathrm{ex}_s})^\ast
    \big(
    e_0
      \wedge
    e_1
      \wedge
    e_2
      \wedge
    e_3
      \wedge
    e_4
      \wedge
    e_{5'}
    \big)
    \;\in\;
    \mathrm{CE}
    \big(
      \mathbb{R}^{5,1\vert \mathbf{8}}
      \times
      \mathbb{R}^1
    \big)_{\mathrm{ex}_s}
    \end{aligned}
  \end{equation}
  in the CE-algebra of the super-exceptional
  $\tfrac{1}{2}\mathrm{M5}$-spacetime (Def. \ref{HalfM5LocusAndItsExceptionalTangentBundle}.)
\end{defn}
\begin{remark}[$S^1_{\rm B}$ vs. $S^1_{\rm HW}$ directions]
Beware that the last wedge factor in \eqref{M5SuperVolumeForm}
is $e_{5'}$ and that $e_5$ does not appear
(see Remark \ref{TheHalfM5Spacetime} for discussion of the $5$-$5'$-plane).
Mathematically, this comes out from the MO9-projection in the last step
\eqref{abcd} in the proof of Theorem \ref{TheTrivialization} below.
\end{remark}

\begin{theorem}[Full M5-Lagrangian from equivariant super-embedding]
\label{TheTrivialization}
The super-exceptional $\Omega S^2_{\mathrm{HW}}$-equivariant
M5-brane cocycle
from Theorem \ref{SuperExceptionalPSEquivariantEnhancement}
is equal to
\begin{equation}
  \label{HorizontalizedEquivariant7Cocycle}
  \Big(
    (i_{\mathrm{ex}_s})^\ast
    \mathbf{dL}^{\!\!\mathrm{WZ}}_{\mathrm{ex}_s}
  \Big)_{\!\!\sslash \Omega S^2_{\mathrm{HW}}}
  \;=\;
  d
  \big(
    \mathbf{L}^{\!\!\mathrm{NG}}_{\mathrm{ex}_s}
    +
    \mathbf{L}^{\!\!\mathrm{PS}}_{\mathrm{ex}_s}
    \wedge
    e^5
  \big)
    -
  \underset{
    =
    e^5
      \wedge
    d
    \big(
      C_1 \wedge \mathbf{L}^{\!\!\mathrm{tYM}}_{\mathrm{ex}_s}
    \big)
    +
    \omega_3 \wedge \mathbf{L}^{\!\!\mathrm{tYM}}_{\mathrm{ex}_s}
  }{
    \mathclap{\phantom{A \atop A}}
    \underbrace{
     (e^5 \wedge \omega_2 + \omega_3)
     \wedge
     \mathbf{L}^{\!\!\mathrm{tYM}}_{\mathrm{ex}_s}
    }
  }
  \,,
\end{equation}
where under the brace we show an equivalent re-formulation
in terms of the element $C_1$ from \eqref{RRPotential}.
\end{theorem}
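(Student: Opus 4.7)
The plan is to verify \eqref{HorizontalizedEquivariant7Cocycle} by unwinding the equivariant differential $d = d_{\sfrac{1}{2}\mathrm{M5}} + \omega_2 \wedge \iota_{v_5^{\mathrm{ex}_s}} + d_{S^2}$ of \eqref{SuperExceptionalHalfM5QuotientedByOmegaS2} on the proposed primitive, reducing the statement by the identities established in \cref{SuperExceptionalLagrangian} and \cref{SuperExceptionalReductionOfMTheoryCircle} to a single Fierz identity on the $\tfrac{1}{2}\mathrm{M5}$-spacetime. Since $\mathbf{L}^{\!\!\mathrm{NG}}_{\mathrm{ex}_s}$ contains no $e^5$-factor, $\iota_{v_5^{\mathrm{ex}_s}}\mathbf{L}^{\!\!\mathrm{NG}}_{\mathrm{ex}_s} = 0$ and hence $d\,\mathbf{L}^{\!\!\mathrm{NG}}_{\mathrm{ex}_s} = d_{\sfrac{1}{2}\mathrm{M5}}\mathbf{L}^{\!\!\mathrm{NG}}_{\mathrm{ex}_s}$. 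For $d(\mathbf{L}^{\!\!\mathrm{PS}}_{\mathrm{ex}_s}\wedge e^5)$, the graded Leibniz rule combined with Prop. \ref{TrivializationOf7CocycleOnExceptionalHalfM5}, Lemma \ref{SuperExceptionaltYMIsContractionalOfPS} and $de^5 = \omega_2$ (Prop. \ref{Closede5omega2Plusomega3}) yields
\begin{equation*}
d(\mathbf{L}^{\!\!\mathrm{PS}}_{\mathrm{ex}_s}\wedge e^5) \;=\; \big(\iota_{v_5^{\mathrm{ex}_s}}(i_{\mathrm{ex}_s})^\ast\mathbf{dL}^{\!\!\mathrm{WZ}}_{\mathrm{ex}_s}\big)\wedge e^5 \;+\; e^5\wedge\omega_2\wedge\mathbf{L}^{\!\!\mathrm{tYM}}_{\mathrm{ex}_s} \;-\; \omega_2\wedge\mathbf{L}^{\!\!\mathrm{PS}}_{\mathrm{ex}_s}.
\end{equation*}
Substituting this and expanding the left-hand side of \eqref{HorizontalizedEquivariant7Cocycle} via its definition \eqref{EquivariantSuperExceptionalM5BraneCocycle}, the $\omega_2\wedge\mathbf{L}^{\!\!\mathrm{PS}}$-, $e^5\wedge\omega_2\wedge\mathbf{L}^{\!\!\mathrm{tYM}}$- and $\omega_3\wedge\mathbf{L}^{\!\!\mathrm{tYM}}$-terms all cancel. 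The full claim thereby reduces to the $e^5$-horizontal/vertical decomposition
\begin{equation*}
(i_{\mathrm{ex}_s})^\ast\mathbf{dL}^{\!\!\mathrm{WZ}}_{\mathrm{ex}_s} \;=\; d_{\sfrac{1}{2}\mathrm{M5}}\mathbf{L}^{\!\!\mathrm{NG}}_{\mathrm{ex}_s} \;+\; e^5\wedge\iota_{v_5^{\mathrm{ex}_s}}(i_{\mathrm{ex}_s})^\ast\mathbf{dL}^{\!\!\mathrm{WZ}}_{\mathrm{ex}_s},
\end{equation*}
i.e.\ to showing that the $e^5$-horizontal component of $(i_{\mathrm{ex}_s})^\ast\mathbf{dL}^{\!\!\mathrm{WZ}}_{\mathrm{ex}_s}$ equals $d_{\sfrac{1}{2}\mathrm{M5}}\mathbf{L}^{\!\!\mathrm{NG}}_{\mathrm{ex}_s}$.

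To verify this reduced identity, I would split the super-exceptional M5-cocycle of Def. \ref{ExceptionalM5SuperCocycle} as $\tfrac{1}{2}H_{\mathrm{ex}_s}\wedge dH_{\mathrm{ex}_s} + (\pi_{\mathrm{ex}_s})^\ast\mu_{{}_{\rm M5}}$ and pull back along $i_{\mathrm{ex}_s}$. By Lemma \ref{PullbackOfM2CocycleToHalfM5} the pullback of $dH_{\mathrm{ex}_s} = (\pi_{\mathrm{ex}_s})^\ast\mu_{{}_{\rm M2}}$ is purely vertical (proportional to $e^5$); decomposing $(i_{\mathrm{ex}_s})^\ast H_{\mathrm{ex}_s} = \widetilde F_{\mathrm{ex}_s} + e^5\wedge F_{\mathrm{ex}_s}$ via Def. \ref{SuperExceptionalPSLagrangian} then renders the entire first summand $e^5$-vertical, so it contributes nothing to the horizontal projection. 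The remaining summand $(i_{\mathrm{ex}_s})^\ast(\pi_{\mathrm{ex}_s})^\ast\mu_{{}_{\rm M5}}$, as computed in Example \ref{MBraneCochainsPulledBackToSuperExceptionalBraneLoci}, equals $\tfrac{1}{5!}(\overline{P_{\mathbf{8}}\psi}\,\Gamma_{a_1\cdots a_5}P_{\mathbf{8}}\psi)\wedge e^{a_1}\wedge\cdots\wedge e^{a_5}$ with $a_i \in \{0,1,2,3,4,5'\}$ and is automatically $e^5$-horizontal. Expanding $d_{\sfrac{1}{2}\mathrm{M5}}(e^0\wedge\cdots\wedge e^4\wedge e^{5'})$ via the super Leibniz rule and the relations $de^a = (\overline{P_{\mathbf{8}}\psi})\Gamma^a(P_{\mathbf{8}}\psi)$ of \eqref{CESuperExceptionalMK6-part2}, the reduced identity becomes a pure Fierz relation among bilinears of $\mathbf{8}$-spinors of $\mathrm{Spin}(5,1)$ — the $D=6$, $\mathcal{N}=(1,0)$ analogue of the Fierz identity $d\,\mathrm{svol}_{2+1} = i^\ast\mu_{{}_{\rm M2}}$ on the $\tfrac{1}{2}\mathrm{BPS}$ M2-locus that underlies \cite[Prop. 6.10]{HSS18}. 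I expect this Fierz identity to be the main obstacle and would establish it by the same mechanism as in Lemma \ref{VanishingSpinorPairingsOnHalfM5Spacetime}: commuting $\pmb{\Gamma}_5$ and $\pmb{\Gamma}_{6789}$ past the relevant Clifford strings so as to reduce 5-index bilinears on $P_{\mathbf{8}}$-projected spinors to single-index bilinears, where the Fierz relation in $\mathrm{Spin}(5,1)$ is classical.

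Finally, I would dispose of the reformulation under the brace in \eqref{HorizontalizedEquivariant7Cocycle}. In the canonical-coordinate model of Remark \ref{SuperExceptionalSpacetimeAsManifold} we have $C_1 \coloneqq \overline{\theta}\Gamma^5 d\theta = e^5 - dx^5$; since $d(dx^5) = 0$ whereas $de^5 = \omega_2$ on the $\Omega S^2_{\mathrm{HW}}$-quotient (Prop. \ref{Closede5omega2Plusomega3}), we get $dC_1 = \omega_2$. Combined with $d\mathbf{L}^{\!\!\mathrm{tYM}}_{\mathrm{ex}_s} = 0$ from Prop. \ref{EquivariantClosedtYM} and the odd bi-degree of $C_1$, the graded Leibniz rule gives $d(C_1\wedge\mathbf{L}^{\!\!\mathrm{tYM}}_{\mathrm{ex}_s}) = \omega_2\wedge\mathbf{L}^{\!\!\mathrm{tYM}}_{\mathrm{ex}_s}$, so that multiplying by $e^5$ yields the claimed rewriting $(e^5\wedge\omega_2+\omega_3)\wedge\mathbf{L}^{\!\!\mathrm{tYM}}_{\mathrm{ex}_s} = e^5\wedge d(C_1\wedge\mathbf{L}^{\!\!\mathrm{tYM}}_{\mathrm{ex}_s}) + \omega_3\wedge\mathbf{L}^{\!\!\mathrm{tYM}}_{\mathrm{ex}_s}$.
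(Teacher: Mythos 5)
Your proposal is correct and takes essentially the same route as the paper: you unwind the equivariant differential, cancel the $\omega_2\wedge\mathbf{L}^{\!\!\mathrm{PS}}_{\mathrm{ex}_s}$-, $e^5\wedge\omega_2\wedge\mathbf{L}^{\!\!\mathrm{tYM}}_{\mathrm{ex}_s}$- and $\omega_3\wedge\mathbf{L}^{\!\!\mathrm{tYM}}_{\mathrm{ex}_s}$-terms against the definition \eqref{EquivariantSuperExceptionalM5BraneCocycle}, and reduce to the identification of $\big((i_{\mathrm{ex}_s})^\ast\mathbf{dL}^{\!\!\mathrm{WZ}}_{\mathrm{ex}_s}\big)^{\mathrm{hor}_{\mathrm{ex}_s}}$ with $d_{\sfrac{1}{2}\mathrm{M5}}\mathbf{L}^{\!\!\mathrm{NG}}_{\mathrm{ex}_s}$, exactly as in \eqref{aaa}--\eqref{abcd}; the only presentational difference is that for the last step the paper cites \cite[Lemma 6.9]{HSS18}, whereas you outline a direct derivation (and note that this step is really a Clifford/chirality-duality relation on the $P_{\mathbf{8}}$-projected spinors — the 6d chirality operator $\Gamma_{012345'}$ acting as a scalar — rather than a Fierz identity in the strict quartic sense).
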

\begin{proof}
  We may rewrite
  \eqref{EquivariantSuperExceptionalM5BraneCocycle}
  as follows:
  \begin{equation}
    \label{aaa}
    \begin{aligned}
      \Big(
        (i_{\mathrm{ex}_s})^\ast
        \mathbf{dL}^{\!\!\mathrm{WZ}}_{\mathrm{ex}_s}
      \Big)_{\!\sslash \Omega S^2_{\mathrm{HW}}}
      & =
      (i_{\mathrm{ex}_s})^\ast
      \mathbf{dL}^{\!\!\mathrm{WZ}}_{\mathrm{ex}_s}
      -
      \underset{\mathclap {\tiny
        = \omega_2
     } }{
        \underbrace{
          (d e^5)
        }
      }
      \wedge
      \mathbf{L}^{\!\!\mathrm{PS}}_{\mathrm{ex}_s}
      -
      \omega_3 \wedge \mathbf{L}^{\!\!\mathrm{tYM}}_{\mathrm{ex}_s}
      \\
      & =
      (i_{\mathrm{ex}_s})^\ast
      \mathbf{dL}^{\!\!\mathrm{WZ}}_{\mathrm{ex}_s}
      -
      d
      \big(
        e^5
        \wedge
        \mathbf{L}^{\!\!\mathrm{PS}}_{\mathrm{ex}_s}
      \big)
      -
      e^5
      \wedge
      \underset{
        \mathclap{\footnotesize
          \begin{aligned}
          & =
          (
            d_{\sfrac{1}{2}\mathrm{M5}}
            +
            \omega_2 \wedge \iota_{v_5}^{\mathrm{ex}_s}
          )
          \mathbf{L}^{\!\!\mathrm{PS}}_{\mathrm{ex}_s}
          \\
          & =
          \iota_{v_5^{\mathrm{ex}_s}}
            (i_{\mathrm{ex}_s})^\ast
            \mathbf{dL}^{\!\!\mathrm{WZ}}_{\mathrm{ex}_s}
           +
           \omega^2 \wedge \mathbf{L}^{\!\!\mathrm{tYM}}_{\mathrm{ex}_s}
          \end{aligned}
        }
      }{
        \underbrace{
          d \mathbf{L}^{\!\!\mathrm{PS}}_{\mathrm{ex}_s}
        }
      }
      -
      \omega_3 \wedge \mathbf{L}^{\!\!\mathrm{tYM}}_{\mathrm{ex}_s}
      \\
      & =
      \underset{
        = (-)^{\mathrm{hor}_{\mathrm{ex}_s}}
      }{
      \underbrace{
        \big(
          \mathrm{id}
          -
          e^5
          \wedge
          \iota_{v_5^{\mathrm{ex}_s}}
        \big)
      }
      }
      \big(
        (i_{\mathrm{ex}_s})^\ast
        \mathbf{dL}^{\!\!\mathrm{WZ}}_{\mathrm{ex}_s}
      \big)
      \;+\;
      d
      \big(
        \mathbf{L}^{\!\!\mathrm{PS}}_{\mathrm{ex}_s}
        \wedge
        e^5
      \big)
      -
      (\omega_3 + e^5 \wedge \omega_2)
       \wedge
      \mathbf{L}^{\!\!\mathrm{tYM}}_{\mathrm{ex}_s}.
    \end{aligned}
  \end{equation}
  Here the first line is the definition \eqref{EquivariantSuperExceptionalM5BraneCocycle}
  with the observation \eqref{omega2IsExactOnHalfM5HomotopyQuotent}
  inserted, as shown under the brace. Then, in the first step,
  we use that the differential is a derivation of bi-degree $(1,\mathrm{even})$
  and under the brace we unwind the definition of the equivariant differential
  \eqref{SuperExceptionalHalfM5QuotientedByOmegaS2} and then used
  Prop. \ref{TrivializationOf7CocycleOnExceptionalHalfM5}
  and
  Lemma \ref{SuperExceptionaltYMIsContractionalOfPS}.
  In the last step we collect terms and identify
  under the brace the
  super-exceptional horizontal projection
  from Def. \ref{SuperExceptionalHorizontalProjection}.
  This means we are now reduced to showing that the first summand
  in the last line of \eqref{aaa} is
  \begin{equation}
    \big(
      (i_{\mathrm{ex}_s})^\ast
      \mathbf{dL}^{\!\!\mathrm{WZ}}_{\mathrm{ex}_s}
    \big)^{\mathrm{hor}_{\mathrm{ex}_s}}
    \;=\;
    d \mathbf{L}^{\!\!\mathrm{NG}}_{\mathrm{ex}_s}.
  \end{equation}
  We compute as follows:
  \begin{equation}
    \label{abcd}
    \begin{aligned}
      & \big(
        (i_{\mathrm{ex}_s})^\ast
        \mathbf{dL}^{\!\!\mathrm{WZ}}_{\mathrm{ex}_s}
      \big)^{\mathrm{hor}_{\mathrm{ex}_s}}
      \\
      & =
      \big(
        \mathrm{id}
        -
        e^5
        \wedge
        \iota_{v_5^{\mathrm{ex}_s}}
      \big)
      (i_{\mathrm{ex}_s})^\ast
      \big(
        (\pi_{\mathrm{ex}_s})^\ast
        \mu_{{}_{\rm M5}}
        +
        \tfrac{1}{2}
        H_{\mathrm{ex}_s}
        \wedge
        d H_{\mathrm{ex}_s}
      \big)
      \\
      & =
      \big(
        \mathrm{id}
        -
        e^5
        \wedge
        \iota_{v_5^{\mathrm{ex}_s}}
      \big)
      (i_{\mathrm{ex}_s})^\ast
      (\pi_{\mathrm{ex}_s})^\ast
      \mu_{{}_{\rm M5}}
      +
        \tfrac{1}{2}
      \big(
        \mathrm{id}
        -
        e^5
        \wedge
        \iota_{v_5^{\mathrm{ex}_s}}
      \big)
        \big(
          (i_{\mathrm{ex}_s})^\ast
          H_{\mathrm{ex}_s}
        \big)
        \wedge
        \big(
          (i_{\mathrm{ex}_s})^\ast
          d H_{\mathrm{ex}_s}
        \big)
       \\
       & =
       \tfrac{1}{5!} \;\;\;\;
       \underset{\tiny
         \mathclap{
           a_i
           \in
           \{0,1,2,3,4,5'\}
         }
       }{\sum}
       \;\;\;\;
       \big(
       \overline{(P \psi)}
         \Gamma_{a_1 \cdots a_5}
       (P \psi)
       \big)
       \wedge
       e_{a_0}
       \wedge
       \cdots
       \wedge
       e_{a_5}
      +
      \underset{
        = 0
      }{
      \underbrace{
        \tfrac{1}{2}
      \big(
        \mathrm{id}
        -
        e^5
        \wedge
        \iota_{v_5^{\mathrm{ex}_s}}
      \big)
        \big(
          (i_{\mathrm{ex}_s})^\ast
          H_{\mathrm{ex}_s}
        \big)
        \wedge
        \big(
          e^5 \wedge
          \iota_{v_5}^{\mathrm{ex}_s}
          (i_{\mathrm{ex}_s})^\ast
          d H_{\mathrm{ex}_s}
        \big)
      }
      }
      \\
      & =
         d( e_0 \wedge e_1 \wedge e_2 \wedge e_3 \wedge e_4 \wedge e_{5'} )\;.
    \end{aligned}
  \end{equation}
  Here the first step is unwinding the definitions. The second step is multiplying
  out and using, in the second summand,
  the fact that pullback is an algebra homomorphism.
  The third step observes that the first term
  is just those summands of $\mu_{{}_{\rm M5}}$ \eqref{TheMBraneCocycles}
  whose indices are along the $\tfrac{1}{2}\mathrm{M5}$-locus,
  hence in $\{0,1,2,3,4,5'\}$, while in the second term
  we realize the presence of the projection
$e^5 \wedge \iota_{v_5}^{\mathrm{ex}_s}$ according to
Lemma \ref{PullbackOfM2CocycleToHalfM5}, in view of
$d H_{\mathrm{ex}_s} = (\pi_{\mathrm{ex}_s})^\ast \mu_{{}_{\rm M2}}$
from \eqref{Hexs}.
 With the projection operator up front, this makes the
 second term vanish, as shown under the brace.
 The last step is \cite[Lemma 6.9]{HSS18}.
This establishes the first line in
 \eqref{HorizontalizedEquivariant7Cocycle}.

 Finally, to see equality to the expression shown in
 \eqref{HorizontalizedEquivariant7Cocycle} under the brace,
 observe that
 \begin{equation}
   \label{DifferentialOfe5}
   \begin{aligned}
     d e^5
       =
     d(d x^5 + C_1)
       =
     d C_1
     \;,
   \end{aligned}
 \end{equation}
 by \eqref{RRPotential}.
  Using this, the first summand over the brace in
 \eqref{HorizontalizedEquivariant7Cocycle} becomes
 $$
   \begin{aligned}
     e^5
      \wedge
     \omega_2
       \wedge
     \mathbf{L}^{\!\!\mathrm{tYM}}_{\mathrm{ex}_s}
     & =
     e^5
      \wedge
     (d e^5)
       \wedge
     \mathbf{L}^{\!\!\mathrm{tYM}}_{\mathrm{ex}_s}
     \\
     & =
     e^5
      \wedge
     (d C_1)
       \wedge
     \mathbf{L}^{\!\!\mathrm{tYM}}_{\mathrm{ex}_s}
     \\
     & =
     e^5
      \wedge\
      d
     \big(
       C_1
         \wedge
       \mathbf{L}^{\!\!\mathrm{tYM}}_{\mathrm{ex}_s}
     \big)\,,
   \end{aligned}
 $$
 where the first step is \eqref{omega2IsExactOnHalfM5HomotopyQuotent},
 the second step is \eqref{DifferentialOfe5} and the third step is
 \eqref{topologicalYMTermIsEquivariantlyClosed}. This, therefore,
 establishes also the identification under the brace in
 \eqref{HorizontalizedEquivariant7Cocycle}.
\end{proof}

\begin{cor}[M5-Lagrangian is relative trivialization
    along super-exceptional embedding]
  \label{M5LagrangianIsRelativeTrivialization}
  The super-exceptional M5-brane cocycle
  $\mathbf{dL}^{\!\!\mathrm{WZ}}_{\mathrm{ex}_s}$
  (Def. \ref{ExceptionalM5SuperCocycle})
  becomes exact when restricted
  along the super-exceptional embedding $(i_{\mathrm{ex}_s})$
  of the $\tfrac{1}{2}\mathrm{M5}$-brane
  (Lemma \ref{SuperExceptionalEmbeddings}),
  trivialized there by the sum of the
  super-exceptional Nambu-Goto Lagrangian
  $\mathbf{L}^{\!\!\mathrm{NG}}_{\mathrm{ex}_s}$
  (Def. \ref{NG}) and the
  super-exceptional Perry-Schwarz Lagrangian
  $\mathbf{L}^{\!\!\mathrm{PS}}_{\mathrm{ex}_s}$
  (Def. \ref{SuperExceptionalPSLagrangian}):
  \begin{equation}
    \label{RelativeTrivialization}
    (i_{\mathrm{ex}_s})^\ast
    \mathbf{dL}^{\!\!\mathrm{WZ}}_{\mathrm{ex}_s}
    \;=\;
    d
    \,
    \big(
      \mathbf{L}^{\!\!\mathrm{WZ}}_{\mathrm{ex}_s}
      +
      \mathbf{L}^{\!\!\mathrm{PS}}_{\mathrm{ex}_s}
      \wedge e^5
    \big)
    \,.
  \end{equation}
\end{cor}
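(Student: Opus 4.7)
The plan is to deduce this statement as the non-equivariant shadow of Theorem \ref{TheTrivialization}, by pulling back along the canonical section of the homotopy quotient projection $q_{\Omega S^2_{\mathrm{HW}}}$. Concretely, in the FDA model \eqref{SuperExceptionalHalfM5QuotientedByOmegaS2}, such a section is the dg-algebra homomorphism
$$
  \xymatrix{
  \mathrm{CE}\Big(
    \big(\mathbb{R}^{5,1\vert \mathbf{8}} \times \mathbb{R}^1\big)_{\mathrm{ex}_s}
    \!\sslash\! \Omega S^2_{\mathrm{HW}}
  \Big)
  \ar[rr]^-{\sigma^{\ast}}
  &&
  \mathrm{CE}\Big(
    \big(\mathbb{R}^{5,1\vert \mathbf{8}} \times \mathbb{R}^1\big)_{\mathrm{ex}_s}
  \Big)
  }
$$
which sends the auxiliary generators to zero, $\omega_2, \omega_3 \mapsto 0$, and restricts to the identity on the remaining generators. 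That this is indeed a dg-algebra map is immediate: on the equivariant side the differential is $d = d_{\sfrac{1}{2}\mathrm{M5}} + \omega_2 \wedge \iota_{v_5^{\mathrm{ex}_s}} + d_{S^2}$, and after applying $\sigma^\ast$ the last two summands vanish because $\omega_2, \omega_3 \mapsto 0$, leaving precisely $d_{\sfrac{1}{2}\mathrm{M5}}$ on the target.

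Applied to the left-hand side of \eqref{HorizontalizedEquivariant7Cocycle}, the definition \eqref{EquivariantSuperExceptionalM5BraneCocycle} of the $\Omega S^2_{\mathrm{HW}}$-equivariant enhancement gives
$$
  \sigma^{\ast}\Big(
    \big((i_{\mathrm{ex}_s})^\ast \mathbf{dL}^{\!\!\mathrm{WZ}}_{\mathrm{ex}_s}\big)_{\sslash \Omega S^2_{\mathrm{HW}}}
  \Big)
  \;=\;
  (i_{\mathrm{ex}_s})^\ast \mathbf{dL}^{\!\!\mathrm{WZ}}_{\mathrm{ex}_s}
  \;-\;
  0 \wedge \mathbf{L}^{\!\!\mathrm{PS}}_{\mathrm{ex}_s}
  \;-\;
  0 \wedge \mathbf{L}^{\!\!\mathrm{tYM}}_{\mathrm{ex}_s}
  \;=\;
  (i_{\mathrm{ex}_s})^\ast \mathbf{dL}^{\!\!\mathrm{WZ}}_{\mathrm{ex}_s}\;.
$$
Applied to the right-hand side, the correction term $(e^5 \wedge \omega_2 + \omega_3)\wedge \mathbf{L}^{\!\!\mathrm{tYM}}_{\mathrm{ex}_s}$ is killed since both $\omega_2$ and $\omega_3$ map to zero, so only
$$
  \sigma^{\ast}
  \big(
    d\big(\mathbf{L}^{\!\!\mathrm{NG}}_{\mathrm{ex}_s} + \mathbf{L}^{\!\!\mathrm{PS}}_{\mathrm{ex}_s} \wedge e^5\big)
  \big)
  \;=\;
  d\big(\mathbf{L}^{\!\!\mathrm{NG}}_{\mathrm{ex}_s} + \mathbf{L}^{\!\!\mathrm{PS}}_{\mathrm{ex}_s} \wedge e^5\big)
$$
survives, using dg-compatibility of $\sigma^\ast$ together with the fact that $\mathbf{L}^{\!\!\mathrm{NG}}_{\mathrm{ex}_s}$, $\mathbf{L}^{\!\!\mathrm{PS}}_{\mathrm{ex}_s}$ and $e^5$ all lie in the subalgebra $\mathrm{CE}\big((\mathbb{R}^{5,1\vert \mathbf{8}} \times \mathbb{R}^1)_{\mathrm{ex}_s}\big)$ and are thus fixed by $\sigma^\ast$.

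Equating the two images yields \eqref{RelativeTrivialization}. There is no real obstacle here: the content of the corollary is already contained in Theorem \ref{TheTrivialization}, and the only thing to check is that the auxiliary data $(\omega_2,\omega_3)$ introduced to package the $\Omega S^2_{\mathrm{HW}}$-equivariant structure can be consistently stripped off, which is the dg-algebra compatibility of $\sigma^\ast$ noted above. Alternatively, one may read off the identity directly by collecting, on both sides of \eqref{HorizontalizedEquivariant7Cocycle}, the components of total bidegree in $\omega_2$ and $\omega_3$ equal to zero.
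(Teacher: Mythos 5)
Your proof is correct and follows the same route as the paper: the paper dismisses this corollary in one line as ``the first component (the one independent of the equivariance generators $\omega_2$ and $\omega_3$) of the equivariant statement in Theorem \ref{TheTrivialization}'', or equivalently as the pullback along the homotopy quotient homomorphism $q_{\Omega S^2_{\mathrm{HW}}}$ from \eqref{SuperExceptionalHalfM5QuotientedByOmegaS2}, and your argument spells out exactly that computation.

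One terminological slip worth fixing: the dgc-algebra homomorphism $\sigma^\ast$ that you write down (sending $\omega_2, \omega_3 \mapsto 0$ and restricting to the identity on the original generators) is precisely the pullback $(q_{\Omega S^2_{\mathrm{HW}}})^\ast$ along the homotopy quotient map $q_{\Omega S^2_{\mathrm{HW}}} \colon \big(\mathbb{R}^{5,1\vert\mathbf{8}}\times\mathbb{R}^1\big)_{\mathrm{ex}_s} \to \big(\mathbb{R}^{5,1\vert\mathbf{8}}\times\mathbb{R}^1\big)_{\mathrm{ex}_s}\!\sslash\Omega S^2_{\mathrm{HW}}$ itself, i.e., along the fiber inclusion of the sequence \eqref{HomotopyQuotientFiberSequence}, not along a \emph{section} of it. A section of $q_{\Omega S^2_{\mathrm{HW}}}$ would go the other way geometrically, and its pullback would go from $\mathrm{CE}$ of the total space to $\mathrm{CE}$ of the quotient, which is not what your formula does. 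The dgc-homomorphism you define, the dg-compatibility check, and the application to both sides of \eqref{HorizontalizedEquivariant7Cocycle} are all correct; only the geometric name you attach to it is off.
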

\begin{proof}
  The statement is the first component
  (the one independent of the equivariance generators
  $\omega_2$ and $\omega_3$)
  of the equivariant statement in Theorem \ref{TheTrivialization}.
  More formally, equation \eqref{RelativeTrivialization}
  is the pullback of \eqref{LieDerivativeAlongv5}
  along the homotopy quotient homomorphism
  $q_{{}_{\Omega S^2_{\mathrm{HW}}}}$
  \eqref{SuperExceptionalHalfM5QuotientedByOmegaS2}.
\end{proof}

\begin{remark}[Dimensional reduction to WZ-term of D4-brane]
  \label{D4WZ}
  Upon compactification on $S^1_{\mathrm{HW}}$,
  the last summand of \eqref{HorizontalizedEquivariant7Cocycle}
  manifestly gives the WZ-term
  $$
    \mathbf{L}^{\!\!\mathrm{WZ}}_{\mathrm{D4}}
    \;=\;
    C_1 \wedge F \wedge F
  $$
  of the D4-brane
  (\cite[(7.4)]{CvGNSW97}\cite[(51)]{APPS97b}\cite[6.1]{CAIB00};
   see \cite[4.3]{FSS13}\cite[4]{FSS16b}).
  From the point of view of the Yang-Mills theory on the brane,
  this identifies $C_1$ with the theta-angle (e.g. \cite[(3.1)]{L19}),
  matching the last summand in \eqref{EquivariantTrivialization}
  below.
\end{remark}

\begin{remark}[Exact $\Omega S^2_{\mathrm{HW}}$-equivariant super-exceptional M5-Lagrangians]
\label{EquivariantRelativeTrivialization}
Recall that
the first summand in \eqref{TheTrivialization} is
exact, implying the super-exceptional embedding construction
before compactification (Corollary \ref{M5LagrangianIsRelativeTrivialization}), while the second
summand in \eqref{TheTrivialization}
\begin{equation}
  \label{SecondSummandInEquivariuantSuperExceptionalM5Cocycle}
  \big( e^5 \wedge \omega_2 + \omega_3\big)
  \wedge
  \big(
    F_{\mathrm{ex}_s}
    \wedge
    F_{\mathrm{ex}_s}
  \big)
  \;\;\;\;\;\;\in\;
  \mathrm{CE}
  \Big(
  \Big(
  \big(
    \mathbb{R}^{5,1\vert \mathbf{8}}
    \times
    \mathbb{R}^1
  \big)_{\mathrm{ex}_s}
  \Big)_{\sslash \Omega S^2_{\mathrm{HW}}}
  \Big)
\end{equation}
becomes exact only
after dimensional reduction, implying the super-exceptional
embedding construction of the D4 WZ-term
(Remark \ref{D4WZ}).
It is therefore natural to ask for a pullback of the
situation to a richer extended super-spacetime on which
also the second summand \eqref{SecondSummandInEquivariuantSuperExceptionalM5Cocycle},
and hence the full
$\Omega S^2_{\mathrm{HW}}$-equivariant
super-embedded super-exceptional M5-brane cocycle
from \eqref{TheTrivialization}, become exact,
$\Omega S^2_{\mathrm{HW}}$-equivariantly.
Since \eqref{SecondSummandInEquivariuantSuperExceptionalM5Cocycle}
is the wedge product of two equivariantly closed
terms, by Prop. \ref{EquivariantClosedtYM} and Prop. \ref{Closede5omega2Plusomega3},
there are two canonical possibilities here,
by enforcing trivialization of the first or the second factor.
We will now briefly comment on both of these:
$$
  \underset{
    \mbox{
      \tiny
      \begin{tabular}{c}
        discussed in
        Remark \ref{Theta2}
      \end{tabular}
    }
  }{
    d\,\theta_2
    \;=\;
    e^5 \wedge \omega_2 + \omega_3
  }
    \phantom{AAA}
  \mbox{or}
  \phantom{AAA}
  \underset{
    \mbox{
      \tiny
      \begin{tabular}{c}
        discussed in
        Remark \ref{Heterotic5Brane}
      \end{tabular}
    }
  }{
    d\,H^{\mathrm{NS}}_{\mathrm{ex}_s}
    \;=\;
    F_{\mathrm{ex}_s} \wedge F_{\mathrm{ex}_s}
  }.
$$
\end{remark}

\begin{remark}[The generator $\theta_2$ trivializing the little-string super-cocycle]
\label{Theta2}
In the formula \eqref{OtherltlstringHomotopy} for the
$D = 5$, $\mathcal{N} =2 $, super-spacetime regarded
as an $\Omega S^2$-quotient of $D = 6$, $\mathcal{N} = (1,1)$, super-spacetime,
there appears a further
generator $\theta_2$ in degree $(2,\mathrm{even})$, whose
differential trivializes the little-string super-cocycle
\begin{equation}
  \label{DifferentialRelationForLittleStringCPField}
  d\,\theta_2
  \;=\;
  \omega_3 + e^5 \wedge \omega_2
  \;=\;
  \mu_{{}_{\rm L1}}
  \,.
\end{equation}
From comparison with the super-cocycles for the
ordinary critical string in $D = 10$ and its D-branes
(see \cite[4.3]{FSS13}\cite[4]{FSS16b}), we note that $\theta_2$
is the universal gauge field flux 2-form on
D-branes for the little-string.
However, there is no analog of the generator $\theta_2$ in the formula
\eqref{SuperExceptionalHalfM5QuotientedByOmegaS2}
for the super-exceptional $\Omega S^2_{\mathrm{HW}}$-quotient
of the super-exceptional $\tfrac{1}{2}\mathrm{M5}$-locus
from Def. \ref{HomotopyQuotientOfSuperExceptionalHalfM5Spacetime}.
But if we consider the Cartesian product of the
super-exceptional $\tfrac{1}{2}\mathrm{M5}$ spacetime
with the classifying space $B S^1$ of a gauge field,
then an $\Omega S^2$-action on this larger space generally contains,
on top of the part left-induced from an $S^1$-action
(Example \ref{LeftInducedOmegaS2Action}), precisely the
extra structure of \eqref{DifferentialRelationForLittleStringCPField}:
\begin{equation}
  \label{CEAlgebraForSuperExceptionalM5QuotientWithGaugeFieldAdded}
  \mathrm{CE}
  \Big(
    \mathfrak{l}
  \Big(
    \big(
      \mathbb{T}^{5,1} \times \mathbb{R}^1
    \big)_{\mathrm{ex}_s}
    \times
    {\color{blue} B S^1}
  \Big)_{\!\!\sslash \Omega S^2_{\mathrm{HW}}}
  \Big)
  \;=\;
  \mathrm{CE}
  \Big(
  \Big(
    \big(
      \mathbb{R}^{5,1} \times \mathbb{R}^1
    \big)_{\mathrm{ex}_s}
  \Big)_{\!\!\sslash \Omega S^2_{\mathrm{HW}}}
  \Big)
  \big[
    {\color{blue}\theta_2}
  \big]
  \Big/
  \big(
    d\,\theta_2 = e^5 \wedge \omega_2 + \omega_3
  \big).
\end{equation}
On this larger space,
equation \eqref{HorizontalizedEquivariant7Cocycle}
for the trivialization of the M5-brane cocycle
completes to an $\Omega S^2_{\mathrm{HW}}$-equivariant trivialization:
\begin{equation}
  \label{EquivariantTrivialization}
  \Big(
    (i_{\mathrm{ex}_s})^\ast
    \mathbf{dL}^{\!\!\mathrm{WZ}}_{\mathrm{ex}_s}
  \Big)_{\!\sslash \Omega S^2_{\mathrm{HW}}}
  \;=\;
  d
  \big(
    \mathbf{L}^{\!\!\mathrm{NG}}_{\mathrm{ex}_s}
    +
    \mathbf{L}^{\!\!\mathrm{PS}}_{\mathrm{ex}_s}
    \wedge
    e^5
    -
    {\color{blue}\theta_2}
    \wedge
    \mathbf{L}^{\!\!\mathrm{tYM}}_{\mathrm{ex}_s}
  \big)
  \phantom{AAAA}
  \mbox{on
  $
  \Big(
    \big(
      \mathbb{T}^{5,1} \times \mathbb{R}^1
    \big)_{\mathrm{ex}_s}
    \times
    {\color{blue} B S^1}
  \Big)_{\!\!\sslash \Omega S^2_{\mathrm{HW}}}
  $\!.
  }
  \end{equation}
\end{remark}

\begin{remark}[The heterotic 5-brane]
  \label{Heterotic5Brane}
  The
  \emph{heterotic super-exceptional $\sfrac{1}{2}\mathrm{M5}$-spacetime}
  $
      \big(
        \mathbb{T}^{5,1\vert\mathbf{16}}
        \times
        \mathbb{T}^1
      \big)_{\mathrm{ex}_s}^{\mathrm{het}}
  $
  is the homotopy fiber of the
  wedge square of the super-exceptional 2-flux \eqref{ExceptionalF}
  on the
  super-exceptional $\sfrac{1}{2}\mathrm{M5}$-spacetime
  (Def. \ref{HalfM5LocusAndItsExceptionalTangentBundle})
  $$
    \xymatrix{
      \big(
        \mathbb{T}^{5,1\vert\mathbf{16}}
        \times
        \mathbb{T}^1
      \big)_{\mathrm{ex}_s}^{\mathrm{het}}
      \ar[dd]_{ \mathrm{hofib}( F_{\mathrm{ex}_s} \wedge F_{\mathrm{ex}_s}) }
      &&
      &&
      \mathclap{
      \mathrm{CE}
      \Big(
        \big(
        \mathbb{R}^{5,1\vert \mathbf{8}}
        \times
        \mathbb{R}^1
        \big)_{\mathrm{ex}_s}
      \Big)
      \big[
        {\color{blue} H^{\mathrm{NS}}_{\mathrm{ex}_s} }
      \big]
      \Big/
      \big(
        d\, H^{\mathrm{NS}}_{\mathrm{ex}_s}
        =
        F_{\mathrm{ex}_s} \wedge F_{\mathrm{ex}_s}
      \big)
      }
      \ar@{<-^{)}}[dd]
      \\
      \\
      \big(
        \mathbb{T}^{5,1\vert\mathbf{16}}
        \times
        \mathbb{T}^1
      \big)_{\mathrm{ex}_s}
      \ar[rr]^-{ F_{\mathrm{ex}_s} \wedge F_{\mathrm{ex}_s} }
      &&
      K(\mathbb{R},4)\;,
      \;\;\;\;\;\;\;\;\;\;\;\;\;\;\;\; &&
      \mathrm{CE}
      \Big(
        \big(
        \mathbb{R}^{5,1\vert \mathbf{8}}
        \times
        \mathbb{R}^1
        \big)_{\mathrm{ex}_s}
      \Big)
      \ar[rr]^-{
        F_{\mathrm{ex}_s}
        \wedge
        F_{\mathrm{ex}_s}
        \mapsfrom
        \; c_4
      }
      &&
      (
        d\,c_4 = 0
      )\;.
    }
  $$
  Then on its super-exceptional
  $\Omega S^2_{\mathrm{HW}}$-compactification as in
  Def. \ref{HomotopyQuotientOfSuperExceptionalHalfM5Spacetime}
  the second summand \eqref{SecondSummandInEquivariuantSuperExceptionalM5Cocycle}
  and with it the full
  $\Omega S^2_{\mathrm{HW}}$-equivariant
  super-exceptional M5-brane cocycle \eqref{HorizontalizedEquivariant7Cocycle}
  becomes exact
  $$
  \Big(
    (i_{\mathrm{ex}_s})^\ast
    \mathbf{dL}^{\!\!\mathrm{WZ}}_{\mathrm{ex}_s}
  \Big)_{\!\!\sslash \Omega S^2_{\mathrm{HW}}}
  \;=\;
  d
  \Big(
    \mathbf{L}^{\!\!\mathrm{NG}}_{\mathrm{ex}_s}
    +
    \mathbf{L}^{\!\!\mathrm{PS}}_{\mathrm{ex}_s}
    \wedge
    e^5
    +
    \underset{
      \eqqcolon
      \mathbf{L}^{\!\!\mathrm{WZ}}_{\mathrm{NS5}}
    }{
      \underbrace{
        \tfrac{1}{2}
        (e^5 \wedge \omega_2 + \omega_3)
        \wedge
        {\color{blue} H^{\mathrm{NS}}_{\mathrm{ex_s} } }
      }
    }
  \Big)
  \,.
  $$
  The new term $\mathbf{L}^{\!\!\mathrm{WZ}}_{\mathrm{NS5}}$
  in the Lagrangian
  that appears this way has the form of the WZ-term for the
  heterotic NS5-brane \cite[(1.5)]{Lechner10}.
\end{remark}

\medskip

In summary, we thus arrive at the picture shown in \eqref{M5LagrangianAsAHomotopy}.

\medskip

\section{Outlook}

In closing, we briefly comment on a few interconnections, issues to be addressed in
the future, and some loose ends.

\medskip

\noindent
{\bf Combining local super-exceptional geometry with global topology.}
The discussion in this article focuses
on the situation of vanishing bosonic 4-flux, keeping
only the super-components of the 4-flux, being the M2-brane cocycle
$\mu_{{}_{\rm M2}}$ \eqref{TheMBraneCocycles}.
We had discussed the opposite case of pure bosonic flux
 in \cite{FSS19c}. In that case, the subtlety is all in the
global topological structure of the Hopf-Wess-Zumino term
controlled by \emph{Cohomotopy} cohomology theory \cite{FSS19c} \cite{FSS19b},
while here the subtlety is all in
the local differential structure of the
Perry-Schwarz Lagrangian,
hence of 5d super Yang-Mills plus KK-modes:

\begin{equation}
\label{FullCohomotopy}
\mbox{
\begin{tabular}{|c||c|c|}
  \hline
  \multicolumn{3}{|c|}{
   {\bf Single M5-brane sigma-model}
  }
  \\
  \hline
  \emph{Aspect}
  &
  \begin{tabular}{c}
   \emph{Global topological}
  \end{tabular}
  &
  \begin{tabular}{c}
    \emph{Local differential}
  \end{tabular}
  \\
  \hline
  \hline
  \emph{Lagrangian term}
  &
  Hopf-Wess-Zumino
  &
  \begin{tabular}{rcl}
    Perry-Schwarz &=& 5d Yang-Mills + KK
    \\
    \& & & 5d top. Yang-Mills
  \end{tabular}
  \\
  \hline
  \emph{Controlled by}
  &
  \begin{tabular}{c}
    twisted
    \\
    Cohomotopy
  \end{tabular}
  &
  \begin{tabular}{c}
    $\Omega S^2$-equivariant
    \\
    super-exceptional geometry
  \end{tabular}
  \\
  \hline
  \emph{Discussed in}
  &
  \cite{FSS19c}
   &
   \cref{SuperExceptionalReductionOfMTheoryCircle},
   \cref{SuperExceptionalM5Lagrangian}
  \\
  \hline
\end{tabular}
}
\end{equation}

In a full picture
of the M5-brane sigma-model, both the super-exceptional local
geometry and the cohomotopical global structure are to
be combined. This will be discussed elsewhere.

\newpage

\noindent {\bf Non-abelian gauge enhancement.}
While we have only discussed abelian gauge fields here,
their appearance (by the construction in \cref{SuperExceptionalLagrangian},
\cref{SuperExceptionalM5Lagrangian}) via the M2/M5 super-cocycle
$\mu_{{}_{\rm M2/M5}}$ \eqref{TheMBraneCocycles} means that the
super-cohomotopical gauge enhancement mechanism
found in \cite{BSS18} applies. Together with the constraints of
 half-integral flux quantization and
  tadpole cancellation,
which we demonstrate in \cite{FSS19b} and \cite{SS19}, respectively,
to follow from
C-field charge quantization in full Cohomotopy \eqref{FullCohomotopy},
this should
generate the expected types of non-abelian gauge fields, both
for heterotic M-theory as well as for coincident 5-branes,
in the form discussed in \cite{tcu}\cite{Sati10}\cite{FSS12a}\cite{FSS12b}.
While the details remain to be worked out, this opens up the
possibility of a concrete strategy for systematically deriving the
non-abelian D=6 M5-brane theory from first principles.

A proposal for the gauge sector of a non-abelian M5-brane Lagrangian based
on the higher structures of \cite{Sati10}\cite{FSS12a}\cite{FSS12b}
has recently been made in \cite{SL18}\cite{SL19}, but the details
are notoriously subtle
and hard to get right by an educated Ansatz,
see \cite{Saemann19}.
With the above strategy this Ansatz might eventually
be connected to a systematic derivation.

\medskip

\noindent {\bf Covariant enhancement.}
The constructions in this article relate to the non-covariant M5-brane Lagrangian
of \cite{PerrySchwarz97}\cite{Schwarz97}\cite{APPS97},
(following \cite{HenneauxTeitelboim88}), but not manifestly to the covariant
formulation of \cite{PastiSorokinTonin96}\cite{PastiSorokinTonin97} \cite{BLNPST97}.

\vspace{2mm}
\noindent \begin{tabular}{ll}
\begin{minipage}[l]{9cm}
However, as explained in \cite[pp. 6-7]{HSS18}, in the vein of understanding
supergravity as super Cartan geometry \cite{Lott90}\cite{EgeilehChami12},
we are to think of the super-exceptional Minkowski spacetimes
considered here, such as the super-exceptional
$\tfrac{1}{2}\mathrm{M5}$-spacetime illustrated in
\eqref{HalfM5BraneSetup}, as being the
infinitesimal moving frames on a super-exceptional curved
Cartan geometry. Furthermore, the covariant form of this
geometry is to be obtained systematically by actually moving the frames,
subject only to the condition of vanishing super-torsion,
which, by \cite{CandielloLechner94}\cite{Howe97}
(see \cite[Sec. 2.4]{CGNT04}),
is equivalently the condition that the
equations of motion of 11d supergravity hold.
\end{minipage}
&
\raisebox{-60pt}{
\includegraphics[width=.5\textwidth]{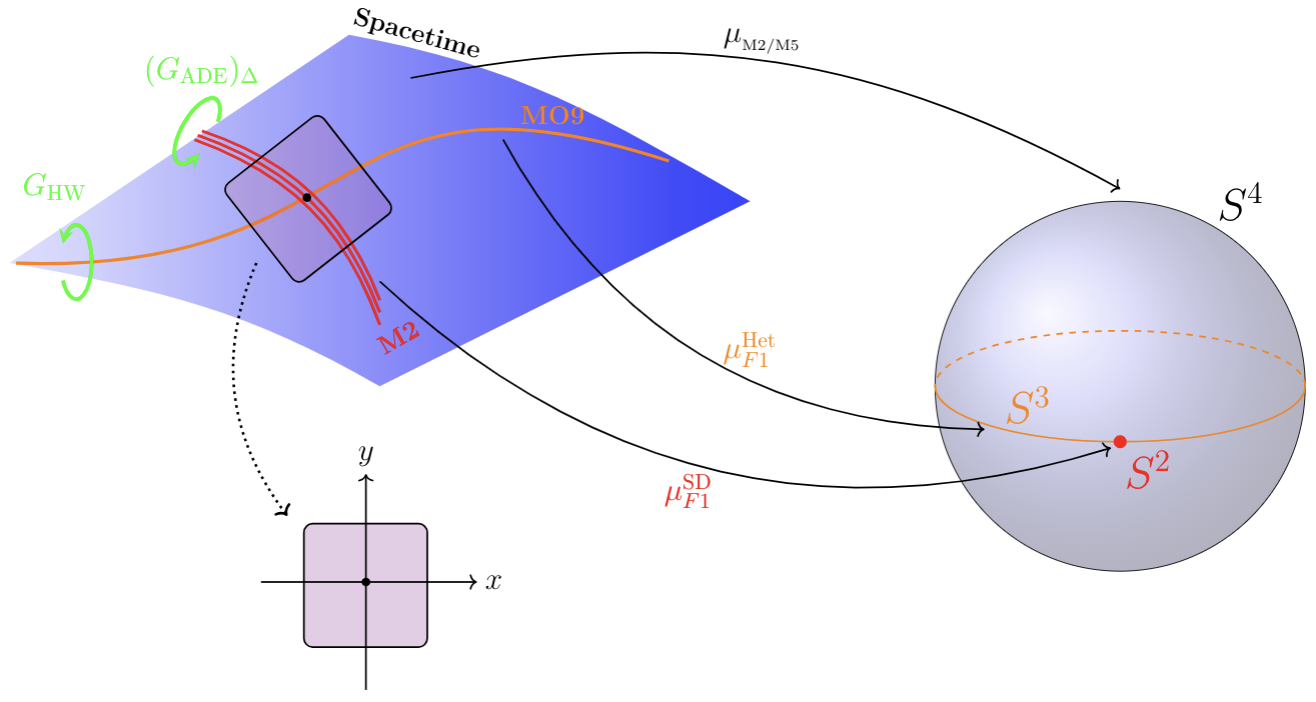}
}
\end{tabular}

\vspace{2mm}
Thus super-Cartan geometry provides
a systematic and precise
mechanism for globalizing/covariantizing all the
local brane constructions given here,
as well as the related constructions in \cite{HSS18}\cite{BSS18},
reviewed in \cite{FSS19a}.
While this provides the algorithm, it still has to be
run, the details still have to be worked out elsewhere.

\medskip

\noindent {\bf Necessity of the $\tfrac{1}{2}\mathrm{M5}$ configuration?}
The string theory folklore claims (e.g. \cite{DHTV14}) that there are two
ways of geometrically engineering $D = 6$ $\mathcal{N} = (1,0)$ theories,
one of them being the $\tfrac{1}{2}\mathrm{M5}$-brane
configuration of Remark \ref{TheHalfM5Spacetime}.
Since our rigorous construction confirms this expectation,
it would be interesting to precisely classify this -- on a mathematical basis --
and possibly determine further available choices for the construction.
This needs further thinking.
At this point, we may at least highlight precisely where the
spinor projections of Def. \ref{SpinorProjection}, which
define the $\tfrac{1}{2}\mathrm{M5}$-locus, enter our proofs:

\begin{enumerate}
\vspace{-2mm}
\item
The assumption that the spinor 1-forms are fixed under $\pmb{\Gamma}_{6789}$
is what makes the super-exceptional lift $v_{a}^{\mathrm{ex}_s}$
of the isometric flow along the compactification circle exist:
it is used in the last line of
\eqref{MakeSuperExceptionalIsometryWork} in the proof of
Prop. \ref{LiftOfVectorField}.
There is at least one other brane configuration where
an analogous construction works, namely the M-wave inside the MO9
(Remark \ref{MWaveInMO9}). But there could be more possible variants.

\vspace{-2mm}
\item
The assumptions that spinor 1-forms are fixed under
$\pmb{\Gamma}_{5}$ is what implies
the technical Lemma \ref{PullbackOfM2CocycleToHalfM5}.
This Lemma appears crucially in various of the following proofs
leading up to and including the main Theorem \ref{TheTrivialization},
notably it is used in both \eqref{FirstSummandForContractedCocycle}
and \eqref{SummandForContractedCocycleSecond} proving
Prop. \ref{TrivializationOf7CocycleOnExceptionalHalfM5}
and then again in \eqref{abcd} proving Theorem \ref{TheTrivialization}.

This makes it seem at least unlikely that this assumption could be
removed
while still retaining the result of a super-exceptional embedding
construction.
\end{enumerate}

\medskip

\noindent {\bf DBI corrections in $\alpha'$ and Super-exceptional 5-form.}
In our analysis of the super-exceptional Perry-Schwarz Lagrangian
in \cref{SuperExceptionalLagrangian}, we have
restricted to the special case that the value of the
5-index tensor $e_{a_1 \cdots a_5}$ on super-exceptional
spacetime vanishes.  The formulas \eqref{Hexs}
and \eqref{DecomposedCFieldAsSumOfBosonicAndFermionicContribution}
show that, without this assumption, the
super-exceptional Perry-Schwarz Lagrangian picks up further
correction terms. At the same time, we have studied here the expected
DBI corrections to the brane Lagrangian starting at quartic order in the field strength $F$
\cite{FradkinTseytlin85} (see \cite{Tseytlin00}).
It is natural to conjecture that these two effects are related,
but this needs more investigation.

\medskip

\noindent {\bf AKSZ sigma-model description.}
The article \cite{Arvanitakis18} discusses
the local differential structure of the M5 Hopf-Wess-Zumino term
in view of exceptional geometry from the point of
view of AKSZ sigma-models. The development seems complementary
to ours here, but it would be interesting to see if there is a connection.

\medskip

\medskip

\noindent {\large \bf Acknowledgements.}

\noindent  The authors would like to thank
Alex Arvanitakis, Igor Bandos, and Dmitri Sorokin  for useful
comments on the first draft. For the final stages of this project, H. S. gratefully
acknowledges:
``This work was performed in part at Aspen Center for Physics, which is
supported by National Science Foundation grant PHY-1607611. This
work was partially supported by a grant from the Simons Foundation."

\medskip


\vspace{1cm}
\noindent Domenico Fiorenza,  {\it Dipartimento di Matematica, La Sapienza Universita di Roma, Piazzale Aldo Moro 2, 00185 Rome, Italy.}

 \medskip
\noindent Hisham Sati, {\it Mathematics, Division of Science, New York University Abu Dhabi, UAE.}

 \medskip
\noindent Urs Schreiber,  {\it Mathematics, Division of Science, New York University Abu Dhabi, UAE, on leave from Czech Academy of Science, Prague.}

\end{document}